\newtheorem{assumption}{Assumption}[section]
\newtheorem{lemma}{Lemma}[section]
\newtheorem{theorem}[lemma]{Theorem}
\newtheorem{corollary}[lemma]{Corollary}
\newtheorem{proposition}[lemma]{Proposition}
\theoremstyle{definition}  
\newtheorem{remark}{Remark}
\newcommand{\inprod}[1]{\left\langle #1 \right\rangle}
\newcommand\norm[1]{\left\lVert#1\right\rVert}
\definecolor{blendedblue}{rgb}{0.2,0.2,0.7}
\begin{document}

	\title{A Kernelization-Based Approach to Nonparametric Binary Choice Models 
		\thanks{ 
			Based on my job market paper, which was circulated as \emph{Nonparametric Estimation of Large Dimensional Binary Choice Models}.  
			I am grateful to my advisors Joon Y. Park, Yoosoon Chang, and Keli Xu for their advice. 
			For helpful comments and discussions, I thank Sami Stouli, David Harris, Raffaella Giacomini, Giuseppe Cavaliere, Shin Kanaya, 
			Ye Lu, and seminar and conference participants at Essex, Indiana University, Manchester, Peking University, U Melbourne, U Sydney, U Queensland, York University, Melbourne-Monash Workshop 2025, ESWC 2025, SETA 2025, EWMES 2024, ESAM 2024, ANZESG 2024, MEG Meeting 2022. 
			All errors are my own.} 
		\medskip}
	\date{
		January 5, 2026
	}

	\singlespacing
	\author{  
		Guo Yan\footnote{Department of Economics, University of Melbourne, 
			Australia. Email: \href{mailto:yan.g@unimelb.edu.au}{yan.g@unimelb.edu.au} }  
	}
	\maketitle

	\begin{abstract}\normalsize
		\noindent 
		We propose a new estimator for nonparametric binary choice models that does not impose a parametric structure on either the systematic function of covariates or the distribution of the error term. 
		A key advantage of our approach is its computational scalability in the number of covariates. 
		For instance, even when assuming a normal error distribution as in probit models, commonly used sieves for approximating an unknown function of covariates can lead to a large-dimensional optimization problem when the number of covariates is moderate. 
		Our approach, motivated by kernel methods in machine learning, views certain reproducing kernel Hilbert spaces as special sieve spaces, coupled with spectral cut-off regularization for dimension reduction. 
		We establish the consistency of the proposed estimator and asymptotic normality of the plug-in estimator for weighted average partial derivatives. 
		Simulation studies show that, compared to parametric estimation methods, the proposed method effectively improves finite sample performance in cases of misspecification, and has a rather mild efficiency loss if the model is correctly specified. 
		Using administrative data on the grant decisions of US asylum applications to immigration courts, along with nine case-day variables on weather and pollution, we re-examine the effect of outdoor temperature on court judges' ``mood'', and thus, their grant decisions. 
		
	\end{abstract}
	\vfill
	
	\singlespacing
	\noindent
	JEL Classification: C13, C14, C25, C81 \smallskip\\ 
	Keywords and phrases:
	nonparametric, binary choice models,  
	reproducing kernel Hilbert space, 
	sieve estimation, SNP

	\newpage
	\renewcommand{\thefootnote}{\arabic{footnote}}
	\setcounter{footnote}{0}
	\onehalfspacing
	\section{Introduction}

	Binary choice problems arise widely in economics. 
	Examples include an individual's choice to work or not, a firm's decision to enter a market, and a household's intention to migrate. 
	Binary choice models (BCMs) are extensively used to analyze these problems due to its underlying microeconomic interpretations of latent utility maximization (e.g., \citealp{bhattacharya-21}). 
	Typically, the latent utility—net utility of one choice over another—comprises two components: a \emph{systematic component}, which is a deterministic function $G$ of covariates, and a \emph{random component} $\varepsilon$ representing idiosyncratic error. 
	
	To avoid possible misspecification and resulting inconsistency for the estimation, \cite{matzkin-92} first studied fully nonparametric BCMs, allowing for nonparametric $G$ and $\varepsilon$. 
	However, the proposed estimator is not practical, as it 
	relies on maximizing empirical likelihood under constraints, e.g., monotonicity of $\varepsilon$'s cumulative distribution function (CDF),  
	whose computation becomes intractable as the number of regressors or sample size increases. 
	Moreover, this approach cannot be used to estimate partial effects, which are important policy parameters.

	One might consider using sieve approximations for nonparametric functions. 
	However, allowing for a nonparametric $G$ can pose computational challenges when dealing with multiple covariates, 
	even if the error distribution is known. 
	This is because estimating nonlinear models requires numerical optimization, and sieve approximations of $G$ can require a large number of basis functions.\footnote{For example, the polynomial expansions of 50 variables up to the 2nd and 3rd orders produce 1,325 and 23,425 basis functions, respectively. 
	} 
	Although there is recent work addressing computational concerns in linear index models (e.g., \citealp{ahn-ichimura-powell-ruud-18}; \citealp{khan-lan-tamer-21}), 
	an additional challenge here lies in handling a nonparametric $G$, which was assumed to be linear in these studies.

	In light of these practical challenges, we propose a new estimation method for a broad class of nonparametric BCMs. 
	We approximate the nonparametric component of covariates using functions in a reproducing kernel Hilbert space (RKHS), which can be viewed as a special sieve space, and couple it with further regularization through spectral cutoff for dimensional reduction. 
	For the nonparametric error component, we follow \cite{gallant-nychka-87} and approximate its density by squared Hermite polynomials, resulting in simple closed-form approximate CDFs that can be easily evaluated without numerical integration.

	We highlight the key computational differences between using classical sieve choices (e.g., polynomials or splines) and using RKHS as special sieves.\footnote{The notations used here are temporary for illustration, with formal results presented in Section \ref{section-estimator-implementation}.} 
	An estimator for a nonparametric function $G$ is often obtained by optimizing over a set of functions with certain basis functions, either by maximizing likelihood or minimizing least squares. 
	Conventional sieve methods typically optimize over the coefficients of basis functions, which can become high-dimensional with multiple covariates. 
	In contrast, when optimizing $G$ over an RKHS $\mathbb G_k$ with reproducing kernel $k(\cdot,\cdot)$, the estimator takes a different form. 
	Here, $\mathbb G_k$ consists of functions spanned by $\{k(x,\cdot): x\in \mathbb R^d\}$, with the inner product induced by $\langle k(s,\cdot), k(t,\cdot) \rangle_{\mathbb{G}_k} = k(s,t)$; see Appendix \ref{appendix-rkhs} for a brief introduction to RKHSs and further references. 
	Specifically, for observed covariates $X_1,\dots,X_n$, the optimization over $\mathbb G_k$ effectively reduces to optimizing over the coefficients of $k(\cdot,X_i)$'s.\footnote{This follows from the representer theorem; see, e.g., Theorem 4.2 in \cite{scholkopf-smola-02}, and the references therein for the history of the development of the representer theorem. } 
	Notably, the number of these coefficients is independent of the covariate dimension.

	To balance the bias-variance trade-off, we use RKHS balls with radii that increase to infinity at certain rates of the sample size. 
	This radius constraint simplifies to a quadratic constraint in optimization. 
	The optimization over the coefficients of $k(\cdot,X_i)$'s can be computationally challenging when the sample size $n$ is large. 
	To address this, we employ spectral cutoff regularization on the $n\times n$ matrix given by $k(X_i,X_j)$ to further reduce the dimensionality for optimization, which is particularly convenient in our setting. 
	We provide an upper bound on the difference between the objective function values at the optima with and without regularization. 
	In our theory, this difference is assumed to vanish asymptotically, allowing the spectral cutoff regularized estimator to be considered as a near-optimal solution to the original problem.

	A key theoretical contribution of this paper is a simple perspective of viewing RKHS balls as special sieve spaces—an idea that can be applied in many nonparametric problems beyond BCMs, and allows for RKHS-based methods to be seamlessly integrated into existing sieve estimation frameworks (e.g., \citealp{chen-07}). 
	It is especially helpful when there are multiple covariates and classical sieves are practically challenging. 
	Furthermore, our approach is more robust to misspecification compared to recent literature on RKHS-based methods in econometrics (e.g., \citealp{singh-22}; \citealp{singh-xu-gretton-24}), which typically assumes that the true nonparametric function belongs to a specific RKHS—an assumption that can be restrictive or misspecified.\footnote{ 
		E.g., assuming that the true function lies in the RKHS with a Gaussian kernel—one of the most widely used kernels in practice—requires the function to be infinitely differentiable, which may be overly restrictive.
	}  
	In our theoretical framework, 
	we impose standard smoothness conditions on the differentiability and boundedness of $G$, as is common in the 
	nonparametric literature, and appropriately choose an RKHS so that $G$ can be approximated by functions within RKHS balls. 
	By explicitly accounting for the approximation error rate that arises when approximating a smooth function using elements from RKHS balls, our approach ensures robustness to misspecification when the true function does not belong to a prespecified RKHS.\footnote{
		Our theoretical analysis builds on 
		approximation results for RKHSs (e.g., \citealp{steinwart-01, micchelli-xu-zhang-06}) and on bounds for approximation error and entropy numbers of Gaussian RKHS balls (e.g., \citealp{smale-zhou-03, kuhn-11}). 
	}

	Our proposed method is not only computationally effective but also theoretically sound. 
	We show the consistency of the proposed \emph{kernelized non-parametric} (KNP) estimator for both the systematic component and the distribution function of the random component. 
	The KNP estimation procedure provides a natural plug-in estimator for the conditional choice probability (CCP) function, for which we establish the convergence rate.

	The KNP approach is useful for estimating important parameters of policy interest, including average partial effects (APEs) and, when accounting for heterogeneity, conditional APEs. 
	Both APEs and conditional APEs are special cases of weighted average derivative functionals of the CCP. 
	We establish the asymptotic normality of the estimators for weighted average derivatives.
	Moreover, these estimators are easy to compute, with a computational procedure that remains unchanged regardless of the covariate dimension.

	The effectiveness of the KNP estimator is demonstrated using extensive simulation studies. We find that, compared to parametric estimation methods, the proposed method effectively improves the finite sample performance in case of misspecification and has rather mild efficiency loss if the model is correctly specified. 
	To demonstrate the practical use of our proposed method, we revisit an empirical application of \cite{heyes-saberian-19,heyes-saberian-22}, examining the effect of outdoor temperature on court judges' decisions.

	\paragraph{Outline} The rest of the paper is organized as follows. Section \ref{section-preliminaries} describes the model. 
	Section \ref{section-estimator-implementation} defines the proposed estimator and describes its implementation. Section \ref{section-theoretical-results} presents the asymptotic properties. 
	Simulation studies are in Section \ref{section-simulations}. 
	In Section \ref{section-application}, the KNP estimation procedure is applied in a model on judges’ decisions and outdoor environments. 
	Section \ref{section-conclusion} concludes the paper. All of the proofs and other technical details are collected in the Appendix. 
	Programs for implementation, along with replication packages for the simulation studies and the empirical application, are available online on the author’s webpage.

	\section{The Model}\label{section-preliminaries}

	We consider the BCM that generates the binary outcome variable $Y \in \{0,1\}$ as follows. 
	\begin{equation}\label{bcm-latent-utility} 
		Y = 1\{ G_0(X) -\varepsilon >0 \}, 
	\end{equation}
	where $G_0$ is an unknown function of covariates $X \in \mathbb{R}^{d_x}$, and $\varepsilon$ is the idiosyncratic error term.  
	The conditional choice probability (CCP) is therefore 
	\begin{align}\label{ccp}
		p_0(x) = \mathbb{P}\{ Y=1| X=x \} =\mathbb{E}\big( Y| X=x \big).
	\end{align}
	Let  $F_0$ and  $f_0$ denote the CDF and Lebesgue density of $\varepsilon$. 
	Let $\mathcal{X} \subset \mathbb{R}^{d_x}$ be the support of $X$.\footnote{We follow the conventional definition that the support of a random vector $Z$ with distribution $P_Z$ is the smallest closed set $A$ which satisfies $P_Z(A)=1$. See, e.g., Page 181 in \cite{billingsley-95} for the existence and uniqueness of the support.}

	For the identification of $G_0$ and $F_0$, we assume that one component of $X$, $V$, which has large support, enters $G_0$ linearly and is separable from the other components, $W$. 
	This assumption is more general than that of many parametric (e.g., probit or logit) or semiparametric BCMs (e.g., \citealp{manski-75,manski-85}), which typically impose a fully linear form on $G_0$.  
	We let $X = (V,W')'$, with $\mathcal V$ and $\mathcal W$ denoting the supports of $V$ and $W$, respectively.
	The assumptions for the identification of $G_0$ and $F_0$ are as follows.

	\begin{assumption}\label{assumption-id-g0-F0}
		We assume that, for $X=(V,W')'$, 
		\[
		G_0(X) = V+ g_0(W), 
		\]
		$\varepsilon$ is independent of $X$, and 
		\begin{enumerate}[(a)]\itemsep-0.1em
			
			\item $F_0\in \mathcal{F}$, where $\mathcal F$ is a set of CDFs which are continuous and strictly increasing on $\mathbb{R}$; 
			\label{condn-id-eps-large-support}
			
			\item $g_0\in \mathcal{G}$, where $\mathcal G$ is a set of continuous functions $g: \mathcal{W} \to \mathbb R$;  
			\label{condn-id-g-cts}
			
			\item There exists a point $w_\ast \in \mathcal{W}$ such that $g(w_\ast)=0$ for all $g\in \mathcal G$; 
			\label{condn-id-gwast-0}
			
			\item 
			Both the conditional distribution $\mathcal{L}(V|W=w_\ast)$ and the marginal distribution $\mathcal L(V)$ have support $\mathbb R$. 
			\label{condn-id-V-large-support}
			
		\end{enumerate} 
	\end{assumption}

	\noindent  
	Assumption \ref{assumption-id-g0-F0}(\ref{condn-id-V-large-support}) requires that 
	$V$ has support $\mathbb R$ both marginally and conditional on $W=w_\ast$.\footnote{Note that $\mathcal{L}(V|W=w_\ast)$ having support $\mathbb R$ does not ensure that $\mathcal L(V)$ has support $\mathbb R$: If for any $w\neq w_\ast$ the support of $\mathcal L(V|W=w)$ is contained in a fixed bounded set, and $\mathbb P\{W=w_\ast\} = 0$, then the support of $\mathcal L(V)$ is bounded. 
	}  
	When $V$ is independent of $W$, Assumption \ref{assumption-id-g0-F0}(\ref{condn-id-V-large-support}) reduces to requiring that the support of $V$ is $\mathbb R$. 
	Moreover, Assumption~\ref{assumption-id-g0-F0}(\ref{condn-id-eps-large-support}) requires that $\varepsilon$ also has support $\mathbb{R}$.

	We denote the true parameter by $\theta_0 = (g_0,F_0)$, and let $\Theta=\mathcal G\times \mathcal F$. 
	Under the independence of $\varepsilon$ from $X$, 
	we define the CCP given by $\theta = (g,F) \in \Theta$ as $p_{\theta}(x) = F(v+g(w))$,  
	where $x = (v,w')'$. 
	We also write $p_{0}  = p_{\theta_0}$ the true CCP, i.e. $p_0(x) =  F_0(v + g_0(w) )$.

	For a criterion function $\ell$ given by either
	\begin{align}\label{ell-ls-g-f}
		\ell(z,\theta) = \big( y - p_{\theta}(x) \big)^2 ,
	\end{align}
	or
	\begin{align}\label{ell-ml-g-f} 
		\ell(z,\theta)  = - y \log p_{\theta}(x) - (1-y) \log \big( 1 - p_{\theta}(x) \big) ,
	\end{align}
	where $z=(y,x')'$, the following theorem gives the identification of $\theta_0$ in the sense that $\theta_0 \in \Theta$ is the unique minimum of the population objective function $Q(\theta) = \mathbb{E} \ell(Z,\theta)$.

	\begin{theorem}\label{theorem-id-g0-F0-unique-min}
		Let Assumption \ref{assumption-id-g0-F0} hold. 
		For $\ell$ given by either \eqref{ell-ls-g-f} or \eqref{ell-ml-g-f}, $\theta \mapsto \mathbb{E} \ell(Z,\theta)$ has a unique minimum at $\theta_0 = (g_0,F_0)$ in $\Theta$. 
	\end{theorem}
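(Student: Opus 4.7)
The plan is to proceed in two stages: first reduce the problem to showing that any minimizer satisfies $p_\theta = p_0$ on the support $\mathcal{X}$, and then apply Assumption~\ref{assumption-id-g0-F0} to promote this to $(g, F) = (g_0, F_0)$.

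For the first stage, both criterion functions admit a standard decomposition around the true CCP. For the squared loss \eqref{ell-ls-g-f}, the tower property together with $\mathbb{E}[Y \mid X] = p_0(X)$ from \eqref{ccp} gives
\[
\mathbb{E}\,\ell(Z,\theta) = \mathbb{E}\bigl[Y - p_0(X)\bigr]^2 + \mathbb{E}\bigl[p_0(X) - p_\theta(X)\bigr]^2,
\]
so $\theta_0$ attains the minimum, and any other minimizer must satisfy $p_\theta(X) = p_0(X)$ $P_X$-almost surely. For the log-likelihood \eqref{ell-ml-g-f}, the conditional expectation given $X$ is the cross-entropy of $p_\theta(X)$ against the $\mathrm{Bernoulli}(p_0(X))$ distribution; by Gibbs' inequality this is minimized uniquely at $p_\theta(X) = p_0(X)$ (note that $p_0(X) \in [0,1]$, and the minimum value is finite only when $p_\theta(X)$ matches $p_0(X)$ wherever the latter equals $0$ or $1$), and integrating over $P_X$ yields the same characterization.

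For the second stage, $g, g_0$ are continuous on $\mathcal{W}$ and $F, F_0$ are continuous on $\mathbb{R}$ by Assumption~\ref{assumption-id-g0-F0}(b) and (e), so both $p_\theta$ and $p_0$ are continuous on $\mathcal{X}$; the almost-sure equality of two continuous functions on the (closed) support is therefore a pointwise equality on $\mathcal{X}$. Using condition (d) to fill out the first coordinate along $\mathbb{R}$ at $W = w_\ast$ and condition (c) to annihilate $g(w_\ast) = g_0(w_\ast) = 0$, I would deduce
\[
F(v) = F_0(v), \qquad v \in \mathbb{R}.
\]
Substituting back into $F(v + g(w)) = F_0(v + g_0(w))$ for $(v,w)$ in the support of $(V,W)$ and invoking the strict monotonicity of $F_0$ from (e), this forces $g(w) = g_0(w)$ on $\mathcal{W}$, which by continuity of both functions extends to the closed support.

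The delicate step is justifying the passage to the slice $W = w_\ast$, since $w_\ast$ need not be an atom of $W$: the pointwise identity $p_\theta = p_0$ is naturally phrased at points actually in the support $\mathcal{X}$ of $(V,W)$, whereas Assumption~\ref{assumption-id-g0-F0}(d) is a statement about the regular conditional law at a single slice. I would close this gap by approximation: for an arbitrary $v \in \mathbb{R}$ supplied by the full conditional support at $w_\ast$, take a sequence $(v_n, w_n) \in \mathcal{X}$ with $w_n \to w_\ast$ and $v_n \to v$ (whose existence follows from $w_\ast \in \mathcal{W}$ being a support point and the conditional support being all of $\mathbb{R}$), and pass to the limit using the joint continuity of $p_\theta - p_0$. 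With this in hand the remainder is routine, and a symmetric argument shows both loss functions admit $\theta_0$ as their unique minimizer.
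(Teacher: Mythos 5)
Your proposal is correct and takes essentially the same route as the paper: reduce to $p_\theta = p_0$ via the bias--variance decomposition (squared loss) or Gibbs' inequality (log loss), then recover $F = F_0$ on $\mathbb{R}$ by taking the slice at $w_\ast$ where $g$ and $g_0$ vanish, and finally recover $g = g_0$ from the strict monotonicity of $F_0$; the ``delicate step'' you flag is exactly what the paper's Lemma~\ref{lemma-aux-id-g0-F0-strongest-result} handles by the same approximation argument. The only cosmetic difference is that you first upgrade $p_\theta = p_0$ from $P_X$-a.s.\ to everywhere on $\mathcal X$ via continuity and then approximate within $\mathcal X$, whereas the paper picks the approximating sequence directly from the $P_X$-full set $\{p_\theta = p_0\}$ (nonempty in any ball around a support point), skipping the intermediate extension.
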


	\noindent
	Here, the uniqueness is in the sense that, for any $\theta\in \Theta$ which minimizes $\theta \mapsto \mathbb{E} \ell(Z,\theta)$, it must hold that $g(w) = g_0(w)$ for any $w\in \mathcal W$ and $F(u) = F_0(u)$ for any $u\in \mathbb R$.

	The identification argument follows the general framework in \citet{matzkin-92,matzkin-93,matzkin-94} for nonparametric binary choice models. 
	Our contribution is not a new identification strategy. 
	To the best of our knowledge, however, the specific separable-index structure we consider is not stated in exactly this form in \citet{matzkin-92,matzkin-93,matzkin-94}. 
	For completeness, we state the identification result explicitly and provide a self-contained proof by adapting her proof steps.  
	
	\begin{remark}
		We comment on the conditions imposed for the identification. 
		\begin{itemize}\itemsep-0.1em 
			\item 
			The condition that $g_0$ is continuous on $\mathcal W$ allows the components of $W$ to be discrete, continuous, or a mix of both types of random variable.  
			In particular, when $\mathcal W$ is a finite set, any function with domain $\mathcal W$ is continuous trivially. 
			\item The condition $g(w_\ast)=0$ serves purely for location normalization. 
			Alternatively, one may use the normalization scheme that the error term has a zero mean or median, noting that the class of models $Y = 1\{ V+  (g(W) - c) - (\varepsilon-c)>0\}$, for any constant $c \in \mathbb R$, are observationally equivalent. 
			Similarly, the coefficient on $V$ being one serves for scale normalization, since models $Y = 1\{ sV + sg(W) - s\varepsilon >0\}$, for any constant $s>0$, are also observational equivalent.

			\item  
			The assumption that $\varepsilon$ is independent of $X$ can be relaxed without affecting the identification of $g_0$.  
			However, this relaxation may incur a computational cost in large data environments. 
			For instance, when $\varepsilon$ depends on $X$ only through $V+g_0(W)$, even assuming $g_0$ is linear, the estimators proposed by \cite{ichimura-93} and \cite{klein-spady-93}  for such semiparametric index models become computationally challenging as the sample size increases or when the number of regressors is moderate. This is because these estimators rely on local smoothing procedures, where local smoothers must be recalculated afresh from the data for each observation at every iteration. 
			
		\end{itemize}
	\end{remark}

	A word on notation. 
	For a function $f$ whose domain is a subset of $\mathbb{R}^d$, let
	\[
	D^{(\lambda)} f(x) = \frac{\partial^{\lambda_1}}{\partial x_1^{\lambda_1}} \frac{\partial^{\lambda_2}}{\partial x_2^{\lambda_2}} \cdots \frac{\partial^{\lambda_d}}{\partial x_d^{\lambda_d}}, 
	\]
	where $\lambda = (\lambda_1,\lambda_2,\cdots,\lambda_d)'$ and its elements are nonnegative integers. For such multi-index $\lambda$, let $|\lambda| = \sum_{j=1}^{d} \lambda_j$. Let $D^{(0)} f = f$. 
	For functions whose domain is a subset of $\mathbb{R}$, the $\lambda$-th derivative is denoted as $f^{(\lambda)}$ for any nonnegative integer $\lambda$, and let $f^{(0)} = f$. 
	We write $P_W$ and $P_X$ for the distribution of $W$ and $X$.

	\section{Kernelized Non-Parametric Estimator}\label{section-estimator-implementation}

	Let $\{Y_i,X_i\}_{i=1}^n$ denote $n$ independent observations on the dependent variable $Y$ and covariate vector $X = (V,W')'$, and let $Z_i =(Y_i,X_i')'$.  
	In this section, we first define the proposed estimator in Section \ref{subsection-knp-defn}, followed by the discussions for implementations in Section \ref{subsection-implementation}, and practical implementation procedure in Section \ref{subsection-knp-practical-implementation}.

	\subsection{Kernelized Non-Parametric Estimator}\label{subsection-knp-defn}

	Motivated by Theorem \ref{theorem-id-g0-F0-unique-min}, we propose an estimator, which will be referred to as \emph{kernelized non-parametric} (KNP) estimator, for $\theta_0 = (g_0,F_0)$ as follows. 
	
	The KNP estimator $\hat \theta = (\hat g, \hat F)$ is given by 
	\begin{align}\label{knp-estimator}
		(\hat g, \hat F) \in \arg\min_{g\in \mathcal{G}_n, F\in \mathcal{F}_n}  \bigg\{ \hat Q(\theta)  :=  \frac{1}{n}\sum_{i=1}^n \ell(Z_i,\theta) \bigg\},
	\end{align}
	where we choose the least squares loss\footnote{Here, we choose the least squares loss function in \eqref{ell-ls-g-f}, as its boundedness properties facilitate the proofs. The MLE objective function in \eqref{ell-ml-g-f} could also be used with additional conditions, including assumptions controlling the tails of $\log p_{\theta}(x), \log (1- p_{\theta}(x))$ over $\theta \in \Theta, x\in \mathcal X$. 
		In addition, for the plug-in estimator of the weighted average partial derivative, the MLE objective will lead to the same asymptotic distribution as in Theorem \ref{theorem-asym-normality-wape}, under an analogous set of conditions to Assumption \ref{assumption-wape-asym-norm}.  
	}
	\begin{align*} 
		\ell(z,\theta)  & =  \big( y - p_{\theta}(x) \big)^2 =  \left( y - F(v+g(w))  \right)^2,  
	\end{align*}	
	and the sets $\mathcal G_n, \mathcal F_n$ for optimization are defined in \eqref{set-Gn} and \eqref{set-Fn} below.

	\paragraph{Set $\mathcal G_n$} 
	To describe the set $\mathcal G_n$, we first briefly introduce the RKHS notation. 
	Let a \emph{kernel} $k: \mathcal W \times \mathcal W \to \mathbb{R}$ be a symmetric function which is positive definite, in the sense that 
	$\sum_{i=1}^N \sum_{j=1}^N a_i a_j k(s_i, s_j) \geq 0$ for any $a_i \in \mathbb{R}$, $s_i \in \mathcal W $, $i =1,\cdots,N$, and any positive integer $N$. 
	Let $\mathbb G_k$ be the \emph{reproducing kernel Hilbert space} (RKHS) with reproducing kernel $k$, defined as the completion of the linear span of $\{ k(\cdot,w) |w\in \mathcal W \}$ under the RKHS norm $\|\cdot\|_{\mathbb{G}_k}$, 
	where $\|g\|_{\mathbb G_k} = \sqrt{\langle g,g\rangle_{\mathbb G_k}}$ is induced by the inner product  
	$\left\langle \sum_{i=1}^N a_i k(\cdot,s_i), \sum_{j=1}^M b_j k(\cdot,t_j) \right \rangle_{\mathbb{G}_k} := \sum_{i=1}^N \sum_{j=1}^M a_i b_j k(s_i,t_j)$ for any $s_i,t_j\in \mathcal W, a_i,b_j \in \mathbb{R}$ and any integers $M, N$.  
	In particular, $\|\sum_{i=1}^N a_i k(\cdot,s_i) \|_{\mathbb G_k}^2 = \sum_{i=1}^N \sum_{j=1}^N a_i a_j k(s_i, s_j) $.  
	An example of a commonly used kernel $k$ is the class of Gaussian kernels: For a $\sigma>0$, 
	\[
	k(s,t) = \exp\!\left( - \frac{\|s-t\|^2}{2\sigma^2} \right), \quad s,t \in \mathcal W. 
	\] 
	For any prespecified $\sigma^2>0$, the resulting RKHS $\mathbb{G}_k$ is dense in the space of all continuous functions on $\mathcal{W}$ under the uniform norm.

	With these notations, the set for the optimization of $g$ is chosen so that $g_0:\mathcal W \to \mathbb R$ with $g_0(w_\ast)=0$ is approximated based on functions within the balls of a RKHS $\mathbb G_k$. 
	Specifically, 
	\begin{equation}\label{set-Gn}
		\mathcal{G}_n = \Big\{ g: \mathcal W\to \mathbb R \Big| g(w) = \tilde g(w) - \tilde g(w_\ast) \ \forall w\in \mathcal W, \tilde g\in\mathbb{G}_k, \|\tilde g\|_{\mathbb{G}_k} \leq B_n \Big\}, 
	\end{equation} 
	where $B_n$ is the radius of the RKHS ball. 
	The form $g(w) = \tilde g(w) - \tilde g(w_\ast)$ is used to ensure that $g(w_\ast)=0$, which is location normalization for identification. 
	Intuitively, one can view $\tilde g_0(\cdot)$  (so that $g_0(\cdot)  = \tilde g_0(\cdot) - \tilde g_0(w_\ast)$)  
	as being approximated by a linear combination of kernel basis functions from the class $\{k(w,\cdot)| w\in \mathcal W\}$, and the constraint $\|\tilde g\|_{\mathbb G_k} \leq B_n$ then controls the ``roughness'' 
	of the approximating function and helps prevent over-fitting.

	In practice, we may choose $k$ to be a Gaussian kernel. 
	Moreover, $w_\ast$ is specified as a point in $\mathcal{W}$. 
	In practice, it is convenient to set $w_\ast$ to zero, coupled with standardizing the observations $(W_i)_{i=1}^n$ to have zero mean by subtracting their sample average.

	The radius $B_n$ governs the bias-variance tradeoff when estimating $g_0$ and is required to grow to infinity as $n\to \infty$ to ensure the consistency of the estimator. 
	A larger $B_n$ reduces approximation error, since $g_0$ can be better approximated by functions in $\mathcal G_n$, but also makes $\hat g$ more variable because it is optimized over a larger class  $\mathcal G_n$. 
	A theoretically optimal rate for $B_n$, established later in Corollary \ref{corollary-cgce-rate}, depends on unknown constants, including the number of uniformly bounded derivatives of $g_0$ that exist. 
	In practice, $B_n$ can be chosen via multi-fold cross-validation.

	\paragraph{Set $\mathcal F_n$} 
	Following \cite{fenton-gallant-96, fenton-gallant-96-joe}, the class of functions used to approximate $F_0$ is 
	\begin{equation}\label{set-Fn}
		\begin{split} 
			& \mathcal F_n = \Bigg\{F(\cdot;\tau) = \int_{-\infty}^{\cdot} f(u;\tau) du \bigg|  f(u; \tau) = \bigg( \sum_{j=0}^{J_n} \tau_j u^j \bigg)^2 e^{-u^2/2}, \tau \in \mathcal T_n \Bigg\} \\ 
			&  \text{with } \ \mathcal T_n  =\bigg\{ \tau =  (\tau_0,\tau_1,\dots,\tau_{J_n})' \in \mathbb R^{J_n+1} \bigg| \int f(u; \tau) du = 1 \bigg\}, 
		\end{split}
	\end{equation} 
	where $J_n$ is some positive integer growing to infinity as $n\to \infty$. 
	The idea behind $\mathcal F_n$ is that the density of $F_0$ is approximated by $f(u; \tau) =  \big( e^{-u^2/4}  \sum_{j=0}^{J_n} \tau_j u^j \big)^2$, which is effectively the square of the product of the density of $N(0,2)$ and a polynomial of order $J_n$.

	The polynomial order $J_n$ governs the bias-variance tradeoff when estimating $F_0$ and is required to grow to infinity as $n\to \infty$ to ensure the consistency of the estimator. 
	A larger $J_n$ reduces approximation error, since $F_0$ can be better approximated by CDFs in the larger class $\mathcal F_n$, but the estimator $\hat F$  becomes more variable. 
	A theoretically optimal rate for $J_n$, as established later in Corollary \ref{corollary-cgce-rate}, depends on unknown constants, including the number of derivatives of $F_0$'s density $f_0$ that exist. 
	In practice, $J_n$ can be chosen via multi-fold cross-validation. 
	
	\begin{remark}
		We comment on the choice of kernels and the analytical form of CDFs. 
		\begin{itemize}
			\itemsep -0.1em
			\item (Choice of kernels) Our consistency result later requires that the RKHS $\mathbb{G}_k$ is dense in the space of all continuous functions on compact supports under the uniform norm, and the convergence rate and asymptotic normality results focus on the Gaussian kernels. 
			The literature in statistical learning theory has discussed the conditions under which such denseness conditions are satisfied; 
			see, e.g., \cite{steinwart-01}, \cite{micchelli-xu-zhang-06} among others, where Gaussian kernels serve as one of the examples. See also Remark \ref{remark-largeRKHS} in Appendix~\ref{appendix-rkhs} for more details. 
			In the Monte Carlo experiments and the empirical application, we use a Gaussian kernel with $\sigma^2 = 1$, i.e., $k(s,t) = \exp( - \|s-t\|^2/2 )$.\footnote{
				In this paper, $\sigma^2$ is treated as a prespecified constant; we leave for future research the theory of optimal choice of $\sigma^2$. 
			}
			\item (Analytical form of CDFs in implementation) The CDFs in $\mathcal F_n$ with constraint $\int f(\cdot;\tau)=1$ have simple closed-form expressions parameterized by unconstrained $\tau=(\tau_1,\dots,\tau_{J_n})'$ in $\mathbb R^{J_n}$. 
			See Appendix~\ref{appendix-implementation-cdf-form} for the specific form. 
			This makes the implementation of the proposed estimation procedure straightforward, as no numerical integration is required. 
			Other basis functions, such as wavelets, may also be used to estimate densities on $\mathbb{R}$ and yield closed-form distribution functions (e.g., \citealp{vidakovic-09}). 
			Here, we restrict our attention to Hermite polynomial approximations, since they are not only suitable for our model but also widely used in practice; see, e.g., \cite{merlo-paula-17}, \cite{larsen-21}, \cite{beneito-et-al-21}, and \cite{freyberger-larsen-22}.  
			
		\end{itemize}
		
	\end{remark}

	\begin{remark}[Nonuniqueness and the definition of the estimator]
		\label{remark-KNP-non-unique}
		Since $\hat Q(\theta)$ is not convex in $\theta = (g,F)$, it may have multiple minimizers, even though $Q(\theta)$ has a unique minimizer $\theta_0$ over $\Theta$ by Theorem \ref{theorem-id-g0-F0-unique-min}. 
		In \eqref{knp-estimator}, we define $\hat\theta$ as any minimizer of $\hat Q(\theta)$. 
		Our asymptotic results, including $\|\hat g-g_0\|_\infty\to_p 0$ and $\|\hat F - F_0\|_\infty \to_p 0$ in Theorem \ref{theorem-consistency}, apply to general 
		\emph{approximate solutions} to \eqref{knp-estimator} in the sense of 
		$\hat Q(\hat \theta) \leq \inf_{\theta\in \Theta_n} \hat Q(\theta) + o_p(1)$.\footnote{The requirement that $\hat\theta$ only approximately solves the minimization problem is standard for sieve estimators; see, e.g., \citet[][p. 5561]{chen-07}. Our estimator can be viewed as a special case of a sieve estimator.} 
		So, the theoretical properties of an estimator $\hat \theta$ do not require finite-sample uniqueness. 
		In practice, we compute \emph{one solution} by numerical optimization, as explained in the implementation subsection below. 
	\end{remark}

	\subsection{Explanations for the Implementation and the Estimator}\label{subsection-implementation}
	
	Now we discuss how we optimize over $g\in \mathcal{G}_n$ and over $F \in \mathcal{F}_n$, in order to obtain an estimator 
	$(\hat g, \hat F)$. 
	As in Remark \ref{remark-KNP-non-unique}, it suffices to find one solution that solves \eqref{knp-estimator}, either exactly or approximately.

	\subsubsection{From optimization over RKHS to a finite-dimensional problem}

	Since $\mathcal{G}_n$ is infinite-dimensional, the optimization over $g\in \mathcal{G}_n$ is not directly solvable in practice. 
	The following Proposition \ref{proposition-representer} ensures that 
	we can find a solution to $g$ in \eqref{knp-estimator} in a finite-dimensional subspace, by setting 
	\begin{align}\label{g-aux-1}
		g = \tilde g - \tilde g(w_\ast), \quad \tilde g(w) = \sum_{j=0}^n \delta_j k(W_j,w), \quad W_0:=w_\ast, 
	\end{align}
	and optimizing over $\delta = (\delta_0, \delta_1,\dots,\delta_n)'$, a finite-dimensional Euclidean space. 
	Then the constraint $\|\tilde g\|_{\mathbb{G}_k} \leq B_n$ 
	in the definition of $\mathcal G_n$ can be imposed via $\delta' K\delta \leq B_n^2$, 
	where $\delta :=(\delta_0, \delta_1,\dots,\delta_n)'$ and $K$ is the $(n+1)\times (n+1)$ square matrix whose elements are given by $k(W_i,W_j)$ for $i,j=0,1,\dots,n$.  
	This is because $\big\| \sum_{j=0}^n \delta_j k(W_j, \cdot) \big\|_{\mathbb G_k}^2 = \sum_{i,j=0}^n \delta_i \delta_j k(W_i,W_j)$.  
	
	\begin{proposition}\label{proposition-representer} 
		For any $g\in \mathcal{G}_n$ and $F\in \mathcal{F}_n$, there exists 
		$g_\ast = \tilde g_\ast - \tilde g_\ast(w_\ast) \in \mathcal G_n$ such that  $\hat Q(g_\ast,F) = \hat Q(g,F)$, where $\tilde g_\ast(w) = \sum_{j=0}^n \delta_j k(W_j,w)$ for some real numbers $\delta_j$'s and $\|\tilde g_\ast\|_{\mathbb{G}_k} \leq B_n$. 
	\end{proposition}

	\noindent
	Proposition \ref{proposition-representer} shows that, for any $(g,F)\in \mathcal G_n\times \mathcal F_n$, the criterion value $\hat Q( (g,F))$ can be attained by some $(g_\ast,F)$ where $g_\ast(\cdot) = \sum_{j=0}^n \delta_j \big( k(W_j,\cdot) - k(W_j,w_\ast) \big) \in \mathcal G_n$. 
	Therefore, when minimizing $\hat Q( (g,F))$ over $\mathcal G_n\times \mathcal F_n$, we can restrict, without loss of generality, the search for $g$ to a finite-dimensional class of functions with the form in \eqref{g-aux-1}. 
	This restriction leaves the minimum value of the criterion unchanged and yields a valid solution to \eqref{knp-estimator}.

	\begin{remark} 
		Proposition \ref{proposition-representer} is an analogue of the representer theorem (e.g., Theorem 4.2 in \citealp{scholkopf-smola-02}) for the constrained formulation, where the estimator is defined under the explicit norm constraint $\|\tilde g\|_{\mathbb G_k}\leq B_n$, rather than under the penalized formulation that adds a Ridge-type penalty $c_n\|\tilde g\|_{\mathbb G_k}^2$ to the objective function as in \citet[Theorem~4.1]{scholkopf-smola-02}; the proof follows the same arguments.  
		Unlike Theorem 4.2 in \cite{scholkopf-smola-02},  
		it is not ensured here that every minimizer admits the representer form since the constraint may not be binding; see Remark \ref{remark-non-representer-min} in Appendix~\ref{appendix-proof-representer} for details. 
		Finally, the radius $B_n$ in the norm constraint is convenient for explicitly controlling the bias-variance trade-off and the convergence rate later.  
		Since $\hat Q(\theta)$ is not convex, 
		it is not ensured that there exists a one-to-one mapping between the Ivanov radius $B_n$ and the Ridge penalty parameter $c_n$ that would make the penalized and constrained problems yield the same solutions.  
		We thus work directly with the norm constraint. 
	\end{remark}

	Therefore, a solution $\hat\theta$ in \eqref{knp-estimator} is given by 
	\begin{equation}\label{ghat-fhat-tau-delta}
		\begin{split}
			\hat g(w) & = \sum_{j = 0}^n \hat \delta_j  \big( k(W_j,w) -k(W_j,w_\ast) \big) \\ 
			\hat F(u) & = F(u; \hat \tau)
		\end{split}
	\end{equation}
	where evaluations of $F(\cdot;\tau)$ 
	are computed using the closed-form expression given in Appendix~\ref{appendix-implementation-cdf-form} for any $\tau =(\tau_1,\dots,\tau_{J_n})'\in\mathbb R^{J_n}$, 
	and $(\hat \delta, \hat \tau)$ solves\footnote{Note that $g = \tilde g - \tilde g(w_\ast)$ for some $\tilde g(w) =\sum_{j=0}^n \delta_j  k(W_j,w)$ implies that, for each observation $i=1,\dots,n$, $g(W_i)$ admits the form 
		$g(W_i) = \sum_{j=0}^n \delta_j \big( k(W_j,W_i) - k(W_j,W_0) \big) = [K\delta]_{i+1} - [K\delta]_{1}  $. 
	}   
	\begin{equation}\label{obj-aux-1}
		\begin{split}
			\min_{\delta \in\mathbb R^{n+1}, \tau\in \mathbb R^{J_n} 
			} & \quad   
			\frac{1}{n}\sum_{i=1}^n  \Big( Y_i - F\big(V_i +  [K\delta]_{i+1} - [K\delta]_{1} ; \tau \big)   \Big)^2  
			\\ 
			\textsl{s.t. } & \quad  \delta'K\delta \leq B_n^2, 
		\end{split} 
	\end{equation}
	where 
	$K$ is the $(n+1)\times (n+1)$ Gram matrix whose elements are given by $k(W_i,W_j)$ for $i,j=0,1,\dots,n$, 
	and 
	$[K\delta]_{j}$ denotes the $j$-th element of the vector $K\delta$.

	\begin{remark}[Rank of $K$]
		Regarding the rank of $K$, we provide the following facts, focusing on the kernel functions $k$ whose RKHSs are dense under the uniform norm in the space of all continuous functions on compact domains, including special cases such as Gaussian kernels. 
		$K$ has full rank if observations $W_1,\dots, W_n$ are mutually different, which occurs with probability one when $W$ contains a random variable that has Lebesgue density; See Lemma \ref{lemma-kernel-universal-psd} in Appendix~\ref{appendix-rank-K} for a formal statement and proof. 
		When $\mathcal W$ is finite, which occurs when all components of $W$ are categorical variables, $K$ has a rank no greater than the cardinality of $\mathcal W$. 
	\end{remark}

	\subsubsection{Dimension reduction by spectral cut-off}
	
	The optimization in \eqref{obj-aux-1} is over $\delta \in \mathbb R^{n+1}, \tau\in \mathbb R^{J_n}$. 
	Following the numerical evidence provided by \cite{fenton-gallant-96-joe}, one may set $J_n=n^{1/5}$. 
	We also find that using cross-validation in practice to select $J_n$ often results in a relatively small choice of $J_n$. 
	When the sample size is large, the optimization over $\tau$ is generally not demanding since $J_n$ remains relatively small. 
	However, the optimization over $\delta \in \mathbb R^{n+1}$ can be computationally intensive.  
	To address this, we reduce the dimensionality by using the leading eigenvectors to approximate the $(n+1)\times (n+1)$ matrix $K$.

	More specifically, let $(\hat \lambda_j)_{j=0}^n$ be the eigenvalues of $K$ in descending order and $(\hat u_j)_{j=0}^n$ be the associated orthonormal eigenvectors. Let $\hat U_m$ be the $(n+1)\times m$ matrix collecting the first $m$ columns of  $\hat u_{j}$ for $j=0,1\dots,n$, and let $\hat \Lambda_m$ be the $m\times m$ diagonal matrix collecting the first $m$ eigenvalues of $(\hat u_j)_{j=0}^n$. 
	Then, by the Eckart-Young-Mirsky theorem (see, e.g., Theorem 2.4.8 in \citealp{golub-vanLoan-13}), the best rank-$m$ approximation of the matrix $K$, under both the operator norm and the Frobenius norm, is given by  
	\[
	K \approx \hat U_m \hat \Lambda_m \hat U_m' .
	\] 
	In particular, the approximation of $K$ under the operator norm implies that we can approximate $K\delta$ in the Euclidean norm by 
	\[
	K\delta \approx \hat U_m \hat \Lambda_m \hat U_m' \delta = \hat U_m  \zeta_{\delta}, \quad  \text{where} \quad \zeta_{\delta}= \hat \Lambda_m \hat U_m'\delta \in \mathbb R^{m}. 
	\]

	Motivated by the approximation described above, i.e., using a low-rank approximation of $K$ based on its first $m$ principal components (PC), a reduced problem of \eqref{obj-aux-1} is given by 
	\begin{equation}\label{obj-aux-1-pca}
		\begin{split}
			\min_{\zeta\in \mathbb R^{m}, \tau\in \mathbb R^{J_n} 
			} & \quad  
			\frac{1}{n}\sum_{i=1}^n  \Big( Y_i - F\big(V_i + [\hat U_m\zeta]_{i+1} -  [\hat U_m\zeta]_{1} ; \tau \big)   \Big)^2      
			\\
			\textsl{s.t. } & \quad  \zeta'\hat \Lambda_m^{-1} \zeta \leq B_n^2, 
		\end{split}
	\end{equation}
	where $[\hat U_m \zeta]_j$ denotes the $j$-th element of $\hat U_m\zeta$.  
	Let $(\hat \zeta_{pc}, \hat \tau_{pc})$ be a solution to \eqref{obj-aux-1-pca}, and let $\hat \delta_{pc} = \hat U_m \hat \Lambda_m^{-1} \hat \zeta_{pc}$.\footnote{This is by the approximation $\delta = \sum_{j=0}^n \hat u_j \hat u_j' \delta  \approx \hat U_m \hat U_m' \delta = \hat U_m \hat \Lambda_m^{-1} \hat \Lambda_m \hat U_m' \delta =  \hat U_m \hat \Lambda_m^{-1} \zeta$. 
	} 
	The PC regularized KNP estimator $(\hat g_{pc}, \hat F_{pc})$ is given by $(\hat \delta_{pc}, \hat \tau_{pc})$ via 
	\begin{equation}\label{ghat-fhat-tau-zeta}
		\begin{split}
			\hat g_{pc}(w) & = \sum_{j = 0}^n \hat \delta_{pc,j}  \big( k(W_j,w) -k(W_j,w_\ast) \big) \\ 
			\hat F_{pc}(u) & = F(u;\hat \tau_{pc}) . 
		\end{split}
	\end{equation}
	The optimization in \eqref{obj-aux-1-pca} reduces the dimension of the one in \eqref{obj-aux-1} from $n+J_n$ to $m +J_n$. 
	Here, $m$ is expected to increase with the sample size $n$, both in theory and in practice, but we suppress this dependence for notational simplicity.

		\paragraph{Numerical optimization based on analytical gradients:}
		For implementation, 
		we solve the constrained nonlinear minimization problems in \eqref{obj-aux-1} or \eqref{obj-aux-1-pca}  by	standard numerical optimization routines in existing software.\footnote{
			One can formally write a Lagrangian and consider an unconstrained penalized formulation with a Ridge-type penalty in place of the constraint in \eqref{obj-aux-1} and \eqref{obj-aux-1-pca}. 
			However, because the objective is nonconvex, strong duality is not guaranteed, so the constrained and penalized formulations need not be equivalent. 
		}$^{,}$
		\footnote{The objectives in \eqref{obj-aux-1} and \eqref{obj-aux-1-pca} are not convex in $(\delta,\tau)$ and $(\zeta,\tau)$, and may have multiple minimizers.  
			As in Remark \ref{remark-KNP-non-unique}, $\hat\theta$ is defined as any minimizer of $\hat Q(\theta)$, 
			so any minimizer $(\hat\delta,\hat\tau)$ of \eqref{obj-aux-1} yields an estimator via \eqref{ghat-fhat-tau-delta}. 
			Furthermore, any minimizer $(\hat\zeta,\hat\tau)$ of the dimension reduced objective \eqref{obj-aux-1-pca} yields, through \eqref{ghat-fhat-tau-zeta}, an approximate estimator that shares the same theoretical properties, 
			provided $m$ is chosen appropriately. 
		}
		In this paper, we use the \texttt{fmincon} routine in MATLAB with the default interior-point algorithm to handle the constraint, and we provide the routine with the analytical gradients to facilitate computation.\footnote{
			We use zero as the initial value for the numerical optimization throughout, and the obtained estimates perform well in the Monte Carlo experiments. 
			We also experimented with different starting values, and they gave similar objective values; 
			see Remark~\ref{remark-simulation-diff-initialization} for details. 
		}

	In practice, we choose $m$, along with $J_n, B_n$, based on multi-fold cross-validation. 
	In the theoretical framework of this paper, the number $m$ of eigenvectors used to approximate $K$ needs to satisfy certain conditions to ensure that $\hat\theta_{pc}$ is a near minimum of the optimization problem \eqref{knp-estimator}. 
	See Theorems \ref{theorem-consistency}, \ref{theorem-cgce-rate-aux} and Corollary \ref{corollary-cgce-rate} for the specific conditions, which depend on the eigenvalue decay of the Gram matrix $K$ and the radius $B_n$.

	\begin{remark}[Comparison with classical sieves] 
		A standard sieve approach typically approximates $g_0$ based on $\varphi(w)' \beta$, where $\varphi(w) = (\varphi_1(w),\dots,\varphi_D(w))' \in \mathbb R^D$ collects polynomials or spline basis functions.\footnote{E.g., $\varphi(w) = (1,w,w^2,\dots, w^{D-1})'$ when $w$ is a scalar, or $\varphi(w)$ is a collection of $w$ and higher order and interaction terms if $w$ is multidimensional.}  
		The RKHS-based approach replaces the manually chosen basis $\{\varphi_j(\cdot)\}_{j=1}^D$ with the data-dependent kernel basis $\{k(W_j,\cdot)\}_{j=0}^n$ from the observed $W_j$'s, as in \eqref{ghat-fhat-tau-delta}, 
		and, in the PC-regularized problem \eqref{ghat-fhat-tau-zeta}, with $m$ basis functions $( k(W_0,\cdot), \dots, k(W_n,\cdot) ) \hat U_m \hat\Lambda_m^{-1}$.  
		In this sense, the RKHS-based approach can be viewed as a special sieve method that uses \emph{data-driven} basis functions instead of \emph{pre-specified deterministic} ones, 
		together with norm constraints on the target function.

		The RKHS approach offers a computational advantage over classical sieves 
		when $W$ is multivariate and $d_w$ is such that expanding basis functions and computing derivatives for a potentially large set of sieve basis functions is inconvenient. 
		Series or classical sieve methods rely on taking products of univariate basis functions of different variables to generate sieves of multivariate functions, so the number $D$ of basis functions $\{\varphi_j(w)\}_{j=1}^D$ increases quickly with $d_w$ due to interaction terms, and the optimization for this part must be solved in $\mathbb R^D$ for the coefficients of these basis functions. 
		By contrast, the RKHS-based method works with basis $\{k(W_j,\cdot)\}_{j=0}^n$, 
		which does not depend on the dimension $d_w$ or on the construction of interaction terms.   
		More importantly, when $n$ is large, the spectral cutoff effectively reduces the dimension of optimization further to $\mathbb R^m$, where $m$ is the number of retained principal components of $K$ (or, equivalently, the number of data-driven basis functions in $( k(W_0,\cdot), \dots, k(W_n,\cdot) ) \hat U_m \hat\Lambda_{m}^{-1}$), and $m$ can be chosen much smaller than $n$. 
		Furthermore, computing derivatives of $\hat g(w)$ with respect to $w$ 
		only requires derivatives of the kernel $k(\cdot,\cdot)$, rather than derivatives of a potentially large set of sieve basis functions.

		The RKHS approach makes the estimation and computation feasible when $d_w$ is moderate, but it does not remove the curse of dimensionality. 
		Despite the computational gains, the convergence rate of the proposed estimator in Theorem \ref{theorem-cgce-rate-aux} becomes slower when $d_w$ increases, as in other nonparametric methods. 
	\end{remark}

	\subsubsection{KNP for CCP, APEs, and conditional APEs}\label{subsection-wape-computation}
	The KNP estimation procedure provides a natural plug-in estimator for CCP, in the form $\hat p(x) = \hat F(v+\hat g(w))$. 
	Moreover, the derivatives of $\hat p(x)$, i.e. $\frac{\partial
		\hat p(x)}{\partial x} = \hat f(v+\hat g(w)) \frac{\partial (v+\hat g(w) )}{\partial x}$, can be easily evaluated using the estimated density $\hat f$ indexed by $\hat \tau$ and the derivatives of the kernel function. 
	In particular, from the form of $\hat g$ in \eqref{ghat-fhat-tau-delta} or \eqref{ghat-fhat-tau-zeta}, the derivative $\frac{\partial (v+\hat g(w) )}{\partial x}$ (and hence $\frac{\partial
		\hat p(x)}{\partial x} $) involves only derivatives of the kernel function.
	Accordingly, the computation remains essentially unchanged as the dimension $d_w$ of $W$ increases.\footnote{
		The computation of partial effects here does not require constructing and differentiating a large set of basis functions whose number can grow rapidly with $d_w$, as in classical series/sieve methods. 
	}

	This facilitates the estimation of weighted average partial derivative estimators. 
	In particular, the APE of $X$ and its KNP estimator are given by 
	\[
	APE_{x} = \mathbb E \frac{\partial }{ \partial x} p_0(X), \quad \widehat{APE}_x = \frac{1}{n} \sum_{i=1}^n \frac{\partial }{ \partial x} \hat p(X_i).
	\]
	The APE of $W$ conditioning on $X\in \mathcal S$ for some region $ \mathcal S\subset \mathcal X$ and its estimator are 
	\[
	cAPE_{x| \mathcal S} = \mathbb E \Big( \frac{\partial }{ \partial x} p_0(X) \Big| X\in S\Big), \quad 
	\widehat{cAPE}_{x| \mathcal S} = \frac{ \frac{1}{n} \sum_{i=1}^n \left( 1\{X_i\in \mathcal S\} 
		\frac{\partial }{ \partial x} \hat p(X_i) \right) }{ \frac{1}{n} \sum_{i=1}^n 1\{X_i\in  \mathcal S\} }.
	\]

	\subsection{Practical Implementation}\label{subsection-knp-practical-implementation}
	
	Let $\{(Y_i, V_i,W_i)\}_{i=1}^n$ be the sample. 
	Specify a point $w_\ast \in \mathcal{W}$ for the location normalization $g(w_\ast)=0$. 
	In practice, it is convenient to set $w_\ast =0$ and standardize $(W_i)_{i=1}^n$ to have sample mean zero by subtracting their sample average.  
	Specify a kernel $k$, e.g., $k(s,t)=\exp(-\|s-t\|^2/2)$ used in this paper's Monte Carlo experiments and the empirical application.

	Given $J_n, B_n$ and $m$, the PC-regularized KNP estimator is implemented as follows. 
	\begin{enumerate} [(1)]
		\itemsep -0.1em
		\item Construct the $(n+1)\times (n+1)$ matrix $K$ whose elements are given by $k(W_i,W_j)$ for $i,j=0,1,\dots,n$, where $W_0:=w_\ast$.  
		Compute its eigendecomposition and obtain $\hat \Lambda_m, \hat U_m$, where $\hat \Lambda_m$ is the diagonal matrix collecting the first $m$ eigenvalues in nonincreasing order, and $\hat U_m$ is the matrix whose columns are the corresponding eigenvectors. 
		
		\item 
		Solve for a minimum $(\hat \zeta_{pc},\hat\tau_{pc})$ in \eqref{obj-aux-1-pca} 
		using standard constrained nonlinear numerical optimization routines in existing software, for example \texttt{fmincon} in MATLAB with the default interior-point method to handle the constraint. 
		Supply the routine with the gradients with respect to $\zeta$ and $ \tau$ to facilitate computation. 
		The analytic forms of the objective, constraint, gradients are given in Appendix \ref{appendix-implementation}.

		\item $\hat g_{pc}, \hat F_{pc}$ are obtained via \eqref{ghat-fhat-tau-zeta}. 
		Compute the estimated choice probabilities, APEs, and conditional APEs using the formulas in Section \ref{subsection-wape-computation} as needed.

	\end{enumerate}

	MATLAB code for the implementation of the proposed estimator, together with replication packages for the simulation studies and the empirical application, is available online on the author’s website.

	\section{Theoretical Properties}\label{section-theoretical-results}
	
	This section presents the main theoretical properties of the proposed KNP estimator, both with and without spectral cut-off regularization. 
	In the following subsections, we establish the consistency of the estimator for $\theta_0 = (g_0, F_0)$, the convergence rate of the estimated CCP, and asymptotic normality of weighted average partial derivatives of the CCP.

	For the theory of the PC regularized KNP estimator defined through \eqref{obj-aux-1-pca} and \eqref{ghat-fhat-tau-zeta}, we provide a bound on the difference between the values of $\hat Q$ at $\hat \theta$ and at $\hat \theta_{pc}$, based on which we impose conditions so that $\hat \theta_{pc}$ can be viewed as a near optimum solution when optimizing $\hat Q(\cdot)$ over $\mathcal G_n \times \mathcal F_n$.

	\begin{lemma}\label{lemma-pc-regularization-error} 
		Let $\hat g_{pc}, \hat F_{pc}$ be the PC regularized KNP estimator defined through \eqref{obj-aux-1-pca} and \eqref{ghat-fhat-tau-zeta}. 
		Provided that the densities of all distribution functions in $\mathcal F_n$ satisfy $\|f\|_\infty <M_{\mathcal F}$, it holds that 
		\[
		\hat{Q}( \hat g_{pc}, \hat F_{pc} ) \leq  \inf_{g\in \mathcal G_n, F\in \mathcal F_n} \hat Q(\theta) + 4 M_{\mathcal F} B_n \hat \lambda_{m+1}^{1/2}. 
		\]
	\end{lemma}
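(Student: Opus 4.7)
The plan is to compare the PC-regularized objective value at $(\hat g_{pc},\hat F_{pc})$ to the unregularized infimum by, for an arbitrary near-minimizer $(g^\ast,F^\ast)\in \mathcal G_n\times\mathcal F_n$, exhibiting a feasible PC candidate that evaluates almost identically to $(g^\ast,F^\ast)$ on the sample. The quality of this approximation will be controlled by the truncated spectrum of $K$, giving the stated $\hat\lambda_{m+1}^{1/2}$ rate.

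First I would fix $\varepsilon>0$ and pick $(g^\ast,F^\ast)\in \mathcal G_n\times\mathcal F_n$ with $\hat Q(g^\ast,F^\ast)\leq \inf_{\mathcal G_n\times\mathcal F_n}\hat Q + \varepsilon$. By Theorem~\ref{theorem-representer}, I can represent $g^\ast = \tilde g^\ast-\tilde g^\ast(w_\ast)$ with $\tilde g^\ast(\cdot)=\sum_{j=0}^n\delta^\ast_j k(W_j,\cdot)$ and $(\delta^\ast)'K\delta^\ast<B_n^2$. Define the projected coefficient $\zeta^\ast:=\hat\Lambda_m\hat U_m'\delta^\ast$. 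Using $K=\sum_{j=0}^n\hat\lambda_j\hat u_j\hat u_j'$, a direct computation gives
\[
(\zeta^\ast)'\hat\Lambda_m^{-1}\zeta^\ast=\sum_{j=0}^{m-1}\hat\lambda_j(\hat u_j'\delta^\ast)^2\;\leq\;(\delta^\ast)'K\delta^\ast<B_n^2,
\]
so $(\zeta^\ast,\tau^\ast)$ is feasible for \eqref{obj-aux-1-pca}, where $\tau^\ast$ is the parameter underlying $F^\ast$. Optimality of $(\hat\zeta_{pc},\hat\tau_{pc})$ then yields $\hat Q(\hat g_{pc},\hat F_{pc})\leq \hat Q(g^{pc\ast},F^\ast)$, where $g^{pc\ast}$ denotes the function induced by $(\zeta^\ast,F^\ast)$ through the PC parametrisation in \eqref{obj-aux-1-pca}.

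Next I would bound $\hat Q(g^{pc\ast},F^\ast)-\hat Q(g^\ast,F^\ast)$ term-by-term. Since $Y_i\in\{0,1\}$ and $F^\ast$ takes values in $[0,1]$, the identity $(a-b)^2-(a-c)^2=(c-b)(2a-b-c)$ gives $|(Y_i-F^\ast(u_i))^2-(Y_i-F^\ast(u_i^{pc}))^2|\leq 2\,|F^\ast(u_i)-F^\ast(u_i^{pc})|$ with $u_i=V_i+g^\ast(W_i)$ and $u_i^{pc}=V_i+g^{pc\ast}(W_i)$. The density bound $\|f^\ast\|_\infty\leq M_{\mathcal F}$ turns this into $2M_{\mathcal F}|g^\ast(W_i)-g^{pc\ast}(W_i)|$. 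Writing $E:=K-\hat U_m\hat\Lambda_m\hat U_m'=\sum_{j\geq m}\hat\lambda_j\hat u_j\hat u_j'$, we have $g^\ast(W_i)-g^{pc\ast}(W_i)=[E\delta^\ast]_{i+1}-[E\delta^\ast]_1$, so
\[
\bigl|\hat Q(g^{pc\ast},F^\ast)-\hat Q(g^\ast,F^\ast)\bigr|\;\leq\;\frac{2M_{\mathcal F}}{n}\sum_{i=1}^n\bigl(|[E\delta^\ast]_{i+1}|+|[E\delta^\ast]_1|\bigr).
\]

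Finally I would bound the two pieces. By Cauchy–Schwarz, $\tfrac{1}{n}\sum_{i=1}^n|[E\delta^\ast]_{i+1}|\leq \|E\delta^\ast\|_2/\sqrt n$, and $|[E\delta^\ast]_1|\leq \|E\delta^\ast\|_2$, so the sum is at most $2\|E\delta^\ast\|_2$ (for $n\geq 1$). The key spectral step is
\[
\|E\delta^\ast\|_2^2\;=\;\sum_{j\geq m}\hat\lambda_j^2(\hat u_j'\delta^\ast)^2\;\leq\;\hat\lambda_{m+1}\sum_{j\geq m}\hat\lambda_j(\hat u_j'\delta^\ast)^2\;\leq\;\hat\lambda_{m+1}\,(\delta^\ast)'K\delta^\ast\;<\;\hat\lambda_{m+1}B_n^2,
\]
using the descending order of eigenvalues (under the paper's indexing, the first eigenvalue omitted from the top-$m$ truncation is indexed $m+1$). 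Assembling these pieces gives $\hat Q(\hat g_{pc},\hat F_{pc})\leq \hat Q(g^\ast,F^\ast)+4M_{\mathcal F}B_n\hat\lambda_{m+1}^{1/2}\leq \inf \hat Q+\varepsilon+4M_{\mathcal F}B_n\hat\lambda_{m+1}^{1/2}$, and letting $\varepsilon\downarrow 0$ concludes the proof.

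The mild subtlety I anticipate is the asymmetric treatment of the two terms $[E\delta^\ast]_{i+1}$ and $[E\delta^\ast]_1$: the first benefits from averaging over $n$ observations, while the second (arising from the location normalisation at $w_\ast$) does not, and has to be absorbed by the uniform $\ell_2$ bound. This forces the cleaner use of $\|E\delta^\ast\|_2\leq \hat\lambda_{m+1}^{1/2}B_n$ rather than a sharper average-case bound, and is the reason the constant in front is $4$ rather than $2$.
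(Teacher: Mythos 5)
Your proof is correct and takes essentially the same route as the paper: representer theorem to reduce the infimum to the $\delta$-parametrisation, projection of $\delta$ onto the top-$m$ eigenspace to produce a feasible PC candidate, Lipschitz bound $|F(a)-F(b)|\leq M_{\mathcal F}|a-b|$, and the spectral estimate $\|E\delta\|_2^2 = \sum_{j>m}\hat\lambda_j^2(\hat u_j'\delta)^2 \leq \hat\lambda_{m+1}\,\delta'K\delta < \hat\lambda_{m+1}B_n^2$. The one minor point where your write-up is actually more complete than the paper's is that you explicitly verify the feasibility constraint $(\zeta^\ast)'\hat\Lambda_m^{-1}\zeta^\ast < B_n^2$, which the paper leaves implicit.
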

	
	\noindent 
	Recall that $\hat \lambda_m$ is the $m$-th eigenvalue of the Gram matrix $K$, and $B_n$ is the radius of the RKHS ball used when approximating $g_0$. 
	
	\begin{remark}[$\hat \theta_{pc}$ as a near optimum]
		Lemma \ref{lemma-pc-regularization-error} shows that $\hat{Q}( \hat g_{pc}, \hat F_{pc} ) \leq  \inf_{g\in \mathcal G_n, F\in \mathcal F_n} \hat Q(\theta) + O_p( B_n \hat\lambda_{m+1}^{1/2})$. 
		For the theoretical properties established later, we assume that $B_n \hat\lambda_{m+1}^{1/2}$ is asymptotically negligible, so that $\hat \theta_{pc}$ can be viewed as a near optimum solution when optimizing $\hat Q(\cdot)$ over $\mathcal G_n \times \mathcal F_n$. 
	\end{remark}

	\subsection{Consistency of the KNP estimator}\label{subsection-consistency}
	
	To establish consistency, we need to impose some assumptions. 
	We define the weighted Sobolev norm 
	\[
	\|f\|_{m_0+m_e,2,\eta_0} := \left( \sum_{0 \leq \lambda \leq m_0+m_e} \int \left| f^{(\lambda)} (u) \right|^2 (1+u^2)^{\eta_0} du \right)^{1/2}, 
	\]
	where $m_0, m_e$ are positive integers, constant $\eta_0>1/2$, and we focus on $\eta \in (1/2,\eta_0)$. 
	
	\begin{assumption}\label{assumption-consistency}
		Assume that
		\begin{enumerate}[(a)]\itemsep-0.1em
			\item $\mathcal W$ is compact.  
			
			\item 
			$\mathcal G$ consists of functions $g:\mathcal W\to \mathbb R$ with $g(w_\ast) = 0$ which have derivatives up to order $m_w$ and all derivatives are uniformly bounded by a constant $M>0$.  
			
			\item $\mathcal{G}_n$ consists of functions in $\mathcal G$ admitting the form $g(\cdot) = \tilde g(\cdot) -  \tilde g(w_\ast)$, where $\tilde g\in\mathbb{G}_k$,  $\|\tilde g\|_{\mathbb{G}_k} \leq B_n $, and $B_n \to \infty$ as $n \to \infty$. 
			Moreover, there exists $g_n\in \mathcal G_n$ such that $\sup_{w\in \mathcal W} |g_n(w) - g_0(w)| \to 0$.  
			
			\item $\mathcal{F}$ consist of distribution functions whose Lebesgue densities have the form 
			$f(u) = \big( f_{sr}(u) \big)^2 $ where $f_{sr}$ is $(m_0+m_e)$-times differentiable with uniformly bounded weighted Sobolev norm, that is  $\|f_{sr}\|_{m_0+m_e,2,\eta_0}<M$  for some positive integers $m_e, m_0$ and constants $\eta_0>1/2$ and $M>0$. 
			\label{aspn-F}
			\item $\mathcal{F}_n$  consists of distribution functions in $\mathcal F$ whose Lebesgue densities have the form 
			\[
			f(u) = \big( f_{sr}(u;\tau) \big)^2, \quad f_{sr}(u;\tau) := \sum_{j=0}^{J_n} \tau_j u^{j}  e^{-u^2/4}
			\] 
			for some $\tau \in \mathbb{R}^{J_n+1}$ and positive integers $J_n$ satisfying that $J_n \to \infty$ as $n \to \infty$.  
		\end{enumerate}
	\end{assumption}

	\begin{remark}
		Assumption \ref{assumption-consistency} is used to ensure the compactness of the parameter sets $\mathcal{G}$ and $\mathcal{F}$ and the denseness of the estimation sets. Before establishing consistency, we comment on some of the conditions imposed. 
		\begin{itemize}\itemsep-0.1em
			
			\item Condition (a) is satisfied automatically when $\mathcal W$ is a set of finitely many points, accommodating naturally for the case where all components $W$ are discrete random variables with finite supports. Similarly, it also accommodates the case where $W$ has both discrete components and continuous components with compact supports. 
			
			\item Condition (b) imposes a smoothness restriction on members of $\mathcal G$, which ensures that $\mathcal G$ is compact under the uniform norm.  
			When $W$ contains discrete components, the condition is considered satisfied as long as functions $g:\mathcal W \to \mathbb R$ can be extended to a function with a domain on an open set and all derivatives of this extended function are uniformly bounded. 
			
			\item Condition (c) ensures that $g_0$ can be approximated by functions in $\mathcal G_n$ arbitrarily well as $n$ becomes large. This condition is satisfied for a variety of kernel functions, including Gaussian kernels for any $\sigma^2$, satisfying the property that the RKHSs $\mathbb G_k$ are dense in the space of all continuous functions. See Remark~\ref{remark-largeRKHS} in Appendix~\ref{appendix-rkhs} for more examples and references therein. 
			
			\item Condition (d) imposes a smoothness restriction on members of $\mathcal F$. 
			The definitions of $\mathcal F_n, \mathcal F$ are taken from \cite{gallant-nychka-87} in a slightly modified form, to better align with the version in \cite{fenton-gallant-96, fenton-gallant-96-joe} and for the convenience of imposing tail conditions later. 
			Conditions (d)-(e) ensures that, under the uniform norm, $\mathcal F$ is compact and $\mathcal F_n$ is dense in $\mathcal F$.  See Lemma \ref{lemma-aux-Fset-prelim-properties} in Appendix~\ref{appendix-tech-lmas-consistency} for more details.  
		\end{itemize}
	\end{remark}

	Now we are ready to establish consistency of the proposed estimator for $\hat \theta = (\hat g,\hat F)$, 
	as well as the estimator $\hat p $ of the true conditional probability function $p_0$ given by 
	\begin{align}\label{p-hat}
		\hat p(x) := p(x, \hat \theta)  = \hat F(v+\hat g(w)). 
	\end{align}
	We denote by $\hat p_{pc}$ when the PC regularized KNP estimator $\hat \theta_{pc}$ is used.

	\begin{theorem}\label{theorem-consistency}
		Let Assumptions \ref{assumption-id-g0-F0}, \ref{assumption-consistency} hold.  
		Then for the KNP estimator given by \eqref{knp-estimator}, it holds that
		\begin{align}\label{consistency-theta}
			d_{\Theta}(\hat \theta, \theta_0) :=  \sup_{w\in \mathcal W} | \hat g(w) - g_0(w) | + \sup_{u\in \mathbb R} |\hat F(u) - F_0(u) |  \to_p 0, 
		\end{align} 
		and 
		\begin{align}\label{consistency-p}
			d_{\Pi}(\hat p, p_0) := \sup_{x \in \mathcal X} \big| \hat p(x) - p_0(x) \big| \to_p 0. 
		\end{align} 
		For the PC regularized KNP estimator, if $m$ is chosen such that $\hat \lambda_{m+1}^{1/2} B_n = o_p(1)$, then 
		\[
		d_{\Theta}(\hat \theta_{pc}, \theta_0) \to_p 0 \quad  \text{and} \quad d_{\Pi}(\hat p_{pc}, p_0)\to_p 0.
		\]
	\end{theorem}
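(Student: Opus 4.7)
The plan is to verify the standard conditions for consistency of sieve M-estimators (e.g., Theorem 3.1 in \cite{chen-07}): (i) $\theta_0$ is the unique minimizer of $Q(\theta)=\mathbb E\ell(Z,\theta)$ on $\Theta$, (ii) $(\Theta,d_\Theta)$ is compact, (iii) $Q$ is continuous under $d_\Theta$, (iv) the sieves $\Theta_n=\mathcal G_n\times\mathcal F_n$ are dense in $\Theta$ under $d_\Theta$, and (v) $\sup_{\theta\in\Theta}|\hat Q(\theta)-Q(\theta)|\to_p 0$. Condition (i) is Theorem \ref{theorem-id-g0-F0-unique-min}. Condition (iv) on the $\mathcal G_n$ side is Assumption \ref{assumption-consistency}(c); on the $\mathcal F_n$ side it follows from the classical denseness of Hermite-square approximations in the weighted Sobolev ball due to \cite{gallant-nychka-87}, after passing from density approximation to CDF sup-norm via tail control.

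The technical heart of the argument is the compactness of $\Theta$. For $\mathcal G$, conditions (a),(b) say its members are uniformly bounded with uniformly bounded derivatives on the compact set $\mathcal W$, so Arzel\`a--Ascoli yields compactness under $\|\cdot\|_\infty$. For $\mathcal F$ the domain $\mathbb R$ is noncompact, and I would use the weighted Sobolev bound $\|f_{sr}\|_{m_0+m_e,2,\eta_0}\le M$ with $\eta_0>1/2$ to obtain a uniform tail estimate of the form $\int_{|u|>T} f(u)\,du = o(1)$ as $T\to\infty$ uniformly in $f\in\mathcal F$, combined with equicontinuity of $\{f\}$ on compacts via Sobolev embedding (this is where $m_0+m_e$ large enough matters). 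A Helly-type extraction argument then shows that any sequence of CDFs in $\mathcal F$ has a $\|\cdot\|_\infty$-convergent subsequence, proving compactness. The auxiliary properties needed (uniform boundedness of densities, uniform continuity) are also recorded for later use.

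Continuity of $Q$ under $d_\Theta$ is then immediate from the pointwise bound $|p_\theta(x)-p_{\theta'}(x)|\le \sup_u|F(u)-F'(u)|+\|f'\|_\infty\sup_w|g(w)-g'(w)|$, since $\|f'\|_\infty$ is uniformly bounded over $\mathcal F$ by the work of the previous paragraph and the loss is bounded by one. Uniform convergence (v) follows because $\{\ell(\cdot,\theta):\theta\in\Theta\}$ is uniformly bounded and has finite bracketing numbers implied by the total boundedness of $(\Theta,d_\Theta)$; a standard Glivenko--Cantelli theorem applies. Assembling (i)--(v) and using that $\hat\theta$ minimizes $\hat Q$ over $\Theta_n$ gives $d_\Theta(\hat\theta,\theta_0)\to_p 0$. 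Consistency of $\hat p$ follows from $\sup_x|\hat p(x)-p_0(x)|\le \sup_u|\hat F(u)-F_0(u)|+\|f_0\|_\infty\sup_w|\hat g(w)-g_0(w)|\le C\,d_\Theta(\hat\theta,\theta_0)$.

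For the PC regularized estimator, Lemma \ref{lemma-pc-regularization-error} together with the hypothesis $B_n\hat\lambda_m^{1/2}=o_p(1)$ shows that $\hat\theta_{pc}$ is a near-minimum of $\hat Q$ on $\Theta_n$ up to $o_p(1)$, so the near-minimum version of the sieve consistency argument delivers $d_\Theta(\hat\theta_{pc},\theta_0)\to_p 0$, and the same Lipschitz-type bound transfers this to $d_\Pi(\hat p_{pc},p_0)\to_p 0$. The main obstacle will be the compactness of $(\mathcal F,\|\cdot\|_\infty)$: tail control on the noncompact domain $\mathbb R$ must be extracted uniformly over the class from the weighted Sobolev structure, and securing the equicontinuity and uniform boundedness of densities at the level needed for the continuity and bracketing arguments requires careful accounting. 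This is exactly where the elaborate choice of $\mathcal F$ in Assumption \ref{assumption-consistency}(d) pays off.
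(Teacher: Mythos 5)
Your proposal is correct and follows essentially the same route as the paper: both verify the standard sieve M-estimator consistency conditions (uniqueness of the minimizer via Theorem \ref{theorem-id-g0-F0-unique-min}, compactness of $\Theta$ under $d_\Theta$, continuity of $Q$ and uniform convergence of $\hat Q$, denseness of the sieves), then invoke Lemma \ref{lemma-pc-regularization-error} to treat $\hat\theta_{pc}$ as a near-minimizer, and finally transfer consistency of $\hat\theta$ to $\hat p$ via the Lipschitz-type bound. The only material difference is in the sub-lemma on compactness of $\mathcal F$ under $\|\cdot\|_\infty$: you sketch a from-scratch tail-control, Sobolev-embedding equicontinuity, and Helly extraction argument, whereas the paper (Lemma \ref{lemma-aux-Fset-prelim-properties}) simply cites Gallant--Nychka's Theorem 1 and Santos (2012) Lemma A.1 for compactness of the density class under $\|\cdot\|_{m_e,\infty,\eta}$ and then passes to CDFs through the bound $\sup_u|F(u)|\le \|f\|_{m_e,\infty,\eta}\int(1+u^2)^{-\eta}\,du$; similarly, for uniform convergence you appeal to a bracketing Glivenko--Cantelli whereas the paper uses Newey--Powell's Lemma A2 (compactness plus a stochastic Lipschitz bound), but these are interchangeable.
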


	\subsection{Rate of Convergence}\label{subsection-cgce-rate}
	
	Now we consider the convergence rates for the estimator $\hat p$ of the conditional probability function $p_0(x) = \mathbb{P}\{ Y=1| X=x \}$, under the $L_2(P_{X})$ norm.

	We need to impose some technical conditions. 
	
	\begin{assumption}\label{assumption-cgce-rate}
		Assume that 
		\begin{enumerate}[(a)]\itemsep-0.1em 
			
			\item For $F_0\in \mathcal F$ where $\mathcal F$ given in Assumption \ref{assumption-consistency}(\ref{aspn-F}), its density $f_0(u) = h_0(u)^2 e^{-u^2/2}$ satisfies that, for every $a_0, a_1>0$, there exists $k_0,k_1$ such that 
			\[
			\int_{ u^2 > a_0+a_1 C  }  \big(h_0(u) \big)^2 e^{-u^2/2} du \leq k_0 e^{-k_1 \sqrt{C}}. 
			\] 
			Moreover, $\int_{\mathbb{R}} \big(h^{(j)}_0(u) \big)^2  e^{-u^2/2} du <\infty$ for $j=0,1,\cdots, m_e$. 
			\label{aspn-density-tail}
			
			\item There exists a constant $M_{1,op}>0$ such that $\int_{\mathcal X} h(x)^2 dx \leq M_{1,op}^2 \int_{\mathcal X} h(x)^2 P_X(dx)$  for any function $h$ satisfying $\int_{\mathcal X} h(x)^2 dx<\infty$. 
			\label{aspn-bdd-L2leb-L2Px}
			
			\item Either $\mathcal W$ is finite, or there exists a constant $M_{2,op}>0$ such that $\int_{\mathcal W} h(w)^2 P_W(dw) \leq M_{2,op}^2  \int_{\mathcal W} h(w)^2 dw $ for any function $h$ satisfying $ \int_{\mathcal W} h(w)^2 dw <\infty$.  
			\label{aspn-bdd-L2Pw-L2leb}
			
		\end{enumerate}
	\end{assumption}

	\begin{remark}\label{remark-aspn-cgce-rate}
		Assumption \ref{assumption-cgce-rate}(\ref{aspn-density-tail}), taken from \cite{fenton-gallant-96}, imposes restrictions on the tail behavior of the density $f_0$ of $F_0$. 
		It requires that the tail of the true density $f_0$ not be too heavy, allowing it to be well approximated by the product of a normal density and a squared polynomial. 
		This condition is used to bound the approximation error rate of approximating $F_0$ using functions in $\mathcal F_n$ by the number $J_n$ of basis functions. 
		Conditions~(\ref{aspn-bdd-L2leb-L2Px}) and~(\ref{aspn-bdd-L2Pw-L2leb}) are analogous to the norm equivalence conditions commonly used in sieve literature, e.g., Condition 3.9 in \cite{chen-07}, although we require only one-sided bounds here. 
		Condition~(\ref{aspn-bdd-L2leb-L2Px}) and~(\ref{aspn-bdd-L2Pw-L2leb}) do not exclude cases where $W$ contains categorical random variables. 
	\end{remark}

	Define the space $L_2(X) := L_2(P_X) := \big\{ h:\mathcal X\to \mathbb R \big|  \int h(x)^2 P_X(dx) <\infty \big\}$, where $P_X$ denotes the distribution of $X$, equipped with the norm
	\[
	\|h\|_{L_2(X)} := \bigg( \int  h(x)^2 P_X(dx)   \bigg)^{1/2} .
	\]
	The following theorem provides the convergence rate of $\|\hat p - p_0 \|_{L_2(X)}$. In particular, $\|\hat p - p_0 \|_{L_2(X)} =  \big( \int \big( \hat p(x) - p_0(x) \big)^2 P_X(dx) \big)^{1/2} = \big( \mathbb{E} \big( \hat p(X) - p_0(X) \big)^2 \big)^{1/2}$.

	\begin{theorem}\label{theorem-cgce-rate-aux}
		Let Assumptions \ref{assumption-consistency}, \ref{assumption-cgce-rate} hold.  
		Let $k(s,t) = \exp(-\frac{\|s-t\|^2}{2\sigma^2})$, and $\sigma>0$ be a fixed constant.  
		Let  $\gamma_n = \sqrt{ \frac{(\log B_n)^{d_w+1} \vee J_n}{ n } \log n } $, and assume that $\gamma_n =O(1)$ with $(\log B_n)^{d_w+1} \vee J_n \gtrsim (\log n)^{d_w}$. 
		As $n\to \infty$, 
		\begin{align}\label{ccp-cgce-rate-aux}
			\|\hat p - p_0 \|_{L_2(X)}^2  = O_p\left( \delta_n \right), \quad \delta_n := \max\Big\{ \gamma_n^2, \left(\log B_n\right)^{-m_w/2} + J_n^{-m_e} \Big\}. 
		\end{align} 
		Furthermore, \eqref{ccp-cgce-rate-aux} also holds for $\hat p_{pc}$, provided that $m$ is chosen such that $\hat\lambda_{m+1}^{1/2} B_n = O_p\!\left( \delta_n \right)$. 
	\end{theorem}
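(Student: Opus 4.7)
The plan is to follow the standard sieve M-estimation recipe (as in Chen 2007, Theorem 3.2), splitting the rate into a deterministic approximation (bias) piece and a stochastic (empirical-process) piece. Since the loss is least-squares, the population criterion satisfies the identity $Q(\theta)-Q(\theta_0)=\|p_\theta-p_0\|_{L_2(X)}^2$, so bounding the target $\|\hat p - p_0\|_{L_2(X)}^2$ reduces to bounding $Q(\hat\theta)-Q(\theta_0)$. Choosing a deterministic sieve approximator $\theta_n=(g_n,F_n)\in\mathcal G_n\times\mathcal F_n$ and exploiting $\hat Q(\hat\theta)\le\hat Q(\theta_n)$ gives the basic inequality
\[
Q(\hat\theta)-Q(\theta_0)\le \bigl[Q(\theta_n)-Q(\theta_0)\bigr] + \Bigl|(\hat Q-Q)(\hat\theta)-(\hat Q-Q)(\theta_n)\Bigr|,
\]
and the two pieces can be attacked separately.

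For the approximation piece, I would build $g_n$ from the Smale--Zhou (2003) construction of near-best approximations in Gaussian RKHS balls, yielding $\|g_n-g_0\|_{L_2(P_W)}^2=O((\log B_n)^{-m_w/2})$ under the $m_w$-times differentiability of $g_0$ in Assumption~\ref{assumption-consistency}; and build $F_n$ via the Fenton--Gallant (1996) squared-Hermite approximation, which under the tail-decay hypothesis in Assumption~\ref{assumption-cgce-rate}(a) gives $\|F_n-F_0\|_{L_2}^2=O(J_n^{-m_e})$. Writing $p_{\theta_n}(x)-p_0(x)=[F_n-F_0](v+g_n(w))+[F_0(v+g_n(w))-F_0(v+g_0(w))]$ and applying the mean-value theorem to the second summand (using $\|f_0\|_\infty<\infty$), together with the norm-equivalence conditions in Assumption~\ref{assumption-cgce-rate}(b)--(c) to pass between Lebesgue and $P_X$ scales, I obtain $\|p_{\theta_n}-p_0\|_{L_2(X)}^2=O((\log B_n)^{-m_w/2}+J_n^{-m_e})$, which is dominated by $\delta_n$.

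For the empirical-process piece, I compute the metric entropy of $\{p_\theta:\theta\in\mathcal G_n\times\mathcal F_n\}$ in sup norm and invoke a standard maximal inequality. Combining Kuhn's (2011) entropy-number estimates for Gaussian RKHS balls with the reproducing inequality $|g(w)|\le\|g\|_{\mathbb G_k}k(w,w)^{1/2}$ to translate RKHS-norm balls into sup-norm balls yields $\log N(\varepsilon,\mathcal G_n,\|\cdot\|_\infty)\lesssim (\log(B_n/\varepsilon))^{d_w+1}$. The class $\mathcal F_n$ is essentially $(J_n+1)$-dimensional Euclidean, and the polynomial-times-Gaussian structure together with the tail condition in Assumption~\ref{assumption-cgce-rate}(a) keeps $\sup_{F\in\mathcal F_n}\|f\|_\infty$ bounded and yields $\log N(\varepsilon,\mathcal F_n,\|\cdot\|_\infty)\lesssim J_n\log(1/\varepsilon)$. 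Because $(g,F)\mapsto F(v+g(\cdot))$ is Lipschitz with a bounded constant on the sieve, the composed class inherits an entropy bound of the same order. Decomposing the centered least-squares process into its linear (in $Y-p_0$) and quadratic parts and applying a standard peeling argument then produces a deviation bound of order $\gamma_n^2+\gamma_n\|\hat p-p_0\|_{L_2(X)}$, with the $\log n$ in $\gamma_n$ arising from integrating the entropy with resolution $\varepsilon\asymp n^{-1}$. An AM--GM step absorbs the cross term and delivers $\|\hat p-p_0\|_{L_2(X)}^2=O_p(\delta_n)$.

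For the PC-regularized estimator, Lemma~\ref{lemma-pc-regularization-error} makes $\hat\theta_{pc}$ a near-minimizer with suboptimality at most $O_p(B_n\hat\lambda_{m+1}^{1/2})$, which under the hypothesis $\hat\lambda_m^{1/2}B_n=O_p(\delta_n)$ is absorbed into the right-hand side of the basic inequality without changing the rate. I expect the main obstacle to be the empirical-process step: transferring Kuhn's RKHS-norm entropy bound to sup norm without losing tightness, and composing through the Hermite-density family $\mathcal F_n$ while controlling the Lipschitz constant. The tail-decay structure imposed through Assumption~\ref{assumption-cgce-rate}(a) is essential here for keeping $\sup_{F\in\mathcal F_n}\|f\|_\infty$ bounded so that the Lipschitz composition step delivers the stated exponent $d_w+1$ on $\log B_n$ rather than a looser one.
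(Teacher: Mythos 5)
Your proposal follows the same overall architecture as the paper's proof: invoke Chen (2007), Theorem 3.2, split the analysis into a sieve approximation piece and an empirical-process piece, use the Smale--Zhou Gaussian-RKHS approximation rate $(\log B_n)^{-m_w/2}$ and the Fenton--Gallant squared-Hermite rate $J_n^{-m_e}$ for the bias, use entropy bounds for Gaussian RKHS balls (the paper cites Steinwart--Fischer 2021 rather than Kuhn 2011, but the conclusion $\log N(\varepsilon,\mathcal G_n,\|\cdot\|_\infty)\lesssim(\log(B_n/\varepsilon))^{d_w+1}$ is the same) plus an $O(J_n\log(1/\varepsilon))$ bound for $\mathcal F_n$, compose Lipschitz-ly through $(g,F)\mapsto F(v+g(w))$, and absorb the PC-regularization suboptimality via Lemma~\ref{lemma-pc-regularization-error}. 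That is essentially the paper's route, so most of this is on target.

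One nontrivial verification is missing, however. Chen's Theorem 3.2 requires (beyond the entropy and approximation conditions you address) a local envelope condition: there must exist an exponent $s\in(0,2)$ and a square-integrable $U(\cdot)$ with
\[
\sup_{\theta\in\Theta_n:\ \|p_\theta-p_0\|_2<\varepsilon}\bigl|\ell(Z,\theta)-\ell(Z,\theta_0)\bigr|\le \varepsilon^s\,U(Z).
\]
The crude bound $|\ell(z,\theta)-\ell(z,\theta_0)|\le 2$ gives $s=0$, which is not admissible. The paper closes this gap with an interpolation (Gabushin-type) inequality, Lemma~\ref{lemma-aux-p-p0-interpolation-ineq}, showing $\sup_x|p_\theta(x)-p_0(x)|\le C\|p_\theta-p_0\|_2^{2/(2+d_x)}$ for all $\theta\in\mathcal G\times\mathcal F$, which converts local $L_2(P_X)$ smallness into local sup-norm smallness and gives $s=2/(2+d_x)$. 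Establishing this requires the one-sided norm-equivalence $\int h^2\,dx\le M_{1,op}^2\int h^2\,dP_X$ from Assumption~\ref{assumption-cgce-rate}(b), together with the uniform bound on $\partial_x p_\theta$ coming from $\|f\|_\infty<M_{\mathcal F}$ and the bounded first derivatives of $g\in\mathcal G$. Your proposal never supplies this step; the ``peeling + AM--GM'' paragraph you sketch would, if executed as stated, also need a local envelope (or a uniform Bernstein-type increment bound), so the omission is substantive rather than cosmetic. If you prefer to bypass Chen's packaged theorem and run a bare-hands peeling argument, you would instead need a direct uniform-boundedness-adjusted maximal inequality (e.g., van de Geer's rate theorem for bounded loss classes), which is a legitimate alternative but not the route you describe. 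Two smaller points: the reproducing inequality $|g(w)|\le\|g\|_{\mathbb G_k}k(w,w)^{1/2}$ only yields a sup-norm envelope, not the entropy exponent $d_w+1$; the entropy bound has to come from the dedicated Gaussian-RKHS entropy results you cite. And $\sup_{F\in\mathcal F_n}\|f\|_\infty<\infty$ is a consequence of the weighted-Sobolev compactness of $\mathcal F$ (Assumption~\ref{assumption-consistency}(d) via Lemma~\ref{lemma-aux-Fset-prelim-properties}(b)), not of the tail condition in Assumption~\ref{assumption-cgce-rate}(a), which is instead what drives the $J_n^{-m_e}$ approximation rate.
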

	
	As in the sieve literature, the two terms in the rate $\delta_n$ can be interpreted as measures of variance and bias, respectively. 
	Specifically, the first term, $\gamma_n^2$, increases with $B_n$ and $J_n$, reflecting the complexity of the sieve $\Theta_n = \mathcal G_n \times \mathcal F_n$, which arises from the covering numbers of $\mathcal G_n $ and $\mathcal F_n$. 
	This term can be interpreted as a measure of variance.
	The second term $ \left(\log B_n\right)^{-m_w/2} +J_n^{-m_e}$ decreases with $B_n$ and $J_n$, which arises as the square of the deterministic approximation error when using functions in $\Theta_n$ to approximate $\theta_0 = (g_0, F_0)$. 
	Choosing $B_n, J_n$ to balance these two terms in $\delta_n$ yields the following rate of convergence.

	\begin{corollary}\label{corollary-cgce-rate} 
		Let the conditions in Theorem \ref{theorem-cgce-rate-aux} hold. 
		Let $\beta_w := \frac{m_w}{2(d_w+1)}$, and $\beta := \beta_w \wedge m_e$. 
		Let $\log B_n \asymp n^{1/(d_x+m_w/2 ) }$, $n^{\beta_w/(m_e(1+\beta_w) )} \lesssim J_n \lesssim n^{1/(1+\beta_w)}$ when $\beta_w \leq m_e$, and $J_n \asymp n^{1/(1+m_e)}$, $n^{2m_e/(m_w(1+m_e)) } \lesssim \log B_n \lesssim n^{1/(d_x(1+m_e))}$ when $\beta_w > m_e$. Then 
		\begin{align}\label{ccp-cgce-rate}
			\|\hat p - p_0 \|_{L_2(X)}^2 = O_p\!\left(  n^{-\frac{\beta  }{ 1+ \beta }}  \log n \right).
		\end{align}
		Furthermore, \eqref{ccp-cgce-rate} holds for $\hat p_{pc}$, provided that $m$ is chosen such that $\hat\lambda_{m+1}^{1/2} B_n = O_p\!\big( n^{-\frac{\beta  }{ 1+ \beta  }} \log n \big)$. 
	\end{corollary}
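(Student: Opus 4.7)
The plan is to apply Theorem~\ref{theorem-cgce-rate-aux} and perform a direct bias-variance balance on the rate $\delta_n = \max\{\gamma_n^2,\, (\log B_n)^{-m_w/2} + J_n^{-m_e}\}$. Writing $L := \log B_n$ and using $d_x = d_w+1$, the contribution $\gamma_n^2$ carries a ``variance'' piece of order $(L^{d_w+1} \vee J_n)\,\log n / n$, while $(\log B_n)^{-m_w/2} + J_n^{-m_e}$ is the ``bias'' piece. I would tune $L$ and $J_n$ so that, for each nonparametric component ($g_0$ and $F_0$ separately), the associated variance and bias pieces are of the same order, and then take the slower of the two optimized rates.

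First I would balance $L^{d_w+1}/n$ against $L^{-m_w/2}$, which forces $L \asymp n^{1/(d_x+m_w/2)}$ and makes both pieces of order $n^{-\beta_w/(1+\beta_w)}$ with $\beta_w = m_w/(2(d_w+1))$. Next, balancing $J_n/n$ against $J_n^{-m_e}$ forces $J_n \asymp n^{1/(1+m_e)}$ with both pieces of order $n^{-m_e/(1+m_e)}$. Since $\delta_n$ is a max, the overall rate is governed by whichever component is slower, i.e.\ by the smaller of $\beta_w$ and $m_e$; carrying through the $\log n$ factor inherited from $\gamma_n^2$ yields the target $n^{-\beta/(1+\beta)} \log n$ with $\beta = \beta_w \wedge m_e$.

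The two cases in the statement just describe the slack available in the non-binding parameter. In the case $\beta_w \leq m_e$, the $g_0$-component sets the rate, so I would fix $L \asymp n^{1/(d_x+m_w/2)}$ and only require (i) $J_n^{-m_e} \lesssim n^{-\beta_w/(1+\beta_w)}$, which gives the lower bound $J_n \gtrsim n^{\beta_w/(m_e(1+\beta_w))}$, and (ii) $J_n \lesssim L^{d_w+1} \asymp n^{1/(1+\beta_w)}$, so that the variance piece is not inflated beyond $n^{-\beta_w/(1+\beta_w)}$. The analogous one-sided inequalities, with $J_n$ fixed at $n^{1/(1+m_e)}$, produce the range for $L$ in the symmetric case $\beta_w > m_e$, namely $n^{2m_e/(m_w(1+m_e))} \lesssim L \lesssim n^{1/(d_x(1+m_e))}$.

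The remaining bookkeeping is to check the preliminary hypotheses $\gamma_n = O(1)$ and $L^{d_w+1} \vee J_n \gtrsim (\log n)^{d_w}$ of Theorem~\ref{theorem-cgce-rate-aux} under these choices; both hold for large $n$ since all the exponents of $n$ appearing in $L^{d_w+1}$ and $J_n$ are strictly positive, dominating any power of $\log n$, and the balanced rate $n^{-\beta/(1+\beta)}$ is $o(1)$. The conclusion for $\hat p_{pc}$ is then immediate from the second part of Theorem~\ref{theorem-cgce-rate-aux}, once $\hat\lambda_m^{1/2} B_n = O_p(n^{-\beta/(1+\beta)} \log n)$ is assumed. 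No new analytic input is required here; the only obstacle is keeping the algebra across the two cases tidy and correctly tracking which exponent is active in each.
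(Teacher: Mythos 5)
Your proof is correct and matches the paper's approach exactly: the paper's own proof simply says to optimize $\delta_n$ in \eqref{ccp-cgce-rate-aux} over $B_n, J_n$ by splitting into the two cases $(\log B_n)^{d_w+1} \lessgtr J_n$, which is precisely the bias-variance balance and case analysis you carried out (using $d_x = d_w+1$ to equate $L^{d_w+1} \asymp n^{1/(1+\beta_w)}$ with the stated upper bound on $J_n$ when $\beta_w \le m_e$, and symmetrically for $\beta_w > m_e$). Your final check of $\gamma_n = O(1)$ and $(\log B_n)^{d_w+1}\vee J_n \gtrsim (\log n)^{d_w}$ under these choices is the same routine verification the paper leaves implicit.
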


	\begin{remark}
		We give some comments on Theorem \ref{theorem-cgce-rate-aux} and Corollary \ref{corollary-cgce-rate}
		\begin{itemize}
			\itemsep -0.1em 
			
			\item The proof of Theorem \ref{theorem-cgce-rate-aux} follows the sieve literature. See, e.g., \cite{chen-07} and the references therein. 
			A key difference here is that the sieve spaces for estimating $g_0$ are RKHS balls with radii growing to infinity, unlike the commonly studied sieve spaces based on polynomials, splines, or wavelets, which are finite-dimensional and linear in parameters with numbers of basis functions growing to infinity. 
			
			\item The view of the RKHS-based approach as a special sieve method appears to be new in the current literature on RKHS-based estimators in econometrics. Typically, the true unknown function to be estimated is assumed to be in the RKHS or some interpolation space between RKHS and a larger space. 
			See, e.g., \cite{singh-sahani-gretton-19}, \cite{singh-22}, \cite{bennett-kallus-mao-newey-syrgkanis-uehara-23}, \cite{singh-xu-gretton-24}.  
			If this assumption holds when using the Gaussian kernel, the true function is implicitly assumed to be infinitely differentiable, and the convergence rate here will reduce to the parametric rate $\sqrt{n}$, provided that $F_0$ is also infinitely differentiable.  
			
			\item The condition in Corollary \ref{corollary-cgce-rate} (e.g., $\log B_n \asymp n^{\frac{1}{d_x+m_w/2}}$) requires $B_n$ to increase at a exponential rate. This is because of the particular use of the infinitely differentiable Gaussian kernel. 
			To have a small approximation error or bias of using functions in RKHS balls with radii $B_n$ to approximate $g_0$, we need $B_n$ to grow fast. 
			On the other hand, the entropy of RKHS balls increases at a logarithm rate of $B_n$, ensuring that the exponential rate of $B_n$ still results in a polynomial rate of the entropy complexity.  
			The choice such as $\log B_n \asymp n^{\frac{1}{d_x+m_w/2}}$ balances the bias and variance.  
			
			\item The rate $\delta_n$ in Theorem \ref{theorem-cgce-rate-aux} consists of two terms that depend on $B_n$ and $J_n$; these two terms can be viewed as variance and squared approximation error, as explained earlier. 
			Note that the number $m$ of retained eigenvectors does not appear in $\delta_n$. This is because we regard $\hat\theta_{pc}$ as a near-optimal solution to the objective function \eqref{knp-estimator}. In practice, the condition  $\hat\lambda_{m+1}^{1/2} B_n = O_p\!\left( \delta_n \right)$ requires selecting $m$ based on the decay rate of the eigenvalues $\hat\lambda_j$'s of the Gram matrix whose elements are $k(W_i,W_j)$. 
			A faster decay of $(\hat\lambda_j)_{j=1}^n$ allows for choosing a smaller value of $m$. 
			In our simulations, we choose $m$ using cross-validation.  
			We leave for future research the theoretical study that considers $m$ as a part of the bias-variance tradeoff,  particularly in the context of using data-driven basis functions to approximate an unknown regression function.  
			
		\end{itemize}
		
	\end{remark}

	\subsection{Asymptotic Normality of Weighted Average Derivatives}\label{subsection-wape}
	
	In this subsection, we establish the asymptotic normality of the weighted average partial derivatives of $\hat p$. The results show that the proposed KNP approach can be used to estimate other functionals of the CCP, including APEs and, when accounting for heterogeneity, conditional APEs, which are often of policy interest.  
	
	We consider the weighted average partial effect of the $j$-th element of $X$. For that, we define the functional $\gamma: \Theta \to \mathbb{R}$ given by
	\begin{align*}
		\gamma(\theta) = \int b(x) \frac{\partial p(x,\theta)}{\partial x_j} dx,
	\end{align*}
	where $b(x)$ is a weighting function defined on $ \mathcal{X}:= \text{Supp}(X)$, i.e. $\int b(x) dx = 1$ and $b(x) \geq 0$. 
	Assume here that $b(\cdot)$ is zero outside some compact set. Then integration by parts gives
	\begin{align}\label{gamma-theta-form-linear-in-ptheta}
		\gamma(\theta) &  = \int - \frac{\partial b(x)}{\partial x_j} p(x,\theta) dx  \notag \\
		& = \mathbb{E} b_{\gamma}(X) p(X,\theta), \quad \text{ where }  b_{\gamma}(x) = - \frac{1}{f_{X}(x)} \frac{\partial b(x)}{\partial x_j},
	\end{align}
	and $f_X(\cdot)$ denotes the density of $X$. 
	Assume that $f_X$ is bounded away from zero on the set where $b(x)$ is positive. 
	Note that $p(x,\theta) = F(v+g(w))$ is smooth in $\theta$ due to the conditions imposed on the parameter space $\Theta = \mathcal G\times \mathcal{F}$; See Lemma \ref{lemma-aux-pathwise-derivatives} in Appendix~\ref{appendix-tech-lmas-for-AN} for its pathwise derivatives. 
	Consequently, $\gamma(\theta)$ is smooth, although nonlinear.

	Let $\gamma_0 = \gamma(\theta_0)$ denote the true weighted average derivative, and $\hat \gamma = \gamma(\hat \theta)$ be given by the KNP estimator $\hat \theta$, with or without PC regularization.  
	The following theorem establishes the limit distribution of the estimator $\hat \gamma := \gamma(\hat\theta)$.

	\begin{theorem}\label{theorem-asym-normality-wape} 
		Let the conditions in Corollary \ref{corollary-cgce-rate} hold with $\beta>1$ so that $\|\hat p - p_0 \|_{L_2(X)} = o_p(n^{-1/4})$ for the KNP estimator with proper choices of $B_n, J_n$, and $m$ if PC regularization is used. 
		Let Assumption \ref{assumption-wape-asym-norm} in Appendix~\ref{appendix-AN} hold. 
		It holds that, as $n\to \infty$, 
		\begin{align}
			\sqrt{n}\big( \hat \gamma- \gamma_0 \big) \to_d \mathbb{N}\Big( 0, \mathbb{E} b_{\gamma}(X)^2 p_0(X)\big( 1-p_0(X) \big) \Big).
		\end{align}
	\end{theorem}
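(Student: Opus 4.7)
The plan is to apply the standard sieve-asymptotic-normality machinery for smooth functionals of sieve M-estimators (see, e.g., \cite{chen-07}, Section 4.3). Two inputs do most of the work: linearity of $\gamma$ in $p_\theta$ (already used in \eqref{gamma-theta-form-linear-in-ptheta}) and the rate bound $\|\hat p - p_0\|_{L_2(X)} = o_p(n^{-1/4})$ delivered by Corollary \ref{corollary-cgce-rate} whenever $\beta>1$. Writing
\begin{align*}
\hat\gamma - \gamma_0 \;=\; \mathbb{E}\bigl[ b_\gamma(X)(\hat p(X) - p_0(X)) \bigr]
\end{align*}
and Taylor-expanding $p_\theta(x)=F(v+g(w))$ around $\theta_0$ via Lemma \ref{lemma-aux-pathwise-derivatives}, one gets
\begin{align*}
\hat\gamma - \gamma_0 \;=\; \mathbb{E}\bigl[ b_\gamma(X)\,v_p(X;\hat\theta-\theta_0) \bigr] + o_p(n^{-1/2}),
\end{align*}
where $v_p(x; v) := f_0(v+g_0(w))\,v_g(w) + v_F(v+g_0(w))$ and the quadratic remainder is absorbed by the $n^{-1/4}$ rate.

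The next step is to identify the Riesz representer and combine it with the sieve first-order condition. The LS criterion induces the natural inner product $\langle v^1, v^2\rangle_* = \mathbb{E}[v_p(X;v^1)\,v_p(X;v^2)]$ on the tangent space at $\theta_0$, and the Riesz representer $v^*$ for $d\gamma(\theta_0)$ is the tangent vector whose pathwise image equals the weight, i.e., any $(v_g^*, v_F^*)$ satisfying
\begin{align*}
f_0(v+g_0(w))\,v_g^*(w) + v_F^*(v+g_0(w)) \;=\; b_\gamma(v,w).
\end{align*}
Existence of such a representer (possibly after projecting $b_\gamma$ onto the image of the tangent space, which preserves $d\gamma(\theta_0)$) and its sieve approximability by some $v_n^* \in \mathcal G_n\times\mathcal F_n$ at rate $o_p(n^{-1/4})$ in $\|\cdot\|_*$ are the content of Assumption \ref{assumption-wape-asym-norm}. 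Testing $\hat\theta$ against perturbations $\hat\theta \pm \epsilon_n v_n^*$ (for a suitable $\epsilon_n \downarrow 0$) in the approximate optimality inequality for $\hat Q$ yields
\begin{align*}
\frac{1}{n}\sum_{i=1}^n (Y_i - \hat p(X_i))\,v_p(X_i; v_n^*) \;=\; o_p(n^{-1/2}),
\end{align*}
and replacing $\hat p$ by $p_0$ and $v_p(\cdot; v_n^*)$ by $v_p(\cdot; v^*)=b_\gamma$---each substitution controlled by an empirical-process argument using the Gaussian-RKHS entropy bounds from \cite{kuhn-11} and the squared-Hermite entropy behind Theorem \ref{theorem-cgce-rate-aux}, localized around $\theta_0$ via Theorem \ref{theorem-consistency}---produces
\begin{align*}
\mathbb{E}\bigl[ b_\gamma(X)(\hat p(X) - p_0(X)) \bigr] \;=\; \frac{1}{n}\sum_{i=1}^n b_\gamma(X_i)(Y_i - p_0(X_i)) + o_p(n^{-1/2}).
\end{align*}

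Combining the two displays yields the influence-function representation
\begin{align*}
\sqrt{n}(\hat\gamma - \gamma_0) \;=\; \frac{1}{\sqrt{n}}\sum_{i=1}^n b_\gamma(X_i)(Y_i - p_0(X_i)) + o_p(1),
\end{align*}
and the Lindeberg CLT, using $\mathbb{E}[Y_i\mid X_i]=p_0(X_i)$ so that the conditional variance equals $p_0(X)(1-p_0(X))$, delivers the asserted normal limit. For $\hat\theta_{pc}$, the same argument applies once Lemma \ref{lemma-pc-regularization-error} and the requirement $\hat\lambda_m^{1/2}B_n = O_p(n^{-1/2})$ (stronger than for consistency or rate alone) ensure that $\hat\theta_{pc}$ is a near-minimizer of $\hat Q$ to order $o_p(n^{-1/2})$. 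The main obstacle is the Riesz-representer step: the defining equation constrains the left-hand side to a proper subspace of $L_2(P_X)$ (functions with the additively separable ``index-plus-$w$'' structure), so an exact solution need not exist and $v^*$ must be defined via a projection argument and then verified to be approximable in $\mathcal G_n\times\mathcal F_n$ at the $n^{-1/4}$ tolerance, against the exponentially growing radius $B_n$ and polynomial order $J_n$. A secondary, more mechanical hurdle is matching the stochastic-equicontinuity bounds to the $n^{-1/2}$ tolerance, which is where Assumption \ref{assumption-wape-asym-norm} in the appendix presumably plays its role.
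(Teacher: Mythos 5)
Your blueprint is essentially the paper's proof: localize $\hat\theta$ via the rate (Fisher-norm version via Assumption \ref{assumption-wape-asym-norm}(a)), represent $d\gamma(\theta_0)$ by a Riesz representer $v^*$ in the completion of $\Theta-\{\theta_0\}$ under the Fisher norm, perturb $\hat\theta$ by $\pm\epsilon_n v_n^*$ in the approximate optimality inequality (Assumption \ref{assumption-wape-asym-norm}(f)), expand $\hat Q$ to second order with the remainder controlled by Assumption \ref{assumption-wape-asym-norm}(c)–(e) (your ``replacement'' and ``stochastic equicontinuity'' steps are exactly Lemmas \ref{lemma-aux-partialell-fisherinprod}–\ref{lemma-aux-sample-partialell-hattheta}), arrive at the influence-function form, and invoke the CLT. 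The projection subtlety you flag for the representer is also the route taken (the paper defines $v^*$ in the $\|\cdot\|_F$-completion $\bar V$ and invokes the equality $\frac{\partial p(X,\theta_0)}{\partial\theta}[v^*]=b_\gamma(X)$ when computing the limit variance), so your concern is well placed rather than a gap.
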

	
	The limit distribution in Theorem \ref{theorem-asym-normality-wape} is the same as in Theorem 3 in \cite{newey-97}, which estimates CCP $p_0(x) = \mathbb{E} (Y | X=x)$ using series regression, and obtains the plug-in estimator for the weighted average derivatives. 
	In our approach, we estimate the latent structure, including both the systematic function and the density of the error term, beyond the reduced form CCP. Theorem \ref{theorem-asym-normality-wape} shows that, in terms of estimating the weighted average derivatives, the KNP procedure provides an estimator with the same asymptotic variance as in \cite{newey-97}. 
	However, since $\theta$ enters the objective function in a highly nonlinear manner compared to the cases when approximating a regression function, we need to impose more assumptions than \cite{newey-97} in Assumption \ref{assumption-wape-asym-norm} to control the high-order terms in the expansions of the objective functions and the functional $\gamma(\theta)$.

	\section{Simulation Studies}\label{section-simulations}
	
	Compared to parametric and semi-parametric estimation methods, the proposed estimator is expected to perform well in large samples, as it is robust to misspecification of both the systematic function of the covariates and the distribution of the error term.  
	For the usefulness of the proposed estimator, below we examine its finite sample performance by a series of simulations, in order to see (i) if there exists serious issues of efficiency loss relative to a correct fully parametric specification, and (ii) if the proposed estimator is effective in situations where the correct parametric specification is unusual.  
	In addition, we examine whether the bootstrap confidence intervals for APEs and cAPEs have reasonable finite-sample coverage.

	We consider the model $Y = 1\{ V + g_0(W) -\varepsilon >0 \}$, where $\varepsilon$ is independent of $X = (V,W')'$. We let $V =_d \mathbb{N}(0,1)$.

	\subsection{Unidimensional $W$}\label{subsection-simulation-1d}
	
	We first focus on unidimensional $W$, and let $W =_d \text{Unif}[-2,2]$. 
	We consider two specifications for $g_0$, where the first one corresponds to the commonly assumed linear index model, and the other is nonlinear.  
	\begin{equation}\label{g0-spec-I-II}
		\begin{split}
			\mbox{I\ :}&\ \ g_0(w) = w \\ 
			\mbox{II\,:}&\ \ g_0(w) = w^2/2 + \sin(\pi w) .
		\end{split}
	\end{equation} 
	Note that at $w_\ast = 0$, $g_0(w_\ast)=0$ under (I) or (II). 
	Specification (II) is chosen so that $g_0$ does not lie in any Gaussian RKHS or any finite-order polynomial RKHS.\footnote{The reproducing kernel Hilbert spaces with Gaussian kernels do not contain any nonzero constant, nor any finite-order polynomials. See, e.g., Theorem 2 in \cite{minh-10}. For the $q$-th order polynomial kernel $k(s,t) = (1+s't)^q$, its RKHS is effectively finitely dimensional, with basis functions consisting of all polynomials up to order $q$. }  
	The error term $\varepsilon$ is given by one of the two following specifications. 
	\begin{equation}\label{eps-spec-A-B}
		\begin{split}
			\mbox{A\,:}&\ \ \varepsilon =_d \mathbb N(0,1) \\
			\mbox{B\,:}&\ \ \varepsilon =_d   
			\frac{1}{4} \mathbb N(-3,1) + \frac{3}{4} \mathbb N(2,1), 
		\end{split}
	\end{equation} 
	where B indicates the mixture of two normal distributions $\mathbb N(-3,1)$ and $\mathbb N(2,1)$, i.e. with probability 1/4, $\varepsilon$ follows $\mathbb N(-3,1)$,  and with probability 3/4,  $\varepsilon$ follows $\mathbb N(2,1)$.

	For simplicity, we refer to the specifications introduced above as (I) and (II) for the systematic function and (A) and (B) for the error distribution. 
	Moreover, we will refer to as (IA), (IB), (IIA), and (IIB) the four cases that are given by the combinations of I and II with A and B.

	We compare the KNP estimator with 
	(a) Kernelized probit (KPB), which specifies the standard normal error term and approximates $g_0$ based on functions in RKHS as in KNP
	(b) Semi-Nonparametric (SNP), which approximates $F_0$ using \cite{gallant-nychka-87}'s method and specifies $g_0$ as a linear function, 
	(c) Probit, which specifies standard normal $\varepsilon$ and linear function of $g_0$. In addition, we consider methods specifying standard normal $\varepsilon$ and approximating $g_0$ based on 2nd, 3rd, 4th polynomials, respectively. 
	For RKHS-based methods, we use the Gaussian kernel $k(s,t) = \exp(-\|s-t\|^2/2 )$. 
	For both KNP and KPB, the number $m$ of eigenvectors in the spectral cut-off regularization are selected based on 5-fold cross-validation.

	Table \ref{table-comparison-1d} compares the performance of these methods for estimating $g_0, p_0$ under each of the four specifications (IIB), (IIA), (IB), (IA), , based on Monte Carlo simulations with $Nsim = 1000$ replications and sample size $ntrain=2000$ observations.  
	The table reports $\text{RMSE}(\hat g) = \sqrt{\mathbb E(\hat g(W)- g_0(W) )^2}, \text{MAD}(\hat g) = \mathbb E |\hat g(W)-g_0(W) |$, along with $\text{RMSE}(\hat p) = \sqrt{\mathbb E(\hat p(X)- p_0(X) )^2}, \text{MAD}(\hat p) = \mathbb E |\hat p(X)-p_0(X) |$, where the expectations are estimated using sample means of $ntest=10,000$ observations in test sample.  
	More details of the simulation procedure are given in the footnote of Table \ref{table-comparison-1d}.

	Table \ref{table-comparison-1d} shows that under specification (IIB), KNP provides the best estimators for $g_0$ and $p_0$, which are much better than all of the other methods. This suggests that the proposed estimator is effective in situations where the correct parametric specification is unusual. 
	Under specification (IIA), KPB performs best as expected, followed closely by KNP, whereas all of the other methods are much worse. 
	Under specification (IB), SNP performs best as expected, and KNP again does the second best. In particular, KNP's estimates for $p_0$ are very close to that of SNP. 
	In specification (IA), where the probit model is correctly specified, the KNP estimator performs comparably to the probit model. 
	This suggests that the efficiency loss of using the proposed method relative to a correct fully parametric specification is rather mild.
	
	While Table \ref{table-comparison-1d} uses sample size $2000$, Tables \ref{table-comparison-1d-ntrain1k}, \ref{table-comparison-1d-ntrain500} report the comparisons using sample sizes $1000, 500$, respectively.  
	The comparisons using the sample size $1000$ are the same as above. When using the sample size $500$ in Table \ref{table-comparison-1d-ntrain500}, one difference is that under (IIB), KNP's estimator for $g_0$ does slightly worse than that of KPB. However, in terms of estimating $p_0$, KNP's estimator is still the best and much better than all of the other methods.

	As supplements to Table \ref{table-comparison-1d}, Fig.~\ref{fig-IA-IIB} presents $Nsim=1000$ simulated estimates of $g_0$ given by each of the methods using sample size $ntrain=2000$ under specification (IA) and (IIB). 
	We note that, compared to other methods, KNP fits the true function best when $g_0$ is nonlinear under (IIB) and also performs well when $g_0$ is linear under (IA).

	\begin{remark}[Sensitivity to starting values]\label{remark-simulation-diff-initialization}
		As a practical check for sensitivity to initialization, 
		we reran the experiment for specification (IIB) with $ntrain=2000$ using four different starting values: zero (as used throughout in the simulations and the empirical application) and three random draws from $N(0,0.1^2)$, over 1000 Monte Carlo replications. 
		The average attained objective values are  (0.1992, 0.1991, 0.1991, 0.1993), which are very similar across starts,  
		and the RMSE and MAD for $\hat g, \hat p$ are also very similar.  
		Overall, this suggests that the numerical solution in this specification is not sensitive to initialization. 
	\end{remark}

	\begin{remark}[Designs with heavier-tailed distributions for $V$]   
		We also repeat the simulation studies above using heavier-tailed distributions for $V$, including logistic and Student's $t$ with degrees of freedom 3. 
		The comparisons across methods remain unchanged. 
		Moreover, the finite-sample performance of $\hat p$ remains essentially unchanged, which is consistent with Theorem~\ref{theorem-cgce-rate-aux}: the convergence rate of $\|\hat p-p_0\|_{L_2(X)}$ does not depend on the tail behavior of $V$. 
	\end{remark}

	\begin{figure}[h]  
		\caption{Simulated estimates using different methods } 
		\label{fig-IA-IIB}
		\begin{center} 
			\includegraphics[width=0.7\textwidth]{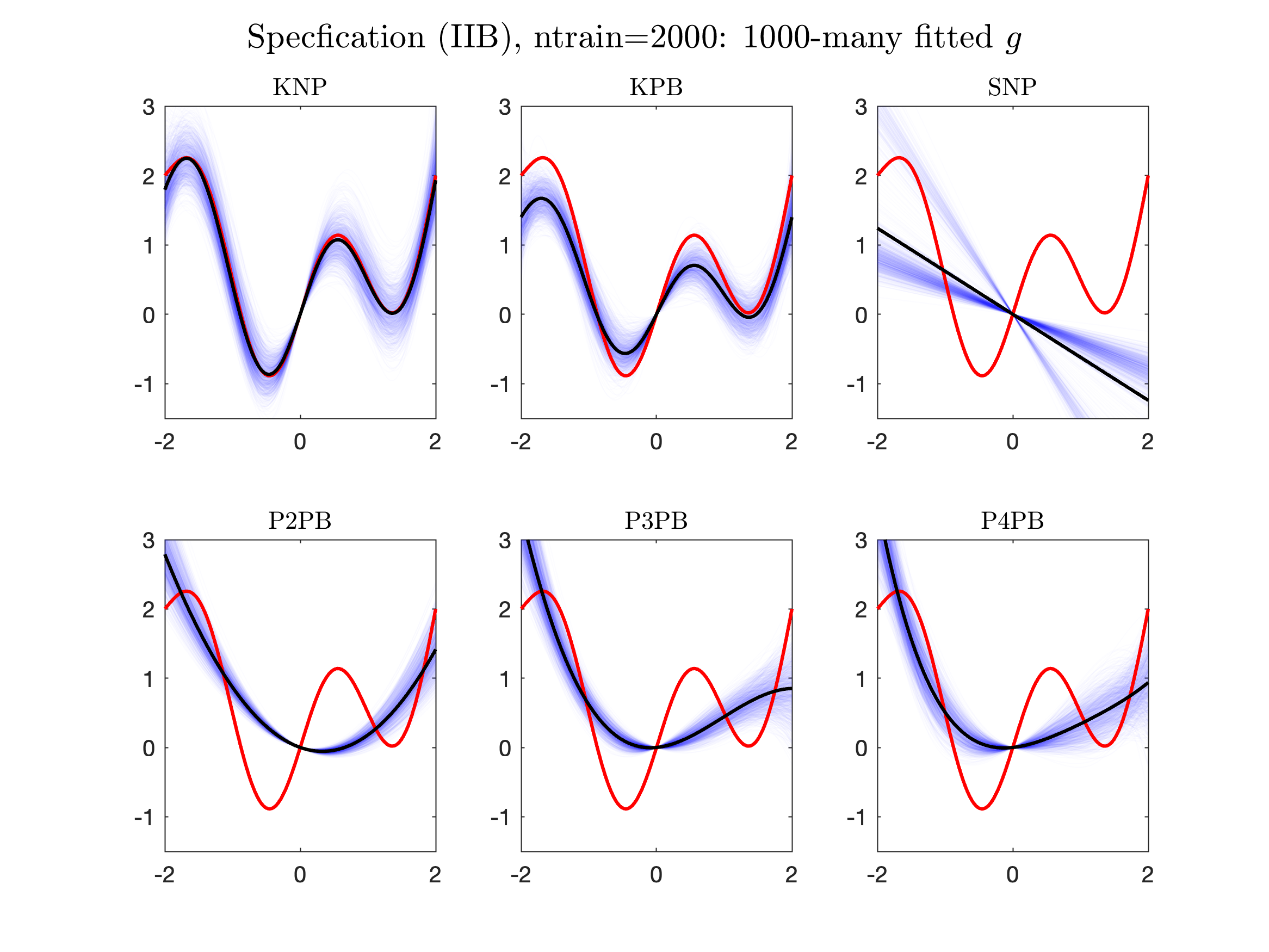}
			\includegraphics[width=0.7\textwidth]{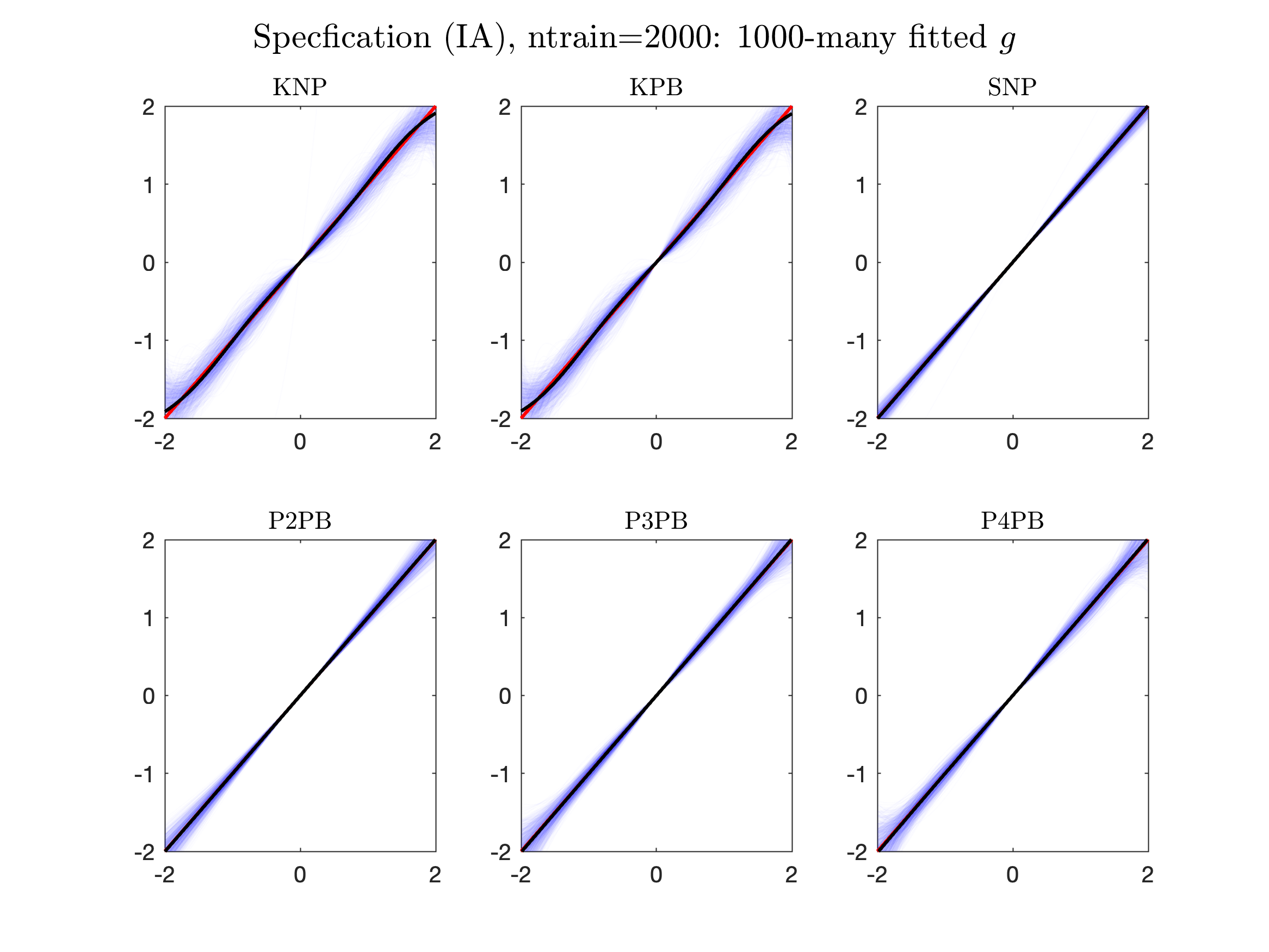}
		\end{center}
		
		\footnotesize
		Notes: 
		This figure presents the estimates of $g_0$ given by different methods compared in Table \ref{table-comparison-1d}.  In this figure, the transparent curves are the estimates $\hat g$ given by each method in $Nsim=1000$ many replications. Refer to the footnote in Table \ref{table-comparison-1d} for more details of the simulation. 
		The black curves are the means of $\hat g$ in $Nsim=1000$ many replications, and the red curve is the true function $g_0$. 
		The DGP is $Y = \{ V + g_0(W) - \varepsilon >0 \}$, where $V =_d\mathbb{N}(0,1)$, $W=_d \text{Unif}[-2,2]$, and $\varepsilon$ is independent of $(V,W)'$. Specification (IA) sets (I) $g_0(w) = w$, and (A) $\varepsilon =_d \mathbb{N}(0,1)$. Specification (IIB) sets (II) $g_0(w) = w^2/2 + \sin(\pi w)$, and (B) $\varepsilon$'s distribution to be the normal mixture $\frac{1}{4} \mathbb N(-3,1) + \frac{3}{4} \mathbb N(2,1)$. 
	\end{figure}

	\afterpage{
		\clearpage  
		\begin{table}[h!]
			\caption{Comparison of methods' performance by simulation: $d_w=1$,  $ntrain=2000$, $Nsim = 1000$  } 
			\label{table-comparison-1d} 
			\small 
			\begin{center}
				\centering
				\begin{tabular}{c c c c c c c c c c c c c c c c c c}
					\hline \hline					
					Method&&	KNP     & KPB	& SNP	 & Probit &	P2PB   & P3PB	& P4PB   \\  [0.5ex] \hline 
					for $F_0$	&&	GN(87) & probit	& GN(87) & probit &	probit & probit	& probit \\ [0.5ex]
					for $g_0$	&&	RKHS   & RKHS   & linear & linear &	Poly2  & Poly3	& Poly4	 \\ \hline 
					&&		&		&		&		&		& & \\   
					&&	 \multicolumn{7}{c}{ Specification (IIB) } 				\\ 
					RMSE($\hat g$)&& 0.235 & 0.390 & 1.259 & 1.120 & 0.680 & 0.637 & 0.650 &  \\  
					MAD($\hat g$)&& 0.193 & 0.332 & 1.097 & 0.997 & 0.573 & 0.547 & 0.554 &  \\   [0.5ex]  
					RMSE($\hat p$)&& 0.037 & 0.139 & 0.138 & 0.141 & 0.109 & 0.105 & 0.106 &  \\  
					MAD($\hat p$)&& 0.027 & 0.118 & 0.103 & 0.111 & 0.089 & 0.085 & 0.085 &  \\   [1ex]  
					&&	 \multicolumn{7}{c}{ Specification (IIA) } 				\\ 
					RMSE($\hat g$)&& 0.154 & 0.149 & 1.109 & 1.113 & 0.663 & 0.672 & 5.195 &  \\  
					MAD($\hat g$)&& 0.112 & 0.109 & 0.952 & 0.944 & 0.580 & 0.556 & 2.055 &  \\   [0.5ex]  
					RMSE($\hat p$)&& 0.029 & 0.027 & 0.224 & 0.281 & 0.180 & 0.152 & 0.088 &  \\  
					MAD($\hat p$)&& 0.020 & 0.018 & 0.178 & 0.226 & 0.138 & 0.116 & 0.054 &  \\   [1ex]  
					&&	 \multicolumn{7}{c}{ Specification (IB) } 				\\ 
					RMSE($\hat g$)&& 0.416 & 0.588 & 0.111 & 0.131 & 0.385 & 0.392 & 0.422 &  \\  
					MAD($\hat g$)&& 0.308 & 0.447 & 0.096 & 0.114 & 0.269 & 0.281 & 0.308 &  \\   [0.5ex]  
					RMSE($\hat p$)&& 0.036 & 0.151 & 0.025 & 0.071 & 0.068 & 0.068 & 0.069 &  \\  
					MAD($\hat p$)&& 0.027 & 0.130 & 0.019 & 0.060 & 0.055 & 0.055 & 0.056 &  \\   [1ex]  
					&&	 \multicolumn{7}{c}{ Specification (IA) } 				\\ 
					RMSE($\hat g$)&& 0.140 & 0.134 & 0.041 & 0.038 & 0.061 & 0.077 & 0.091 &  \\  
					MAD($\hat g$)&& 0.105 & 0.101 & 0.035 & 0.033 & 0.045 & 0.057 & 0.065 &  \\   [0.5ex]  
					RMSE($\hat p$)&& 0.028 & 0.026 & 0.012 & 0.008 & 0.012 & 0.016 & 0.018 &  \\  
					MAD($\hat p$)&& 0.019 & 0.018 & 0.009 & 0.006 & 0.008 & 0.010 & 0.011 &  \\   [1ex]  
					
					\hline \hline
				\end{tabular}
			\end{center}
			
			\footnotesize
			Notes: The DGP is $Y = \{ V + g_0(W) - \varepsilon >0 \}$, where $V =_d\mathbb{N}(0,1)$, $W=_d \text{Unif}[-2,2]$, and $\varepsilon$ is independent of $(V,W)'$. For $g_0$, Specification (I) $g_0(w) = w$, and (II) $g_0(w) = w^2/2 + \sin(\pi w)$. For $\varepsilon$, Specification (A) $\varepsilon =_d \mathbb{N}(0,1)$, and (B) sets $\varepsilon$'s distribution to be the normal mixture $\frac{1}{4} \mathbb N(-3,1) + \frac{3}{4} \mathbb N(2,1)$. 
			The Monte Carlo simulations have $Nsim=1000$ replications, and for each replication, we generate $ntrain = 2000$ for estimation and  $ntest = 10,000$ observations in the test sample for evaluation. 
			RMSE($\hat g) = \sqrt{\mathbb{E}(\hat g(W)-g_0(W))^2} $, MAD($\hat g) = \mathbb E|\hat g(W) - g_0(W) |$, and RMSE($\hat p$), MAD($\hat p$) are defined similarly. Here the expectations in RMSE and MAD of $\hat g, \hat p$ are estimated by sample means using the test sample.

			For the error distribution function $F_0$, method ``GN(87)" indicates using \cite{gallant-nychka-87} method, and ``probit'' indicates specifying $F_0$ to be the cdf of $N(0,1)$. 
			For function $g_0(w) = \tilde g(w)-\tilde g(w_\ast)$, method ``RKHS" indicates approximating $\tilde g$ using functions in Gaussian RKHS with reproducing kernel $k(s,t) = \exp(-\|s-t\|^2/2)$, whereas ``Poly2", ``Poly3", ``Poly4" indicate that $\tilde g$ is approximated using polynomials of order $2, 3, 4$ respectively. 
			Each method is fitted using $ntrain$-many observations, where tuning parameters—such as $J_n$, the order of Hermite polynomials for method ``GN(87)", and $m$, the number of eigenvectors retained when using method ``RKHS" with spectral cut-off regularization—are selected based on 5-fold cross-validation.  
			
		\end{table}
		
		\normalsize
		\clearpage 
	}

	\afterpage{ 
		\clearpage 
		\begin{landscape}
			
			\begin{table}[h!]
				\caption{Comparison of methods' performance by simulation: Designs (IIIA), (IIIB), (IVA), (IVB)}  
				\label{table-comparison-10d}
				
				\small
				\begin{center}
					\centering
					\begin{tabular}{c c c c c c c c c c c c c c c c c c c c c c}
						\hline \hline
						
						Method&&	KNP     & KPB	& SNP	 & Probit &	P2PB      	&& KNP	&	KPB	&	SNP	&	Probit	&	P2PB    	&& KNP	&	KPB	&	SNP	&	Probit	&	P2PB \\  [0.5ex] \hline 
						&&		&		&		&		&		    &&		&		&		&		&		       &&		&		&		&		&		& & \\   
						&&	 \multicolumn{17}{c}{ Specification (IVB) } 				\\ [0.5ex] 
						&& \multicolumn{5}{c}{ ntrain = 2000 } 	    && \multicolumn{5}{c}{ ntrain = 5000 }	    && \multicolumn{5}{c}{ ntrain = 10000 } \\  
						RMSE($\hat g$)&& 0.977 & 1.099 & 1.460 & 1.466 & 1.300 & & 0.701 & 0.952 & 1.493 & 1.442 & 0.843 & & 0.617 & 0.949 & 1.491 & 1.427 & 0.573 & &  \\  
						MAD($\hat g$)&& 0.856 & 0.997 & 1.220 & 1.227 & 1.199 & & 0.633 & 0.881 & 1.251 & 1.202 & 0.779 & & 0.568 & 0.891 & 1.248 & 1.186 & 0.527 & &  \\   [0.5ex]  
						RMSE($\hat p$)&& 0.111 & 0.156 & 0.177 & 0.177 & 0.111 & & 0.066 & 0.133 & 0.173 & 0.175 & 0.086 & & 0.048 & 0.125 & 0.172 & 0.174 & 0.077 & &  \\  
						MAD($\hat p$)&& 0.077 & 0.127 & 0.139 & 0.143 & 0.089 & & 0.046 & 0.107 & 0.136 & 0.141 & 0.070 & & 0.034 & 0.100 & 0.135 & 0.141 & 0.063 & &  \\   [1ex]  
						&&	 \multicolumn{17}{c}{ Specification (IVA) } 				\\ [0.5ex] 
						&& \multicolumn{5}{c}{ ntrain = 2000 } 	    && \multicolumn{5}{c}{ ntrain = 5000 }	    && \multicolumn{5}{c}{ ntrain = 10000 } \\  
						RMSE($\hat g$)&& 0.700 & 0.699 & 1.108 & 1.014 & 0.421 & & 0.478 & 0.476 & 1.080 & 1.006 & 0.250 & & 0.394 & 0.393 & 1.069 & 1.003 & 0.179 & &  \\  
						MAD($\hat g$)&& 0.604 & 0.604 & 0.895 & 0.814 & 0.325 & & 0.417 & 0.416 & 0.870 & 0.808 & 0.195 & & 0.346 & 0.345 & 0.861 & 0.806 & 0.139 & &  \\   [0.5ex]  
						RMSE($\hat p$)&& 0.090 & 0.090 & 0.205 & 0.210 & 0.086 & & 0.058 & 0.058 & 0.203 & 0.209 & 0.054 & & 0.044 & 0.044 & 0.202 & 0.208 & 0.039 & &  \\  
						MAD($\hat p$)&& 0.055 & 0.055 & 0.148 & 0.142 & 0.052 & & 0.035 & 0.035 & 0.147 & 0.141 & 0.032 & & 0.027 & 0.027 & 0.146 & 0.140 & 0.023 & &  \\   [1ex]  
						&&	 \multicolumn{17}{c}{ Specification (IIIB) } 				\\ [0.5ex] 
						&& \multicolumn{5}{c}{ ntrain = 2000 } 	    && \multicolumn{5}{c}{ ntrain = 5000 }	    && \multicolumn{5}{c}{ ntrain = 10000 } \\  
						RMSE($\hat g$)&& 0.630 & 0.904 & 0.305 & 0.333 & 1.261 & & 0.693 & 0.899 & 0.203 & 0.209 & 0.765 & & 0.707 & 0.894 & 0.142 & 0.146 & 0.560 & &  \\  
						MAD($\hat g$)&& 0.587 & 0.872 & 0.268 & 0.294 & 1.168 & & 0.673 & 0.877 & 0.181 & 0.185 & 0.706 & & 0.696 & 0.876 & 0.125 & 0.128 & 0.519 & &  \\   [0.5ex]  
						RMSE($\hat p$)&& 0.054 & 0.123 & 0.041 & 0.067 & 0.105 & & 0.035 & 0.118 & 0.026 & 0.061 & 0.079 & & 0.027 & 0.116 & 0.018 & 0.059 & 0.068 & &  \\  
						MAD($\hat p$)&& 0.040 & 0.101 & 0.032 & 0.054 & 0.084 & & 0.026 & 0.096 & 0.020 & 0.049 & 0.063 & & 0.020 & 0.094 & 0.014 & 0.048 & 0.055 & &  \\   [1ex]  
						&&	 \multicolumn{17}{c}{ Specification (IIIA) } 				\\ [0.5ex] 
						&& \multicolumn{5}{c}{ ntrain = 2000 } 	    && \multicolumn{5}{c}{ ntrain = 5000 }	    && \multicolumn{5}{c}{ ntrain = 10000 } \\  
						RMSE($\hat g$)&& 0.510 & 0.269 & 0.194 & 0.136 & 0.400 & & 0.517 & 0.217 & 0.127 & 0.085 & 0.233 & & 0.521 & 0.204 & 0.089 & 0.060 & 0.162 & &  \\  
						MAD($\hat g$)&& 0.474 & 0.218 & 0.167 & 0.110 & 0.308 & & 0.496 & 0.172 & 0.110 & 0.069 & 0.181 & & 0.508 & 0.162 & 0.077 & 0.049 & 0.126 & &  \\   [0.5ex]  
						RMSE($\hat p$)&& 0.046 & 0.048 & 0.033 & 0.031 & 0.085 & & 0.032 & 0.038 & 0.021 & 0.020 & 0.052 & & 0.025 & 0.034 & 0.014 & 0.014 & 0.036 & &  \\  
						MAD($\hat p$)&& 0.030 & 0.031 & 0.022 & 0.020 & 0.053 & & 0.021 & 0.025 & 0.013 & 0.013 & 0.032 & & 0.016 & 0.022 & 0.009 & 0.009 & 0.023 & &  \\   [1ex]  
						
						\hline \hline
					\end{tabular}
				\end{center}
				
				\footnotesize
				Notes:  
				The DGP is $Y = \{ V + g_0(W) - \varepsilon >0 \}$, where $V =_d\mathbb{N}(0,1)$, components of $W:=(W_1,\dots, W_{10})'$ are iid $\text{Unif}[0,1]$, and $\varepsilon$ is independent of $(V,W)'$. The specifications (III) or (IV) for $g_0$ are given in \eqref{g0-spec-III-IV}, and specifications (A) or (B) for $\varepsilon$ are given in \eqref{eps-spec-A-B}.  
				The Monte Carlo simulations have $Nsim=1000$ replications, and for each replication we generate $ntrain \in \{ 2000,5000,10000 \}$ observations for estimation and $ntest = 1,000,000$ observations for evaluation. 
				See the footnote of Table \ref{table-comparison-1d} for explanations of each method.  
				
			\end{table}
			
		\end{landscape}
		\normalsize
		\clearpage
	}

	\subsection{10-Dimensional $W$}\label{subsection-simulation-10d}
	
	We now consider a case where $W$ is 10-dimensional to evaluate the performance of the proposed estimator in more complex settings. 
	We let $W = (W_{1},\dots,W_{10})'$,  where each $W_{j} =_d \text{Unif}[0,1]$ for $j=1,\cdots,10$ and is independent of each other. 
	We consider two specifications for $g_0$, where the first one corresponds to the commonly assumed linear index model, and the other is nonlinear.  
	\begin{equation}\label{g0-spec-III-IV}
		\begin{aligned}
			\mbox{III\ :} &\ \ g_0(w) = \sum_{j=1}^{10} \beta_{j} w_{j} \\ 
			\mbox{IV\,:}&\ \ g_0(w) = \sum_{j=1}^{10} \beta_j \left( w_j^2 /2 + \sin( \pi w_{j} ) \right), 
		\end{aligned}
	\end{equation} 
	where $\beta = (0.63,	0.81,	-0.75,	0.83,	0.26,	-0.80,	-0.44,	0.09,	0.92,	0.93)'$
	\footnote{These numbers are generated as the first 10 numbers from $\text{Unif}[-1,1]$ using rng(`default') in Matlab. } 
	The error term $\varepsilon$ is independent of $X$ and is given by specifications (A) and (B) as before.  
	We will refer to as (IIIA), (IIIB), (IVA), and (IVB) the four cases that are given by the combinations of III and IV with A and B. 
	
	Similar as Table \ref{table-comparison-1d}, Table \ref{table-comparison-10d} presents the simulation results for designs (IVB), (IVA), (IIIB), (IIIA) using $Nsim = 1000$ replications, sample size $ntrain \in\{2000,5000,10000\}$ for estimation and $ntest =1,000,000$ for out-of-sample evaluation. Table \ref{table-comparison-10d} further demonstrates that the KNP estimator is robust to both misspecification of the systematic function of the covariates and misspecification of the density of error term. Moreover, for moderate sample sizes, the KNP estimator shows desirable properties: It effectively improves the finite sample performance in case of misspecification, and has a rather mild efficiency loss if the model is correctly specified.

	One unexpected pattern in Table \ref{table-comparison-10d} is that, in designs (IIIA) and (IIIB), $\text{RMSE}(\hat g)$ and $\text{MAD}(\hat g)$ are not monotone in $ntrain$, even though $\text{RMSE}(\hat p)$ and $\text{MAD}(\hat p)$ decrease. 
	This pattern arises from the data-driven tuning, where $J_n,m, B_n$ are selected by 5-fold cross-validation to minimize the squared loss in the objective \eqref{obj-aux-1-pca} for $p(\cdot)$, which is a composition of $g$ and $F$ and thus targets the error of $\hat p$ rather than the error of $\hat g$.  
	To check that this pattern is indeed driven by the data-driven model selection, we conduct an additional experiment in which $J_n$ is fixed at $J_n=1$ for design (IIIA) and $J_n = 3$ for design (IIIB), while $m$ is still chosen by cross-validation as before and $B_n=100$ is fixed. 
	Table~\ref{table-comparison-10d-appendix-J1or3} in Appendix~\ref{appendix-table-figures} reports the corresponding results. 
	In this setting, $\text{RMSE}(\hat g)$ and $\text{MAD}(\hat g)$ decrease with the training sample size, and $\text{RMSE}(\hat p)$ and $\text{MAD}(\hat p)$ also decrease, which is consistent with the theoretical properties of the KNP estimator.

	\subsection{Coverage of Bootstrap Confidence Intervals for APE and cAPEs}\label{subsection-simulation-ape}
	
	Theorem \ref{theorem-asym-normality-wape} establishes the asymptotic normality of weighted average partial derivatives. 
	In the empirical application in Section \ref{section-application}, we construct nonparametric bootstrap confidence intervals for inference, since the asymptotic variance in Theorem \ref{theorem-asym-normality-wape} depends on the density of $X$ and its derivatives, which are challenging to estimate given that $X$ is 10-dimensional in that example.\footnote{While it may be possible to adopt methods such as kernel density estimation or the ones proposed in \cite{spady-stouli-20} to estimate the asymptotic variance, we opted for bootstrap intervals for practicality. 
		Moreover, Theorem \ref{theorem-asym-normality-wape} shows a standard root-$n$ asymptotically normal distribution centered at zero, so arguments as in \cite{chen-linton-vankeilegom-03} can, in principle, be adapted to show the bootstrap validity under suitable conditions. 
		We leave a formal proof of bootstrap validity for future research. 
	}  
	The empirical application focuses on the APE of one component of $W$ and on two conditional APEs (cAPEs) conditioning on this component being below or above its average.

	\afterpage{ 
		\begin{table}[h!]
			\caption{Coverage of bootstrap confidence intervals: $Nsim=1000$, $Nboot = 1000$
			} 
			\label{table-boot-ci-coverage} 
			\small 
			\begin{center}
				\centering
				\begin{tabular}{c c c c c c c c c c c c c c c c c c}
					\hline \hline					
					
					Confidence level	&	&	90\% 	&	&	95\%	&	&	99\%	\\  [0.5ex] \hline  
					&	&	 \multicolumn{5}{c}{$ntrain= 2000$} 							\\ [0.5ex]
					$APE_{W_1}$	&	&	0.951	&	&	0.977	&	&	0.994	\\
					$cAPE_{W_1|W_1<0.5}$	&	&	0.932	&	&	0.965	&	&	0.992	\\
					$cAPE_{W_1|W_1>0.5}$	&	&	0.925	&	&	0.966	&	&	0.996	\\  [1ex]
					
					&	&	 \multicolumn{5}{c}{$ntrain= 5000$} 							\\ [0.5ex]
					$APE_{W_1}$	&	&	0.923	&	&	0.969	&	&	0.994	\\
					$cAPE_{W_1|W_1<0.5}$	&	&	0.909	&	&	0.958	&	&	0.991	\\
					$cAPE_{W_1|W_1>0.5}$	&	&	0.930	&	&	0.965	&	&	0.994	\\  [1ex]
					
					&	&	 \multicolumn{5}{c}{$ntrain= 10000$} 							\\ [0.5ex]
					$APE_{W_1}$	&	&	0.906	&	&	0.948	&	&	0.987	\\
					$cAPE_{W_1|W_1<0.5}$	&	&	0.926	&	&	0.967	&	&	0.994	\\
					$cAPE_{W_1|W_1>0.5}$	&	&	0.893	&	&	0.936	&	&	0.988	\\  [1ex]

					\hline \hline
				\end{tabular}
			\end{center}
			
			\footnotesize 
			Notes: The DGP is specification (IVB) described in Section~\ref{subsection-simulation-10d}. 
			For each sample size $ntrain \in\{2000,5000,10000\}$, the empirical coverage is computed from $Nsim=1000$ replications. 
			In each replication,  
			we estimate the APE of $W_1$ and two conditional APEs of $W_1$ conditioning on $W_1<0.5$ and $W_1>0.5$, denoted by $APE_{W_1}$, $cAPE_{W_1|W_1<0.5}$, and $cAPE_{W_1|W_1>0.5}$. 
			Within each replication, we generate $Nboot =1000$ bootstrap samples and re-estimate the three parameters for each bootstrap sample, from which we construct the 90\%, 95\%, 99\% bootstrap confidence intervals. 
			To reduce computation time, we fix the choice of tuning parameters during estimation. The choice is based on cross-validation and corresponds to the values that minimize the squared loss objective: (i) $J_n = 2, m=68, B_n=100$ for $ntrain=2000$, (ii) $J_n = 4, m=68, B_n=100$ for $ntrain=5000$, (iii) $J_n = 4, m=74, B_n=100$ for $ntrain=10000$.  
		\end{table}
		
		\normalsize 
	}

	Motivated by this empirical application, we examine the finite-sample coverage of bootstrap confidence intervals (CIs) in the 10-dimensional $W$ design (IVB). 
	We focus on the APE of $W_1$ and two cAPEs conditioning on $W_1<\mathbb E W_1 =0.5$ and $W_1>0.5$, denoted by $APE_{W_1}, cAPE_{W_1|W_1<0.5}, cAPE_{W_1|W_1>0.5}$, respectively. 
	We generate $Nsim=1000$ Monte Carlo replications, with sample size $ntrain\in\{2000, 5000, 10000\}$. 
	For each replication, we generate a sample of size $ntrain$, 
	fit the KNP estimator, compute the estimates for the APE and cAPEs, and draw $Nboot = 1000$ bootstrap samples and re-estimate APEs to construct 90\%, 95\%, and 99\% bootstrap confidence intervals. 
	Bootstrap confidence intervals are constructed using the basic (reverse-percentile) bootstrap method.

	Table \ref{table-boot-ci-coverage} reports the empirical coverage probabilities.   
	The results indicate that the bootstrap CIs have coverage generally close to the nominal levels for both the APE and the two cAPEs: although the CIs are slightly conservative at $ntrain=2000$, coverage tends to be closer to the nominal as the sample size increases. 
	Table~\ref{table-boot-ci-length} in Appendix~\ref{appendix-table-figures} also reports the average CIs length, which decreases with sample size.

	\section{Application: Temperature and Judge’s Decision}\label{section-application}

	\cite{heyes-saberian-19, heyes-saberian-22} analyze the effect of outdoor temperature on the probability of an asylum application being granted. 
	Based on a linear probability model (LPM) including other weather and pollution characteristics, \cite{heyes-saberian-19} finds that, in their preferred specification, a $10^{\circ}$F increase in case-day temperature reduces the grant probability by 1.075 percent. 
	Their results suggest that high temperatures may damage decision consistency, even for experienced professional decision-makers who work indoors and are ``protected" by climate control. 
	The evidence that such socially and economically important high-stakes decisions can be affected by extraneous variables suggests a potential welfare loss. 
	
	How do temperature and other weather and pollution characteristics affect a judge’s grant decision? \cite{heyes-saberian-19} highlight that their findings are consistent with established links from temperature to mental function, decision-making, risk attitudes, and mood. 
	Following their argument, we may naturally expect these environmental variables to affect judges’ perceived utility from the outdoor environment, thereby influencing their grant decisions.

	Two further questions arise. 
	First, is the effect of temperature on judges' perceived utility approximately linear, or does it vary nonlinearly with temperature? 
	Second, do environmental variables (e.g., temperature, air pressure, dew point, PM2.5) affect utility in an additive and separable way, or do they interact with each other when affecting decision-makers' mental states? 
	A fully nonparametric utility function of environmental variables is useful for addressing both questions.

	In this section, we apply KNP to investigate the effect of temperature on judges' grant decisions, allowing for these important features. 
	This exercise illustrates the empirical relevance of our proposed method. 
	We find a bell-shaped relationship between temperature and judges' perceived utility, 
	which is consistent with existing evidence that outdoor environments may affect decisions through decision-makers' mental states or moods. 
	Moreover, the effects of temperature on judges' decisions are heterogeneous across temperature ranges: 
	while the conditional APE at high temperatures is significantly negative, the conditional APE at low temperatures is not statistically significant at the 10\% significance level. 
	This heterogeneity can potentially inform more targeted policy interventions across temperature ranges.

	\subsection{Empirical Specification and Implementation} 
	
	Let $Y_i= 1$ if the application case $i$ is granted, and 0 otherwise. 
	We apply the proposed KNP estimation procedure to the following model: 
	\begin{equation}\label{bcm-application}
		Y_i  = 1\{ Y_{i}^\ast >0 \}  \quad \text{with} \quad Y_{i}^\ast = V_{i} + g_0(W_{i}) - \varepsilon_{i}, 
	\end{equation} 
	where $Y_i^\ast$ is a latent index that can be viewed as an unobserved score of case $i$.  
	For each case $i$, let $j$ denote its assigned judge, $t$ the decision time (year-month-day), and $c$ the applicant's country of nationality.  
	$W_i$ is the vector of \emph{9 outdoor environmental variables} that case $i$'s assigned judge $j$ was exposed to at the decision time $t$, including mean daily temperature, air pressure, dew point, precipitation, wind speed, sky cover, ozone, CO, PM2.5; 
	see Section II of \cite{heyes-saberian-19} for the definitions and description of 
	$W_i$ and $Y_i$, and Table 1 therein for summary statistics.  
	Note that $W_i$ depends only on the judge and the decision time; hence, we also write $W_i = W_{jt}$. 
	Here $g(W_i)$, or $g(W_{jt})$ more specifically, may be interpreted as the utility given by the outdoor environment with variables $W_{jt}$.

	Variable $V_i$ is chosen to be the log-odds of the mean approval rate for different types of applications from country $c$ over each month.\footnote{
		As explained in \cite{heyes-saberian-19}, ``There are two types of cases in immigration courts: affirmative cases in which the applicant presents in the
		courts on her/his own and defensive cases in which the applicant is instructed to attend on the initiative of the immigration authorities."} 
	Since the observed case characteristics in the data are limited, this choice allows us to account for some of the heterogeneity in the applicant's nationality, types of application, and time at the level of year-month. 
	Fig.~\ref{fig-v-kde-hist} in Appendix~\ref{appendix-table-figures} plots the histogram and a kernel density estimate of $V_i$ based on 99{,}773 observations, using Silverman’s rule-of-thumb bandwidth. 
	The empirical range of $V_i$ is $[-3.761, 2.772]$; the $0.1\%, 0.5\%, 1\%, 50\%$, $ 99\%, 99.5\%$ and $99.9\%$ quantiles are $-3.71, -3.32, -3.08, -0.45$, $1.61, 1.87$, and $2.20$, respectively.  
	Fig.~\ref{fig-v-kde-hist} and the quantiles indicate that $V_i$ takes many distinct values and has a large variation in the data, supporting its use as a regressor with a large support for identification.

	The error term $\varepsilon_{i}$ is the idiosyncratic case-level component of the latent score $Y_i^\ast$, after accounting for the baseline approval propensity $V_{i}$ and the environmental influence $g_0(W_{jt})$.  
	We assume that $\varepsilon_{i}$ is independent of $V_i,W_i$, where $\varepsilon_i$ has an unknown CDF $F_0$. 
	This condition is crucial for identification in our model and is treated as a maintained assumption in this illustration. 
	By construction, $V_{i}$ is a proxy for baseline approval propensity by applicant nationality, calendar month (year-month), and case type.  
	It is intended to absorb systematic heterogeneity along these dimensions, including at the $(c,t)$ level, so that the remaining variation in $\varepsilon_i$ reflects the residual within-cell idiosyncratic component.   
	In addition, environmental exposure $W_{jt}$ is plausibly exogenous in this setting because, as stated in \cite{heyes-saberian-19}, ``the setting of dates for cases and the rostering of judges is done well in advance'', and weather on the decision date is not manipulable by judges or applicants.

	For implementation, we estimate $(g_0, F_0)$ by the PC-regularized KNP method; that is, we solve \eqref{obj-aux-1-pca}, after which the estimator is given in \eqref{ghat-fhat-tau-zeta}, and then the APE and cAPEs are computed following Section \ref{subsection-wape-computation}. 
	Specifically, we standardize each component of $W_{jt}$ to have zero mean and unit variance and set $w_\ast = 0$, so the location normalization $g_0(w_\ast) = 0$ normalizes the utility to zero at the mean environmental variables (in the original units). 
	We use the Gaussian kernel $k(s,t)=\exp(-\|s-t\|^2/2)$. 
	The tuning parameters $B_n, J_n, m$ are selected by 5-fold cross-validation, which yields the choice $B_n=\sqrt{10}, J_n=4, m = 50$.\footnote{
		The candidate sets are $B_n^2 \in \{5,10,20,\dots, 50, 75, 100, 200, 500\}$, $J_n\in\{0,1,\dots, 7\}$, and $m\in \{10,25, 50, 75,100, 150,200\}$. 
		For each triplet $(B_n,J_n,m)$, we solve \eqref{obj-aux-1-pca} on the training folds and evaluate the squared-loss objective on the held-out fold. 
		We choose the triplet that minimizes the average validation loss across folds. 
	} 
	We solve the constrained optimization in \eqref{obj-aux-1-pca} using the MATLAB \texttt{fmincon} routine with the default interior-point method to handle the constraint. 
	We supply the routine with analytical gradients with respect to $\zeta$ and $\tau$;  
	the closed-form expressions for the objective, constraint, and gradients are given in Appendix \ref{appendix-implementation}. 
	All numerical optimizations are initialized at zeros.  
	When constructing bootstrap confidence intervals, we use a nonparametric bootstrap with $1000$ bootstrap replications, resampling cases $i$ with replacement. 
	In each bootstrap replication, we re-estimate $(g_0,F_0)$ using the same procedure as above while fixing $B_n=\sqrt{10}, J_n=4, m = 50$ at the values selected by cross-validation in the original sample.  
	Bootstrap confidence intervals are constructed using the basic (reverse-percentile) bootstrap method.\footnote{
		Specifically, for any scalar target parameter $\gamma_0$, let $\hat\gamma$ be the estimate obtained from the original sample, 
		and let $q^\ast_u$ denote the $u$-percentile of the centered bootstrap distribution $\{\hat\gamma^\ast_b - \hat\gamma\}_{b=1}^{1000}$, where $\hat\gamma^\ast_b$ denotes the estimate of $\gamma_0$ obtained from the $b$-th bootstrap replication. 
		The $(1-\alpha)$ confidence interval for $\gamma_0$ is given by $[\hat \gamma - q^\ast_{1-\alpha/2}, \hat \gamma - q^\ast_{\alpha/2}]$. 
		In our application, $\gamma_0$ may represent a pointwise value $g_0(w)$ at any prespecified evaluation point $w$, an APE or a cAPE.  
	}

	\subsection{Results}

	Our empirical goals are twofold: (a) to characterize how temperature enters the utility function $g_0(\cdot)$ of outdoor environmental variables, and (b) to quantify how temperature affects judges' grant decisions.

	\begin{figure}[h] 
		\caption{Estimated utility as a function of temperature}  
		\label{fig-utility-temperature}
		\begin{center} 
			\includegraphics[width=0.8\textwidth]{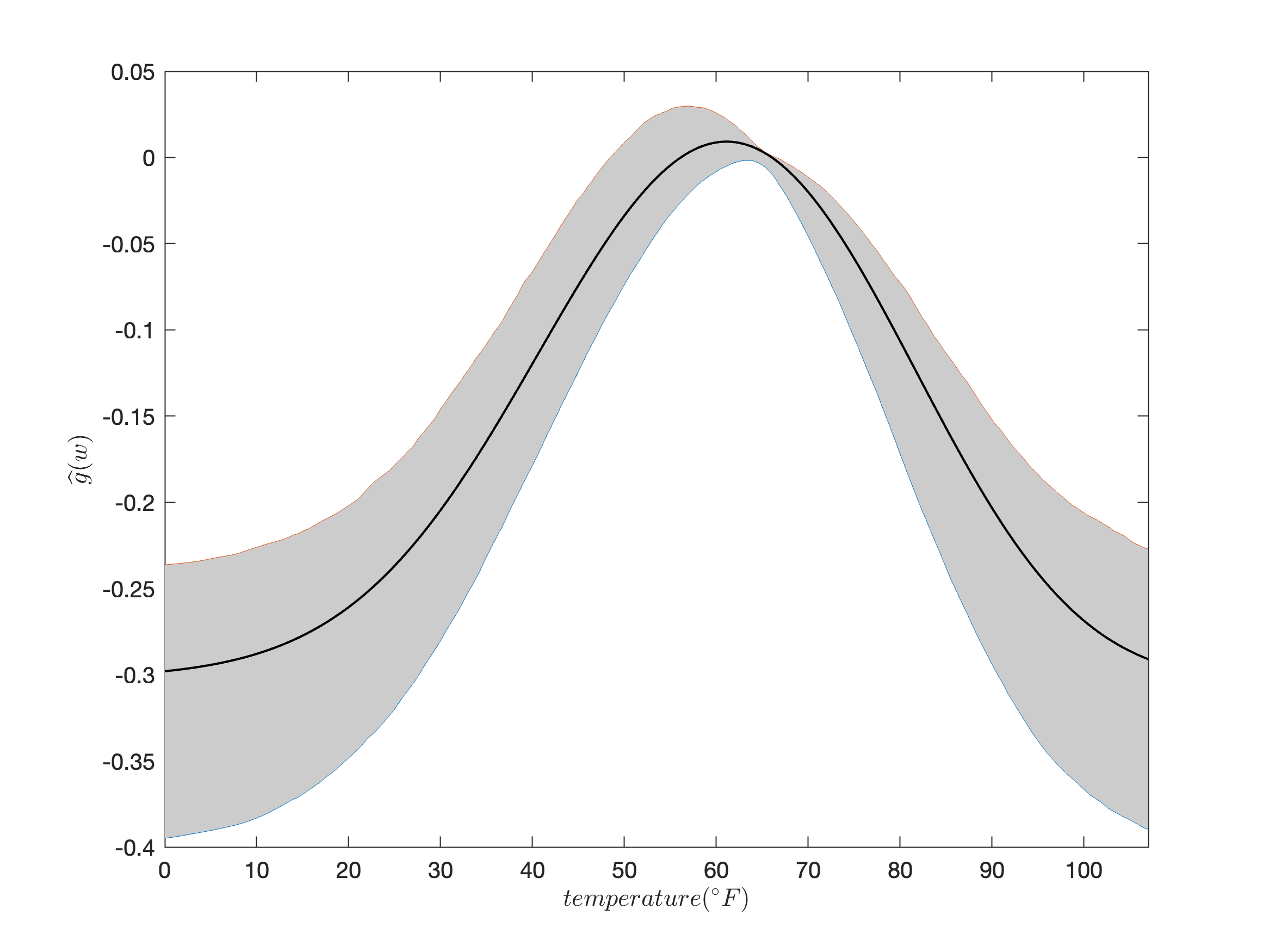}
		\end{center}
		
		\footnotesize
		Notes: This figure plots the KNP estimated utility function $g(w)$ of environmental variables $w$ as a function of temperature, with all other environmental variables fixed at their mean levels.  
		The effective sample size is $n= 99,773$.  
		The shaded area indicates a 90\% pointwise confidence band based on bootstrap with 1000 replications. 
		See the implementation paragraph for more details, including estimation and tuning parameter selections.
	\end{figure}

	For (a), Fig.~\ref{fig-utility-temperature} presents the estimated utility as a function of temperature in the original units, with all other environmental variables fixed at the mean level. The shaded area indicates 90\% pointwise bootstrap confidence band.  
	The figure shows that, when the temperature is at a high level, increasing temperature decreases utility. In contrast, when the temperature is at a low level, increasing temperature will increase the utility.

	For (b), we estimate the average partial effect (APE) and conditional average partial effects (cAPEs) of temperature $T$. In particular, other than the APE
	\[
	APE_T  = \mathbb{E} \frac{\partial}{\partial T} p(X),
	\]
	we consider the cAPEs conditioning on temperature higher (or lower) than $70^{\circ}$F, 
	\[
	cAPE_{T|T>70^{\circ}F} = \mathbb{E} \left( \frac{\partial}{\partial T} p(X) \bigg| T>70^{\circ}F  \right), \quad cAPE_{T|T<70^{\circ}F} = \mathbb{E} \left( \frac{\partial}{\partial T} p(X) \bigg| T<70^{\circ}F  \right).
	\]
	We choose $70^\circ$F as a benchmark close to typical indoor comfort temperature (``room temperature'') and use it to summarize heterogeneity in the temperature effect across low- and high-temperature ranges.

	Table \ref{table-apes} reports 90\% bootstrap confidence intervals from KNP for the APE and cAPEs, based on 1000 replications. 
	We compare the KNP results with those from an LPM benchmark that regresses $Y_i$ on the same covariates $(V_i,W_i)$ as in \eqref{bcm-application}. 
	Under the LPM, the estimated APE and both cAPEs are significantly negative, including  $cAPE_{T|T<70^{\circ}F}$.  
	In contrast, KNP yields an insignificant $cAPE_{T|T<70^{\circ}F}$ and a significantly negative $cAPE_{T|T>70^{\circ}F}$, with the overall APE also significantly negative.

	\begin{table}[h!]
		\caption{
			90\% confidence interval of estimated APEs and cAPEs (in \% of prob) of temperature  
		} 
		\label{table-apes}
		\begin{center}
			\centering
			\begin{tabular}{c c c c c c c c}
				\hline \hline 
				
				&	&	 APE &&  $cAPE_{T|T<70^{\circ}F}$	 	&&		$cAPE_{T|T>70^{\circ}F}$ 	\\ [0.5ex] \hline   
				KNP &  &	$[-0.378,  -0.065]$	&& $ [-0.267,     0.045] $	&&	$ [-0.636,   -0.050] $		\\   
				
				LPM & &  $[-3.083, -1.644]$  && $[-3.083, -1.644]$  && $ [-3.083, -1.644] $  \\
				
				\hline \hline
			\end{tabular}
		\end{center}

		\footnotesize
		Notes: This table presents 90\% confidence intervals for the estimated APEs and cAPEs obtained from fitting KNP and the linear probability model (LPM). 
		Effects are reported in percentage points of granting probability and correspond to a one standard deviation increase in temperature ($16.9^{\circ}$F). 
		The first row reports nonparametric bootstrap (1000 replications) confidence intervals given by the plug-in KNP estimator for model \eqref{bcm-application}.  
		The effective sample size is $n= 99,773$. 
		See the implementation paragraph for more details, including estimation and tuning parameter selections.  
		The second row reports confidence intervals obtained from fitting LPM on the same covariates $(V_i,W_i)$ as in \eqref{bcm-application}. 
	\end{table}

	\medskip

	These results complement \cite{heyes-saberian-19,heyes-saberian-22} by providing two empirically relevant outputs: an estimate of $g_0$ as a fully nonparametric function of the environmental variables $W$, and cAPEs of temperature across temperature ranges.  
	Notably, the estimated heterogeneous effects of temperature can potentially inform more targeted policy interventions. 
	While an LPM can allow for nonlinearity by adding higher-order and interaction terms in environmental variables, this quickly yields many polynomial terms (e.g., a 4-th order polynomial expansion of the 9-dimensional $W$ yielding 714 terms) and hard-to-interpret specifications.  
	In contrast, the estimation of cAPEs in our method is straightforward, with a computational procedure that remains essentially unchanged regardless of the dimension of $W$, since it does not require constructing or differentiating a large interaction basis.

	In terms of the average effect, our estimated APE is significantly negative and qualitatively consistent with \cite{heyes-saberian-19}, although the magnitude is smaller.\footnote{
		\cite{heyes-saberian-19} reports a 1.075\% decrease in grant probability per $10^{\circ}$F increase in temperature, while our 90\% CI implies a change in $[-0.378\%, -0.065\%]$ per $16.9^{\circ}$F (1 standard deviation) increase.  
	}  
	However, because we use a nonparametric nonlinear model, whereas \cite{heyes-saberian-19} uses an LPM with a rich set of fixed effects, differences in magnitude should not be taken literally, as they may reflect the different specifications.   
	While our approach has the advantage of capturing flexible nonlinear and interactive effects of the full environmental vector within a latent utility framework,\footnote{Under the maintained assumptions, the latent-utility formulation  
		provides a structural mapping from $(V,W)$ to the choice probabilities, which can be used to analyze counterfactuals holding the distribution of the error term fixed. 
	} 
	the tradeoff is that it does not accommodate a very fine set of fixed effects in the same way as linear models;  
	however, coarse controls (e.g., year indicators) can be incorporated by including dummies in the systematic component.

	\begin{remark}[Robustness check: Single-city subsample (New York)]
		Motivated by the concern that $\varepsilon_i$ may contain unobserved time components or court-location components that are correlated with environmental exposure $W_{jt}$ or with $V_{i}$, we conduct a robustness analysis using only cases decided in New York (NY). 
		In this subsample analysis, we include year fixed effects for 2001-2004 by adding dummy variables linearly in the systematic component. 
		See Appendix~\ref{appendix-NY-sample} for a detailed description of this analysis. 
		Table~\ref{table-apes-NY} in Appendix~\ref{appendix-NY-sample} reports results corresponding to Table~\ref{table-apes}. 
		Notably, KNP continues to capture heterogeneous effects of temperature: $cAPE_{T| T>70^\circ F}$ is significantly negative, while $cAPE_{T| T<70^\circ F}$ is significantly positive.
	\end{remark}

	\section{Conclusion}\label{section-conclusion}

	In this paper, we propose a new estimation procedure for a class of identified nonparametric binary choice models. 
	Compared to other possible methods, our estimation procedure is amenable to easier computation, especially when the number of covariates is non-small which may lead to a large number of basis functions if using the commonly used sieve method. 
	
	We show the proposed estimator has desirable asymptotic properties, and simulation studies suggest that the KNP estimator works well in finite samples. 
	We demonstrate the practical relevance of the proposed method by revisiting the effect of temperature on immigration judges’ latent utility, and thus, their decisions on asylum applications.

	In future work, a natural extension is to allow for further structures, motivated by either economic application-specific assumptions or existing theory, on the nonparametric and distribution-free BCM, without losing the tractability of the KNP procedure in terms of computation.\footnote{See, e.g., \cite{chetverikov-santos-shaikh-18} and the reference therein for a review of the roles of shape restrictions in identification, estimation and inference.} 
	For example, one may expect some covariates to affect the latent utility partially linearly. 
	Such a partial linear structure of latent utility can be easily incorporated into the KNP estimation procedure. 
	However, for more complex structures, such as monotonicity or concavity/convexity of the latent utility function in some covariates, the computation will be much more difficult under big data environments, since it involves more constraints on the function evaluations at the data points to restrict the shape of the function.

	\bibliographystyle{aea}
	\bibliography{bcm-knp}

	\clearpage
	
	\section*{Appendices}\label{section-appendix}
	
	\appendix 
	
	The appendices are organized as follows. 
	Appendix~\ref{appendix-rkhs} provides a brief introduction to RKHS and examples of large RKHS. 
	Appendices~\ref{appendix-id-proof} and \ref{appendix-proof-representer} contain the proofs for the identification result in Theorem~\ref{theorem-id-g0-F0-unique-min} and the representer-form solution for $\hat g$ in Proposition~\ref{proposition-representer}. 
	Appendix~\ref{appendix-rank-K} provides a lemma on the rank of the Gram matrix $K$ and its proof.  
	Appendices~\ref{appendix-proof-pc-near-opt}, \ref{appendix-consistency}, and~\ref{appendix-cgce-rate} contain proofs of Lemma \ref{lemma-pc-regularization-error}, the near-optimality result for the PC-regularized KNP, consistency in Theorem~\ref{theorem-consistency}, and the convergence-rate results in Theorem~\ref{theorem-cgce-rate-aux} and Corollary~\ref{corollary-cgce-rate}. 
	Appendix~\ref{appendix-AN} contains the assumptions, discussions, and the proof for Theorem \ref{theorem-asym-normality-wape} on the asymptotic normality of the weighted average partial derivatives. 
	Appendix~\ref{appendix-tech-lmas} collects additional technical lemmas used in the proofs of the main theorems. 
	Appendix~\ref{appendix-NY-sample} contains robustness analyses for the empirical application. 
	Appendix~\ref{appendix-implementation} provides closed-form expressions for the objective functions and analytical gradients used for computation. 
	Appendix~\ref{appendix-table-figures} collects tables from additional simulations and an additional figure for the empirical application.

	\section{Reproducing Kernel Hilbert Space}\label{appendix-rkhs}  
	
	Below we present a brief introduction to reproducing kernel Hilbert spaces, which are popular nonparametric settings in machine learning.

	Let $ \mathcal{W}\subset \mathbb R^d$. Let \emph{kernel} $k: \mathcal{W} \times \mathcal{W} \to \mathbb{R}$ be a symmetric function, which is \emph{positive definite}, in the sense that 
	\[
	\sum_{i=1}^N \sum_{j=1}^N a_i a_j k(w_i, w_j) \geq 0
	\]
	for any $a_i \in \mathbb{R}$, $w_i \in \mathcal W $, $i =1,\cdots,N$ and any positive integer $N$.
	One example is the class of Gaussian kernels, given by
	\begin{align}
		k(s,t) = \exp\Big( -\frac{\|s-t\|^2}{2\sigma^2} \Big)
	\end{align}
	for some $\sigma>0$. 
	
	For a given kernel $k$, by the Moore-Aronszajn theorem (e.g., Theorem 3 in \citealp[p.~19]{berlinet-thomas-11}), there exists a unique reproducing kernel Hilbert space $\mathbb{G}_k$ with reproducing kernel $k$, where $\mathbb{G}_k$ is the completion of the linear span of $\{k(\cdot, w) | w\in \mathcal W \}$ with inner product defined based on 
	\[
	\left\langle \sum_{i=1}^M a_i k(\cdot,s_i), \sum_{j=1}^N b_j k(\cdot,t_j) \right \rangle_{\mathbb{G}_k} = \sum_{i=1}^M \sum_{j=1}^N a_i b_j k(s_i,t_j)
	\]
	for any $s_i,t_j\in \mathcal W, a_i,b_j \in \mathbb{R}$ and integers $M, N$. Moreover, the reproducing property states that
	\begin{align}\label{reproducing-property}
		\langle g, k(\cdot,w) \rangle_{\mathbb{G}_k} = g(w)
	\end{align}
	for any $g\in \mathbb{G}_k, w\in \mathcal W$.

	\bigskip 
	
	\begin{remark}[Large RKHSs]\label{remark-largeRKHS}
		The consistency of the KNP estimator relies on the ability of functions in certain RKHSs to approximate the unknown function to be estimated. 
		Here are some examples. 
		\begin{enumerate}[(a)]\itemsep-0.1em
			\item When $\mathcal{W}$ is compact, for any $\sigma^2$, the Gaussian kernel RKHS is dense, under the supremum norm, in $C(\mathcal{W})$ the space of all continuous functions on $\mathcal{W}$. See, e.g., Corollary 4.58 in \cite{steinwart-christmann-08}. 
			
			\item For $\mathcal{W} = \mathbb{R}^d$, the Gaussian kernel RKHS is dense in $L_p(\mu)$ for any finite measure $\mu$ on $\mathbb{R}^d$ and any $p>1$. See, e.g., Theorem 4.63 in \cite{steinwart-christmann-08}. 
			
			\item The Sobolev space of functions on $\mathbb R^d$, whose derivatives up to order $m$ exist and are square-integrable, is an RKHS. However, the reproducing kernels of Sobolev spaces are typically difficult to compute, as they involve integrals over $\mathbb R^d$ that generally require numerical integration, except in certain simple cases, such as when $d=1$ or $m=\infty$; see, e.g., \cite{novak-et-al-18}. 
		\end{enumerate}
	\end{remark}

	\section{Identification: Proof for Theorem \ref{theorem-id-g0-F0-unique-min} } 
	\label{appendix-id-proof}
	
	We first give an identification result that $g_0$ and $F_0$ can be recovered from $p_0(x) := F_0(v+g_0(w) )$, which allows for possibly bounded support of $V$. 
	Then we prove the Theorem \ref{theorem-id-g0-F0-unique-min} for identification. 
	
	\begin{lemma}\label{lemma-aux-id-g0-F0-strongest-result}
		Assume that 
		$G_0(X) = V+ g_0(W)$ for $X=(V,W')'$. 
		Let $\mathcal V, \mathcal W$ denote the supports of $V,W$, respectively.   
		Let Assumption \ref{assumption-id-g0-F0} (\ref{condn-id-eps-large-support}), (\ref{condn-id-g-cts}), (\ref{condn-id-gwast-0}) hold. 
		Assume further 
		\begin{enumerate}\itemsep-0.1em 
			\item [(d)] The support of the conditional distribution $\mathcal{L}(V|W=w_\ast)$ is the same as $\mathcal V$, the support of $\mathcal{L}(V)$;  
			\label{condn-lmaB1-Vwast-large-support}
			\item [(e)] Either $\mathcal V = \mathbb R$, or there exists a point $v_\ast \in \mathcal V$ such that $\mathcal{L}(W|V=v_\ast)$ has support $\mathcal W$ and the ranges of all functions in $\mathcal G$ are subsets of $[L_{lb}, L_{ub}]$ with $[v_\ast+L_{lb}, v_\ast+L_{ub} ] \subset \mathcal V$.
			
			\label{condn-lmaB1-V-large-support}
		\end{enumerate} 
		Then for any $\theta = (g,F) \in \mathcal{G} \times \mathcal{F}$ satisfying $p_{\theta}(X) = p_{\theta_0}(X)$ almost everywhere, where $p_{\theta}(x) = F(v+g(w))$, it holds that $g(w) = g_0(w)$ for any $w \in \mathcal{W}$, and $F(u) = F_0(u)$ for any $u \in \mathcal V$.
	\end{lemma}

	\begin{remark}[Identifying $g$ does not require $\mathcal V = \mathbb R$]\label{remark-id-g-not-require-SuppV=R} 
		In Lemma \ref{lemma-aux-id-g0-F0-strongest-result}, the identification of $g$ essentially requires the support of $V$ to cover the support of $g(W)$. 
		For instance, when $V$ is independent of $W$, $0 \in \mathcal V$, and $\varepsilon$ has support $\mathbb R$, the support condition on $\mathcal V$ for the identification of $g$, Condition (e) in Lemma \ref{lemma-aux-id-g0-F0-strongest-result}, is satisfied whenever $ \{ g(w): w\in \mathcal W\} \subset \mathcal V$. 
		In particular, having $\mathcal V=\mathbb R$ is sufficient but not necessary for the identification of $g$.  
		However, Lemma \ref{lemma-aux-id-g0-F0-strongest-result} identifies $F$ only on $\mathcal V$, so the condition $\mathcal V = \mathbb R$ is needed to identify $F$ on $\mathbb R$. 
		In the main text, we impose $\mathcal V=\mathbb R$ for simplicity; this is mainly for the identification of $F$ on $\mathbb R$, rather than for identifying $g_0$. 
	\end{remark}

	\begin{proof}[Proof of Lemma \ref{lemma-aux-id-g0-F0-strongest-result}]
		Let $(g,F) \in \mathcal{G} \times \mathcal{F}$ be such that $p_{\theta}(X) = p_{\theta_0}(X)$ for $X$-almost everywhere. 
		That is, $F(V + g(W)) = F_0(V + g_0(W))$ for $X=(V,W')'$-almost everywhere. 
		Denote by $A \subset \mathcal{X}$ the set of $x=(v,w')'$ on which  $F(v + g(w)) = F_0(v + g_0(w))$. Notice $P_{X}(A)=1$. 
		
		We prove this lemma by the following steps. (I) Show that $F(u) = F_0(u)$ for $u\in \mathcal V$. (II) Show that $g(w)=g_0(w)$ for any $w\in \mathcal W$. 
		
		For (I), let $v\in \mathcal V$ be fixed arbitrarily. Notice $x :=(v,w_\ast')' \in \mathcal{X}$ by Condition~(d).  
		Denote by $B_{1/n}(x)$ the open ball with radius $1/n$ and center $x$ for each $n =1,2,\dots$. 
		Notice $\mathbb{P}\{ X \in B_{1/n}(x) \} >0$ since $x\in \text{Supp}(X)$, and thus, $\mathbb{P}\{ X \in B_{1/n}(x) \cap A \} >0$ since $P_{X}(A) = 1$. 
		
		We pick any $x_n =(v_n,w_n')' \in  B_{1/n}(x) \cap A$ for each $n$. Notice $\|x_n - x\| \to 0$ as $n \to \infty$. 
		Since $g, g_0$ are continuous as in Assumption~\ref{assumption-id-g0-F0}(\ref{condn-id-g-cts}),  
		we have $v_n + g(w_n) \to v + g(w_\ast)$ and $v_n + g_0(w_n) \to v + g_0(w_\ast)$ as $n \to \infty$. 
		Furthermore, $F(v_n + g(w_n)) \to F(v + g(w_\ast))$ and $F_0(v_n + g_0(w_n)) \to F_0(v + g_0(w_\ast))$ as $n \to \infty$, since $F,F_0$ are continuous by Assumption~\ref{assumption-id-g0-F0}(\ref{condn-id-eps-large-support}).  
		Therefore, $F(v + g(w_\ast)) = F_0(v + g_0(w_\ast))$ follows immediately upon noticing that $F(v_n + g(w_n)) = F_0(v_n + g_0(w_n))$ since $x_n =(v_n,w_n')' \in A$. 
		Since $g(w_\ast)=0$ under Assumption~\ref{assumption-id-g0-F0}(\ref{condn-id-gwast-0}),  
		$F(v) = F_0(v)$ for any $v \in \mathcal V$.

		For (II), let $w\in \mathcal W$ be fixed arbitrarily. 
		There exists some $v$ such that $x :=(v,w')'\in\mathcal X$, and we choose $v=v_\ast$ when $\mathcal V \subsetneq \mathbb R$. 
		Let $B_{1/n}(x)$ the open ball with radius $1/n$ and center $x$ for each $n =1,2,\dots$. Notice $\mathbb{P}\{ X \in B_{1/n}(x) \} >0$, and thus, $\mathbb{P}\{ X \in B_{1/n}(x) \cap A \} >0$ since $P_{X}(A) = 1$. 
		We pick any $x_n =(v_n,w_n')' \in  B_{1/n}(x) \cap A$ for each $n$, and thus, $F(v_n +g(w_n)) = F_0(v_n+g_0(w_n))$ for each $n$. Since $g, g_0, F, F_0$ are continuous, we have $F(v +g(w)) = \lim_{n\to\infty} F(v_n +g(w_n)) = \lim_{n\to\infty} F_0(v_n +g_0(w_n))  = F_0(v +g_0(w))$. 
		Since $v+g(w) \in \mathbb R$ and $v_\ast+g(w) \in [v_\ast+L_{lb}, v_\ast+L_{ub}] \subset \mathcal V$ in case of $\mathcal V \subsetneq \mathbb R$ under Condition (e), 
		it follows from (I) that $F_0(v +g(w)) = F(v +g(w)) = F_0(v +g_0(w))$, and thus, $g(w) = g_0(w)$ since $F_0$ is strictly increasing.  
	\end{proof}

	\begin{proof}[Proof of Theorem \ref{theorem-id-g0-F0-unique-min}]
		Since $\mathcal X$ is the support of $X$, for any $p:\mathcal X \to [0,1]$ which is continuous on $\mathcal X$, $p(x_\circ) \lessgtr p_0(x_\circ)$ at some $x_\circ\in \mathcal X$ implies that there exists a neighborhood $\mathcal N$ around $x_\circ$ such that $p(x) \lessgtr p_0(x)$ for any $x\in \mathcal N$ and $P_X(\mathcal N) >0$.  
		Thus, both $p \mapsto - \mathbb{E} ( Y\log p(X) + (1-Y)\log (1-p(X))  )$  and $p\mapsto \mathbb{E} (Y-p(X))^2$ have a unique minimum at $p_0$ over all continuous functions $p: \mathcal{X} \to [0,1]$. 
		Note that all conditions for Lemma~\ref{lemma-aux-id-g0-F0-strongest-result} are satisfied,  
		since Assumption~\ref{assumption-id-g0-F0}(\ref{condn-id-V-large-support}) implies Conditions~(d), (e) in Lemma~\ref{lemma-aux-id-g0-F0-strongest-result}. 
		Applying Lemma \ref{lemma-aux-id-g0-F0-strongest-result} yields that $\theta_0 = (g_0,F_0)$ is the unique minimum of $Q(\theta)$ over $\mathcal G \times \mathcal F$. 
	\end{proof}

	\section{Representer Form Solution: Proof for Proposition \ref{proposition-representer}}\label{appendix-proof-representer}

	\begin{proof}[Proof of Proposition \ref{proposition-representer}]
		Pick an arbitrary $F\in \mathcal{F}$, and $g\in \mathcal{G}_n$ given by $\tilde g - \tilde g(w_\ast)$. Thus, $\tilde g \in \mathbb{G}_k$ with $\|\tilde g\|_{\mathbb{G}_k} \leq B_n$.  
		Recall $W_0:=w_\ast$, and let $\tilde g_\ast$ be the orthogonal projection of $\tilde g$ onto linear span of functions $k(W_0,\cdot), k(W_1,\cdot)$, $\dots, k(W_n,\cdot) \in \mathbb{G}_k $. Then $\tilde g_\ast$ has the form $\tilde g_\ast(w) = \sum_{j=0}^n \delta_j k(W_j,w)$ for some $(\delta_j)_{j=0}^n$. Notice that, for each $j=0,1,\dots,n$, 
		\[
		\tilde g(W_j) - \tilde g_\ast(W_j)  = \langle \tilde g- \tilde g_\ast, k(W_j,\cdot) \rangle_{\mathbb{G}_k} =  0, 
		\]
		where the first equality follows from the reproducing property of an RKHS and the second equality is by the definition of $\tilde g_\ast$. 
		Thus, $g(W_j) = \tilde g(W_j)-\tilde g(W_0) =  \tilde g_\ast(W_j)-\tilde g_\ast(W_0) = g_\ast(W_j)$. 
		Moreover, $\|\tilde g_\ast\|_{\mathbb{G}_k} \leq \|\tilde g\|_{\mathbb{G}_k} \leq B_n$, and thus, $g_\ast:=\tilde g_\ast - \tilde g_\ast(w_\ast) \in \mathcal G_n$.
		
		Since $\hat Q(\theta)$ depends on $g$ only through values of $g(W_j)$,  $g(W_j) = g_\ast(W_j)$ for each $j=1,\dots,n$ implies that $(g,F)$ and $(g_\ast,F)$ give the same value for the objective function.
	\end{proof}

	\begin{remark}
		[Non-representer minimizers when the norm constraint is non-binding] 
		\label{remark-non-representer-min} 
		Proposition \ref{proposition-representer} does not ensure that every $\tilde g$ for a minimizer $g=\tilde g - \tilde g(w_\ast)$ admits the representer form. 
		To see this, let $S\subset \mathbb G_k$ be the span of $k(W_0,\cdot), \dots, k(W_n,\cdot)$ and $S_\perp\subset \mathbb G_k$ be its orthogonal complement. Suppose that $(g,F)\in \mathcal G_n\times \mathcal F_n$ is a minimizer of $\hat Q(\theta)$, where $g = \tilde g - \tilde g(w_\ast)$ for some $\tilde g\in S$. 
		If the norm constraint for this minimizer is not binding, i.e., $\|\tilde g\|_{\mathbb G_k} < B_n$, then we can choose $\tilde g_{\perp} \in S_\perp$ with RKHS norm small enough so that $\|\tilde g + \tilde g_{\perp}\|_{\mathbb G_k}^2 = |\tilde g \|_{\mathbb G_k}^2 + \| \tilde g_{\perp}\|_{\mathbb G_k}^2 \leq B_n^2$.  
		Since $\tilde g_{\perp}\in S_\perp$, we have $\tilde g_\perp(W_j) = \langle k(W_j,\cdot), \tilde g_\perp \rangle_{\mathbb G_k} = 0$ for all $j=0,1,\dots,n$, so replacing $\tilde g$ by $\tilde g + \tilde g_{\perp}$ leaves $g(W_j)$'s (and hence the objective value since $W_0=w_\ast$) unchanged. 
		Since $\tilde g +\tilde g_\perp \notin S$, this yields a minimizer that does not admit the form in Proposition \ref{proposition-representer}. 
	\end{remark}

	\section{Rank of the Gram Matrix}\label{appendix-rank-K}
	
	We provide the following lemma regarding the rank of the $n\times n$ Gram matrix $K$ whose $(i,j)$-th element is given by $k(w_i,w_j)$. 
	The lemma states that $K$ has full rank, provided that the RKHS $\mathbb G_k$ is dense in the space of all continuous functions and the $w_i$'s values are mutually different.   
	More generally, $K$ has rank the cardinality of $\{w_1,\dots,w_n\}$.\footnote{Although this may be a widely recognized fact, we were unable to find a specific reference.  
		Additionally, while it would have been easier to prove the result using the properties of $k$ as an integral operator, we instead choose to derive the rank based on the denseness property of RKHS spaces, to align with the perspective of viewing RKHS as special sieve spaces. 
	}

	\begin{lemma}
		[Rank of the Gram matrix] \label{lemma-kernel-universal-psd}
		Let the kernel function $k$ be such that its RKHS $\mathbb G_k$ is dense in the space of all continuous functions on any compact domain $\mathcal W$ under the uniform norm.
		Then, for any $n\in \mathbb N$, the $n\times n$ Gram matrix $K$ given by $k(w_i,w_j)$ for any mutually different points $w_1,\dots,w_n\in \mathcal W$ has full rank $n$. 
		
		Furthermore, if $\{w_1,\dots,w_n\}$ has cardinality $m<n$, then $K$ has rank $m$. 
	\end{lemma}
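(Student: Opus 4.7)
The plan is to argue by contradiction, exploiting the reproducing property together with the density hypothesis on $\mathbb G_k$. Suppose the $n\times n$ gram matrix $K$ is rank-deficient. Then there exists a nonzero vector $a=(a_1,\ldots,a_n)^\prime \in \mathbb R^n$ with $Ka=0$, and in particular $a^\prime K a = 0$. Define
\[
h(\cdot) \;=\; \sum_{i=1}^n a_i\, k(w_i,\cdot) \;\in\; \mathbb G_k.
\]
Using the inner product on $\mathbb G_k$ induced by $\langle k(u,\cdot),k(v,\cdot)\rangle_{\mathbb G_k} = k(u,v)$, compute $\|h\|_{\mathbb G_k}^2 = \sum_{i,j} a_i a_j k(w_i,w_j) = a^\prime K a = 0$, hence $h \equiv 0$ on $\mathcal W$.

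Next, invoke the reproducing property: for every $g \in \mathbb G_k$,
\[
\langle g, h\rangle_{\mathbb G_k} \;=\; \sum_{i=1}^n a_i\, g(w_i) \;=\; 0.
\]
The plan is to combine this identity with the density hypothesis to force each $a_k=0$. Since $w_1,\ldots,w_n$ are mutually distinct points in $\mathbb R^d$, pick a compact set $\mathcal W_0 \subset \mathbb R^{d_w}$ (say a large closed ball) containing them. By Urysohn's lemma (or an explicit bump-function construction in $\mathbb R^d$), for each fixed $k\in\{1,\ldots,n\}$ there is a continuous function $\phi_k:\mathcal W_0\to [0,1]$ with $\phi_k(w_k)=1$ and $\phi_k(w_j)=0$ for $j\ne k$. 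By the hypothesis, $\mathbb G_k$ is uniformly dense in $C(\mathcal W_0)$, so there is a sequence $g_\ell \in \mathbb G_k$ with $\sup_{w\in \mathcal W_0}|g_\ell(w)-\phi_k(w)|\to 0$, and in particular $g_\ell(w_j)\to \phi_k(w_j)=\delta_{jk}$. Passing to the limit in $\sum_i a_i g_\ell(w_i)=0$ yields $a_k=0$. Ranging over $k$ gives $a=0$, contradicting our assumption. Therefore $K$ has full rank $n$.

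For the second statement, suppose $\{w_1,\ldots,w_n\}$ has cardinality $m<n$. Let $i_1,\ldots,i_m$ index the distinct points. Whenever $w_i=w_j$, the $i$-th and $j$-th rows (and columns) of $K$ coincide, so $\operatorname{rank}(K)\le m$. Conversely, the principal submatrix $K^\prime := (k(w_{i_s},w_{i_t}))_{s,t=1}^m$ is the gram matrix of $m$ distinct points and hence has full rank $m$ by the first part, so $\operatorname{rank}(K)\ge m$. Combining these bounds gives $\operatorname{rank}(K)=m$.

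The main obstacle is step three: translating the abstract density statement into the concrete separating property $a_k=0$. The crux is constructing, for each $k$, a continuous function on a compact set that takes value $1$ at $w_k$ and $0$ at the remaining $w_j$'s; once one has this, passing to a uniform $\mathbb G_k$-approximant and evaluating at finitely many points is routine. The rest — reducing matters to $\|h\|_{\mathbb G_k}=0$ via $a^\prime K a$ and handling coincident points by a row/column-duplication argument — is essentially bookkeeping.
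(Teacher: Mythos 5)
Your proof is correct, but it takes a genuinely different route from the paper on both halves. For the full-rank claim, you work with the \emph{null space} of $K$: if $Ka=0$ with $a\neq 0$, then $h=\sum_i a_i k(w_i,\cdot)$ has $\|h\|_{\mathbb G_k}^2=a'Ka=0$, so $\langle g,h\rangle_{\mathbb G_k}=\sum_i a_i g(w_i)=0$ for every $g\in\mathbb G_k$, and uniform density plus bump functions separating the distinct $w_i$'s forces $a_k=0$ for each $k$. The paper instead works with the \emph{range} of $K$: if $K$ is singular, some $y$ is bounded away from $\{K\delta:\delta\in\mathbb R^n\}$, and since $(f(w_1),\dots,f(w_n))'=K\alpha$ for every $f\in\mathbb G_k$ (via orthogonal projection onto $\mathrm{span}\{k(w_i,\cdot)\}$), no $f$ can uniformly approximate a continuous function interpolating $y$, contradicting density. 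These are dual views of the same symmetric PSD matrix, and your null-space version is a bit cleaner since it replaces the quantitative separation argument (the $\epsilon$-gap and approximation within $\epsilon/2$) with an exact linear functional that must vanish. For the repeated-points claim, your two-sided bound --- $\mathrm{rank}(K)\le m$ from duplicated rows and $\mathrm{rank}(K)\ge m$ from the full-rank $m\times m$ principal submatrix --- is shorter than the paper's factorization $K=AK_\circ A'$ followed by $A=DC$ with orthonormal columns. One small remark: you introduce an auxiliary compact set $\mathcal W_0$ containing the $w_i$'s, but the lemma already posits $\mathcal W$ compact and $w_i\in\mathcal W$, so you can apply the density hypothesis and Urysohn's lemma directly on $\mathcal W$ without this detour.
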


	\begin{proof}[Proof of Lemma \ref{lemma-kernel-universal-psd}]
		We first prove the first part. 
		Suppose not; then there exist mutually different points $w_1,\dots,w_n\in \mathcal W$ such that $K$ has reduced rank. Thus, there exists $y\in \mathbb R^n$ such that $K \delta \neq y $ for any  $\delta \in \mathbb R^n$.  
		Since $\{K\alpha | \alpha \in \mathbb R^n\}$ is closed, there exists $\epsilon >0$ such that 
		\begin{align}\label{pf-universal-psd-aux-1}
			\|K \delta - y \|_{\infty} > \epsilon  \quad \text{ for any } \delta \in \mathbb R^n
		\end{align} 
		For any $f \in \mathbb G_k$, let $f_\circ$ be its orthogonal projection onto the span of functions $k(w_1,\cdot), \dots$, $ k(w_n,\cdot)$, which is given by $f_\circ(\cdot) = \sum_{j=1}^n \alpha_j k(w_j,\cdot)$ 
		for some $\alpha = (\alpha_1,\dots, \alpha_n)'$ depending on $f$. 
		Notice for any $w_i$'s, 
		\begin{align*}
			f(w_i) & = \langle f, k(w_i,\cdot) \rangle_{\mathbb G_k} = \langle f - f_\circ , k(w_i,\cdot) \rangle_{\mathbb G_k}  + \langle f_\circ , k(w_i,\cdot) \rangle_{\mathbb G_k} \\
			& =  \langle f_\circ , k(w_i,\cdot) \rangle_{\mathbb G_k} = \sum_{j=1}^n \alpha_j k(w_j,w_i) 
		\end{align*}
		Then $( f(w_1), \dots, f(w_n) )' =K \alpha$, and thus, by \eqref{pf-universal-psd-aux-1} it holds that
		\begin{align}\label{pf-universal-psd-aux-2}
			|f(w_i) - y_i | > \epsilon \quad \text{ for any } i = 1,\dots,n. 
		\end{align}

		Let $g$ be a continuous function such that $g(w_i) = y_i$, whose existence is ensured by construction since $w_1,\dots,w_n$ are mutually different. 
		There exists $ f \in \mathbb G_k$ such that  $\| f - g\|_\infty <\epsilon/2$,  
		since $\mathbb G_k$ is dense in the space of all continuous functions under the uniform norm. 
		Thus, $|f(w_i) - g(w_i) | = | f(w_i) - y_i| < \epsilon/2$ for any $i = 1,\dots,n$, which contradicts against \eqref{pf-universal-psd-aux-2}.  
		Thus, $K$ has full rank if points $w_1,\dots,w_n$ are mutually different.

		\medskip 
		Now, suppose $\{w_1,\dots,w_n\}$ has cardinality $m<n$. 
		Without loss of generality, assume that $w_1,\dots, w_m$ are distinct. Then the first $m$ columns can replicate all columns of $K$, so $\text{rank}(K)\leq m$.  
		Moreover, by the first part, the $m\times m$ submatrix consisting of the first $m$ rows and the first $m$ columns of $K$ has rank $m$, which implies $\text{rank}(K)\geq m$. 
		Thus, $K$ has rank $m$. This completes the proof. 
	\end{proof}

	\section{Estimate with Spectral Cutoff: Proof for Lemma \ref{lemma-pc-regularization-error}}
	\label{appendix-proof-pc-near-opt}

	\begin{proof}[Proof of Lemma \ref{lemma-pc-regularization-error}]
		By Proposition \ref{proposition-representer}, 
		\begin{align}\label{pf-pc-regularization-error-aux-0}
			\inf_{g\in \mathcal G_n, F\in \mathcal F_n} \hat Q(\theta) = \inf_{\delta'K\delta \leq B_n^2, F\in \mathcal F_n} \hat Q(g_{\delta}, F)
		\end{align}
		where $g_{\delta}(\cdot) := \sum_{j=0}^n \delta_{j} \big(k(W_j,\cdot) - k(W_j,w_\ast) \big)$, and
		\[
		\hat Q(g_{\delta}, F) = \frac{1}{n} \sum_{i=1}^n \Big( Y_i - F\big(V_i + [K\delta]_{i+1} - [K\delta]_{1} \big)   \Big)^2. 
		\]
		
		Pick an arbitrary $\delta$ satisfying $\delta'K\delta \leq B_n^2$, and an arbitrary distribution function $F$ in $\mathcal F_n$ whose probability density is $f$. 
		Then define 
		\[
		\delta_{pc} = \hat U_m \hat U_m'\delta \quad \text{and} \quad  \zeta_{\delta} = \hat \Lambda_m \hat U_m'\delta
		\]
		and note $\delta_{pc} = \hat U_m \hat \Lambda_{m}^{-1} \zeta_{\delta}$. 
		Let $\hat \Lambda_r = \text{diag}(\hat\lambda_{m+1}, \dots, \hat \lambda_{n+1})$, and $\hat U_r$ be the matrix whose columns are the $m+1,\dots, n+1$-th eigenvectors of $K$. 
		Notice $K\delta_{pc} = (\hat U_m \hat\Lambda_m \hat U_m' + \hat U_r \hat\Lambda_r U_r' ) \delta_{pc} = \hat U_m\zeta_{\delta}$, and thus
		\[
		\hat Q(g_{\delta_{pc}},F) = \frac{1}{n} \sum_{i=1}^n \Big( Y_i - F\big(V_i + [\hat U_m\zeta_\delta]_{i+1} - [\hat U_m\zeta_\delta]_{1} \big)   \Big)^2 
		\]
		Consequently, by definition of the PC regularized KNP estimator defined through \eqref{obj-aux-1-pca} and \eqref{ghat-fhat-tau-zeta}, 
		\begin{align*}
			\hat{Q}( \hat g_{pc}, \hat F_{pc} )  & \leq \hat Q(g_{\delta_{pc}},F)  \\
			&  = \hat Q(g_\delta,F) + \hat Q(g_{\delta_{pc}},F) - \hat Q(g_\delta,F)  \leq \hat Q(g_\delta,F) + \left| \hat Q(g_\delta,F) -  \hat Q(g_{\delta_{pc}},F) \right| 
		\end{align*} 
		If we can verify that
		\begin{equation}\label{pf-pc-regularization-error-aux-1}
			\left| \hat Q(g_\delta,F) -  \hat Q(g_{\delta_{pc}},F) \right| \leq 4M_{\mathcal F} B_n\sqrt{ \hat \lambda_{m+1} } 
		\end{equation}
		then Lemma \ref{lemma-pc-regularization-error} follows immediately from \eqref{pf-pc-regularization-error-aux-0}. 
		
		Now it remains to prove \eqref{pf-pc-regularization-error-aux-1}. 
		Since $ \big|2Y_i - F\big(V_i + [K\delta]_{i+1} - [K\delta]_{1} \big)  - F\big(V_i + [\hat U_m\zeta]_{i+1} - [\hat U_m\zeta]_{1}\big)  \big| <2$, we have 
		\begin{align}\label{pf-pc-error-aux-1}
			& \left| \hat Q(g_\delta,F) -  \hat Q(g_{\delta_{pc}},F) \right| \notag \\
			& \quad = \left| \frac{1}{n} \sum_{i=1}^n \Big( Y_i - F\big(V_i + [K\delta]_{i+1} - [K\delta]_{1} \big)   \Big)^2  - \frac{1}{n} \sum_{i=1}^n \Big( Y_i - F\big(V_i + [\hat U_m\zeta_{\delta}]_{i+1} - [\hat U_m\zeta_{\delta}]_{1} \big)   \Big)^2  \right|  \notag \\
			& \quad \leq 2 \frac{1}{n}\sum_{i=1}^n  \Big|  F\big(V_i + [\hat U_m\zeta_{\delta}]_{i+1} - [\hat U_m\zeta_{\delta}]_{1} \big)  - F\big(V_i + [K\delta]_{i+1} - [K\delta]_{1}\big)   \Big|   \notag \\ 
			& \quad \leq 2 \|f\|_\infty \frac{1}{n}\sum_{i=1}^n  \Big| [K\delta]_{i+1} - [\hat U_m\zeta_{\delta} ]_{i+1} - ( [K\delta]_{1} - [\hat U_m\zeta_{\delta} ]_{1} )  \Big|   \notag \\
			& \quad \leq 2M_{\mathcal F} \frac{1}{n}\sum_{i=1}^n \Big| [\hat U_r \hat \Lambda_r \hat U_r' \delta]_{i+1} - [\hat U_r \hat \Lambda_r \hat U_r' \delta]_{1} \Big|.  
		\end{align} 
		where the last line follows from $K\delta = \hat U_m \hat \Lambda_m \hat U_m' \delta +  \hat U_r \hat \Lambda_r \hat U_r' \delta $ and $\hat U_m \zeta_{\delta} =\hat U_m \hat \Lambda_m \hat U_m' \delta$ by the definition of $\zeta_{\delta}$.  
		Moreover
		\begin{align}\label{pf-pc-error-aux-2}
			& \frac{1}{n}\sum_{i=1}^n \Big| [\hat U_r \hat \Lambda_r \hat U_r' \delta]_{i+1} - [\hat U_r \hat \Lambda_r \hat U_r' \delta]_{1} \Big| \leq \frac{1}{n}\sum_{i=1}^n \big| [\hat U_r \hat \Lambda_r \hat U_r' \delta]_{i+1} \big| + \big|[\hat U_r \hat \Lambda_r \hat U_r' \delta]_{1} \big| \notag \\
			& \quad \leq \frac{1}{\sqrt{n}} \sqrt{ \delta'\hat U_r \hat \Lambda_r^2 \hat U_r' \delta } + \sqrt{\delta'\hat U_r \hat \Lambda_r^2 \hat \hat U_r' \delta } \notag \\
			& \quad \leq 2 B_n \sqrt{ \hat \lambda_{m+1} }
		\end{align}
		where the last line is due to $\delta'\hat U_r \hat \Lambda_r^2 \hat U_r' \delta = \sum_{j=m+1}^{n+1} \hat \lambda_j^2 (\hat u_j'\delta)^2 \leq  \hat \lambda_{m+1} \sum_{j=m+1}^{n+1} \hat \lambda_j (\hat u_j'\delta)^2 \leq  \hat \lambda_{m+1}  \delta'\hat U_r \hat \Lambda_r \hat U_r' \delta \leq \hat \lambda_{m+1} \delta'K\delta \leq \hat \lambda_{m+1} B_n^2$. 
		Combining \eqref{pf-pc-error-aux-1} and \eqref{pf-pc-error-aux-2} proves \eqref{pf-pc-regularization-error-aux-1}, which was to be shown.  
	\end{proof}

	\section{Consistency: Proof for Theorem \ref{theorem-consistency} } 
	\label{appendix-consistency}
	
	\begin{proof}[Proof of Theorem \ref{theorem-consistency}] 
		
		Under the following conditions which will be verified later, the consistency for $\hat \theta$ follows from Lemma A1 in \citet[][p. 11]{chernozhukov-imbens-newey-07}.  
		\begin{enumerate}[(i)] 
			\itemsep -0.15em
			\item $Q(\theta) = \mathbb{E}\ell(\theta,Z)$ has a unique minimum at $\theta_0$ on $\Theta$;  
			\item Under the metric $d_\Theta$, $\Theta$ is compact;
			\item $\sup_{\theta \in \Theta} |\hat{Q}_n(\theta) - Q(\theta) | \to_p 0$ and $Q(\theta)$ is continuous under $d_\Theta$; 
			\item There exists $\theta_n \in \Theta_n := \mathcal{G}_n \times \mathcal{F}_n \subset \Theta$ such that $d(\theta_n,\theta_0) \to 0$ as $n \to \infty$. 
		\end{enumerate}
		This also applies to the PC regularized estimator $\hat \theta_{pc}$ due to Lemma \ref{lemma-pc-regularization-error} and $\hat \lambda_m^{1/2} B_n = o_p(1)$.  
		When provided with $d_{\Theta}(\hat\theta,\theta_0), d_{\Theta}(\hat\theta_{pc},\theta_0)\to_p$, the consistency of $\hat p$ and $\hat p_{pc}$ 
		follows immediately by Lemma \ref{lemma-aux-Fset-prelim-properties} (d).

		Now it remains to verify Conditions (i)-(v). 
		Condition (i) is satisfied by the identification result in Theorem \ref{theorem-id-g0-F0-unique-min}. 
		Condition (ii) holds due to Lemma \ref{lemma-aux-Fset-prelim-properties} (a), together with the compactness of H{\"o}lder balls under the uniform norm. 
		
		For Condition (iii), notice it is satisfied by Lemma A2 in \cite{newey-powell-03} once provided that (iii.a) $\Theta$ is compact under $d_\Theta$, (iii.b) $\hat Q(\theta) \to_p Q(\theta)$ for any $\theta \in \Theta$, (iii.c) there exists $\nu>0$ and $C_n = O_p(1)$ such that $|\hat Q(\theta) - \hat Q(\tilde \theta) | \leq C_n d_\Theta(\theta,\tilde \theta)^\nu$ for any $\theta,\tilde\theta \in \Theta$. Here, Conditions (iii.a) and (iii.b) are satisfied trivially. 
		Moreover, 
		\begin{align*}
			|\hat Q(\theta) - \hat Q(\tilde \theta) | & \leq \frac{1}{n}\sum_{i=1}^n |\ell(Z_i,\theta) - \ell(Z_i,\tilde \theta) | \\
			& = \frac{1}{n}\sum_{i=1}^n |2Y_i - p(X_i,\theta)- p(X_i,\tilde \theta) | | p(X_i,\theta) - p(X_i,\tilde\theta) | \\ 
			& \leq 2  \max\{1,M_{\mathcal F} \} d_{\Theta}(\theta,\tilde \theta)
		\end{align*}
		where the last inequality is due to Lemma \ref{lemma-aux-Fset-prelim-properties} (d) and  $|2Y_i - p(X_i,\theta)- p(X_i,\tilde \theta) | \leq 2$. 
		Hence, Condition (iii.c) and thus Condition (iii) are satisfied.
		
		Condition (iv) is satisfied by Assumption \ref{assumption-consistency} (c), (d), (e), along with Lemma \ref{lemma-aux-Fset-prelim-properties} (c).
	\end{proof}

	\section{Convergence Rate: Proofs for Theorem \ref{theorem-cgce-rate-aux}, Corollary \ref{corollary-cgce-rate}}
	\label{appendix-cgce-rate}

	\begin{proof}[Proof of Theorem \ref{theorem-cgce-rate-aux}] 
		For notational simplicity, we write $\|\cdot\|_{L_2(X)}$ as $\|\cdot\|_2$ in this proof. 
		By Theorem 3.2 in \citet[][p. 5595]{chen-07}, \eqref{ccp-cgce-rate-aux} holds once provided with the following conditions, which will be verified later. 
		\begin{enumerate}[(i)]\itemsep-0.1em
			\item There exists $C>0$ such that $\|p_{\theta} - p_{\theta_0} \|_2 \leq C d_{\Theta}(\theta,\theta_0)$ for all $\theta = (g,F)\in \mathcal G\times \mathcal F$.
			\item There exists $C_1,C_2>0$ such that $C_1 \mathbb E(\ell(Z,\theta)-\ell(Z,\theta_0)) \leq \|p_{\theta} - p_{\theta_0} \|_2^2 \leq C_2 \mathbb E(\ell(Z,\theta)-\ell(Z,\theta_0))$ for all $\theta = (g,F)\in \mathcal G\times \mathcal F$.
			\item There exists $C$ such that, for all small $\epsilon>0$, 
			\[
			\sup_{\theta\in \Theta_n: \|p_{\theta}-p_0\|_2 <\epsilon} \text{Var}\left( \ell(Z,\theta)- \ell(Z,\theta_0) \right) \leq C \epsilon^2
			\]
			\item There exists a constant $s \in(0,2)$ such that 
			\[
			\sup_{\theta\in \Theta_n: \|p_{\theta}-p_0\|_2 <\epsilon} \left| \ell(Z,\theta)- \ell(Z,\theta_0)  \right| \leq \epsilon^s U(Z) 
			\]
			with $\mathbb E U(Z)^{c} <\infty$ for some $c \geq 2$.
			\item There exists $\theta_n = (g_n,F_n) \in \mathcal G_n \times \mathcal F_n$ such that $\|p_{\theta_n} - p_0 \|_2^2 = O\!\left( J_n^{-m_e} +  \left(\log B_n\right)^{-m_w/2} ) \right)$. 
			\item There exists $b>0$ such that, for all large $n$,  
			\[
			\frac{1}{\sqrt{n} \gamma_n^2} \int_{b\gamma_n^2}^{\gamma_n} \sqrt{\log N_{[]}(\delta, \mathcal A_n, \|\cdot\|_2) } d\delta <C
			\]
			where $\mathcal A_n  = \left\{ \ell(\cdot,\theta) - \ell(\cdot,\theta_0) | \  \|p_{\theta}-p_{\theta_0} \|_2 \leq \gamma_n. \theta\in \Theta_n \right\}$. 
		\end{enumerate} 
		The statement for $\hat p_{pc}$ holds since Theorem 3.2 in \citet[][p. 5595]{chen-07} applies as long as  $\hat\lambda_m^{1/2} B_n = O_p\!\left( \delta_n \right)$ due to Lemma \ref{lemma-pc-regularization-error}.

		Now we verify Conditions (i)-(vi). 
		Notice Condition (i) is satisfied with $C = \max\{1,M_{\mathcal F}\}$ by Lemma \ref{lemma-aux-Fset-prelim-properties} (d).  
		
		Condition (ii) holds for $C_1=C_2=1$ due to the least square loss, since
		\begin{align}\label{pf-rate-aux-1}
			\mathbb E & \left( \ell(Z,\theta)- \ell(Z,\theta_0) \right)= \mathbb E\left( (Y-p(X,\theta))^2 - (Y-p(X,\theta_0))^2  \right) \notag \\
			& = \mathbb E(2Y-p(X,\theta)-p(X,\theta_0) )  (p(X,\theta)-p(X,\theta_0)) \\
			& = \mathbb E  (p(X,\theta)-p(X,\theta_0))^2 = \|p_{\theta}-p_{\theta_0}\|^2  \notag 
		\end{align}

		Condition (iii) is satisfied for $C=4$, since from \eqref{pf-rate-aux-1} we have 
		\begin{align*}
			\mathbb E & \left( \ell(Z,\theta)- \ell(Z,\theta_0) \right)^2  
			= \mathbb E(2Y-p(X,\theta)-p(X,\theta_0) )^2  (p(X,\theta)-p(X,\theta_0))^2 \\
			& \leq 4 \mathbb E(p(X,\theta)-p(X,\theta_0))^2 \leq 4 \epsilon^2. 
		\end{align*}
		when provided $\norm{p_{\theta}-p_0}_2 <\epsilon$.

		For Condition (iv), Lemma \ref{lemma-aux-p-p0-interpolation-ineq} gives 
		\[ 
		\sup_{x\in \mathcal X} |p_{\theta}(x) - p_0(x)| \leq   C \left(\norm{p_{\theta} - p_0}_2 \right)^{ 2/(2+d_x) } 
		\]
		for some constant $C>0$ which does not depend on $\theta$.  
		Thus, 
		\begin{align*}
			& \left| \ell(Z,\theta)- \ell(Z,\theta_0)  \right|  = |2Y- p(X,\theta) - p(X,\theta_0)| \times| p(X,\theta)- p(X,\theta_0)| \\
			& \quad \leq 2 |p(X,\theta)- p(X,\theta_0)| \leq 2C  \|p_{\theta} - p_{\theta_0} \|_2^{2/(2+d_x)} 
		\end{align*}
		and Condition (iv) holds with $s = 2/(2+d_x)$ and $U(Z) = 2C  $.

		To verify Condition (v), notice there exists $F_n\in \mathcal F_n$ such that $\big(\|F_n -F_0\|_\infty\big)^2 =  o\left( J_n^{-m_e} \right)$ by Lemma \ref{lemma-aux-Fn-F0-approximation-error}. Moreover, there exists $g_n$ such that $\mathbb E(g_n(W)-g_0(W))^2 = O\left( (\log B_n)^{-m_w/2} \right) $, following from Lemma  \ref{lemma-aux-RKHS-Sobolev-approximation-error} and Assumption \ref{assumption-cgce-rate}(\ref{aspn-bdd-L2Pw-L2leb}). 
		Thus, 
		\begin{align*}
			\|p_{\theta_n} - p_0 \|_2^2 & \leq 2 \mathbb E \big( F_n(V+g_n(W)) - F_0(V+g_n(W)) \big)^2 + 2 \mathbb E \big( F_0(V+g_n(W)) - F_0(V+g_0(W)) \big)^2  \\
			& \leq 2 \big(\|F_n -F_0\|_\infty\big)^2  + 2 \|f_0\|_\infty^2 \mathbb E(g_n(W)-g_0(W))^2 \\
			& \leq o\left( J_n^{-m_e} \right) + O\left( (\log B_n)^{-m_w/2} \right)
		\end{align*}
		as desired.

		Now we verify Condition (vi). 
		Notice Lemma \ref{lemma-An-local-centered-loss-covering-no} gives that 
		\[
		\log N(\delta,\mathcal A_n, \|\cdot\|_{\infty} ) \leq  C \left( \log \frac{B_n}{\delta} \right)^{d_{w}+1} + C J_n \log \frac{1}{\delta} 
		\]
		for a universal constant $C>0$. 
		Together with $N_{[]}(\delta, \mathcal A_n, \|\cdot\|_2) \leq N_{[]}(\delta, \mathcal A_n, \|\cdot\|_\infty) \leq N(\delta/2, \mathcal A_n, \|\cdot\|_\infty)$,  we have
		\begin{align}\label{pf-rate-fixed-pt-aux}
			& \frac{1}{\sqrt{n} \gamma_n^2} \int_{b\gamma_n^2}^{\gamma_n} \sqrt{\log N_{[]}(\delta, \mathcal A_n, \|\cdot\|_2) } d\delta  \notag \\
			& \quad \leq \frac{\sqrt{C}}{\sqrt{n} \gamma_n^2} \left( \int_{b\gamma_n^2}^{\gamma_n}  \left( \log \frac{2 B_n}{\delta} \right)^{\frac{d_{w}+1}{2}} d\delta + J_n^{1/2} \int_{b\gamma_n^2}^{\gamma_n} \left( \log  \frac{2}{\delta} \right)^{1/2}   d\delta \right) \notag \\
			& \quad \leq \frac{2\sqrt{C}}{\sqrt{n} \gamma_n} \left( \left( \log \frac{B_n}{b \gamma_n^2} \right)^{\frac{d_{w}+1}{2}} + J_n^{1/2}  \left(  \log \frac{1}{b \gamma_n^2} \right)^{1/2}  \right) 
		\end{align}
		By setting $\gamma_n = \sqrt{ \frac{(\log B_n)^{d_w+1} \vee J_n}{ n } \log n} $,  \eqref{pf-rate-fixed-pt-aux} is bounded by a constant under $\gamma_n = O(1)$ and $(\log B_n)^{d_w+1} \vee J_n \gtrsim (\log n)^{d_w}$. 
	\end{proof}

	\begin{proof}[Proof of Corollary \ref{corollary-cgce-rate}] 
		Optimizing \eqref{ccp-cgce-rate-aux} by over $B_n, J_n$ yields the stated result, which follows from some algebras by dividing two cases where $ (\log B_n)^{d_w+1} \lessgtr J_n$. 
	\end{proof}

	\section{Assumptions and Proof for Theorem \ref{theorem-asym-normality-wape}}
	\label{appendix-AN}
	
	We first introduce some notations before stating the technical conditions needed for Theorem~\ref{theorem-asym-normality-wape}. 
	For any $\theta=(g,F)\in \mathcal G\times \mathcal F$, define
	\begin{align*}
		\frac{\partial p(x,\theta_0) }{\partial \theta}[\theta-\theta_0] & :=\lim_{t\to 0} \frac{p\big(x,\theta_0+t(\theta-\theta_0) \big) - p(x,\theta_0)}{t}  
	\end{align*}
	and the definitions of $\frac{\partial \ell(z,\theta_0) }{\partial \theta}[\theta-\theta_0], \frac{\partial \gamma(\theta_0) }{\partial \theta}[\theta-\theta_0]$ are similar. 
	Moreover, define 
	\begin{align*}
		\frac{\partial^2 p(x, \tilde\theta)}{ \partial \theta \partial \theta} [u_1,u_2] := \lim_{t\to 0} \frac{1}{t} \bigg( \frac{\partial p(x, \tilde\theta+t u_2)}{ \partial \theta}[u_1] - \frac{\partial p(x, \tilde\theta)}{ \partial \theta}[u_1] \bigg)
	\end{align*}
	and the definition of $\frac{\partial^2 \ell(z, \tilde\theta)}{ \partial \theta \partial \theta} [u_1,u_2]$ is similar.  
	See Lemma~\ref{lemma-aux-pathwise-derivatives} in Appendix~\ref{appendix-tech-lmas-for-AN} for the forms of the pathwise derivatives. In particular,
	\begin{align}\label{partial-p-partial-theta}
		\frac{\partial p(x,\theta_0) }{\partial \theta}[\theta-\theta_0]  = F(v+g_0(w))-F_0(v+g_0(w)) + f_0(v+g_0(w))(g(w)-g_0(w)) 
	\end{align}
	and for $\ell(z,\theta) = (y-p(x,\theta))^2$, 
	\begin{align*}
		\frac{\partial \ell(z,\theta_0) }{\partial \theta}[\theta-\theta_0]  = -2(y-p(x,\theta_0 )) \frac{\partial p(x,\theta_0) }{\partial\theta}[\theta-\theta_0]
	\end{align*}
	Notice \eqref{gamma-theta-form-linear-in-ptheta} implies
	\begin{align}\label{partial-gamma-partial-theta}
		\frac{\partial \gamma(\theta_0) }{\partial \theta}[\theta-\theta_0] = \mathbb E b_\gamma(X) \frac{\partial p(X,\theta_0) }{\partial \theta}[\theta-\theta_0].
	\end{align}
	
	Define the \emph{Fisher norm} $\|\theta-\theta_0\|_F$ for $\theta=(g,F)$ by 
	\begin{align*}
		\|\theta - \theta_0\|_F^2  := \mathbb E \left( \frac{\partial p(X,\theta_0) }{\partial \theta}[\theta-\theta_0] \right)^2 
	\end{align*}
	and the norm is induced by the inner product $\langle u,\tilde u \rangle_F = \mathbb E \frac{\partial p(X,\theta_0) }{\partial \theta}[u] \frac{\partial p(X,\theta_0) }{\partial \theta}[\tilde u]$. 
	Then
	\begin{align*}
		\left| \frac{\partial \gamma(\theta_0) }{\partial \theta}[\theta-\theta_0] \right| = \left| \mathbb E b_\gamma(X) \frac{\partial p(X,\theta_0) }{\partial \theta}[\theta-\theta_0]  \right|  \leq \left( \mathbb E b_\gamma(X)^2 \right)^{1/2} \|\theta-\theta_0\|_F
	\end{align*}
	by the Cauchy-Schwarz inequality. 
	Thus, 
	\[
	\sup_{\theta \in \Theta: \|\theta-\theta_0\|_F >0} \frac{ \left| \frac{\partial \gamma(\theta_0) }{\partial \theta}[\theta-\theta_0] \right| }{\|\theta-\theta_0\|_F} \leq \left( \mathbb E b_\gamma(X)^2 \right)^{1/2}  <\infty
	\]
	By the Riesz representation theorem, there exists $v^\ast =(v_g^\ast, v_F^\ast) \in \bar V$ the completion under $\|\cdot\|_F$ of $\Theta-\{\theta_0\}$ such that, for any $\theta\in \Theta$, 
	\[
	\frac{\partial \gamma(\theta_0) }{\partial \theta}[\theta-\theta_0]  = \inprod{\theta-\theta_0, v^\ast}_F. 
	\]
	Define a neighborhood 
	\[
	\mathcal N_{0n} := \{ \theta\in \Theta_n : d_{\Theta}(\theta,\theta_0) = o(1), \|\theta-\theta_0\|_F = o(n^{-1/4}) \} .
	\]

	Now we introduce the following technical conditions for Theorem \ref{theorem-asym-normality-wape}, 
	which are needed to control the high-order terms in the expansions of the highly nonlinear $\hat Q(\theta)$ and $\gamma(\theta)$.

	\begin{assumption}\label{assumption-wape-asym-norm}
		Assume that
		\begin{enumerate}[(a)]
			\itemsep -0.1em 
			\item \label{aspn-FisherNorm-L2ptheta-pre-ptheta-remainder} 
			For any $\theta \in \Theta_n$ with $ d_{\Theta}(\theta,\theta_0) = o(1)$, 
			\begin{align*}
				\mathbb E \bigg( p(X,\theta)-p(X,\theta_0) - \frac{\partial p(X,\theta_0) }{\partial \theta} [\theta-\theta_0] \bigg)^2 = o \bigg( \mathbb E \Big(  \frac{\partial p(X,\theta_0) }{\partial \theta} [\theta-\theta_0] \Big)^2 \bigg) .
			\end{align*}

			\item \label{aspn-vn-vast-approx-error-quarter-rate} 
			There exists $v^\ast_n\in \Theta_n$ such that $\|v^\ast_n - v^\ast\|_F = o(n^{-1/4})$.

			
			\item 
			\label{aspn-ptheta-ptheta0-deviation}
			Uniformly in $\tilde\theta\in \mathcal N_{0n}$,  it holds that (i)
			\[
			\mathbb E \bigg(  \frac{\partial p(X,\tilde\theta) }{\partial \theta} [v_n^\ast] - \frac{\partial p(X,\theta_0) }{\partial \theta} [v_n^\ast] \bigg)^2 = o(n^{-1/2}) ,
			\] 
			and (ii) 
			\[
			\mathbb E \bigg( p(X,\tilde\theta)-p(X,\theta_0) - \frac{\partial p(X,\theta_0) }{\partial \theta} [\tilde\theta-\theta_0] \bigg) \frac{\partial p(X,\theta_0) }{\partial \theta} [v^\ast] = o(n^{-1/2}) . 
			\] 
			\item  \label{aspn-FisherNorm-gammatheta-expansion-remainder}
			Uniformly in $\tilde\theta\in \mathcal N_{0n}$,  
			\[
			\mathbb E \bigg( p(X,\tilde\theta)-p(X,\theta_0) - \frac{\partial p(X,\theta_0) }{\partial \theta} [\tilde\theta-\theta_0] \bigg) b_\gamma(X)   = o(n^{-1/2}) . 
			\]
			
			\item \label{aspn-FisherNorm-sample-partialell_ell0-remainder}
			Uniformly in $\tilde\theta\in \mathcal N_{0n}$,
			\[
			\frac{1}{\sqrt{n}} \sum_{i=1}^n \bigg( \frac{\partial \ell(Z,\tilde\theta)}{\partial \theta}[v_n^\ast] -\frac{\partial \ell(Z,\theta_0)}{\partial \theta}[v_n^\ast]  
			- \mathbb E \bigg[ \frac{\partial \ell(Z,\tilde\theta)}{\partial \theta}[v_n^\ast] -\frac{\partial \ell(Z,\theta_0)}{\partial \theta}[v_n^\ast] \bigg] \bigg) = o_p(1) .
			\] 
			\item \label{aspn-higher-order-ell-theta_theta0-approx}
			For all $\tilde\theta \in \mathcal N_{0n}$ and $t_n = o(n^{-1/2})$,  
			$\left|\frac{\partial^2 \ell(z,\tilde \theta + t_n v_n^\ast)}{\partial\theta \partial\theta}[v_n^\ast,v_n^\ast] \right| \leq c(z)$ for some function $c(z)$ such that $\mathbb E c(Z)^2 <\infty$.

			\item \label{aspn-Qhat-deviation-local-alternative}
			For some sequence $\epsilon_n = o(n^{-1/2})$, $\hat Q(\hat \theta) \leq \hat Q(\hat \theta \pm t \epsilon_n v_n^\ast) + o_p(\epsilon_n n^{-1/2})$ 
			for $t\in [0,1]$. 
			
		\end{enumerate}
	\end{assumption}

	Assumption~\ref{assumption-wape-asym-norm} includes high-level conditions to control the expansions of the highly nonlinear $p(x,\theta)$ and $\gamma(\theta)$ around $\theta_0$.  
	In particular, Assumption~\ref{assumption-wape-asym-norm}(\ref{aspn-FisherNorm-L2ptheta-pre-ptheta-remainder}) requires that the remainder term 
	\begin{align*} 
		R(X,\theta):= p(X,\theta) - p(X,\theta_0) -   \frac{\partial p(X,\theta_0) }{\partial \theta}[\theta-\theta_0]   ,
	\end{align*}
	after approximating $p(X,\theta) - p(X,\theta_0) $ using $\frac{\partial p(X,\theta_0) }{\partial \theta}[\theta-\theta_0] $,  has a smaller order in the $L_2(X)$-norm, i.e., $\mathbb E R(X,\theta)^2 = o\big( \mathbb E \big( \frac{\partial p(X,\theta_0) }{\partial \theta}[\theta-\theta_0]  \big)^2 \big) = o( \|\theta-\theta_0\|_F^2)$. 
	Thus, Assumption~\ref{assumption-wape-asym-norm}(\ref{aspn-FisherNorm-L2ptheta-pre-ptheta-remainder}) implies the equivalence between the following norms
	\begin{align}\label{eq-FisherNorm-L2ptheta}
		\|\theta-\theta_0\|_F^2 \asymp \mathbb E( p(X,\theta) - p(X,\theta_0) )^2, 
	\end{align} 
	because
	\begin{align*} 
		& \mathbb E( p(X,\theta) - p(X,\theta_0) )^2  
		= \mathbb E \Big(  \frac{\partial p(X,\theta_0) }{\partial \theta}[\theta-\theta_0] 
		+  p(X,\theta) - p(X,\theta_0) -   \frac{\partial p(X,\theta_0) }{\partial \theta}[\theta-\theta_0]  \Big)^2  \\
		& = \|\theta-\theta_0\|_F^2   
		+ 2 \mathbb E   \frac{\partial p(X,\theta_0) }{\partial \theta}[\theta-\theta_0] R(X,\theta) + \mathbb E  R(X,\theta)^2    \\
		& = \|\theta-\theta_0\|_F^2   + o( \|\theta-\theta_0\|_F^2  ) ,
	\end{align*}
	since $\big| \mathbb E   \frac{\partial p(X,\theta_0) }{\partial \theta}[\theta-\theta_0] R(X,\theta) \big| \leq \|\theta-\theta_0\|_F \sqrt{\mathbb E R(X,\theta)^2} = o(\|\theta-\theta_0\|_F^2)$.

	Analogously, 
	Assumption~\ref{assumption-wape-asym-norm}(\ref{aspn-FisherNorm-gammatheta-expansion-remainder}) ensures that the remainder term $\gamma( \tilde\theta) - \gamma(\theta_0) - \frac{\partial \gamma(\theta_0) }{\partial \theta}[\tilde\theta-\theta_0]$ after approximating $\gamma( \tilde\theta) - \gamma(\theta_0)$ by $\frac{\partial \gamma(\theta_0) }{\partial \theta}[\tilde\theta-\theta_0]$ is of a smaller order than $\frac{\partial \gamma(\theta_0) }{\partial \theta}[\tilde\theta-\theta_0]$. 
	Specifically, Assumption~\ref{assumption-wape-asym-norm}(\ref{aspn-FisherNorm-gammatheta-expansion-remainder}), together with \eqref{gamma-theta-form-linear-in-ptheta} and \eqref{partial-gamma-partial-theta}, implies that 
	\[
	\gamma( \tilde\theta) - \gamma(\theta_0) - \frac{\partial \gamma(\theta_0) }{\partial \theta}[\tilde\theta-\theta_0] =  \mathbb E b_\gamma(X) \bigg(  p(X, \tilde\theta)-p(X,\theta_0) - \frac{\partial p(X,\theta_0) }{\partial \theta} [ \tilde\theta-\theta_0] \bigg) = o(n^{-1/2})
	\]

	To provide further intuitions of these conditions, we explain what these conditions are and what they reduce to in the parametric case, and then we provide a set of more interpretable primitive conditions when $F_0$ is known.

	\begin{remark}[Interpreting Assumption \ref{assumption-wape-asym-norm} under parametric case]
		Suppose that $p(x,\theta)$ is parametric in the sense that $\theta\in \mathbb R^d$, or more precisely, $(g,F)$ are fully parametrized by a $d$-dimensional parameter $\theta$. Assume that, for any $x$, $p(x,\theta)$ is twice continuously differentiable with respect to $\theta$. 
		Let $\nabla_\theta p(x,\theta),\nabla_{\theta \theta}^2 p(x,\theta)$ denote the gradient and hessian of $p(x,\theta)$ with respect to $\theta$.  
		Assume that, in some neighborhood $\mathcal N$ of $\theta_0$, all elements of $\nabla_\theta p(x,\theta), \nabla_{\theta \theta}^2p(x,\theta)$ are bounded by a constant $C$ that does not depend on $x$. 
		Assume further that $\mathbb E \nabla_\theta p(X,\theta_0) \nabla_\theta p(X,\theta_0)'$ is positive definite. 
		
		In this case, all of the conditions in Assumption \ref{assumption-wape-asym-norm} are satisfied. More specifically, the pathwise derivative $\frac{\partial p(x,\theta_0) }{\partial \theta}[\theta-\theta_0]$ reduces to $(\theta-\theta_0)'\nabla_\theta p(X,\theta_0)$, 
		so that  Assumption~\ref{assumption-wape-asym-norm}(\ref{aspn-FisherNorm-L2ptheta-pre-ptheta-remainder}) reduces to 
		\[
		\mathbb E \Big( p(X,\theta) - p(X,\theta_0)- (\theta-\theta_0)'\nabla_\theta p(X,\theta_0) \Big)^2 
		= o\Big(  \mathbb E( (\theta-\theta_0)'\nabla_\theta p(X,\theta_0) )^2 \Big), 
		\]
		which holds since, for some $\bar\theta$ between $\theta$ and $\theta_0$,  $\mathbb E \big( p(X,\theta) - p(X,\theta_0)- (\theta-\theta_0)'\nabla_\theta p(X,\theta_0) \big)^2 = \mathbb E \big( (\theta-\theta_0)' \nabla_{\theta\theta}^2 p(X,\bar\theta) (\theta-\theta_0)\big)^2 \leq C\|\theta-\theta_0\|^4$, whereas $\mathbb E( (\theta-\theta_0)'\nabla_\theta p(X,\theta_0) )^2=  (\theta-\theta_0)' \big( \mathbb E \nabla_\theta p(X,\theta_0) \nabla_\theta p(X,\theta_0)' \big) (\theta-\theta_0) \geq c \|\theta-\theta_0\|^2$ for $c>0$ the smallest eigenvalue of  $\mathbb E \nabla_\theta p(X,\theta_0) \nabla_\theta p(X,\theta_0)'$.  
		Moreover, 
		Assumption~\ref{assumption-wape-asym-norm}(\ref{aspn-vn-vast-approx-error-quarter-rate}) is trivially satisfied, because $v^\ast$ reduces to a vector in $\mathbb R^d$ and we can choose $v^\ast_n = v^\ast$ so there is no approximation error.  
		Assumption~\ref{assumption-wape-asym-norm}(\ref{aspn-ptheta-ptheta0-deviation}.i) is satisfied, 
		since, for some $\bar\theta$ between $\tilde\theta$ and $\theta_0$, $\mathbb E \big(  \frac{\partial p(X,\tilde\theta) }{\partial \theta} [v_n^\ast] - \frac{\partial p(X,\theta_0) }{\partial \theta} [v_n^\ast] \big)^2 = \mathbb E \big( (\nabla_\theta p(X,\tilde\theta) - \nabla_\theta p(X,\theta_0)' v_n^\ast \big)^2 = \mathbb E \big( (\tilde\theta-\theta_0)' \nabla_{\theta \theta}^2 p(X,\bar\theta)  v_n^\ast \big)^2  = o(n^{-1/2})$ by $\|\tilde \theta -\theta_0\| = o_p(n^{-1/4})$. 
		Moreover, recall that 
		$\mathbb E \big( p(X,\theta) - p(X,\theta_0)- (\theta-\theta_0)'\nabla_\theta p(X,\theta_0) \big)^2  
		\leq C \|\theta-\theta_0\|^4 
		$,   
		and thus, Assumption~\ref{assumption-wape-asym-norm}(\ref{aspn-ptheta-ptheta0-deviation}.ii) and Assumption~\ref{assumption-wape-asym-norm}(\ref{aspn-FisherNorm-gammatheta-expansion-remainder}) hold by Cauchy-Schwarz inequality. 
		Since $\ell(z,\theta) = (y-p(x,\theta))^2$ with $\nabla_\theta p(x,\theta)$ continuous in $\theta$ and each element bounded by $C$, 
		Assumption~\ref{assumption-wape-asym-norm}(\ref{aspn-FisherNorm-sample-partialell_ell0-remainder}) holds by the uniform central limit theorem and $\|\tilde \theta -\theta_0\| =o(1)$.  
		Assumption~\ref{assumption-wape-asym-norm}(\ref{aspn-higher-order-ell-theta_theta0-approx}) holds since $\ell(z,\theta) = (y-p(x,\theta))^2$ and elements of $\nabla_\theta p(x,\theta), \nabla_{\theta \theta}^2p(x,\theta)$ are bounded.  
		Lastly, Condition~\ref{assumption-wape-asym-norm}(\ref{aspn-Qhat-deviation-local-alternative}) is also trivially satisfied, since $\hat Q(\hat \theta) \leq \hat Q(\theta)$ for any $\theta \in \mathbb R^d$. 
	\end{remark}

	We now give a set of more interpretable primitive conditions for Assumption \ref{assumption-wape-asym-norm} in the case where $F_0$ is known.  
	
	\begin{assumption}[Primitive conditions for Assumption \ref{assumption-wape-asym-norm} when $F_0$ is known]
		\label{assumption-wape-asym-norm-F0known}
		Assume that 
		\begin{enumerate}[(a)]
			\itemsep -0.1em 
			\item 
			There exists a constant $c$ such that $\mathbb E \big(f_0(V+g_0(W)) (g(W)- g_0(W)) \big)^2 \geq c \mathbb E (g(W)- g_0(W))^2$ for all $g$ such that $\|g-g_0\|_\infty = o(1)$; 
			\label{aspn-FisherNorm-L2ptheta-F0known}
			
			\item 
			$v^\ast_g$ is differentiable up to order $m_w$ with uniformly bounded derivatives; 
			\label{aspn-vast-g-bdd}

			\item $\mathbb E \big(  | b_\gamma(X) |  \big| W \big)$ is bounded by a constant $C$ almost everywhere; 
			\label{aspn-b_gamma_W-bdd}
			
		\end{enumerate}

	\end{assumption}
	
	Assumption~\ref{assumption-wape-asym-norm-F0known}(\ref{aspn-FisherNorm-L2ptheta-F0known}) holds if there exists a compact set $A\subset \mathbb R$ such that $\int_A f_0^2 >0$ and 
	\[
	\inf_{v\in A, w\in \mathcal W} f_{V|W=w} (v-g_0(w)) \geq \tilde c, \quad \text{for some } \tilde c>0,
	\]
	where $f_{V|W=w}$ is the Lebesgue density of $\mathcal L(V|W=w)$.\footnote{
		To see this, note that $\mathbb E \big(f_0(V+g_0(W))^2 |W=w\big) = \int f_{V|W=w} (v-g_0(w))  f_0(v)^2 dv \geq \tilde c \int_A f_0^2 $ for any $w\in\mathcal W$, and thus, $\mathbb E \big(f_0(V+g_0(W)) (g(W)- g_0(W)) \big)^2 \geq \big( \tilde c \int_A f_0^2 \big) \mathbb E (g(W)- g_0(W))^2 $, so Assumption~\ref{assumption-wape-asym-norm-F0known}(\ref{aspn-FisherNorm-L2ptheta-F0known}) holds.  
	}
	When $V$ is independent of $W$, this holds whenever $\inf_{v\in A, -M\leq u\leq M} f_V(v-u) >0$, since $g_0$ is bounded by $M$.

	\begin{proposition}\label{proposition-F0known-condns}
		Assume that $F_0$ is known and Assumption \ref{assumption-wape-asym-norm-F0known} holds. Let conditions in Corollary \ref{corollary-cgce-rate} hold with $\beta>1$.  
		Then Assumption \ref{assumption-wape-asym-norm} is satisfied with $\theta=(g,F_0)$.
	\end{proposition}
	
	\noindent
	Proposition \ref{proposition-F0known-condns} shows that Assumption \ref{assumption-wape-asym-norm} is satisfied under Assumption \ref{assumption-wape-asym-norm-F0known}. 
	The proof of Proposition \ref{proposition-F0known-condns} is provided in Section \ref{appendix-subsection-proof-propn-lowlevelcondn}.

	\begin{remark}
		In the fully nonparametric case with unknown $F_0$, verifying Assumption~\ref{assumption-wape-asym-norm} under low-level primitive conditions is more involved.  
			For instance, Assumption~\ref{assumption-wape-asym-norm}(\ref{aspn-FisherNorm-L2ptheta-pre-ptheta-remainder}) can be written as 
			\begin{align*}
				& \mathbb E \Big( F(V+g(W)) - F(V+g_0(W)) - f_0(V+g_0(W)) (g(W)-g_0(W) ) \Big)^2 \\
				& = o\Big(  \mathbb E\Big[ F(V+g_0(W)) - F_0(V+g_0(W)) + f_0(V+g_0(W)) (g(W)-g_0(W) )  \Big]^2 \Big)
			\end{align*}
			While this condition is a natural remainder requirement and may appear intuitive (e.g., in the special case $F=F_0$), 
			it is not straightforward to derive from primitive conditions when $F_0$ is unknown, because the scale on the right-hand side involves the two errors $F(V+g_0(W)) - F_0(V+g_0(W))$ and $f_0(V+g_0(W)) (g(W)-g_0(W) )$ jointly.  
		For this reason, Assumption \ref{assumption-wape-asym-norm} remains a high-level condition in Theorem \ref{theorem-asym-normality-wape}, as is common in the nonparametric literature where parameters enter the objective function and the smooth functional in a highly nonlinear way; see, e.g.,  
		\cite{ai-chen-03}; \cite{chen-pouzo-15}. 
	\end{remark}

	\subsection{Proof for Theorem \ref{theorem-asym-normality-wape}}

	\begin{proof}[Proof of Theorem \ref{theorem-asym-normality-wape}] 
		Notice $\|\hat p - p_0 \|_{L_2(X)} = o_p(n^{-1/4})$ by Corollary \ref{corollary-cgce-rate} and $\beta>1$. 
		Assumption~\ref{assumption-wape-asym-norm}(\ref{aspn-FisherNorm-L2ptheta-pre-ptheta-remainder}) further implies \eqref{eq-FisherNorm-L2ptheta}, and thus, $\|\hat \theta - \theta_0\|_F = o_p(n^{-1/4})$. 
		Let $\epsilon_n$ be a sequence satisfying $\epsilon_n = o(n^{-1/2})$ in Assumption~\ref{assumption-wape-asym-norm}(\ref{aspn-Qhat-deviation-local-alternative}). Let $v_n^\ast \in \Theta_n$ be such that $\|v_n^\ast - v^\ast\| = o(n^{-1/4})$ in Assumption~\ref{assumption-wape-asym-norm}(\ref{aspn-vn-vast-approx-error-quarter-rate}).  
		Below we denote by $u^\ast = \pm v^\ast$ to indicate that the results hold for either $v^\ast$ or $-v^\ast$. Similarly, we denote by $u^\ast_n = \pm v_n^\ast$.

		Define $\hat\theta_u^\ast := \hat\theta + \epsilon_n u_n^\ast$ as a local alternative of $\hat \theta$ for some $\epsilon_n = o(n^{-1/2})$. 
		Define $\bar \theta(t) = \hat \theta + t( \hat\theta_u^\ast - \hat\theta)$ for $t\in[0,1]$, so $\bar\theta(1)=\hat\theta_u^\ast$ and $\bar\theta(0) = \hat\theta$.  
		By Assumption~\ref{assumption-wape-asym-norm}(\ref{aspn-Qhat-deviation-local-alternative}) and a Taylor expansion of $\hat Q(\bar\theta(t))$ around $t=0$ up to second-order, we have 
		\begin{align}\label{pf-asym-normality-aux}
			o_p( \epsilon_n n^{-1/2})  & \geq \hat Q(\hat \theta) - \hat Q(\hat\theta_u^\ast) = \hat Q(\bar{\theta}(0)) - \hat Q(\bar{\theta}(1))  \notag \\
			& = - \frac{d\hat Q(\bar\theta(t) )}{ dt} \bigg|_{t=0} - \frac{1}{2}\frac{d^2\hat Q(\bar\theta(t) )}{ dt^2} \bigg|_{t=s_\ast} \quad \text{for some $s_\ast\in[0,1]$}  
		\end{align}
		Notice 
		\begin{align}\label{pf-asym-normality-aux-part1}
			& \frac{d\hat Q(\bar\theta(t) )}{ dt} \bigg|_{t=0}  := \lim_{t\to0} \frac{\hat Q(\bar\theta(t)) - \hat Q(\bar\theta(0)) }{t} =  \frac{1}{n} \sum_{i=1}^n  \frac{\partial \ell(Z_i,\hat\theta)}{\partial \theta}[\hat\theta_u^\ast-\hat\theta]  \notag \\ 
			& \quad = \epsilon_n  \frac{1}{n} \sum_{i=1}^n  \frac{\partial \ell(Z_i,\hat\theta)}{\partial \theta}[u_n^\ast]  \notag \\  
			& \quad  = \epsilon_n \bigg(  2 \langle \hat\theta-\theta_0, u^\ast \rangle_F + \frac{1}{n} \sum_{i=1}^n \bigg(  \frac{\partial \ell(Z,\theta_0)}{\partial \theta}[u^\ast]  - \mathbb E \frac{\partial \ell(Z,\theta_0)}{\partial \theta}[u^\ast]  \bigg) + o_p(n^{-1/2})  \bigg)
		\end{align}
		where the second line is by the linearity of $\frac{\partial \ell(x,\tilde\theta)}{\partial \theta}[u]$ in $u$ as in Lemma \ref{lemma-aux-pathwise-derivatives} (ii). The last line follows from Lemma \ref{lemma-aux-sample-partialell-hattheta}.  
		
		Moreover, 
		\begin{align*} 
			& \frac{d^2\hat Q(\bar\theta(t)) }{ dt^2} \bigg|_{t=s_\ast}  := \lim_{\tau\to 0} \frac{1}{\tau} \bigg( \frac{d\hat Q(\bar\theta(t) )}{ dt} \bigg|_{t=s_\ast+\tau} - \frac{d\hat Q(\bar\theta(t) )}{ dt} \bigg|_{t=s_\ast}  \bigg) \\
			& \quad = \lim_{\tau\to 0} \frac{1}{\tau}   \frac{1}{n} \sum_{i=1}^n  \bigg( \frac{\partial \ell(Z_i, \bar\theta (s_\ast+\tau) ) }{\partial \theta}[\epsilon_n u_n^\ast] - \frac{\partial \ell(Z_i, \bar\theta(s_\ast) ) }{\partial \theta}[\epsilon_n u_n^\ast] \bigg)  \\ 
			& \quad = \epsilon_n \lim_{\tau\to 0} \frac{1}{\tau}   \frac{1}{n} \sum_{i=1}^n  \bigg( \frac{\partial \ell(Z_i, \bar\theta (s_\ast+\tau) ) }{\partial \theta}[u_n^\ast] - \frac{\partial \ell(Z_i, \bar\theta(s_\ast) ) }{\partial \theta}[u_n^\ast] \bigg) \\
			& \quad = \epsilon_n \frac{1}{n} \sum_{i=1}^n  \frac{\partial^2 \ell(Z_i, \hat \theta + s_\ast \epsilon_n u_n^\ast)}{ \partial \theta \partial \theta} [u_n^\ast, \epsilon_n u_n^\ast] 
			= \epsilon_n^2 \frac{1}{n} \sum_{i=1}^n  \frac{\partial^2 \ell(Z_i, \hat \theta + s_\ast \epsilon_n u_n^\ast)}{ \partial \theta \partial \theta} [u_n^\ast, u_n^\ast] 
		\end{align*} 
		where the third line is due to the linearity of $ \frac{\partial \ell(z, \tilde\theta ) }{\partial \theta}[u] $ in $u$ as in Lemma \ref{lemma-aux-pathwise-derivatives} (ii), and the last equality follows from Lemma \ref{lemma-aux-pathwise-derivatives} (iv). 
		Thus, 
		\begin{align} \label{pf-asym-normality-aux-part2}
			\frac{d^2\hat Q(\bar\theta(t)) }{ dt^2} \bigg|_{t=s_\ast} & =\epsilon_n^2 \frac{1}{n} \sum_{i=1}^n  \frac{\partial^2 \ell(Z_i, \hat \theta + s_\ast \epsilon_n u_n^\ast)}{ \partial \theta \partial \theta} [u_n^\ast, u_n^\ast]  \notag \\
			& = O_p(\epsilon_n^2) 
		\end{align} 
		where the last line is by Assumption~\ref{assumption-wape-asym-norm}(\ref{aspn-higher-order-ell-theta_theta0-approx}). 
		
		Combining \eqref{pf-asym-normality-aux}, \eqref{pf-asym-normality-aux-part1}, \eqref{pf-asym-normality-aux-part2} and noticing $u^\ast = \pm v^\ast$ and the linearity of $\frac{\partial \ell(z,\tilde\theta)}{\partial \theta}[u]$ in $u$ yield 
		\begin{align}\label{pf-asym-normality-aux-final}
			\sqrt{n}\langle \hat\theta-\theta_0, v^\ast \rangle_F & = - \frac{1}{2} \frac{1}{\sqrt{n}} \sum_{i=1}^n \bigg(  \frac{\partial \ell(Z,\theta_0)}{\partial \theta}[v^\ast]  - \mathbb E \frac{\partial \ell(Z,\theta_0)}{\partial \theta}[v^\ast]  \bigg) + o_p(1) \notag \\
			& \to_d N\left(0,  \mathbb E p(X,\theta_0) (1-p(X,\theta_0)) b_\gamma(X)^2  \right)
		\end{align}
		since 
		\begin{align*}
			\mathbb E \frac{\partial \ell(Z,\theta_0)}{\partial \theta}[v^\ast]   = - 2 \mathbb E(Y-p(X,\theta_0)) \frac{\partial p(X,\theta_0 )}{\partial\theta} [v^\ast] = 0
		\end{align*}
		and 
		\begin{align*}
			\mathbb E & \bigg(\frac{\partial \ell(Z,\theta_0)}{\partial \theta}[v^\ast]  \bigg)^2 = 4 \mathbb E(Y-p(X,\theta_0))^2 \bigg(  \frac{\partial p(X,\theta_0 )}{\partial\theta} [v^\ast]  \bigg)^2 \\
			&  = 4 \mathbb E p(X,\theta_0) (1-p(X,\theta_0)) \bigg(  \frac{\partial p(X,\theta_0 )}{\partial\theta} [v^\ast]  \bigg)^2 
			= 4 \mathbb E p(X,\theta_0) (1-p(X,\theta_0)) b_\gamma(X)^2 
		\end{align*}
		by the definition of $v^\ast$ which gives $\frac{\partial \gamma(\theta_0) }{\partial \theta}[\theta-\theta_0]  = \inprod{\theta-\theta_0, v^\ast}_F$ for all $\theta \in \Theta$.

		Notice  
		\begin{align*}
			\gamma(\hat \theta) - \gamma(\theta_0) - \frac{\partial \gamma(\theta_0) }{\partial \theta}[\hat \theta-\theta_0] 
			= \mathbb E b_\gamma(X) \bigg(  p(X,\hat \theta)-p(X,\theta_0) - \frac{\partial p(X,\theta_0) }{\partial \theta} [\hat\theta-\theta_0] \bigg) 
			= o_p(n^{-1/2})
		\end{align*}
		by Assumption~\ref{assumption-wape-asym-norm}(\ref{aspn-FisherNorm-gammatheta-expansion-remainder}) and $\|\hat\theta-\theta_0\|_F = o_p(n^{-1/4})$.  
		Therefore, 
		\begin{align*}
			\sqrt{n}\big( \gamma(\hat\theta) - \gamma(\theta_0) \big) & = \sqrt{n} \frac{\partial \gamma(\theta_0) }{\partial \theta}[\hat \theta-\theta_0]  + \sqrt{n} \left( \gamma(\hat \theta) - \gamma(\theta_0) - \frac{\partial \gamma(\theta_0) }{\partial \theta}[\hat \theta-\theta_0]  \right) \\
			& = \sqrt{n} \frac{\partial \gamma(\theta_0) }{\partial \theta}[\hat \theta-\theta_0]  + o_p(1)  =  \sqrt{n} \langle \hat\theta-\theta_0, v^\ast\rangle_F + o_p(1)  \\
			& 
			\to_d N\left(0,  \mathbb E p(X,\theta_0) (1-p(X,\theta_0)) b_\gamma(X)^2  \right) 
		\end{align*}
		where the last line is by \eqref{pf-asym-normality-aux-final}. 
		This completes the proof for Theorem \ref{theorem-asym-normality-wape}.  
	\end{proof}

	\subsection{Proof for Proposition \ref{proposition-F0known-condns}}\label{appendix-subsection-proof-propn-lowlevelcondn}

	\begin{proof}[Proof of Proposition \ref{proposition-F0known-condns}]
		We first verify Assumption~\ref{assumption-wape-asym-norm}(\ref{aspn-FisherNorm-L2ptheta-pre-ptheta-remainder}).  
		For $\theta = (g,F_0)$, \eqref{partial-p-partial-theta} reduces to 
		\begin{align}\label{partial-p-partial-theta-F0known}
			\frac{\partial p(x,\theta_0) }{\partial \theta}[\theta-\theta_0] 
			= f_0(v+g_0(w))(g(w)-g_0(w)), 
		\end{align} 
		and thus, when $\|g-g_0\|_\infty = o(1)$, 
		\[
		\|\theta-\theta_0\|_F^2  = \mathbb E f_0(V+g_0(W))^2(g(W)-g_0(W))^2 \leq \|f_0\|_\infty^2 \mathbb E (g(W)-g_0(W))^2 =o(1), 
		\] 
		and 
		\begin{align}\label{pf-wape-F0known-g-g0-bdd-by-theta-theta0}
			\mathbb E \big( g(W) - g_0(W) \big)^2 
			\leq c^{-1} \mathbb E \big(f_0(V+g_0(W)) (g(W)- g_0(W)) \big)^2  = c^{-1}\|\theta-\theta_0\|_F^2 
		\end{align}
		by Assumption~\ref{assumption-wape-asym-norm-F0known}(\ref{aspn-FisherNorm-L2ptheta-F0known}). 
		Note that there exists $\bar g$ between $g$ and $g_0$ such that 
		\[
		p(x,\theta) - p(x,\theta_0) =  F_0(v+g(w)) - F_0(v+g_0(w)) = f_0(v+\bar g(w)) (g(w) -g_0(w)), 
		\]
		which, together with \eqref{partial-p-partial-theta-F0known}, implies that 
		\begin{align}\label{pf-wape-F0known-ptheta-ptheta0-remainder}
			&  p(x,\theta) - p(x,\theta_0) -   \frac{\partial p(x,\theta_0) }{\partial \theta}[\theta-\theta_0] 
			= \big( f_0(v+\bar g(w))  - f_0(v+g_0(w)) \big) (g(w)-g_0(w)) \notag \\
			& = f_0'(v+ \tilde g(w))(\bar g(w)-g_0(w)) (g(w)-g_0(w)), 
		\end{align}
		where the last line holds for some $\tilde g$ between $\bar g$ and $g_0$. 
		Since $\bar g$ is between $g$ and $g_0$, 
		it holds that 
		\begin{align}\label{pf-wape-F0known-condn-A-aux-1}
			& \mathbb E \Big( p(X,\theta) - p(X,\theta_0) -   \frac{\partial p(X,\theta_0) }{\partial \theta}[\theta-\theta_0]  \Big)^2  
			\leq \|f_0'\|_\infty^2 \|g -g_0\|_\infty^2 \mathbb E (g(W)-g_0(W))^2   \notag \\
			& \leq 
			c^{-1} \|f_0'\|_\infty^2 \|g -g_0\|_\infty^2 \|\theta-\theta_0\|_F^2  
			= o( \|\theta-\theta_0\|_F^2  ) 
		\end{align} 
		where the second line is by \eqref{pf-wape-F0known-g-g0-bdd-by-theta-theta0},  
		and the last equality is because $\|g-g_0\|_\infty = o(1)$. This verifies Assumption~\ref{assumption-wape-asym-norm}(\ref{aspn-FisherNorm-L2ptheta-pre-ptheta-remainder}).

		Now we verify Assumption~\ref{assumption-wape-asym-norm}(\ref{aspn-vn-vast-approx-error-quarter-rate}). 
		Note that knowing $F_0$ is equivalent to setting $\mathcal F = \{F_0\}$, under which $v^\ast_F =0$ trivially, since $(v^\ast_g, v_F^\ast)$ is in the completion of $\Theta-\{\theta_0\} $ under $\|\cdot\|_F$ and $\mathcal F - \{F_0\} = \{0\}$. 
		Hence, we can set $v^\ast_{nF} = 0$ for $v_n^\ast = (v_{nF}^\ast, v^\ast_{ng})$. Therefore,
		\begin{align*}
			& \|v^\ast_n - v^\ast\|_F^2 
			= \mathbb E  \bigg( \frac{\partial p(X,\theta_0)}{\partial\theta}[v^\ast_n - v^\ast]\bigg)^2
			= \mathbb E f_0(V+g_0(W))^2 ( v^\ast_{ng}(W)-v^\ast_g(W))^2 \\
			& \leq \|f_0\|_\infty^2 \mathbb E ( v^\ast_{ng}(W)-v^\ast_g(W))^2 = O\big( (\log B_n )^{-m_w/2} \big) = o(n^{-1/2}) , 
		\end{align*} 
		where the second line is by Lemma~\ref{lemma-aux-RKHS-Sobolev-approximation-error} and Assumption~\ref{assumption-wape-asym-norm-F0known}(\ref{aspn-vast-g-bdd}), and that Theorem \ref{theorem-cgce-rate-aux} and the choice of $B_n$ in Corollary \ref{corollary-cgce-rate} imply that $(\log B_n)^{-m_w/2}  \lesssim n^{- \frac{\beta}{1+\beta} } = o(n^{-1/2})$.

		To verify Assumption~\ref{assumption-wape-asym-norm}(\ref{aspn-ptheta-ptheta0-deviation}), 
		note that $\frac{\partial p(x,\theta_0) }{\partial \theta} [v^\ast] = v_g^\ast(w) f_0(v+g_0(w))$ and $\frac{\partial p(x,\theta) }{\partial \theta} [v^\ast_n] = v_{ng}^\ast(w) f_0(v+g(w))$, since $v_F^\ast = v_{nF}^\ast = 0$ and by Lemma~\ref{lemma-aux-pathwise-derivatives}(i). 
		Then $\big|  \frac{\partial p(x,\theta_0) }{\partial \theta} [v^\ast] \big| \leq \|f_0\|_\infty \|v_g^\ast\|_\infty$ by Assumption~\ref{assumption-wape-asym-norm-F0known}(\ref{aspn-vast-g-bdd}), 
		and thus, for any $\theta = (g,F_0)\in \Theta_n$ with $\|\theta-\theta_0\|_F = o(n^{-1/4})$, it follows from \eqref{pf-wape-F0known-ptheta-ptheta0-remainder}  that 
		\begin{align*}
			& \bigg| \mathbb E \bigg( p(X,\theta)-p(X,\theta_0) - \frac{\partial p(X,\theta_0) }{\partial \theta} [ \theta-\theta_0] \bigg) \frac{\partial p(X,\theta_0) }{\partial \theta} [v^\ast]  \bigg|  \\
			& \leq \|f_0\|_\infty \|v_g^\ast\|_\infty   \mathbb E \big| f_0'(V+ \tilde g(W))(\bar g(W)-g_0(W)) (g(W)-g_0(W))  \big| \\
			& \leq \|f_0\|_\infty \|v_g^\ast\|_\infty \|f_0'\|_\infty \mathbb E (g(W)-g_0(W) )^2 
			\leq 
			c^{-1}\|f_0\|_\infty \|v_g^\ast\|_\infty \|f_0'\|_\infty  \|\theta-\theta_0\|_F^2 
			= o(n^{-1/2}), 
		\end{align*}
		where the last line is because  $\bar g$ is between $g$ and $g_0$ and \eqref{pf-wape-F0known-g-g0-bdd-by-theta-theta0}. Hence, Assumption~\ref{assumption-wape-asym-norm}(\ref{aspn-ptheta-ptheta0-deviation}.ii) is satisfied. 
		Moreover, Assumption~\ref{assumption-wape-asym-norm}(\ref{aspn-ptheta-ptheta0-deviation}.i) also holds, because 
		\begin{align}\label{pf-wape-F0known-ptheta-theta0-partial-vnast-diff}
			& \mathbb E \bigg(  \frac{\partial p(X,\tilde\theta) }{\partial \theta} [v_n^\ast] - \frac{\partial p(X,\theta_0) }{\partial \theta} [v_n^\ast] \bigg)^2 
			= \mathbb E \Big(  
			v^\ast_{ng}(W) \big( f_0(W+\tilde g(W)) -  f_0(W+g_0(W)) \big)  \Big)^2 \notag  \\
			& \leq   
			\| v^\ast_{ng}\|_\infty^2 \|f_0'\|_\infty^2 \mathbb E \big( \tilde g(W) - g_0(W) \big)^2 
			\leq \| v^\ast_{ng}\|_\infty^2 \|f_0'\|_\infty^2 c^{-1} \|\tilde\theta-\theta_0\|_F^2   
			= o(n^{-1/2}) ,
		\end{align}
		where the last line uses \eqref{pf-wape-F0known-g-g0-bdd-by-theta-theta0} and the definition of $\mathcal N_{0n}$. 
		This completes the verification of Assumption~\ref{assumption-wape-asym-norm}(\ref{aspn-ptheta-ptheta0-deviation}).

		Assumption~\ref{assumption-wape-asym-norm}(\ref{aspn-FisherNorm-gammatheta-expansion-remainder}) is satisfied, since \eqref{pf-wape-F0known-ptheta-ptheta0-remainder} and Assumption \ref{assumption-wape-asym-norm-F0known}(\ref{aspn-b_gamma_W-bdd}) imply that, for any $\theta = (g,F_0)\in \Theta_n$ with $\|\theta-\theta_0\|_F = o(n^{-1/4})$,
		\begin{align*}
			& \bigg| \mathbb E \bigg( p(X,\theta)-p(X,\theta_0) - \frac{\partial p(X,\theta_0) }{\partial \theta} [ \theta-\theta_0] \bigg) b_\gamma(X)  \bigg|  \\
			& = \big|  \mathbb E f_0'(V+ \tilde g(W))(\bar g(W)-g_0(W)) (g(W)-g_0(W)) b_\gamma(X) \big| \\
			& \leq  \|f_0'\|_\infty  \mathbb E \Big( \big|  \bar g(W)-g_0(W) \big| \big| g(W)-g_0(W)  \big| \mathbb E \big( \big| b_\gamma(X) \big| |W) \big) \Big)  \\
			& \leq C \|f_0'\|_\infty \mathbb E (g(W)-g_0(W) )^2 
			\leq  
			c^{-1}C \|f_0'\|_\infty  \|\theta-\theta_0\|_F^2 
			= o(n^{-1/2}), 
		\end{align*}
		where the last line is by \eqref{pf-wape-F0known-g-g0-bdd-by-theta-theta0}.

		Now we verify Assumption \ref{assumption-wape-asym-norm}(\ref{aspn-FisherNorm-sample-partialell_ell0-remainder}). 
		Write 
		\begin{align*}
			& \frac{\partial \ell(z,\theta)}{\partial \theta}[v_n^\ast] -\frac{\partial \ell(z,\theta_0)}{\partial \theta}[v_n^\ast]  
			= -2  (y-p(x,\theta)) \frac{\partial p(x,\theta)}{\partial \theta}[v_n^\ast] +2 (y-p(x,\theta_0)) \frac{\partial p(x,\theta_0)}{\partial \theta}[v_n^\ast]  \\
			& = 2 \big( p(x,\theta) - p(x,\theta_0) \big) \frac{\partial p(x,\theta_0)}{\partial \theta} [v_n^\ast] - 2  \big( y - p(x,\theta) \big) \bigg(  \frac{\partial p(x,\theta)}{\partial \theta}[v_n^\ast] -  \frac{\partial p(x,\theta_0)}{\partial \theta}[v_n^\ast]\bigg) \\
			& =: 2 h_1(x,\theta) - 2 h_2(x,\theta) . 
		\end{align*}
		Let $\mathcal H_{1n},\mathcal H_{2n}$ consist of functions $h_1(\cdot,\theta), h_2(\cdot,\theta)$, respectively, where $\theta$ are such that $\|\theta - \theta_0\|_F^2  = \mathbb E f_0(V+g_0(W))^2(g(W)-g_0(W))^2  = o(n^{-1/2})$ and $\|g-g_0\|_\infty = o(1)$. 
		Denote $\mathbb P_n h := \frac{1}{n}\sum_{i=1}^n h(X_i)$ and  $P h := \mathbb E h(X)$. 
		To verify Assumption \ref{assumption-wape-asym-norm}(\ref{aspn-FisherNorm-sample-partialell_ell0-remainder}), it suffices to show that 
		\begin{align} 
			& \mathbb E \sqrt{n} \sup_{ h \in \mathcal H_{1n}} |(\mathbb P_n - P) h | = o(1)  \label{pf-wape-F0known-stoch-equicts-main-1} \\
			& \mathbb E \sqrt{n} \sup_{ h \in \mathcal H_{2n}} |(\mathbb P_n - P) h | = o(1). \label{pf-wape-F0known-stoch-equicts-main-2}
		\end{align} 
		We first focus on \eqref{pf-wape-F0known-stoch-equicts-main-1}. 
		Note that $\big|\frac{\partial p(x,\theta_0)}{\partial \theta} [v_n^\ast]\big| = \big|v_{ng}^\ast(w) f_0(v+g_0(w))\big| \leq C_1$ for a constant $C_1$ since  $ v_{ng}^\ast$ is uniformly bounded and $\|f_0\|_\infty <\infty$. Hence, for any function $h(\cdot,\theta)$ in $\mathcal H_{1n}$, $\mathbb E h(X,\theta)^2 \leq C^2 \mathbb E \big( p(X,\theta) - p(X,\theta_0) \big)^2 \leq \sigma_n$  
		for some $\sigma_n = o(n^{-1/4})$, since $\|\theta - \theta_0\|_F = o(n^{-1/4})$ and \eqref{eq-FisherNorm-L2ptheta} holds as Assumption \ref{assumption-wape-asym-norm}(\ref{aspn-FisherNorm-L2ptheta-pre-ptheta-remainder}) is verified. 
		Moreover, $\|h\|_\infty \leq C_1$ since $p(\cdot,\theta),p(\cdot,\theta_0)\in [0,1]$. 
		Thus, applying Lemma 19.36 in \cite{vandervaart-00} yields
		\begin{align}\label{pf-wape-F0known-stoch-equicts-aux-1}
			\mathbb E \sqrt{n} \sup_{ h \in \mathcal H_{1n}} |(\mathbb P_n - P) h |
			\lesssim J_{[]}(\sigma_n, \mathcal H_{1n}, L_2(P)) \bigg( 1+\frac{J_{[]}(\sigma_n, \mathcal H_{1n}, L_2(P)) }{\sigma_n^2 \sqrt{n} } C_1 \bigg) , 
		\end{align}
		where
		\begin{align*}
			J_{[]}(\sigma_n, \mathcal H_{1n}, L_2(P)) 
			:=\int_0^{\sigma_n} \sqrt{ \log N_{[]} (\epsilon, \mathcal H_{1n}, L_2(P) )} d\epsilon 
		\end{align*}
		For any $\theta_1=(g_1,F_0), \theta_2=(g_2,F_0)$, $ \big| p(x,\theta_1) - p(x,\theta_2) \big| = | F_0(v+g_1(w)) -  F_0(v+g_2(w)) | \leq \|f_0\|_\infty |g_1(w)-g_2(w)|$, and thus, 
		$|h(x,\theta_1) - h(x,\theta_2)|\leq  \big| p(x,\theta_1) - p(x,\theta_2) \big| \big| \frac{\partial p(x,\theta_0)}{\partial \theta} [v_n^\ast] \big| \leq C_1 \|f_0\|_\infty |g_1(w)-g_2(w)|$. Applying Theorem 2.7.11 in \cite{vandervaart-wellner-96} yields that
		\begin{align*}
			&\log N_{[]} (\epsilon, \mathcal H_{1n}, L_2(P) 
			\leq \log N( \epsilon/C_2,  \mathcal G_n, \|\cdot\|_\infty ) \\
			& \leq C_2 \bigg( \log B_n + \log \frac{C_3}{\epsilon} \bigg)^{d_w+1} \leq C_4 (\log B_n)^{d_w+1} + C_4 \Big( \log \frac{C_3}{\epsilon} \Big)^{d_w+1}
		\end{align*}
		for some universal constants $C_2,C_3,C_4$, 
		where the second inequality is by \eqref{covering-no-Gn}. 
		Thus, 
		\begin{align*}
			J_{[]}(\sigma_n, \mathcal H_{1n}, L_2(P)) 
			\lesssim   (\log B_n)^{\frac{d_w+1}{2}} \sigma_n + \big( \log (1/\sigma_n) \big)^{ \frac{d_w+1}{2} } \sigma_n \lesssim \sigma_n^2 \sqrt{n}
		\end{align*}
		since Theorem \ref{theorem-cgce-rate-aux} and the choice of $B_n$ in Corollary \ref{corollary-cgce-rate} imply that $(\log B_n)^{\frac{d_w+1}{2}} /\sqrt{n} \lesssim n^{- \frac{\beta}{2(1+\beta)} } = o(n^{-1/4})$. 
		Thus, \eqref{pf-wape-F0known-stoch-equicts-aux-1} implies that 
		\begin{align*}
			\mathbb E \sqrt{n} \sup_{ h \in \mathcal H_{1n}} |(\mathbb P_n - P) h |
			\lesssim \sigma_n^2 \sqrt{n} = o(1)
		\end{align*}
		since $\sigma_n = o(n^{-1/4})$. This verifies \eqref{pf-wape-F0known-stoch-equicts-main-1}.  
		The verification of \eqref{pf-wape-F0known-stoch-equicts-main-2} follows from the same argument as that for \eqref{pf-wape-F0known-stoch-equicts-main-1}. Note that \eqref{pf-wape-F0known-ptheta-theta0-partial-vnast-diff} implies that 
		for any function $h(\cdot,\theta)$ in $\mathcal H_{2n}$, $\mathbb E h(X,\theta)^2 = \mathbb E \big( Y - p(X,\theta) \big)^2 \big(  \frac{\partial p(x,\theta)}{\partial \theta}[v_n^\ast] -  \frac{\partial p(x,\theta_0)}{\partial \theta}[v_n^\ast]\big)^2 \leq \sigma_n$ for some $\sigma_n = o(n^{-1/4})$. 
		Moreover, $\frac{\partial p(x,\theta)}{\partial \theta}[v_n^\ast] -  \frac{\partial p(x,\theta_0)}{\partial \theta}[v_n^\ast] = v_{ng}^\ast(w) f_0(v+g(w)) -v_{ng}^\ast(w) f_0(v+g_0(w)) $, and thus, for any $h(\cdot,\theta_1), h(\cdot,\theta_2)\in \mathcal H_{2n}$, 
		\begin{align*}
			& |h(x,\theta_1) - h(x,\theta_2)| \\
			& \leq  \Big| ( y - p(x,\theta_1) - ( y - p(x,\theta_2) ) )   v_{ng}^\ast(w)f_0(v+g_1(w) ) \Big|  \\
			& \quad + \Big|  ( y - p(x,\theta_2) ) \big(   v_{ng}^\ast(w)f_0(v+g_1(w) ) - v_{ng}^\ast(w)f_0(v+g_2(w) ) \big) \Big|  \\
			& \quad + \Big| ( y - p(x,\theta_2) - ( y - p(x,\theta_1) ) )   v_{ng}^\ast(w)f_0(v+g_0(w)  \Big|  \\
			& \leq 
			2 \|v_{ng}^\ast\|_\infty \|f_0\|_\infty | p(x,\theta_2) -  p(x,\theta_1) | +  \|v_{ng}^\ast\|_\infty \|f_0'\|_\infty  |g_1(w) - g_2(w)| \lesssim |g_1(w)-g_2(w)|
		\end{align*}
		since $\|f_0\|_\infty, \|f_0'\|_\infty<\infty$, and $v_{ng}^\ast$ is bounded. 
		Therefore, \eqref{pf-wape-F0known-stoch-equicts-main-2} can be verified using exactly the same argument as that for \eqref{pf-wape-F0known-stoch-equicts-main-1}.

		Now we verify Assumption \ref{assumption-wape-asym-norm}(\ref{aspn-higher-order-ell-theta_theta0-approx}). 
		By Lemma \ref{lemma-aux-pathwise-derivatives} (iii) and that $v_{nF}^\ast = 0$,  $\frac{\partial^2 p(x,\theta)}{\partial\theta\partial\theta} [v_n^\ast,v_n^\ast] 
		= \big(v_{ng}^\ast(w) \big)^2 F_0''(v+g(w))$. 
		Together with $\frac{\partial p(x,\theta) }{\partial \theta} [v^\ast_n] = v_{ng}^\ast(w) f_0(v+g(w))$, Lemma \ref{lemma-aux-pathwise-derivatives} (iii) gives
		\begin{align*}
			\frac{\partial^2 \ell(z,\theta)}{\partial\theta\partial\theta} [v_n^\ast,v_n^\ast] 
			= -2 \Big( \big(y- F_0(v+g(w)) \big) \big(v_{ng}^\ast(w) \big)^2 f_0'(v+g(w)) -  \big(v_{ng}^\ast(w) \big)^2 f_0(v+g(w))^2 \Big)
		\end{align*} 
		and thus, $\big|\frac{\partial^2 \ell(z,\theta)}{\partial\theta\partial\theta} [v_n^\ast,v_n^\ast]\big| \leq 2 \| v_{ng}^\ast\|_\infty^2 (\|f_0'\|_\infty + \|f_0\|_\infty^2) )$. Assumption \ref{assumption-wape-asym-norm}(\ref{aspn-higher-order-ell-theta_theta0-approx}) is thus satisfied. 
		
		For Assumption \ref{assumption-wape-asym-norm}(\ref{aspn-Qhat-deviation-local-alternative}), note that $\hat F = F_0$ since $F_0$ is known. Recall $v_{nF}^\ast = 0$ and $\mathcal F = \mathcal F_n = \{F_0\}$, so $\hat \theta \pm t\epsilon_n v_n^\ast = (\hat g\pm t\epsilon_n v_{ng}^\ast, F_0) \in \Theta_n =\mathcal G_n \times \mathcal F_n$. 
		By the definition of $\hat\theta$ which minimizes $\hat Q(\theta)$ over $\Theta_n$,  Assumption \ref{assumption-wape-asym-norm}(\ref{aspn-Qhat-deviation-local-alternative}) is satisfied trivially. 
	\end{proof}

	\section{Technical Lemmas}
	\label{appendix-tech-lmas}

	\subsection{Technical Lemmas for Theorem \ref{theorem-consistency}}\label{appendix-tech-lmas-consistency}
	
	The following lemma collects some properties that will be used repeatedly.  
	Most of them are immediate results from \cite{gallant-nychka-87}. 
	
	\begin{lemma}\label{lemma-aux-Fset-prelim-properties}
		Let $\mathcal F, \mathcal F_n$ be given in Assumption \ref{assumption-consistency}. 
		\begin{enumerate}[(a)] \itemsep -0.1em
			\item $\mathcal F$ is compact under the uniform norm $\|\cdot\|_\infty$ on $\mathbb R$. 
			\item There exists a constant $M_{\mathcal F}>0$ such that any probability densities $f$ of $F\in \mathcal F$ satisfies $\norm{f}_{\infty} <M_{\mathcal F}$. 
			\item There exists $F_n \in \mathcal F_n$ such that $\|F_n-F_0\|_\infty \to 0$ as $n\to \infty$. 
			\item For any $F, \tilde F \in \mathcal F$ and any functions $g, \tilde g: \mathcal W \to \mathbb R$, 
			\[
			\sup_{x \in \mathcal X} \big| F(v+g(w))  - \tilde F(v+\tilde g(w) ) \big|  \leq \max\{1,M_{\mathcal F} \}  d_{\Theta}(\theta,\tilde \theta)
			\]
			where $d_{\Theta}(\theta, \tilde \theta) :=  \sup_{w\in \mathcal W} | g(w) - \tilde g(w) | + \sup_{u\in \mathbb R} |F(u) - \tilde F(u) |$. 
		\end{enumerate}
	\end{lemma}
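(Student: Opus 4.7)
The plan is to prove the four parts essentially in the order (b), (d), (c), (a), since (b) feeds into (d), and (c) and (a) rely on the Gallant--Nychka framework already established in the cited literature.

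For part (b), I would first observe that a density $f\in\mathcal F$ has the form $f=f_{sr}^2$ where $f_{sr}$ lives in the weighted Sobolev ball $\{h:\|h\|_{m_0+m_e,2,\eta_0}\le M\}$ with $\eta_0>1/2$ and $m_0+m_e\geq 1$. A standard Sobolev embedding argument using the weight $(1+u^2)^{\eta_0}$ shows that $\sup_{u\in\mathbb R}|f_{sr}(u)|$ is bounded by a constant depending only on $M$ and $\eta_0$: split an integration-by-parts representation of $f_{sr}(u)^2$ into $\int_{-\infty}^{u}\frac{d}{dt}(f_{sr}(t)^2)dt$ and bound using Cauchy--Schwarz with the weight $(1+t^2)^{\eta_0}$. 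Taking the square then yields a uniform bound $\|f\|_\infty<M_{\mathcal F}$ for some constant $M_{\mathcal F}$ depending only on $M,\eta_0$.

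Part (d) is then immediate via the triangle inequality:
\[
|F(v+g(w))-\tilde F(v+\tilde g(w))| \leq |F(v+g(w))-F(v+\tilde g(w))| + |F(v+\tilde g(w))-\tilde F(v+\tilde g(w))|.
\]
The first term is at most $\|f\|_\infty |g(w)-\tilde g(w)|\le M_{\mathcal F}\sup_w|g-\tilde g|$ by the mean value theorem, and the second is at most $\sup_u|F(u)-\tilde F(u)|$. Taking the supremum over $x\in\mathcal X$ and factoring out $\max\{1,M_{\mathcal F}\}$ gives the stated bound.

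For part (c), I would cite or directly invoke the approximation result of \cite{gallant-nychka-87} (and the refined version in \cite{fenton-gallant-96,fenton-gallant-96-joe}), which shows that for any $f_{sr,0}$ with $\|f_{sr,0}\|_{m_0+m_e,2,\eta_0}<M$ there exists a sequence $\tau^{(n)}\in\mathbb R^{J_n+1}$ such that $f_{sr}(\cdot;\tau^{(n)}) := \sum_{j=0}^{J_n}\tau^{(n)}_j u^j e^{-u^2/4}$ converges to $f_{sr,0}$ in the weighted Sobolev norm, after normalizing so the resulting density integrates to one. Squaring and integrating, together with the uniform bound from (b), gives convergence of the CDFs in the supremum norm, so the approximation $F_n\to F_0$ holds.

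Finally, for part (a), compactness of $\mathcal F$ under $\|\cdot\|_\infty$ follows by a Helly-type selection argument: bounded weighted Sobolev norm with weight $\eta_0>1/2$ yields both equicontinuity (from bounded derivative in weighted $L^2$) and tightness (from the weight controlling the tails). Arzel\`a--Ascoli combined with the uniform tail bound then gives sequential compactness in $\|\cdot\|_\infty$; closedness of $\mathcal F$ under uniform convergence is checked by passing to the limit in the bounds defining $\mathcal F$ (weak closure of a weighted Sobolev ball). The main (but mild) obstacle is making sure that a uniform limit of $F=\int f_{sr}^2$ is itself expressible as such an integral, i.e.\ that the set of admissible $f_{sr}$ is closed under the relevant mode of convergence; this is handled by weak compactness in the Hilbert space $L^2((1+u^2)^{\eta_0} du)$ and lower semicontinuity of the norm under weak limits. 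I expect this closure step to be the most technical point, but it is essentially a standard functional-analytic argument paralleling Gallant--Nychka.
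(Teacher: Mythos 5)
Your proposal is correct in substance but routes through different machinery than the paper, most notably for parts (a) and (b). The paper's proof centers on a single intermediate inequality: for $\eta\in(1/2,\eta_0)$ and any (signed) density-type function $f$,
\[
\sup_{z}\Bigl|\int_{-\infty}^z f(u)\,du\Bigr|\le \|f\|_{m_e,\infty,\eta}\int_{\mathbb R}(1+u^2)^{-\eta}\,du,
\]
so that uniform-norm control of the CDFs reduces entirely to control of densities in the weighted sup-Sobolev norm $\|\cdot\|_{m_e,\infty,\eta}$. With that device in hand, the paper obtains (a) by simply citing Gallant--Nychka (1987) Theorem~1 for precompactness of the density class under $\|\cdot\|_{m_e,\infty,\eta}$ and Santos (2012) Lemma~A.1 for closedness, gets (b) for free because compactness implies boundedness in that norm (which dominates $\|\cdot\|_\infty$), and gets (c) by citing Gallant--Nychka Theorem~2 for denseness in the same norm and applying the inequality again. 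You instead prove (b) directly via a Sobolev-embedding/integration-by-parts bound on $f_{sr}$, which is a perfectly clean self-contained argument (indeed you only need the weight to be $\ge 1$, so Cauchy--Schwarz with the unweighted $L^2$ norms suffices). For (a) you reprove compactness from scratch via Arzel\`a--Ascoli plus a weak-compactness closure argument in the weighted Hilbert space; this is the right outline and essentially what lies inside the cited results, but it carries more technical burden. In particular, the step you correctly flag as delicate --- showing that weak convergence of $f_{sr,n}$ in the weighted Sobolev ball yields strong (pointwise or locally uniform) convergence so that $f_{sr,n}^2\to f_{sr,*}^2$ --- is a Rellich-type compact-embedding fact over an unbounded domain and would need a few more lines to nail down, which is exactly why the paper offloads it to the references. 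Parts (c) and (d) match the paper's argument: (d) is the identical triangle-inequality-plus-mean-value step, and (c) uses the same Gallant--Nychka/Fenton--Gallant approximation result (your remark about normalizing $\tau^{(n)}$ so the density integrates to one belongs more properly to the quantitative Lemma~\ref{lemma-aux-Fn-F0-approximation-error}, but it does no harm here).
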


	\begin{proof}[Proof of Lemma \ref{lemma-aux-Fset-prelim-properties}]
		Let $\eta \in (1/2, \eta_0)$, and define the weighted Sobolev norm
		\[
		\|f\|_{m_e,\infty,\eta} := \max_{ 0\leq \lambda \leq m_e } \sup_{u \in \mathbb{R} } \left| f^{(\lambda)} (u) \right| (1+u^2)^{\eta}. 
		\] 
		We first notice that, for any distribution function $F$ with Lebesgue density $f$, 
		\begin{align}\label{ineq-density-norm-GN87-UnifNorm-cdf}
			\sup_{z\in \mathbb R} |F(z)| \leq \sup_{z \in \mathbb{R} } \left| \int_{-\infty}^z f(u) du \right| &  \leq \int_{-\infty}^\infty | f(u) |  du  = \int_{\mathbb{R}} | f(u) | (1+u^2)^\eta (1+u^2)^{-\eta} du \notag \\
			& \leq \|f\|_{m_e,\infty,\eta} \int_{\mathbb{R}} (1+u^2)^{-\eta} du 
		\end{align}
		where the last line is due to $\eta>1/2$ and thus $\int_{\mathbb{R}} (1+u^2)^{-\eta} du <\infty$.

		For Part (a), note that the set of densities defining $\mathcal F$ is compact under norm $\|\cdot\|_{m_e,\infty,\eta}$. This follows from Theorem 1 in \cite{gallant-nychka-87}, which shows the precompactness, and Lemma A.1 in \cite{santos-12}, which further implies the closedness and thus compactness. Although \cite{gallant-nychka-87} imposes a zero mean condition, which we do not, the proof of their Theorem 1 holds without the zero mean condition. 
		The compactness of $\mathcal F$ under $\|\cdot\|_\infty$ follows immediately from \eqref{ineq-density-norm-GN87-UnifNorm-cdf}. 
		
		Part (b) is satisfied trivially, since the compactness and thus boundness under $\|\cdot\|_{m_e,\infty,\eta}$ of the set of densities defining $\mathcal F$ implies further the boundedness under $\|\cdot\|_\infty$. 
		
		Note that the proof of Theorem 2 in \cite{gallant-nychka-87} shows that the set of densities defining $\mathcal F_n$ becomes dense in the set of densities defining $\mathcal F$ under norm $\|\cdot\|_{m_e,\infty,\eta}$.  
		Thus, Part (c) follows from \eqref{ineq-density-norm-GN87-UnifNorm-cdf}.

		Now we show part (d). 
		Denote by $\tilde f$ the density of $\tilde F$. Notice
		\begin{align*}
			& \sup_{x \in \mathcal X} \big| F(v+g(w))  - \tilde F(v+\tilde g(w) ) \big| \\
			& \quad \leq  \sup_{x \in \mathcal X} \left| F(v+ g(w)) - \tilde F(v+ g(w))  \right|  + \sup_{x\in\mathcal{X} } \left| \tilde F(v+ g(w)) - \tilde F(v+\tilde g(w)) \right|  \\
			& \quad  \leq  \sup_{u\in \mathbb R} | F(u) - \tilde F(u) |  +  \|\tilde f\|_{\infty}  \sup_{w\in\mathcal{W} } \left|   g(w) - \tilde g(w) \right|  
			\leq  \max\{1,M_{\mathcal F} \} d_{\Theta}(\theta,\tilde \theta)
		\end{align*} 
		where the last inequality is due to $\|\tilde f\|_\infty <M_{\mathcal F}$ by Part (a). 
	\end{proof}

	\subsection{Technical Lemmas for Corollary \ref{corollary-cgce-rate}}
	
	\begin{lemma}\label{lemma-aux-p-p0-interpolation-ineq}
		Let $\mathcal G, \mathcal F$ be given in Assumption \ref{assumption-consistency} (b) and (e). Let Assumption \ref{assumption-cgce-rate}(\ref{aspn-bdd-L2leb-L2Px}) hold.   
		Then there exists a constant $C>0$, which does not depend on $\theta$, such that
		\begin{align*}
			\sup_{x\in \mathcal X} |p_{\theta}(x) - p_0(x)| \leq   C \left(\norm{p_{\theta} - p_0}_2 \right)^{ 2/(2+d_x) } 
		\end{align*}
		for any $\theta = (g,F) \in \mathcal G \times \mathcal F$, where $p_{\theta}(x) = F(v+g(w))$, $p_0 = p_{\theta_0}$, and $\norm{p_{\theta}-p_0}_2 := \left( \mathbb E (p_{\theta}(X) - p_0(X))^2 \right)^{1/2}$. 
	\end{lemma}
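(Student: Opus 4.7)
The plan is a standard interpolation inequality between the sup norm and the $L_2$ norm for uniformly Lipschitz functions, followed by a conversion from the Lebesgue $L_2$ norm to $\|\cdot\|_{L_2(X)}$ via Assumption \ref{assumption-cgce-rate}(\ref{aspn-bdd-L2leb-L2Px}). The key observation is that every $h_\theta := p_\theta - p_0$ is Lipschitz with a constant independent of $\theta$, which is enough to force $\|h_\theta\|_\infty$ to be controlled by a power of $\|h_\theta\|_2$.

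First I would establish the uniform Lipschitz bound. Writing $p_\theta(x) = F(v+g(w))$ and differentiating, $\partial p_\theta/\partial v = f(v+g(w))$ and $\partial p_\theta/\partial w_j = f(v+g(w))\,\partial g(w)/\partial w_j$. By Lemma \ref{lemma-aux-Fset-prelim-properties}(b) the densities in $\mathcal F$ are uniformly bounded by $M_\mathcal F$, and by Assumption \ref{assumption-consistency}(b) the first-order derivatives of any $g \in \mathcal G$ are bounded by $M$. Hence there is $L>0$ independent of $\theta$ such that $|p_\theta(x) - p_\theta(\tilde x)| \leq L\|x-\tilde x\|$ for all $x,\tilde x$, and the same bound (with $2L$) holds for $h_\theta$.

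Second I would apply the classical ball argument. Set $M_\ast := \sup_{x\in\mathcal X}|h_\theta(x)|$ and choose $x_\ast \in \mathcal X$ with $|h_\theta(x_\ast)| \geq M_\ast/2$; then for $r := M_\ast/(8L)$, the Lipschitz bound yields $|h_\theta(x)| \geq M_\ast/4$ on the ball $B(x_\ast, r)$. Under an interior cone condition on $\mathcal X$ (so that $\mathrm{Leb}(B(x_\ast,r)\cap\mathcal X) \geq c_0 r^{d_x}$ uniformly in $x_\ast$ for small $r$), we get
\[
\int_{\mathcal X} h_\theta(x)^2\,dx \;\geq\; (M_\ast/4)^2\, c_0 r^{d_x} \;\gtrsim\; L^{-d_x} M_\ast^{2+d_x}.
\]
Finally, Assumption \ref{assumption-cgce-rate}(\ref{aspn-bdd-L2leb-L2Px}) gives $\int_{\mathcal X} h_\theta^2\,dx \leq M_{1,op}^2 \|h_\theta\|_{L_2(X)}^2$, so that $M_\ast^{2+d_x} \lesssim \|h_\theta\|_{L_2(X)}^2$, which is the claim after taking a $(2+d_x)$-th root.

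The main obstacle is the geometric fatness of $\mathcal X$ at its boundary points, needed to guarantee that $B(x_\ast,r)\cap\mathcal X$ has Lebesgue measure of order $r^{d_x}$. In the paper's setting this is implicit: if $\mathcal X$ is, for instance, a product of a box in the continuous coordinates and a finite set of categorical configurations in the discrete ones, the cone condition applies within each slice of fixed discrete value. In that case one runs the ball argument separately on each continuous slice, takes a maximum over the finitely many discrete configurations, and uses Assumption \ref{assumption-cgce-rate}(\ref{aspn-bdd-L2leb-L2Px}) to bound each slice integral by $\|h_\theta\|_{L_2(X)}^2$; the same exponent $2/(2+d_x)$ is preserved provided $d_x$ is understood as the number of continuous coordinates. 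Everything else is a routine calculation.
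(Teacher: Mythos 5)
Your argument is essentially the same Gabushin-style interpolation inequality that the paper uses, just with a different bookkeeping: you center a ball at a near-maximizer of $|h_\theta|$ and integrate the locally large values, whereas the paper fixes an arbitrary $x\in\mathcal X$, forms a small cube $\mathcal C$ of side $\delta$ around it, compares $h_\theta(x)$ to the minimizer $x_\ast$ of $|h_\theta|$ over $\mathcal X\cap\mathcal C$, bounds $|h_\theta(x_\ast)|$ by $M_{1,op}\delta^{-d_x/2}\|p_\theta-p_0\|_2$ via the Lebesgue-to-$P_X$ inequality, and then optimizes $\delta$. Both routes rest on the same two ingredients (uniform Lipschitz constant for $h_\theta$ from $\|f\|_\infty\leq M_{\mathcal F}$ and bounded derivatives of $g$, plus Assumption \ref{assumption-cgce-rate}(\ref{aspn-bdd-L2leb-L2Px})) and yield the same exponent $2/(2+d_x)$, so the proposal is correct in approach.

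One small but fair observation: the geometric-fatness step you flag as the ``main obstacle'' is in fact present, unacknowledged, in the paper's own proof as well. The paper asserts $\int_{\mathcal X}1\{x\in\mathcal C\}dx\geq\delta^{d_x}$, which equals $\mathrm{Leb}(\mathcal X\cap\mathcal C)$ and thus requires exactly the kind of interior cone (or ``no thin cusps'') regularity of $\mathcal X$ that you discuss; it is not automatic at boundary points of $\mathcal W$. So you have not introduced a new gap — you have made explicit a hidden one, and your discussion of how to handle mixed continuous/discrete $\mathcal X$ by running the ball argument slice by slice is a reasonable way to patch it.
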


	\begin{proof}[Proof of Lemma \ref{lemma-aux-p-p0-interpolation-ineq}]
		The proof is based on \cite{gabushin-67}. 
		We give a complete proof here since the existing results cannot be applied here, as $p_{\theta}$ may not necessarily be Lebesgue integrable: $p_{\theta}(x) =  F(v+g(w))$ is close to one if $v$ is large when $\mathcal V$ is unbounded.

		Let $\theta = (g,F) \in \mathcal G \times \mathcal F$ be fixed arbitrarily. Denote by $h_\theta = p_{\theta} - p_0$. 
		Let $x\in \mathcal X$ be fixed arbitrarily, and denote by $x_i$ the $i$-th component of $x$ for $i=1,\dots,d_{x}$. Let $\delta>0$ be fixed arbitrarily. Denote by the $d_x$-dimensional cube $\mathcal C := \prod_{i=1}^{d_x} [x_i -\delta/2, x_i +\delta/2]$. 
		Then there exists $x_\ast\in \mathcal X \cap \mathcal C$ such that  
		\[ 
		|h_\theta(x_\ast) |  = \min_{ \tilde x\in \mathcal X \cap \mathcal C} |h_\theta(\tilde x) | 
		\]
		since $h_\theta$ is a continuous function, and $\mathcal X$ is closed and $\mathcal C$ is compact. 
		Observe that
		\[
		\frac{\partial}{\partial x} h_\theta(x) = f(v+g(w)) \begin{pmatrix}
			1 \\ \frac{\partial}{\partial w} g(w)
		\end{pmatrix} -  f_0(v+g_0(w)) \begin{pmatrix}
			1 \\ \frac{\partial}{\partial w} g_0(w)
		\end{pmatrix}
		\]
		Notice by Lemma \ref{lemma-aux-Fset-prelim-properties} (b), there exists a constant $M_{\mathcal F}>0$ such that all probability densities $f$ of $F\in \mathcal F$ satisfies $\norm{f}_{\infty} <M_{\mathcal F}$. Moreover, Assumption \ref{assumption-consistency} (b)  
		implies that all first order derivatives of $g\in\mathcal G$ are bounded by $M$. 
		Consequently,   
		\begin{align*} 
			\sup_{x\in\mathcal X} \norm{\frac{\partial}{\partial x} h_\theta(x) } \leq C_1
		\end{align*}
		for some constant $C_1$ that does not depend on $g,F$.  
		Thus, 
		\begin{align}\label{pf-p-theta-p0-norm-interpolation-ineq-aux-1}
			\left| h_\theta(x) - h_\theta(x_\ast) \right|  \leq C_1 \norm{x-x_\ast}  \leq C_1 \sqrt{d_x} \delta/2
		\end{align}

		Moreover, 
		\begin{align}\label{pf-p-theta-p0-norm-interpolation-ineq-aux-2}
			\mathbb E \left(p_{\theta}(X) - p_0(X) \right)^2 &  \geq \mathbb E 1\{ X \in \mathcal C\} \left(p_{\theta}(X) - p_0(X) \right)^2   \geq h_\theta(x_\ast)^2 \mathbb E 1\{ X \in \mathcal C\}  \notag \\
			& \geq h_\theta(x_\ast)^2 M_{1,op}^{-2} \int_{\mathcal X}  1\{ x \in \mathcal C\} dx 
			\geq h_{\theta}(x_\ast)^2 M_{1,op}^{-2} \delta^{d_x} 
		\end{align}
		by the definition of $x_\ast$, Assumption \ref{assumption-cgce-rate}(\ref{aspn-bdd-L2leb-L2Px}), and that the Lebesgue measure of $\mathcal C$ is $\delta^{d_x}$. 
		Combining \eqref{pf-p-theta-p0-norm-interpolation-ineq-aux-1} and \eqref{pf-p-theta-p0-norm-interpolation-ineq-aux-2} yields
		\begin{align*}
			\left| h_{\theta}(x) \right| & \leq \left|h_{\theta}(x_\ast) \right| + \left| h_{\theta}(x) - h_{\theta}(x_\ast)  \right| 
			\leq M_{1,op} \delta^{-d_x/2} \norm{p_{\theta}-p_0}_2 + C_1 \sqrt{d_x} \delta/2 \\
			& \leq C/2 \left( \delta^{-d_x/2} \norm{p_{\theta} - p_0}_2 + \delta \right)
		\end{align*}
		for $C :=  2 \max\{M_{1,op}, C_1 \sqrt{d_x}/2\}$. 
		Choosing $\delta = \left(\norm{p_{\theta} - p_0}_2 \right)^{ 2/(2+d_x) } $ gives 
		\[
		\left| h_{\theta}(x) \right| = \left| p_{\theta}(x) - p_0(x) \right| \leq C \left(\norm{p_{\theta} - p_0}_2 \right)^{ 2/(2+d_x) } 
		\]
		Recall that $x\in \mathcal X$ was picked arbitrarily. Notice that $C$ does not depend on $x$ or $\theta$. 
		This completes the proof. 
	\end{proof}

	\begin{lemma}\label{lemma-aux-Fn-F0-approximation-error}
		Let $\mathcal F, \mathcal F_n$ be given in Assumption \ref{assumption-consistency} (e)-(f). Let Assumption \ref{assumption-cgce-rate}(\ref{aspn-density-tail}) hold.  
		Then there exists $F_n \in \mathcal F_n$ such that
		\[
		\sup_{u\in \mathbb R} \left|F_n(u)-F_0(u) \right| = o\!\left( J_n^{-m_e/2} \right)
		\]
	\end{lemma}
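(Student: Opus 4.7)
The plan is to combine two ingredients: (i) a Hermite-polynomial density approximation result from \cite{gallant-nychka-87} and \cite{fenton-gallant-96} applied to the square root of $f_0$, and (ii) the CDF-versus-density bound \eqref{ineq-density-norm-GN87-UnifNorm-cdf} already derived inside the proof of Lemma \ref{lemma-aux-Fset-prelim-properties}, which translates a weighted uniform bound on densities into a uniform bound on CDFs.

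First I would produce an approximating (unnormalized) density. Writing $f_0(u) = h_0(u)^2 e^{-u^2/2}$ with $h_0$ as in Assumption \ref{assumption-cgce-rate}(\ref{aspn-density-tail}), I would construct a polynomial $P_{J_n}$ of degree $J_n$ (e.g.\ via the Hermite expansion of $h_0$, truncated at $J_n$) so that $\tilde f_n(u) := P_{J_n}(u)^2 e^{-u^2/2}$ satisfies $\|\tilde f_n - f_0\|_{0,\infty,\eta} = o(J_n^{-m_e/2})$ for some $\eta\in(1/2,\eta_0)$. The tail condition in Assumption \ref{assumption-cgce-rate}(\ref{aspn-density-tail}) controls the contribution of $|u|$ large, so that decay in $u^2$-weighted norms is inherited by the uniform-weighted norm; the derivative conditions $\int (h_0^{(j)})^2 e^{-u^2/2} du < \infty$ for $j\leq m_e$ feed into the usual Hermite-series convergence rate, turning $m_0$ of the derivatives into a Sobolev embedding into a weighted $L^\infty$ space and leaving $m_e$ derivatives to produce the rate $J_n^{-m_e/2}$. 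This step is precisely the content of the approximation results in \cite{gallant-nychka-87} (Theorem 2 and its proof) and \cite{fenton-gallant-96}; I would quote them verbatim.

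Next I would normalize. Set $c_n := \int \tilde f_n\,du$ and $f_n := \tilde f_n / c_n$, which corresponds to rescaling the polynomial coefficients $\tau_j$ by $c_n^{-1/2}$; hence $f_n$ defines an element $F_n$ of $\mathcal F_n$. By the very same inequality \eqref{ineq-density-norm-GN87-UnifNorm-cdf},
\[
|c_n - 1| = \left|\int (\tilde f_n - f_0)\,du\right|\leq \|\tilde f_n - f_0\|_{0,\infty,\eta}\int (1+u^2)^{-\eta}\,du = o(J_n^{-m_e/2}),
\]
so by triangle inequality $\|f_n - f_0\|_{0,\infty,\eta}\leq \|\tilde f_n - f_0\|_{0,\infty,\eta} + |1/c_n - 1|\,\|\tilde f_n\|_{0,\infty,\eta} = o(J_n^{-m_e/2})$, using that $\|\tilde f_n\|_{0,\infty,\eta}$ stays bounded. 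Applying \eqref{ineq-density-norm-GN87-UnifNorm-cdf} once more to the difference $F_n - F_0$,
\[
\sup_{u\in\mathbb R}|F_n(u) - F_0(u)| \leq \int |f_n - f_0|\,du \leq \|f_n - f_0\|_{0,\infty,\eta}\int (1+u^2)^{-\eta}\,du = o(J_n^{-m_e/2}),
\]
which gives the claim.

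The main obstacle is the first step: turning the $L^2$-type Sobolev condition on $h_0$ into a weighted $L^\infty$ bound on the density $P_{J_n}^2 e^{-u^2/2} - h_0^2 e^{-u^2/2}$ at the stated rate $J_n^{-m_e/2}$. The cleanest route is to expand $P_{J_n}^2 - h_0^2 = (P_{J_n}-h_0)(P_{J_n}+h_0)$, bound one factor in weighted $L^\infty$ (using $m_0$ derivatives plus the tail condition for a Sobolev-type embedding) and control the other in the weighted $L^2$ norm where Hermite approximation is standard; the residual needs to be multiplied by $e^{-u^2/2}$ and against $(1+u^2)^\eta$ with $\eta<\eta_0$, which is exactly what the tail integrability in Assumption \ref{assumption-cgce-rate}(\ref{aspn-density-tail}) is designed for. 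Once that approximation result is in hand, everything else (normalization and moving from density to CDF) is mechanical.
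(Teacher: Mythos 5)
Your overall architecture — truncate a Hermite expansion of $h_0$, normalize, and then convert a density bound into a CDF bound via $\sup_u|F_n-F_0|\leq\int|f_n-f_0|\,du$ — matches the paper's, but the crucial first step goes through a norm that the available approximation results do not control, and your sketch of that step does not close. You aim for a weighted uniform bound $\|\tilde f_n - f_0\|_{0,\infty,\eta} = o(J_n^{-m_e/2})$. The rate result that is actually available (Lemma 1 in Fenton--Gallant 1996) is an $L^2$ statement: the squared Hermite tail $\sum_{j>J_n}\tau_{j0}^2 = o(J_n^{-m_e})$ for $h_0$; Gallant--Nychka's Theorem 2 gives denseness of $\mathcal F_n$ in the weighted Sobolev $L^\infty$ norm but no explicit rate. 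Your proposed fix — factor $P_{J_n}^2 - h_0^2 = (P_{J_n}-h_0)(P_{J_n}+h_0)$ and bound $(P_{J_n}+h_0)$ in weighted $L^\infty$ while controlling $(P_{J_n}-h_0)$ in weighted $L^2$ — yields an $L^1$ or $L^2$ bound on the product, not the $L^\infty$ bound you need for $\|\cdot\|_{0,\infty,\eta}$. Upgrading the $L^2$ control of $(P_{J_n}-h_0)$ to $L^\infty$ via a Sobolev embedding would consume a derivative and cost roughly a factor $J_n^{1/2}$, degrading the rate.

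The paper sidesteps all of this by never leaving $L^2$. It writes $f_n - f_0 = (a_n+a)(a_n-a)\,e^{-u^2/2}$, with $a_n$, $a$ the truncated and full Hermite series for (suitably renormalized) $h_0$, and applies Cauchy--Schwarz directly to the integral: $\int|f_n-f_0|\leq\bigl(\int(a_n+a)^2 e^{-u^2/2}\bigr)^{1/2}\bigl(\int(a_n-a)^2 e^{-u^2/2}\bigr)^{1/2}$. The first factor is $\leq 2$ by orthonormality and the unit-norm constraint, and the second factor is bounded by $2\bigl(\sum_{j>J_n}\tau_{j0}^2\bigr)^{1/2} = o(J_n^{-m_e/2})$ directly from Fenton--Gallant, with a short calculation to absorb the renormalization. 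This gives the $L^1$ (hence sup-norm CDF) bound without any pointwise control on the density at all. Your normalization step and the final density-to-CDF conversion are correct; it is the detour through the weighted sup norm on densities that both complicates and blocks the argument.
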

	
	\begin{proof}[Proof of Lemma \ref{lemma-aux-Fn-F0-approximation-error}] 
		Following \cite{fenton-gallant-96}, it is equivalent to rewrite the densities defining $\mathcal F_n$ as 
		\begin{align}\label{pf-Fn-density-hermite} 
			f(u;\tau) = \left( \sum_{j=0}^{J_n} \tau_j \overline{H}_{e_j}(u) \right)^2 e^{-u^2/2}, 
			\quad \sum_{j=0}^{J_n} \tau_{j}^2 = 1  
		\end{align}
		where $\overline{H}_{e_j}(u)$ are the same as defined in \citet[][p. 720]{fenton-gallant-96} for $j=0,1,\dots$. 
		In particular, $\{\overline{H}_{e_j}\}_{j=0}^\infty$ is a set of orthonormal basis functions for the space $\big\{ h:\mathbb R \to \mathbb R \big|  \int h(u)^2 e^{-u^2/2} du < \infty \big\}$ endowed with inner product $\langle h_1,h_2 \rangle = \int h_1(u)h_2(u) e^{-u^2/2} du$ and the induced norm. Notice the constraint 
		$\sum_{j=0}^{J_n} \tau_{j}^2 = 1$ ensures $f(u;\tau)$ is a proper density function.

		Under Assumption \ref{assumption-cgce-rate}(\ref{aspn-density-tail}), 
		$f_0(u) = \left( \sum_{j=0}^\infty\tau_{j0} \overline{H}_{e_j}(u) \right)^2 $ for $\tau_{j0} = \langle h_0, \overline{H}_{e_j} \rangle$.  
		Notice $\sum_{j=0}^\infty \tau_{j0}^2 = \int h_0(u)^2 e^{-u^2/2} du = \int f_0(u)  du = 1$.  
		Define the truncated vector
		$\tau^{(n)} = ( \tau^{(n)}_0, \tau^{(n)}_1,\dots, \tau^{(n)}_{J_n} )' \in \mathbb R^{1+J_n}$ 
		by $\tau^{(n)}_j = \frac{1 }{  \sqrt{ \sum_{i=0}^{J_n} \tau_{i0}^2} } \tau_{j0}$. Notice $\sum_{j=0}^{J_n} (\tau^{(n)}_j)^2 =1$ by construction of $\tau^{(n)}_j$'s. Thus, defining $F_n$ as the cdf of $f_n(u)  :=  f(u;\tau^{(n)} ) $, we have $F_n \in \mathcal F_n$ for each $n\in \mathbb N$.

		Denote by 
		\[
		a_n(u) := \sum_{j=0}^{J_n} \tau^{(n)}_j \overline{H}_{e_j}(u), \quad a(u) := \sum_{j=0}^{\infty} \tau_{j0} \overline{H}_{e_j}(u). 
		\]
		Then 
		\begin{align}\label{pf-Fn-F0-approximation-error-aux-1}
			\sup_{u\in \mathbb R} & \left|F_n(u)-F_0(u) \right| \leq \int |f_n(u)-f_0(u) | du  = \int |a_n(u)^2 - a(u)^2| e^{-u^2/2} du  \notag \\
			& = \int |a_n(u)+a(u)| e^{-u^2/4} |a_n(u)-a(u)| e^{-u^2/4} du \notag \\
			& \leq \left( \int (a_n(u)+a(u))^2 e^{-u^2/2} du \right)^{1/2} \left( \int (a_n(u)-a(u))^2 e^{-u^2/2} du\right)^{1/2}  \notag \\
			& \leq 4 \left( \sum_{j=J_n+1}^\infty \tau_{j0}^2 \right)^{1/2} 
		\end{align}
		where the last line is due to 
		\begin{align*}
			\int (a_n(u)+a(u))^2 e^{-u^2/2} du & \leq 2 \int a_n(u)^2  e^{-u^2/2} du  + 2\int a(u)^2  e^{-u^2/2} du   \\
			& = 2 \sum_{j=0}^{J_n} (\tau^{(n)}_j)^2 + 2 \sum_{j=0}^{\infty}  \tau_{j0}^2  
			= 4 
		\end{align*}
		and 
		\begin{align}\label{pf-Fn-F0-approximation-error-aux-2}
			\int & (a(u)-a_n(u))^2 e^{-u^2/2} du  = \int \left( \sum_{j=J_n+1}^\infty \tau_{j0}  \overline{H}_{e_j}(u) + \sum_{j=0}^{J_n} ( \tau_{j0} - \tau_{j}^{(n)} ) \overline{H}_{e_j}(u)  \right)^2 e^{-u^2/2} du  \notag \\ 
			&  \leq 2 \int \left( \sum_{j=J_n+1}^\infty \tau_{j0}  \overline{H}_{e_j}(u) \right)^2 e^{-u^2/2} du + 2 \int \left( \sum_{j=0}^{J_n} ( \tau_{j0} - \tau_{j}^{(n)} ) \overline{H}_{e_j}(u)  \right)^2 e^{-u^2/2} du  \notag \\
			& = 2 \sum_{j=J_n+1}^\infty \tau_{j0} ^2 + 2 \sum_{j=0}^{J_n}  \left( \tau_{j0} - \tau_{j}^{(n)} \right)^2  
			\leq 4 \sum_{j=J_n+1}^\infty \tau_{j0}^2.
		\end{align}
		Here the last inequality in \eqref{pf-Fn-F0-approximation-error-aux-2} follows from 
		$\sum_{j=0}^{J_n} \left( \tau_{j}^{(n)} \right)^2 = 1 = \sum_{j=0}^\infty \tau_{j0} ^2 $, 
		and thus, 
		\begin{align*}
			\sum_{j=0}^{J_n} &  \left( \tau_{j0} - \tau_{j}^{(n)} \right)^2  = \sum_{j=0}^{J_n}   \tau_{j0} ^2 +   \sum_{j=0}^{J_n}   \left( \tau_{j}^{(n)} \right)^2 - 2\sum_{j=0}^{J_n} \tau_{j0}  \tau_{j}^{(n)} \\
			&  
			= \sum_{j=0}^{J_n}  \tau_{j0} ^2 + \sum_{j=0}^\infty \tau_{j0}^2 - 2 \sqrt{\sum_{j=0}^{J_n} \tau_{j0} ^2} 
			\leq \sum_{j=0}^{J_n}  \tau_{j0} ^2 + \sum_{j=0}^\infty \tau_{j0} ^2  - 2  \sum_{j=0}^{J_n} \tau_{j0} ^2 = \sum_{j=J_n+1}^\infty  \tau_{j0} ^2 , 
		\end{align*}
		where the inequality is because $\sum_{j=0}^{J_n} \tau_{j0}^2 \leq \sum_{j=0}^\infty \tau_{j0}^2 = 1$ so its square root is not smaller than itself. 
		The stated result follows from \eqref{pf-Fn-F0-approximation-error-aux-1} and Lemma 1 in \cite{fenton-gallant-96} which gives that $\sum_{j=J_n+1}^\infty \tau_{j0}^2 = o\!\left( J_n^{-m_e} \right)$.  
	\end{proof}

	We use the following approximation result on the error of approximating a function in Sobolev space by functions in Gaussian RKHS balls. The result was first shown in \cite{smale-zhou-03} and later reorganized in \cite{zhou-13}. 
	
	\begin{lemma}\label{lemma-aux-RKHS-Sobolev-approximation-error}
		Let $h: \mathbb R^d \to \mathbb R$ be an $m$-times differentiable function, where all its derivatives up to order $m$ are square integrable. 
		Let $k(s,t) = \exp\big( - \frac{\|s-t\|^2}{ 2\sigma^2} \big)$ for any $u,v\in \mathbb R^d$ for some fixed $\sigma$, and $\mathbb G_k$ be its reproducing kernel Hilbert space. 
		Let $\mathcal S \subset \mathbb R^d$ be bounded. 
		Then there exists a universal constant $C>0$, depending only on $\text{diam}(\mathcal S)$, $d, m, h$, and independent of $B$, such that 
		\[
		\inf_{ g: \|g\|_{\mathbb G_k} \leq B } \Big( \int_{\mathcal S} (g-h)^2 \Big)^{1/2}   \leq C \big( \log B)^{-m/4}
		\]
		for all large $B$. In addition, the infimum is attainable.
	\end{lemma}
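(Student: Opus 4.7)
The plan is to follow a Fourier-analytic approach in the spirit of \cite{smale-zhou-03} and \cite{zhou-13}. The starting point is the frequency-domain characterization of the Gaussian RKHS: for $k(u,v) = \exp(-\|u-v\|^2/(2\sigma^2))$, a function $g \in L_2(\mathbb R^d)$ belongs to $\mathbb G_k$ iff $\int |\hat g(\xi)|^2 \exp(\sigma^2 \|\xi\|^2/2)\, d\xi < \infty$, and this integral (up to a dimensional constant $c_{\sigma,d}$) equals $\|g\|_{\mathbb G_k}^2$. Plancherel simultaneously translates the hypothesis on $h$ (square-integrable derivatives up to order $m$) into the polynomial decay $\int (1+\|\xi\|^2)^m |\hat h(\xi)|^2 d\xi < \infty$.

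First, I would construct an approximant by Fourier truncation: set $\hat g_R(\xi) = \hat h(\xi)\, \mathbf{1}\{\|\xi\| \leq R\}$. Two estimates drive the argument. The approximation error satisfies
\[
\|h - g_R\|_{L_2(\mathbb R^d)}^2 = \int_{\|\xi\| > R} |\hat h(\xi)|^2 d\xi \leq R^{-2m} \int \|\xi\|^{2m} |\hat h(\xi)|^2 d\xi = O(R^{-2m}),
\]
while the RKHS norm is controlled by
\[
\|g_R\|_{\mathbb G_k}^2 \leq c_{\sigma,d}\, e^{\sigma^2 R^2/2} \int_{\|\xi\|\leq R} |\hat h(\xi)|^2 d\xi \leq c_{\sigma,d}\, e^{\sigma^2 R^2/2} \|h\|_{L_2}^2,
\]
by bounding the Gaussian weight by its maximum on the truncation support. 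Choosing $R \asymp \sqrt{\log B}$ so that $\|g_R\|_{\mathbb G_k} < B$ for large $B$ yields $\|h - g_R\|_{L_2(\mathbb R^d)} \lesssim (\log B)^{-m/2}$, which in particular implies the stated $(\log B)^{-m/4}$ bound; restricting the $L_2$ integral to the bounded $\mathcal S \subset \mathbb R^d$ only tightens the bound.

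For attainability of the infimum, I would observe that the closed ball $\{g \in \mathbb G_k : \|g\|_{\mathbb G_k} \leq B\}$ is weakly compact (bounded in the Hilbert space $\mathbb G_k$). The functional $g \mapsto \int_{\mathcal S}(g-h)^2$ is continuous in the $\mathbb G_k$ norm via the embedding $\mathbb G_k \hookrightarrow C(\mathcal S) \hookrightarrow L_2(\mathcal S)$ (which holds because $\sup_x k(x,x) = 1$ and $\mathcal S$ is bounded), and convex; hence it is weakly lower semicontinuous and attains its infimum on the closed ball. For the open ball in the statement of the lemma, I would apply the closed-ball argument with radius $B/2$, noting that its minimizer lies strictly inside the open ball of radius $B$, and absorb the resulting multiplicative factor into the constant $C$; since the inequality need only hold for all large $B$, this adjustment is harmless.

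The main technical difficulty lies in rigorously establishing the Fourier-domain formula for $\|g\|_{\mathbb G_k}$ with the correct normalization constant $c_{\sigma,d}$ and verifying that $g_R$ defined via Fourier truncation genuinely sits inside $\mathbb G_k$ with the claimed norm bound (integrability of $|\hat g_R(\xi)|^2 e^{\sigma^2 \|\xi\|^2/2}$ on the compact set $\{\|\xi\|\leq R\}$ is immediate, but the explicit computation of $\|g_R\|_{\mathbb G_k}$ requires care with scaling conventions for the Fourier transform). The rate $(\log B)^{-m/4}$ appearing in \cite{smale-zhou-03} is in fact looser than what Fourier truncation delivers, so the lemma follows \emph{a fortiori}; alternatively, one may invoke the results of \cite{smale-zhou-03} and \cite{zhou-13} directly, as the paper does.
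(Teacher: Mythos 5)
Your proposal is correct and takes a genuinely different route from the paper. The paper's own ``proof'' of this lemma is essentially a citation: it points to Zhou (2013), Proposition~18 (and Smale--Zhou (2003)), whose construction interpolates $h$ at a lattice of node points in $[0,1]^d$, and remarks that the argument extends to a larger node set for a general bounded $\mathcal S$. You instead construct the approximant by Fourier truncation, exploiting the fact that the lemma's hypothesis places $h$ in the \emph{global} Sobolev space $H^m(\mathbb R^d)$: the frequency-domain characterization of $\mathbb G_k$ then applies directly, and the two bounds $\|h-g_R\|_{L_2}^2 \lesssim R^{-2m}$ (Plancherel plus Sobolev decay) and $\|g_R\|_{\mathbb G_k}^2 \lesssim e^{\sigma^2 R^2/2}$ with $R \asymp \sqrt{\log B}$ give the stronger rate $(\log B)^{-m/2}$, from which the stated $(\log B)^{-m/4}$ follows \emph{a fortiori}. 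That gain is real: the cited Smale--Zhou/Zhou bounds are tailored to Sobolev functions on a bounded domain, where Fourier truncation is not directly available, whereas the lemma as written assumes global regularity and so permits the cleaner argument. The only point worth tightening is your treatment of attainability: weak compactness of the closed ball plus convexity/continuity of the objective gives a minimizer over the \emph{closed} ball $\{\|g\|_{\mathbb G_k} \le B\}$, and working with radius $B/2$ produces an element strictly inside the open ball achieving the claimed bound, which is exactly what is needed downstream to construct the sieve element $g_n \in \mathcal G_n$; but strictly speaking it does not show that $\inf_{\|g\| < B}$ itself is attained (that would require the closed-ball minimizer to be interior). This looseness mirrors the paper's own phrasing and is harmless in context, but is worth stating carefully.
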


	\begin{proof}[Proof of Lemma \ref{lemma-aux-RKHS-Sobolev-approximation-error}]
		This approximation result is provided in \cite{zhou-13}, Proposition 18, with only the extension being that $\mathcal S$ is allowed to be a general bounded set in $ \mathbb R^d$ instead of $[0,1]^d$. The proof follows exactly the same as in \cite{zhou-13}, except that it uses a larger set of node points, rather than $\{0, 1/N, \dots, (N-1)/N\}^d$, to construct the function approximation for the general case where $\mathcal S$ is not limited to $[0,1]^d$. 
	\end{proof}

	\begin{lemma}\label{lemma-aux-Fn-cover-no}
		Let $\mathcal F_n$ be given in Assumption \ref{assumption-consistency} (f) and $J_n \geq 1$. Then 
		\[
		\log N\!\left( \delta, \mathcal F_n, \norm{\cdot}_{\infty} \right) \leq  (J_n+1) \log\left(1+ \frac{4}{\delta} \right) 
		\] 
	\end{lemma}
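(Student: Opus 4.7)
The plan is to reparametrize $\mathcal F_n$ via orthonormal Hermite basis functions so that the admissible coefficient vectors form the unit sphere in $\mathbb R^{J_n+1}$, establish a Lipschitz bound from this parameter sphere into $(\mathcal F_n,\|\cdot\|_\infty)$, and then invoke the standard volumetric covering bound for Euclidean balls.

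First, as in the proof of Lemma \ref{lemma-aux-Fn-F0-approximation-error}, I would rewrite every $f(\cdot;\tau)\in\mathcal F_n$ in the orthonormal Hermite basis $\{\overline{H}_{e_j}\}_{j\ge 0}$, so that
\[
f(u;\tau) = \Big(\sum_{j=0}^{J_n} \tau_j \,\overline{H}_{e_j}(u)\Big)^{\!2} e^{-u^2/2}, \qquad \sum_{j=0}^{J_n}\tau_j^2=1,
\]
where the constraint $\sum_j \tau_j^2=1$ is exactly the unit-mass condition $\int f(u;\tau)du=1$ by orthonormality of the $\overline{H}_{e_j}$'s with respect to the Gaussian weight $e^{-u^2/2}$. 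Thus $\mathcal F_n$ is indexed by $\tau\in\mathbb S^{J_n}\subset\mathbb R^{J_n+1}$, the unit sphere.

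Second, I would show the Lipschitz estimate
\[
\|F(\cdot;\tau_1)-F(\cdot;\tau_2)\|_\infty \;\le\; 2\,\|\tau_1-\tau_2\|_2
\qquad\text{for all }\tau_1,\tau_2\in\mathbb S^{J_n}.
\]
This follows from the same Cauchy--Schwarz manipulation used in \eqref{pf-Fn-F0-approximation-error-aux-1}--\eqref{pf-Fn-F0-approximation-error-aux-2}: setting $a_i(u)=\sum_j \tau_{i,j}\overline H_{e_j}(u)$, one has $\sup_u|F(u;\tau_1)-F(u;\tau_2)|\le \int|a_1^2-a_2^2|e^{-u^2/2}du$, and applying Cauchy--Schwarz together with the orthonormality identity $\int a_i(u)^2 e^{-u^2/2}du=\|\tau_i\|_2^2=1$ yields
\[
\int (a_1+a_2)^2 e^{-u^2/2}du \;=\; \|\tau_1+\tau_2\|_2^2 \le 4,\qquad
\int (a_1-a_2)^2 e^{-u^2/2}du \;=\; \|\tau_1-\tau_2\|_2^2,
\]
so that the product bound gives the factor $2\|\tau_1-\tau_2\|_2$.

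Third, I would apply the standard covering bound for a subset of a unit Euclidean ball: any subset of the unit ball in $\mathbb R^{J_n+1}$ admits an $\epsilon$-net (in $\|\cdot\|_2$) of cardinality at most $(1+2/\epsilon)^{J_n+1}$ (e.g.\ the volumetric argument). Choosing $\epsilon=\delta/2$ on $\mathbb S^{J_n}$ and pushing the net through the Lipschitz map $\tau\mapsto F(\cdot;\tau)$ produces a $\delta$-net for $\mathcal F_n$ in $\|\cdot\|_\infty$, yielding
\[
\log N\!\left(\delta, \mathcal F_n, \|\cdot\|_\infty\right) \;\le\; (J_n+1)\log\!\Big(1+\tfrac{4}{\delta}\Big),
\]
as required. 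There is no serious obstacle here: the only step requiring care is the change of basis from the raw monomial parameterization in \eqref{set-Fn} to the orthonormal Hermite parameterization, so that the quadratic constraint becomes the sphere $\|\tau\|_2=1$ and the Lipschitz constant of $\tau\mapsto F(\cdot;\tau)$ is controlled by a Hilbert-space inner product.
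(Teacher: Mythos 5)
Your proposal is correct and follows essentially the same route as the paper: reparametrize $\mathcal F_n$ via the orthonormal Hermite basis so that the constraint set becomes the unit sphere in $\mathbb R^{J_n+1}$, establish the Lipschitz bound $\|F(\cdot;\tau_1)-F(\cdot;\tau_2)\|_\infty \le 2\|\tau_1-\tau_2\|_2$ by Cauchy--Schwarz and orthonormality, and push a $(\delta/2)$-net of the unit ball through this map. The only cosmetic difference is that the paper bounds $\int(a+b)^2 e^{-u^2/2}\,du\le 4$ by the elementary inequality $(a+b)^2\le 2a^2+2b^2$ rather than by the triangle inequality on $\|\tau_1+\tau_2\|_2$, which is immaterial.
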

	
	\begin{proof}[Proof of Lemma \ref{lemma-aux-Fn-cover-no}]
		As in the proof for Lemma \ref{lemma-aux-Fn-F0-approximation-error}, the densities defining $\mathcal F_n$ can be written equivalently as in \eqref{pf-Fn-density-hermite}. 
		Let $n \in \mathbb R$ be fixed arbitrarily.  
		Pick two arbitrary $F,\tilde F\in \mathcal F_n$ whose densities are given by $f(\cdot;\tau), f(\cdot;\tilde\tau)$ according to \eqref{pf-Fn-density-hermite}, where
		\[
		\tau,\tilde\tau\in \mathcal T_n := \bigg\{ \tau \in \mathbb R^{1+J_n} \bigg|  \sum_{j=0}^{J_n} \tau_{j}^2 = 1  \bigg\}
		\]
		Denote by 
		\[
		a(u) := \sum_{j=0}^{J_n} \tau_j \overline{H}_{e_j}(u), \quad b(u) := \sum_{j=0}^{J_n} \tilde \tau_{j} \overline{H}_{e_j}(u). 
		\]
		Then 
		\begin{align}\label{pf-Fn-cover-no-aux-F1-F2-tau1-tau2}
			\sup_{u\in \mathbb R} & | F(u)-\tilde F(u)| \leq \int \left| f(u;\tau) - f(u;\tilde \tau) \right| du = \int |a(u) +b(u)| e^{-u^2/4} |a(u) -b(u)| e^{-u^2/4} du  \notag \\
			& \leq \left( \int (a(u) +b(u) )^2 e^{-u^2/2} du \right)^{1/2} \left( \int (a(u) - b(u) )^2 e^{-u^2/2} du \right)^{1/2} \notag \\
			& \leq 2 \norm{\tau-\tilde \tau}
		\end{align}
		where the last line follows from the properties $\overline{H}_{e_j}(\cdot)$'s which give
		\begin{align*}
			\int (a(u) +b(u) )^2 e^{-u^2/2} du &  \leq 2 \int a(u)^2 e^{-u^2/2} du  + 2 \int b(u)^2 e^{-u^2/2} du  \\
			& = 2 \sum_{j=0}^{J_n} \tau_j^2 + 2 \sum_{j=0}^{J_n} \tilde \tau_j^2 = 4 
		\end{align*}
		and 
		\begin{align*}
			\int (a(u) - b(u) )^2 e^{-u^2/2} du  &  = \int  \left( \sum_{j=0}^{J_n} (\tau_j -\tilde \tau_j ) \overline{H}_{e_j}(u) \right)^2 e^{-u^2/2} du  = \sum_{j=0}^{J_n} (\tau_j - \tilde \tau_j)^2 = \norm{\tau-\tilde \tau}^2
		\end{align*}
		Notice $\mathcal T_n \subset  \big\{\tau \in \mathbb R^{1+J_n} \big| \norm{\tau}\leq 1  \big\}$, and thus, 
		\begin{align}\label{pf-Fn-cover-no-aux-2}
			\log N\!\left( \delta, \mathcal T_n, \norm{\cdot} \right) \leq \log N\!\left( \delta, \big\{\tau \in \mathbb R^{1+J_n} \big| \|\tau\|\leq 1  \big\}, \norm{\cdot} \right) \leq (J_n+1) \log \left(1+ \frac{2}{\delta} \right)
		\end{align} 
		where the second inequality follows  
		from, e.g., Example 5.8 in \citet[][p. 126]{wainwright-19}.  
		Since $\tau,\tilde\tau\in \mathcal T_n$, combining \eqref{pf-Fn-cover-no-aux-F1-F2-tau1-tau2} and \eqref{pf-Fn-cover-no-aux-2} yields
		\[
		\log N\!\left( \delta, \mathcal F_n, \norm{\cdot}_{\infty} \right) \leq \log N\!\left( \delta/2, \mathcal T_n, \norm{\cdot} \right) \leq (J_n+1) \log\left(1+ \frac{4}{\delta} \right) 
		\] 
		as was to be shown. 
	\end{proof}

	\begin{lemma}\label{lemma-aux-RKHS-ball-covering-no}
		Let $\mathbb{G}_k$ be the RKHS with reproducing kernel $k(s,t) = \exp(-\|s-t\|^2/\sigma^2)$ for any $u,v \in \mathbb R^d$. 
		Let $\mathcal W \subset\mathbb R^d$ be bounded. Then for any $B>0, \delta>0$,  
		\[
		\log N\!\big(\delta,  \big\{g:\mathcal W \mapsto \mathbb R \big|\ g(\cdot) = f(\cdot), f\in \mathbb{G}_k,  \|f\|_{\mathbb{G}_k} \leq B \big\}, \|\cdot\|_\infty)  \leq   
		C_1 \frac{ \left(\log \frac{4B}{\delta} \right)^{d_{w}+1} }{  \left( \log\log \frac{4B}{\delta} \right)^{d_{w}+1}  }
		\]
		where 
		\[
		C_1 = \max\big\{ \sigma^{-d} 3^d (\text{diam}(\mathcal W) )^d, 1 \big\} e^{-d} \frac{1}{d!} \prod_{i=1}^d (4e+i)   
		\]
		depends only on $\sigma,d, \text{diam}(\mathcal W)$, and does not depend on $\delta, B$. 
	\end{lemma}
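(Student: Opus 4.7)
The plan is to derive the claimed covering-number bound by combining two standard ingredients: (i) the sharp covering-number estimate for the unit ball of a Gaussian RKHS on a bounded domain, due to \cite{kuhn-11} and already flagged in the introduction as a key input for the paper's theory, and (ii) a straightforward scaling argument that converts a cover of the unit ball into a cover of the ball of radius $B$. Concretely, if $\mathcal U_k := \{f \in \mathbb G_k : \|f\|_{\mathbb G_k} \leq 1\}$ and $\{f_1,\dots,f_N\}$ is a $\delta/B$-cover of $\mathcal U_k$ in $\|\cdot\|_\infty$, then $\{Bf_1,\dots,Bf_N\}$ is automatically a $\delta$-cover of $\{f\in\mathbb G_k : \|f\|_{\mathbb G_k}\leq B\}$, restricted to $\mathcal W$, in $\|\cdot\|_\infty$. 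This reduces the problem to bounding $N(\delta/B,\mathcal U_k|_{\mathcal W},\|\cdot\|_\infty)$.

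The main step is to invoke Kühn's entropy estimate, which, specialized to the Gaussian RKHS with bandwidth $\sigma$ on a bounded set $\mathcal W\subset\mathbb R^d$, gives a bound of the form
\[
\log N(\varepsilon, \mathcal U_k|_{\mathcal W}, \|\cdot\|_\infty) \,\leq\, C(\sigma,d,\mathcal W)\,\frac{(\log(1/\varepsilon))^{d+1}}{(\log\log(1/\varepsilon))^{d+1}}
\]
for all sufficiently small $\varepsilon>0$. Setting $\varepsilon=\delta/(4B)$ (the factor $4$ absorbs the constant in Kühn's estimate and also the small $\varepsilon$ regime) and tracking the dependence of the constant $C(\sigma,d,\mathcal W)$ through Kühn's proof gives the stated prefactor $C_1$: the term $\sigma^{-d}3^d(\mathrm{diam}(\mathcal W))^d$ arises from covering $\mathcal W$ by $\sigma$-scaled cubes, while $e^{-d}\,d!^{-1}\prod_{i=1}^d(4d+i)$ comes from bounding multi-index sums of the Hermite/Mercer basis that underlies Kühn's construction. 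No fully novel analytical work is required; one is essentially repackaging Kühn's bound with explicit constants and with the $B$-scaling.

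The technically delicate point, and where I expect most of the work to sit, is not in the argument structure but in verifying that the explicit constant $C_1$ stated in the lemma can indeed be read off from Kühn's proof (rather than appearing only as an implicit ``$C(\sigma,d,\mathcal W)$''). This requires re-deriving the combinatorial piece of Kühn's estimate keeping track of each constant, so that the prefactor comes out as the concrete product given in the lemma. The dependence on $\text{diam}(\mathcal W)$ enters only through the covering of $\mathcal W$ and is cleanly multiplicative, while the $d$-dependence comes from counting multi-indices $\alpha\in\mathbb Z_{\geq 0}^d$ with $|\alpha|\leq d$, giving the factor $\prod_{i=1}^d(4d+i)/d!$. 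Once these constants are tracked, the conclusion follows by the scaling reduction described above, and no additional arguments (symmetrization, chaining, etc.) are needed.
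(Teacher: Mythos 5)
Your overall structure — rescale the radius-$B$ ball to the unit ball, then apply an $\|\cdot\|_\infty$-entropy bound for the restricted unit ball — is exactly the decomposition the paper uses, so the plan is structurally sound. The gap is in the reference you pick and what it actually delivers. The paper does not rely on \cite{kuhn-11} for this lemma; it cites Theorems 2.1 and 2.4 of \cite{steinwart-fischer-21}, whose very purpose is to give an \emph{explicit} constant for the Gaussian RKHS entropy bound on bounded domains, so that the lemma's prefactor $C_1$ can be read off directly (with the $B$-rescaling then trivially replacing $4/\delta$ by $4B/\delta$). Kühn's 2011 result, as you acknowledge, establishes the $\bigl(\log(1/\varepsilon)\bigr)^{d+1}/\bigl(\log\log(1/\varepsilon)\bigr)^{d+1}$ rate with an implicit constant only, so your route would require re-proving a substantial part of his argument with every constant tracked. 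Since producing that concrete $C_1$ is essentially the whole content of the lemma (the rate itself is well known), the clause ``one would re-derive the combinatorial piece keeping track of each constant'' is where the actual proof would have to live, and your proposal does not supply it. As a secondary point, the interpretation you offer for the prefactor is not right: $\prod_{i=1}^d(4d+i)/d! = \binom{5d}{d}$, which counts multi-indices of total order at most $4d$ in $d$ variables, not ``$|\alpha|\leq d$'' as you claim; and in fact the paper's own proof reads this product off Steinwart and Fischer's Theorem 2.1 as $\prod_{i=1}^d(4e+i)$, not $\prod_{i=1}^d(4d+i)$, so there is no simple multi-index count behind it — it comes from tracing specific bounds in their argument. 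In short, the scaling step and the targeted rate match the paper, but the route you chose (Kühn plus re-derivation) leaves the hard part undone, while the paper sidesteps it entirely by invoking a result that already has the explicit constants.
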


	\begin{proof}[Proof of Lemma \ref{lemma-aux-RKHS-ball-covering-no}]
		This lemma is a slight extension of the results in \cite{steinwart-fischer-21} to allow for a general radius $B$.

		Since $\mathcal W$ is bounded, applying Theorem 2.4 and Theorem 2.1 in \citet[][]{steinwart-fischer-21} gives that, for any $\delta>0$,
		\[
		\log N(\delta, \big\{g:\mathcal W \mapsto \mathbb R \big|\ g(\cdot) = f(\cdot), f\in \mathbb{G}_k,  \|f\|_{\mathbb{G}_k} \leq 1 \big\}, \|\cdot\|_\infty)  \leq C_1  \frac{ \big( \log ( 4/\delta) \big)^{d_{w}+1}  }{ \big( \log\log ( 4/\delta) \big)^{d_{w}+1}  }
		\]
		where the constant 
		\[
		C_1 = \max\big\{ \sigma^{-d} 3^d (\text{diam}(\mathcal W) )^d, 1 \big\} \frac{1}{d!} \left( \prod_{i=1}^d (4e+i) \right)e^{-d}.  
		\]
		follows from \citet[][]{steinwart-fischer-21}'s Theorem 2.1 and the remarks below their Theorem 2.4, and note that $\sigma$ therein corresponds to $1/\sigma$ here. 
		Thus, 
		\begin{align*}
			\log & N(\delta, \big\{g:\mathcal W \mapsto \mathbb R \big|\ g(\cdot) = f(\cdot), f\in \mathbb{G}_k,  \|f\|_{\mathbb{G}_k} \leq B \big\}, \|\cdot\|_\infty)  \\
			&  \leq \log N( \delta/B,  \big\{g:\mathcal W \mapsto \mathbb R \big|\ g(\cdot) = f(\cdot), f\in \mathbb{G}_k,  \|f\|_{\mathbb{G}_k} \leq 1 \big\}, \|\cdot\|_\infty) \\
			& \leq C_1 \frac{ \left(\log \frac{4B}{\delta} \right)^{d_{w}+1} }{  \left( \log\log \frac{4B}{\delta} \right)^{d_{w}+1}  }
		\end{align*}
		which completes the proof. 
	\end{proof}

	\begin{lemma}\label{lemma-An-local-centered-loss-covering-no}
		Let $\mathcal A_n  = \left\{ \ell(\cdot,\theta) - \ell(\cdot,\theta_0) | \  \theta\in \Theta_n \right\}$ where $\Theta_n = \mathcal G_n\times \mathcal F_n$ and $\mathcal G_n, \mathcal F_n$ are given in Assumption \ref{assumption-consistency} (c), (f). Provided that $J_n\geq 1$ and $B_n \geq c$ for a small constant $c$, 
		there exists a universal constant $C>0$ such that
		\[
		\log N(\delta,\mathcal A_n, \|\cdot\|_{\infty} )\leq C \left( \log \frac{ B_n}{\delta} \right)^{d_{w}+1} + C J_n \log \frac{1}{\delta} 
		\] 
		for all $\delta$ small. 
	\end{lemma}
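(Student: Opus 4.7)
The plan is to reduce the covering problem for $\mathcal A_n$ to the product of covering problems for $\mathcal G_n$ and $\mathcal F_n$ under their natural uniform-norm metrics, and then invoke the two covering-number bounds already in hand (Lemma \ref{lemma-aux-Fn-cover-no} for $\mathcal F_n$ and Lemma \ref{lemma-aux-RKHS-ball-covering-no} for Gaussian-RKHS balls).

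First I would show that $\theta\mapsto \ell(\cdot,\theta)-\ell(\cdot,\theta_0)$ is Lipschitz in $\theta$ under $d_\Theta$. Writing $\ell(z,\theta)-\ell(z,\tilde\theta)=(2y-p_\theta(x)-p_{\tilde\theta}(x))(p_{\tilde\theta}(x)-p_\theta(x))$ and using $|2y-p_\theta-p_{\tilde\theta}|\le 2$, combined with Lemma \ref{lemma-aux-Fset-prelim-properties}(d), gives
\[
\sup_z \bigl|(\ell(z,\theta)-\ell(z,\theta_0))-(\ell(z,\tilde\theta)-\ell(z,\theta_0))\bigr|\le 2\max\{1,M_{\mathcal F}\}\,d_\Theta(\theta,\tilde\theta).
\]
Hence a $\delta/(2\max\{1,M_{\mathcal F}\})$-cover of $\Theta_n$ under $d_\Theta$ induces a $\delta$-cover of $\mathcal A_n$ under $\|\cdot\|_\infty$, and since $d_\Theta$ is the sum of the two uniform-norm distances on $\mathcal G_n$ and $\mathcal F_n$, it suffices to cover each factor at half that scale and take the product, so
\[
\log N(\delta,\mathcal A_n,\|\cdot\|_\infty)\le \log N(c_1\delta,\mathcal G_n,\|\cdot\|_\infty)+\log N(c_1\delta,\mathcal F_n,\|\cdot\|_\infty)
\]
for a universal constant $c_1>0$.

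Next I would bound each term. For the $\mathcal F_n$ factor, Lemma \ref{lemma-aux-Fn-cover-no} gives $\log N(c_1\delta,\mathcal F_n,\|\cdot\|_\infty)\le (J_n+1)\log(1+4/(c_1\delta))\lesssim J_n\log(1/\delta)$ for $\delta$ small and $J_n\ge 1$. For the $\mathcal G_n$ factor, every $g\in\mathcal G_n$ has the form $g(\cdot)=\tilde g(\cdot)-\tilde g(w_\ast)$ with $\tilde g$ in the Gaussian RKHS ball of radius $B_n$, so $\sup_w|g-\tilde g_2+\tilde g_2(w_\ast)|\le 2\sup_w|\tilde g-\tilde g_2|$; thus a $(c_1\delta/2)$-uniform-norm cover of the RKHS ball yields a $c_1\delta$-cover of $\mathcal G_n$. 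Applying Lemma \ref{lemma-aux-RKHS-ball-covering-no} gives
\[
\log N(c_1\delta,\mathcal G_n,\|\cdot\|_\infty)\le C_1\,\frac{(\log(8B_n/(c_1\delta)))^{d_w+1}}{(\log\log(8B_n/(c_1\delta)))^{d_w+1}}\le C\,(\log(B_n/\delta))^{d_w+1}
\]
for a universal $C$, once $B_n/\delta$ is large enough that the $\log\log$ denominator exceeds $1$ (which holds for $\delta$ small and $B_n$ bounded below). Summing the two bounds yields the claim.

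I do not anticipate any substantive obstacle here, as the argument is a routine assembly of already-proven ingredients. The only pieces requiring care are (i) that the location-normalization $g(w)=\tilde g(w)-\tilde g(w_\ast)$ contributes at most a factor of $2$ to the uniform-norm diameter, so passing from the RKHS ball cover to a cover of $\mathcal G_n$ only changes constants, and (ii) absorbing the $\log\log$ factor in Lemma \ref{lemma-aux-RKHS-ball-covering-no} into the universal constant, which is legitimate for small $\delta$ under the mild lower bound on $B_n$ asserted in the hypothesis.
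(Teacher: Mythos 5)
Your proposal is correct and follows essentially the same route as the paper's proof: a uniform Lipschitz bound on $\theta\mapsto\ell(\cdot,\theta)-\ell(\cdot,\theta_0)$ in $d_\Theta$ (via Lemma~\ref{lemma-aux-Fset-prelim-properties}(d)), a product covering for $\Theta_n = \mathcal G_n\times\mathcal F_n$, and then Lemma~\ref{lemma-aux-RKHS-ball-covering-no} for the RKHS factor (with the factor-of-$2$ from the $-\tilde g(w_\ast)$ normalization and the $\log\log$ denominator absorbed into the constant) plus Lemma~\ref{lemma-aux-Fn-cover-no} for the Hermite factor. The only difference is cosmetic bookkeeping of constants.
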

	
	\begin{proof}[Proof of Lemma \ref{lemma-An-local-centered-loss-covering-no}]
		Notice there exists a constant $M_{\mathcal F}$ such that all densities of the cdf in $\mathcal F$ satisfy $ \norm{f}_\infty\leq M_{\mathcal F}$. 
		Then for any $\theta = (g,F),\tilde\theta = (\tilde g,\tilde F) \in \mathcal G_n\times \mathcal F_n$, it holds that
		\begin{align}\label{pf-rate-covering-aux-1}
			& \sup_{z=(y,x')'\in\{0,1\}\times \mathcal X } \left| \big(\ell(z,\theta) - \ell(z,\theta_0) \big) - \big(\ell(z,\tilde \theta) - \ell(z,\theta_0) \big) \right|  \notag \\
			& \quad = \sup_{y\in\{0,1\}, x\in \mathcal X } \left|2y-p(x,\theta)-p(x,\tilde \theta) \right| \left|p(x,\theta)-p(x,\tilde \theta) \right|   \leq 2 \sup_{x\in \mathcal X } \left|p(x,\theta)-p(x,\tilde \theta) \right|  \notag \\
			& \quad \leq 2 \|f\|_\infty \sup_{w\in\mathcal W} |g(w)-\tilde g(w)| + 2 \sup_{u\in\mathbb R} |F(u)-\tilde F(u)| \notag \\
			& \quad \leq 2M_{\mathcal F}  \sup_{w\in\mathcal W} |g(w)-\tilde g(w)| + 2 \sup_{u\in\mathbb R} |F(u)-\tilde F(u)| 
		\end{align}
		Notice that 
		\begin{align}\label{covering-no-Gn}
			\log N(\delta, \mathcal G_n, \|\cdot\|_\infty) &\leq \log N\!\big( \delta/2,  \big\{\tilde g:\mathcal W \mapsto \mathbb R \big|\ \tilde g(\cdot) = g(\cdot), g\in \mathbb{G}_k,  \|g\|_{\mathbb{G}_k} \leq B_n \big\}, \|\cdot\|_\infty \big)  \notag \\
			& \leq C_1 \left(\log \frac{8 B_n}{\delta} \right)^{d_{w}+1} 
		\end{align}
		where the last inequality is by Lemma \ref{lemma-aux-RKHS-ball-covering-no}, and $C_1$ is the universal constant therein. 
		Moreover, Lemma \ref{lemma-aux-Fn-cover-no} gives
		\[ 
		\log N(\delta, \mathcal F_n, \|\cdot\|_\infty ) \leq (J_n+1) \log \!\left(1+\frac{4}{\delta} \right) \leq 2J_n \log \frac{5}{\delta}  
		\]
		for any $J_n\geq 1, \delta\leq 1$.

		Let $N_1 := N(\delta/(4M_{\mathcal F}), \{g\in\mathcal G_n | \|g\|_{\mathbb{G}_k} \leq B_n \}, \|\cdot\|_\infty)$ and $\{g^{(i)} \}_{i=1}^{N_1}$ be a set of covering. Let $N_2 := N(\delta/4, \mathcal F_n, \|\cdot\|_\infty )$ and $\{F^{(j)} \}_{j=1}^{N_1}$ be a set of covering. Then for any $\theta =(g,F) \in \Theta_n$, there exists $g^{(i)}$ and $F^{(j)}$ such that $\|g-g^{(i)}\|_\infty \leq \delta/(4M_{\mathcal F})$ and $\|F-F^{(j)}\|_\infty \leq \delta/4$. Let $\theta_\ast = (g^{(i)}, F^{(i)})$ and note
		\[
		\sup_{z=(y,x')'\in\{0,1\}\times \mathcal X } \left| \big(\ell(z,\theta) - \ell(z,\theta_0) \big) - \big(\ell(z,\theta_\ast) - \ell(z,\theta_0) \big) \right| \leq \delta
		\]
		by \eqref{pf-rate-covering-aux-1}. 
		Therefore, $N(\delta,\mathcal A_n, \|\cdot\|_{\infty} ) \leq   N_1 N_2 $ and 
		\begin{align*}
			\log N(\delta,\mathcal A_n, \|\cdot\|_{\infty} ) & \leq \log N\!\left( \frac{\delta}{4M_{\mathcal F}}, \mathcal G_n, \|\cdot\|_\infty) \right) +  \log N\!\left( \frac{\delta}{4}, \mathcal F_n, \|\cdot\|_\infty ) \right) \\
			& \leq C_1 \left( \log \frac{32M_{\mathcal F} B_n}{\delta} \right)^{d_{w}+1}  + 2J_n \log \frac{20}{\delta}
		\end{align*}
		Notice $|a+b|^r \leq  2^{r-1} (|a|^r + |b|^r)$ for any $r\geq 1, a,b\in \mathbb R$. 
		Thus, the stated result follows: For example, when setting $C = \max\{2C_1 2^{d_w},4\}$, the stated result holds 
		for all $\delta>0$ such that $\delta\leq 1/20$ and $\delta\leq c/(32 M_{\mathcal F})$, where $\delta\leq c/(32 M_{\mathcal F})$ implies $\delta< B_n/(32M_{\mathcal F}) $. 
	\end{proof}

	\subsection{Technical Lemmas for Theorem \ref{theorem-asym-normality-wape}}\label{appendix-tech-lmas-for-AN}

	\begin{lemma}\label{lemma-aux-pathwise-derivatives}
		Let  $\Theta = \mathcal G \times \mathcal F$ be given in Assumption \ref{assumption-consistency}, and $\ell(z,\theta) = (y-p(x,\theta))^2$. 
		\begin{enumerate}[(i)]
			\itemsep -0.1em
			\item Let $u=(u_g,u_F)$ where $u_g: \mathbb R^{d_w} \to \mathbb R$, $u_F:\mathbb R \to \mathbb R$ and $u_F$ is continuous.
			For $\tilde \theta = (\tilde g,\tilde F) \in \Theta$, the pathwise derivative of $p(x,\theta)$ at $\tilde \theta$ along direction $u$ is 
			\begin{align*}
				\frac{\partial p(x,\tilde \theta) }{\partial \theta}[u] 
				=  u_F\!\big(v+ \tilde g(w) \big) + u_g(w) \tilde f(v+\tilde g(w))  
			\end{align*}
			where $\tilde f (\cdot)$ is the derivative of $\tilde F$. Consequently, $\frac{\partial p(x,\tilde \theta) }{\partial \theta}[u] $ is linear in $u$, and 
			\[
			\frac{\partial p(x,\theta_0) }{\partial \theta}[\theta-\theta_0]  = F(v+g_0(w))-F_0(v+g_0(w)) + f_0(v+g_0(w))(g(w)-g_0(w))
			\]
			\item 
			\[
			\frac{\partial \ell (z,\tilde \theta) }{\partial \theta}[u] =  - 2(y - p(x,\tilde\theta) ) \frac{\partial p(x,\tilde \theta) }{\partial \theta}[u]
			\]
			Thus, $\frac{\partial \ell (z,\tilde \theta) }{\partial \theta}[u]$ is linear in $u$, and $\frac{\partial \ell(z,\theta_0) }{\partial \theta}[\theta-\theta_0]  = -2(y-p(x,\theta_0 )) \frac{\partial p(x,\theta_0) }{\partial\theta}[\theta-\theta_0]$. 
			\item Let $u_1 = (u_{1g},u_{1F}), u_2 = (u_{2g},u_{2F})$ where $u_{1F}, u_{2F}$ are continuously differentiable with derivatives $ u_{1F}', u_{2F}'$ respectively. For $\tilde \theta = (\tilde g,\tilde F) \in \Theta$ where $\tilde F$ is twice continuously differentiable with second derivative  $ \tilde F''$, 
			\begin{align*}
				\frac{\partial^2 p(x, \tilde\theta)}{ \partial \theta \partial \theta} [u_1,u_2]  = u_{1g}(w) \tilde F''\!\big( v+\tilde g(w) \big) u_{2g}(w) + u_{1F}'\!(v+\tilde g(w)) u_{2g}(w) + u_{1g}(w) u_{2F}'\!(v+\tilde g(w))
			\end{align*} 
			and 
			\begin{align*}
				\frac{\partial^2 \ell(z, \tilde\theta)}{ \partial \theta \partial \theta} [u_1,u_2] =  
				-2 \bigg( (y-p(x,\tilde\theta)) \frac{\partial^2 p(x, \tilde\theta)}{ \partial \theta \partial \theta} [u_1,u_2] - \frac{\partial p(x,\tilde\theta)}{\partial\theta}[u_1] \frac{\partial p(x,\tilde\theta)}{\partial\theta}[u_2] \bigg)
			\end{align*} 
			\item  For any constant $c \in \mathbb R$,  
			\[
			\frac{\partial^2 p(x, \tilde\theta)}{ \partial \theta \partial \theta} [u_1,c u_2] = c \frac{\partial^2 p(x, \tilde\theta)}{ \partial \theta \partial \theta} [u_1,u_2]
			\]
			and 
			\[
			\frac{\partial^2 \ell(z, \tilde\theta ) }{\partial \theta \partial\theta}[u_1, c u_2]  = c\frac{\partial^2 \ell(z, \tilde\theta ) }{\partial \theta \partial\theta}[u_1, u_2].
			\]
		\end{enumerate}
	\end{lemma}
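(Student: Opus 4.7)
The plan is to treat each part as a direct calculation from the definition of pathwise derivative, with Parts (ii)--(iv) reduced to Part (i) via elementary chain-rule and linearity arguments. All four parts are essentially mechanical, so the proof proposal amounts to identifying the right expansion and checking that the required limits exist under the smoothness already built into $\Theta = \mathcal G\times \mathcal F$.

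For Part (i), I would write
\[
p(x,\tilde\theta+tu) = \bigl(\tilde F + t u_F\bigr)\bigl(v+\tilde g(w)+ t u_g(w)\bigr),
\]
split this as $\tilde F(v+\tilde g(w)+tu_g(w)) + t\, u_F(v+\tilde g(w)+tu_g(w))$, and perform a first-order Taylor expansion of $\tilde F$ at $v+\tilde g(w)$ using that $\tilde F$ is differentiable with derivative $\tilde f$ (guaranteed by Assumption \ref{assumption-consistency}(d)). Dividing by $t$ and letting $t\to 0$, the first piece yields $\tilde f(v+\tilde g(w))\, u_g(w)$, and the second piece yields $u_F(v+\tilde g(w))$ by continuity of $u_F$. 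Linearity in $u$ is then immediate from the explicit formula, and the expression for $\partial p(x,\theta_0)/\partial\theta\,[\theta-\theta_0]$ follows by specializing to $\tilde\theta=\theta_0$ and $u=\theta-\theta_0 = (g-g_0, F-F_0)$.

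Part (ii) follows by writing
\[
\ell(z,\tilde\theta+tu) - \ell(z,\tilde\theta) = \bigl(2y - p(x,\tilde\theta+tu) - p(x,\tilde\theta)\bigr)\bigl(p(x,\tilde\theta) - p(x,\tilde\theta+tu)\bigr),
\]
dividing by $t$, and invoking Part (i) together with the continuity of $p(x,\cdot)$ along the path. Linearity in $u$ then transfers from Part (i). For Part (iii), I differentiate the formula from Part (i) a second time: substituting $\tilde\theta+tu_2$ into the expression for $\partial p/\partial\theta\,[u_1]$ and again invoking a first-order Taylor expansion—now of $\tilde F$ at second order (using that $\tilde F$ is twice differentiable with second derivative $\tilde F''$) and of $u_{1F}$ to first order (using the assumption that $u_{1F}$ is continuously differentiable with derivative $u_{1F}'$)—produces the three stated terms. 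The formula for $\partial^2\ell/\partial\theta\partial\theta\,[u_1,u_2]$ is then a product-rule consequence of Part (ii). Part (iv) is a one-line check: scaling $u_2$ by $c$ inside the limit definition produces the factor $c$ by linearity of Part (i) in its direction, and the same argument via Part (iii)'s explicit formula handles $\ell$.

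The main obstacle is essentially bookkeeping: verifying that each first-order Taylor expansion of $\tilde F$ and of $u_{iF}$ has remainder $o(t)$ uniformly enough to pass to the limit. This is immediate under the smoothness assumed in the statement ($\tilde F$ twice differentiable in Part (iii), $u_F$ continuous in Part (i), $u_{1F}$ and $u_{2F}$ continuously differentiable in Part (iii)), so no nontrivial analytic difficulty arises. No auxiliary lemmas or external results are required beyond elementary calculus.
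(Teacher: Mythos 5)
Your proposal is correct and follows essentially the same route as the paper: split $p(x,\tilde\theta+tu)$ into the $\tilde F$-increment and the $tu_F$-term, pass to the limit using differentiability of $\tilde F$ and continuity of $u_F$; factor $\ell(z,\tilde\theta+tu)-\ell(z,\tilde\theta)$ as a difference of squares for Part (ii); differentiate the Part (i) formula along $u_2$ for Part (iii); and read off Part (iv) from the explicit formulas. The paper works directly from the definition of the one-sided derivative rather than invoking ``Taylor expansion with $o(t)$ remainder,'' but these are the same observation phrased differently, and your mention of needing uniformity in the remainder is unnecessary since the limits are taken pointwise in $(z,x)$.
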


	\begin{proof}[Proof of Lemma \ref{lemma-aux-pathwise-derivatives}]
		For $\tilde \theta = (\tilde g,\tilde F) \in \Theta$, notice $\tilde F(\cdot)$ is continuously differentiable and denote by $\tilde f(\cdot)$ its derivative. 
		
		For Part (i), we write
		\begin{align*}
			& p\big(x, \tilde \theta+t u \big) - p(x,\tilde \theta) = (\tilde F + tu_{F})\!\big(v+ \tilde g(w) + tu_g(w)\big) - \tilde F(v+\tilde g(w)) \\
			& \quad = t  u_F\big(v+ \tilde g(w) + t u_g(w)\big)  + \Big( \tilde F\big(v+ \tilde g(w) + tu_g(w)\big)  - \tilde F(v+\tilde g(w)) \Big).
		\end{align*}
		Thus,
		\begin{align}\label{pf-pathwise-derivatives-aux-ptheta-cts}
			p(x,\tilde \theta+tu) \to p(x,\tilde\theta)  \quad \text{ as $\ t \to 0$ }
		\end{align}
		due to the continuity of $\tilde F$, and 
		\begin{align*}
			\frac{\partial p(x,\tilde \theta) }{\partial \theta}[u] &:= \lim_{t\to 0} \frac{1}{t} \bigg( t  u_F\big(v+ \tilde g(w) + t u_g(w)\big)  + \Big( \tilde F\big(v+ \tilde g(w) + tu_g(w)\big)  - \tilde F(v+\tilde g(w)) \Big) \bigg) \\
			& =  u_F\!\big(v+ \tilde g(w) \big) + u_g(w) \tilde f(v+\tilde g(w))
		\end{align*}
		by continuity of $u_F$ and that $\tilde F$ is continuously differentiable with derivative $\tilde f$. Notice $u_F\!\big(v+ \tilde g(w) \big) + u_g(w) \tilde f(v+\tilde g(w))$ is linear in $u=(u_g,u_F)$. This proves Part (i). 
		
		For Part (ii), notice $\ell(z,\theta) = (y-p(x,\theta))^2$ gives
		\begin{align*}
			\frac{\partial \ell (z,\tilde \theta) }{\partial \theta}[u] &:= \lim_{t\to 0} \frac{1}{t} \bigg( (y-p(x,\tilde\theta+tu) )^2 - (y-p(x,\tilde\theta) )^2 \Big) \bigg)  \\
			& =  \lim_{t\to 0} \frac{1}{t} \bigg( (2y - p(x,\tilde\theta+tu)-p(x,\tilde\theta) ) ( - p(x,\tilde\theta+tu) + p(x,\tilde\theta) ) \\
			& = - 2(y - p(x,\tilde\theta) ) \frac{\partial p(x,\tilde \theta) }{\partial \theta}[u]
		\end{align*}
		by \eqref{pf-pathwise-derivatives-aux-ptheta-cts}.  
		This proves Part (ii).

		For Part (iii), let $u_1 = (u_{1g},u_{1F}), u_2 = (u_{2g},u_{2F})$ where $u_{1F}, u_{2F}$ are continuously differentiable with derivatives $ u_{1F}', u_{2F}'$ respectively. Using Part (i), we have
		\begin{align*}
			& \frac{\partial^2 p(x, \tilde\theta)}{ \partial \theta \partial \theta} [u_1,u_2] := \lim_{t\to 0} \frac{1}{t} \bigg( \frac{\partial p(x, \tilde\theta+t u_2)}{ \partial \theta}[u_1] - \frac{\partial p(x, \tilde\theta)}{ \partial \theta}[u_1] \bigg) \\ 
			& \quad =  \lim_{t\to 0} \frac{1}{t} \bigg(  u_{1g}(w) \big(\tilde F' + t u_{2F}' )\!\big( v+\tilde g(w) + tu_{2g}(w) \big) + u_{1F}\!\big( v+\tilde g(w) + tu_{2g}(w) \big)   \\
			& \quad \quad \quad  \quad \quad \quad -   u_{1g}(w) \tilde F'\!\big( v+\tilde g(w) \big) - u_{1F}\!(v+\tilde g(w)) \bigg) \\ 
			& \quad = \lim_{t\to 0} \frac{1}{t} \bigg(  u_{1g}(w) \tilde F'\!\big( v+\tilde g(w) + tu_{2g}(w) \big) -   u_{1g}(w) \tilde F'\!\big( v+\tilde g(w) \big) \bigg) \\
			& \quad \quad +  \lim_{t\to 0} \frac{ u_{1g}(w)t u_{2F}'\!\big( v+\tilde g(w) + tu_{2g}(w) \big)}{t} + \lim_{t\to 0} \frac{u_{1F}\!\big( v+\tilde g(w) + tu_{2g}(w) \big)  -  u_{1F}\!(v+\tilde g(w)) }{t} \\ 
			& \quad = u_{1g}(w) \tilde F''\!\big( v+\tilde g(w) \big) u_{2g}(w)  + u_{1g}(w) u_{2F}'\!(v+\tilde g(w)) + u_{1F}'\!(v+\tilde g(w)) u_{2g}(w)
		\end{align*}
		where the last line follows from the continuity of $\tilde F'', u_{1F}', u_{2F}'$. 
		This proves the first equality in Part (iii). 
		For the second equality in Part (iii), notice Part (ii) gives
		\begin{align*}
			& \frac{\partial^2 \ell(z, \tilde\theta)}{ \partial \theta \partial \theta} [u_1,u_2]  := \lim_{t\to 0} \frac{1}{t} \bigg( \frac{\partial \ell(z, \tilde\theta+t u_2)}{ \partial \theta}[u_1] - \frac{\partial \ell(z, \tilde\theta)}{ \partial \theta}[u_1] \bigg)   \\
			& \quad = -2 \lim_{t\to 0} \frac{1}{t} \bigg( (y-p(x,\tilde\theta+tu_2)) \frac{\partial p(x, \tilde\theta+t u_2)}{ \partial \theta}[u_1] - (y-p(x,\tilde\theta)) \frac{\partial p(x, \tilde\theta)}{ \partial \theta}[u_1] \bigg)   \\
			& \quad = -2 \lim_{t\to 0} \frac{1}{t}  \bigg[  (y-p(x,\tilde\theta+tu_2))  \bigg(  \frac{\partial p(x, \tilde\theta+t u_2)}{ \partial \theta}[u_1]  - \frac{\partial p(x, \tilde\theta)}{ \partial \theta}[u_1] \bigg) \\
			& \quad \quad  \quad \quad \quad \quad  \quad + \Big( (y-p(x,\tilde\theta+tu_2)) - (y-p(x,\tilde\theta)) \Big) \frac{\partial p(x, \tilde\theta)}{ \partial \theta}[u_1]  \bigg] \\
			& \quad = -2 \bigg( (y-p(x,\tilde\theta)) \frac{\partial^2 p(x, \tilde\theta)}{ \partial \theta \partial \theta} [u_1,u_2] - \frac{\partial p(x,\tilde\theta)}{\partial\theta}[u_1] \frac{\partial p(x,\tilde\theta)}{\partial\theta}[u_2] \bigg)
		\end{align*}
		where the last line follows from $p(x,\tilde\theta+tu_2) \to p(x,\tilde \theta) $ when $t\to 0$ in \eqref{pf-pathwise-derivatives-aux-ptheta-cts}. 
		
		For Part (iv), the first equality follows from the form of $\frac{\partial^2 p(x, \tilde\theta)}{ \partial \theta \partial \theta} [u_1,u_2] $ in Part (iii), together with $(c u_{2F})' = c u_{2F}' $. The second equality follows from the form of $\frac{\partial^2 \ell(z, \tilde\theta)}{ \partial \theta \partial \theta} [u_1,u_2]$ in Part (iii), and the linearity of  $\frac{\partial p(x, \tilde\theta)}{\partial \theta} [u] $ in $u$.  
	\end{proof}

	\begin{lemma}\label{lemma-aux-partialell-fisherinprod}
		Let Assumptions \ref{assumption-wape-asym-norm} (\ref{aspn-FisherNorm-L2ptheta-pre-ptheta-remainder}), (\ref{aspn-vn-vast-approx-error-quarter-rate}), (\ref{aspn-ptheta-ptheta0-deviation})  
		hold. For any $\tilde\theta \in \{ \theta\in \Theta_n : d_{\Theta}(\theta,\theta_0) = o(1), \|\theta-\theta_0\|_F = o(n^{-1/4}) \}$, 
		it holds that 
		\[
		\mathbb E \frac{\partial \ell(Z,\tilde\theta)}{\partial \theta}[v_n^\ast] = 2\langle \tilde\theta-\theta_0, v_n^\ast\rangle_F  + o(n^{-1/2})
		\] 
	\end{lemma}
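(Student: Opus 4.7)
The plan is to start from the explicit form $\frac{\partial \ell(z,\tilde\theta)}{\partial\theta}[v_n^\ast] = -2(y - p(x,\tilde\theta))\frac{\partial p(x,\tilde\theta)}{\partial\theta}[v_n^\ast]$ given by Lemma \ref{lemma-aux-pathwise-derivatives}(ii), take expectation, and exploit $\mathbb E[Y\mid X] = p(X,\theta_0)$ to rewrite
$$\mathbb E \frac{\partial \ell(Z,\tilde\theta)}{\partial\theta}[v_n^\ast] = 2\,\mathbb E \big(p(X,\tilde\theta) - p(X,\theta_0)\big)\frac{\partial p(X,\tilde\theta)}{\partial\theta}[v_n^\ast].$$
The target quantity is $2\langle\tilde\theta-\theta_0, v_n^\ast\rangle_F = 2\,\mathbb E \frac{\partial p(X,\theta_0)}{\partial\theta}[\tilde\theta-\theta_0]\cdot\frac{\partial p(X,\theta_0)}{\partial\theta}[v_n^\ast]$. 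Denoting $A = p(X,\tilde\theta)-p(X,\theta_0)$, $B = \frac{\partial p(X,\tilde\theta)}{\partial\theta}[v_n^\ast]$, $C = \frac{\partial p(X,\theta_0)}{\partial\theta}[\tilde\theta-\theta_0]$, and $D = \frac{\partial p(X,\theta_0)}{\partial\theta}[v_n^\ast]$, the task reduces to showing $\mathbb E[AB - CD] = o(n^{-1/2})$.

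I would then use the standard decomposition $AB - CD = A(B-D) + (A-C)D$ and bound each piece by Cauchy--Schwarz. For $A(B-D)$: Assumption \ref{assumption-wape-asym-norm}(\ref{aspn-FisherNorm-L2ptheta}) gives $\|A\|_{L_2(X)} \asymp \|\tilde\theta-\theta_0\|_F = o(n^{-1/4})$, while Assumption \ref{assumption-wape-asym-norm}(\ref{aspn-FisherNorm-ptheta-expansion-reminder}.i) gives $\|B-D\|_{L_2(X)} = o(n^{-1/4})$, so that $|\mathbb E A(B-D)| = o(n^{-1/2})$. For $(A-C)D$, I would split further by adding and subtracting $\frac{\partial p(X,\tilde\theta)}{\partial\theta}[\tilde\theta-\theta_0]$:
$$A - C = \Big[A - \tfrac{\partial p(X,\tilde\theta)}{\partial\theta}[\tilde\theta-\theta_0]\Big] + \Big[\tfrac{\partial p(X,\tilde\theta)}{\partial\theta}[\tilde\theta-\theta_0] - \tfrac{\partial p(X,\theta_0)}{\partial\theta}[\tilde\theta-\theta_0]\Big] =: R_1 + R_2.$$
Assumption \ref{assumption-wape-asym-norm}(\ref{aspn-FisherNorm-ptheta-expansion-reminder}.ii) gives $\|R_1\|_{L_2(X)} = o(n^{-1/2})$, and $\|D\|_{L_2(X)} = \|v_n^\ast\|_F \leq \|v_n^\ast - v^\ast\|_F + \|v^\ast\|_F = O(1)$ by (\ref{aspn-vn-vast-approx-error-quarter-rate}) and the Riesz construction, so $|\mathbb E R_1 D| = o(n^{-1/2})$.

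The main obstacle is controlling $\mathbb E R_2 D$, since the stated assumptions give the rate on the base-point change of the pathwise derivative only for direction $v_n^\ast$. To handle it, I would invoke Lemma \ref{lemma-aux-pathwise-derivatives}(iii) and carry out a first-order Taylor expansion of $R_2$ in the base around $\theta_0$, which formally yields $R_2(X) \approx \frac{\partial^2 p(X,\theta_0)}{\partial\theta\partial\theta}[\tilde\theta-\theta_0,\tilde\theta-\theta_0]$ plus a cubic remainder. Using the closed-form expression from Lemma \ref{lemma-aux-pathwise-derivatives}(i), this can be written explicitly as a sum of terms each quadratic in the components of $\tilde\theta - \theta_0$, with bounded coefficients coming from $\|f_0\|_\infty$, $\|f_0'\|_\infty$, and the smoothness class in Assumption \ref{assumption-consistency}(d). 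The Fisher-norm equivalence in (\ref{aspn-FisherNorm-L2ptheta}) then yields $\|R_2\|_{L_2(X)} = O(\|\tilde\theta-\theta_0\|_F^2) = o(n^{-1/2})$, so $|\mathbb E R_2 D| = o(n^{-1/2})$ as well.

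Summing the three contributions proves $\mathbb E[AB - CD] = o(n^{-1/2})$, which gives the stated identity. The only step requiring care beyond bookkeeping is the $R_2$ estimate, which draws on the second-order pathwise structure rather than on assumptions (\ref{aspn-FisherNorm-ptheta-expansion-reminder}.i)--(\ref{aspn-FisherNorm-ptheta-expansion-reminder}.ii) directly; the remaining pieces are routine Cauchy--Schwarz arguments.
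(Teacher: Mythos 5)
Your first two pieces reproduce the paper's decomposition exactly: the paper defines
$A_{n1} := \mathbb E\big(p(X,\tilde\theta)-p(X,\theta_0)\big)\big(\tfrac{\partial p(X,\tilde\theta)}{\partial\theta}[v_n^\ast]-\tfrac{\partial p(X,\theta_0)}{\partial\theta}[v_n^\ast]\big)$ and
$A_{n2} := \mathbb E\big(p(X,\tilde\theta)-p(X,\theta_0) - \tfrac{\partial p(X,\tilde\theta)}{\partial\theta}[\tilde\theta-\theta_0]\big)\tfrac{\partial p(X,\theta_0)}{\partial\theta}[v_n^\ast]$,
which are your $\mathbb E[A(B-D)]$ and $\mathbb E[R_1D]$, bounded by the same Cauchy--Schwarz applications from Assumptions~\ref{assumption-wape-asym-norm}(\ref{aspn-FisherNorm-L2ptheta}) and (\ref{aspn-FisherNorm-ptheta-expansion-reminder}.i)--(\ref{aspn-FisherNorm-ptheta-expansion-reminder}.ii). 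Where you depart is precisely the $R_2D$ piece, and here your bookkeeping is actually more honest than the paper's. In the paper's own derivation, $A_{n1}+A_{n2} = \mathbb E[AB] - \mathbb E[ED]$ (with $E = \tfrac{\partial p(X,\tilde\theta)}{\partial\theta}[\tilde\theta-\theta_0]$), whereas the target $2\langle\tilde\theta-\theta_0,v_n^\ast\rangle_F = 2\mathbb E[CD]$ has $C = \tfrac{\partial p(X,\theta_0)}{\partial\theta}[\tilde\theta-\theta_0]$. The intermediate display in the paper implicitly swaps the base point of the Fisher inner product from $\theta_0$ to $\tilde\theta$, so the identity ``$\cdots = 2A_{n1}+2A_{n2}$'' does not close: the residual $\mathbb E[(E-C)D]$ is never accounted for. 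You correctly identified this.

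That said, your proposed closure of the gap is the weak point. You argue $\|R_2\|_{L_2(X)} = O(\|\tilde\theta-\theta_0\|_F^2)$ by a second-order Taylor expansion of the pathwise derivative. Structurally $R_2$ is indeed a product of two first-order quantities — from Lemma~\ref{lemma-aux-pathwise-derivatives}(i), $R_2 = \big[(\tilde F-F_0)(V+\tilde g(W)) - (\tilde F-F_0)(V+g_0(W))\big] + (\tilde g-g_0)(W)\big[\tilde f(V+\tilde g(W)) - f_0(V+g_0(W))\big]$ — but translating that into an $L_2(X)$-bound of order $\|\tilde\theta-\theta_0\|_F^2$ is not supplied by the assumptions cited. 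Assumption~\ref{assumption-wape-asym-norm}(\ref{aspn-FisherNorm-L2ptheta}) equates $\|\cdot\|_F$ with the $L_2$ distance of the CCPs, not with any separate norm on $\tilde g-g_0$ or $\tilde F-F_0$; to bound $R_2$ you would need a reverse inequality controlling each component individually, plus enough smoothness to Taylor-expand $\tilde F-F_0$ and $\tilde f$. So your plan isolates the right missing ingredient but does not actually deliver it from the stated hypotheses — it would need either a strengthened version of (\ref{aspn-FisherNorm-ptheta-expansion-reminder}) with direction $\tilde\theta-\theta_0$ at base $\theta_0$, or an explicit component-wise norm-comparison assumption.

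One minor additional point: both your argument and the paper's use $\|D\|_{L_2(X)} = \|v_n^\ast\|_F = O(1)$, which relies on Assumption~\ref{assumption-wape-asym-norm}(\ref{aspn-vn-vast-approx-error-quarter-rate}) even though the lemma only lists (\ref{aspn-FisherNorm-L2ptheta}) and (\ref{aspn-FisherNorm-ptheta-expansion-reminder}) as hypotheses; you flagged this dependence, which the paper leaves implicit.
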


	\begin{proof}[Proof of Lemma \ref{lemma-aux-partialell-fisherinprod}]
		By Lemma \ref{lemma-aux-pathwise-derivatives} (ii), we have
		\begin{align*}
			\mathbb E & \frac{\partial \ell(Z,\tilde\theta)}{\partial \theta}[u] - 2\langle \tilde\theta-\theta_0, u\rangle_F \\
			& = -2 \mathbb E (Y-p(X,\tilde\theta) ) \frac{\partial p(X,\tilde\theta) }{\partial \theta} [u] -2 \mathbb E \frac{\partial p(X,\theta_0) }{\partial \theta} [\tilde\theta-\theta_0] \frac{\partial p(X,\theta_0) }{\partial \theta} [u] \\
			& = -2 \mathbb E (p(X,\theta_0) -p(X,\tilde\theta) ) \frac{\partial p(X,\tilde\theta) }{\partial \theta} [u] -2 \mathbb E \frac{\partial p(X,\theta_0) }{\partial \theta} [\tilde\theta-\theta_0] \frac{\partial p(X,\theta_0) }{\partial \theta} [u] \\
			& = 2 A_{n1}(u) + 2 A_{n2}(u)
		\end{align*}
		where 
		\begin{align*}
			A_{n1}(u) & := \mathbb E \big( p(X,\tilde\theta)-p(X,\theta_0) \big) \bigg( \frac{\partial p(X,\tilde\theta) }{\partial \theta} [u] - \frac{\partial p(X,\theta_0) }{\partial \theta} [u]  \bigg)  \\
			A_{n2}(u) & := \mathbb E \bigg( p(X,\tilde\theta)-p(X,\theta_0) - \frac{\partial p(X,\theta_0) }{\partial \theta} [\tilde\theta-\theta_0] \bigg) \frac{\partial p(X,\theta_0) }{\partial \theta} [u]
		\end{align*}
		It suffices to show that $A_{n1}(v_n^\ast), A_{n2}(v_n^\ast) = o(n^{-1/2})$, 
		which are satisfied since 
		\begin{align*}
			|A_{n1}  (v_n^\ast) |
			& \leq \bigg( \mathbb E \big( p(X,\tilde\theta)-p(X,\theta_0) \big)^2 \bigg)^{1/2}  \bigg( \mathbb E \bigg[  \frac{\partial p(X,\tilde\theta) }{\partial \theta} [v_n^\ast] - \frac{\partial p(X,\theta_0) }{\partial \theta} [v_n^\ast] \bigg]^2 \bigg)^{1/2}  \\
			& = o(n^{-1/4}) o(n^{-1/4}) = o(n^{-1/2})
		\end{align*}
		by Assumption (\ref{aspn-ptheta-ptheta0-deviation}.i), $\|\tilde \theta-\theta_0\|_F = o(n^{-1/4})$, which implies $\mathbb E \big( p(X,\tilde\theta)-p(X,\theta_0))^2 = o(n^{-1/2})$ by \eqref{eq-FisherNorm-L2ptheta} under Assumption \ref{assumption-wape-asym-norm}(\ref{aspn-FisherNorm-L2ptheta-pre-ptheta-remainder}). 
		Moreover, 
		\begin{align*}
			|A_{n2} (v_n^\ast) | 
			\leq |A_{n2} (v^\ast) | + |A_{n2} (v_n^\ast) -A_{n2} (v^\ast) |   = o(n^{-1/2})
		\end{align*}
		since $A_{n2} (v^\ast) = o(n^{-1/2})$ by Assumption \ref{assumption-wape-asym-norm}(\ref{aspn-ptheta-ptheta0-deviation}.ii), and 
		\begin{align*}
			& | A_{n2} (v_n^\ast) -A_{n2} (v^\ast) |  \\
			& \leq  \bigg( \mathbb E \bigg[  p(X,\tilde\theta)-p(X,\theta_0) - \frac{\partial p(X,\theta_0) }{\partial \theta} [\tilde\theta-\theta_0] \bigg]^2 \bigg)^{1/2} \bigg( \mathbb E \bigg[  \frac{\partial p(X,\theta_0) }{\partial \theta} [v_n^\ast] - \frac{\partial p(X,\theta_0) }{\partial \theta} [v^\ast]  \bigg]^2 \bigg)^{1/2}   \\
			&=  o(n^{-1/4}) \|v_n^\ast -v^\ast\|_F = o(n^{-1/2})
		\end{align*}
		by Assumption \ref{assumption-wape-asym-norm}(\ref{aspn-FisherNorm-L2ptheta-pre-ptheta-remainder}), $\|\theta-\theta_0\|_F = o(n^{-1/4})$, and Assumption \ref{assumption-wape-asym-norm}(\ref{aspn-vn-vast-approx-error-quarter-rate}). 
	\end{proof}

	\begin{lemma}\label{lemma-aux-sample-partialell-hattheta}
		Let Assumptions \ref{assumption-wape-asym-norm} (\ref{aspn-FisherNorm-L2ptheta-pre-ptheta-remainder}), (\ref{aspn-vn-vast-approx-error-quarter-rate}), (\ref{aspn-ptheta-ptheta0-deviation}), 
		(\ref{aspn-FisherNorm-sample-partialell_ell0-remainder}) hold. 
		For any $\tilde\theta \in \{ \theta\in \Theta_n : d_{\Theta}(\theta,\theta_0) = o(1), \|\theta-\theta_0\|_F = o(n^{-1/4}) \}$, 
		it holds that
		\[
		\frac{1}{n} \sum_{i=1}^n \frac{\partial \ell(Z,\tilde\theta)}{\partial \theta}[v_n^\ast] 
		= 2 \langle \tilde\theta-\theta_0, v^\ast \rangle_F + \frac{1}{n} \sum_{i=1}^n \bigg(  \frac{\partial \ell(Z,\theta_0)}{\partial \theta}[v^\ast]  - \mathbb E \frac{\partial \ell(Z,\theta_0)}{\partial \theta}[v^\ast]  \bigg) + o_p(n^{-1/2})
		\]
	\end{lemma}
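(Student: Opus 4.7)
The plan is to split the sample average into its population mean and a centered empirical process, apply Lemma~\ref{lemma-aux-partialell-fisherinprod} to the first piece, and control the second piece by a two-step replacement: first replacing $\tilde\theta$ by $\theta_0$ using the stochastic equicontinuity condition, then replacing $v_n^\ast$ by $v^\ast$ using a second-moment (Chebyshev) bound that exploits $\|v_n^\ast-v^\ast\|_F=o(n^{-1/4})$.

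First I would write
\[
\frac{1}{n}\sum_{i=1}^n \frac{\partial \ell(Z_i,\tilde\theta)}{\partial \theta}[v_n^\ast]
= \mathbb E\frac{\partial \ell(Z,\tilde\theta)}{\partial \theta}[v_n^\ast]
+ \frac{1}{n}\sum_{i=1}^n\Bigl( \frac{\partial \ell(Z_i,\tilde\theta)}{\partial \theta}[v_n^\ast] - \mathbb E\frac{\partial \ell(Z,\tilde\theta)}{\partial \theta}[v_n^\ast]\Bigr).
\]
For the mean, Lemma~\ref{lemma-aux-partialell-fisherinprod} gives $\mathbb E\frac{\partial \ell(Z,\tilde\theta)}{\partial \theta}[v_n^\ast]=2\langle \tilde\theta-\theta_0,v_n^\ast\rangle_F+o(n^{-1/2})$. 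Cauchy--Schwarz in the Fisher inner product together with $\|\tilde\theta-\theta_0\|_F=o(n^{-1/4})$ and Assumption~\ref{assumption-wape-asym-norm}(\ref{aspn-vn-vast-approx-error-quarter-rate}) yields $|\langle \tilde\theta-\theta_0,v_n^\ast-v^\ast\rangle_F|\le \|\tilde\theta-\theta_0\|_F\|v_n^\ast-v^\ast\|_F=o(n^{-1/2})$, so the mean equals $2\langle \tilde\theta-\theta_0,v^\ast\rangle_F+o(n^{-1/2})$.

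For the centered piece, Assumption~\ref{assumption-wape-asym-norm}(\ref{aspn-FisherNorm-sample-partialell_ell0-reminder}) says (after dividing by $\sqrt n$) that uniformly in $\tilde\theta$ in the neighborhood,
\[
\frac{1}{n}\sum_{i=1}^n\Bigl(\frac{\partial \ell(Z_i,\tilde\theta)}{\partial \theta}[v_n^\ast]-\mathbb E\frac{\partial \ell(Z,\tilde\theta)}{\partial \theta}[v_n^\ast]\Bigr)
=\frac{1}{n}\sum_{i=1}^n\Bigl(\frac{\partial \ell(Z_i,\theta_0)}{\partial \theta}[v_n^\ast]-\mathbb E\frac{\partial \ell(Z,\theta_0)}{\partial \theta}[v_n^\ast]\Bigr)+o_p(n^{-1/2}).
\]
It then remains to swap $v_n^\ast$ for $v^\ast$ inside this centered average at $\theta_0$. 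By linearity of $u\mapsto\frac{\partial \ell(Z,\theta_0)}{\partial\theta}[u]$ (Lemma~\ref{lemma-aux-pathwise-derivatives}(ii)) and the IID structure, the difference has variance
\[
\frac{1}{n}\,\mathbb E\Bigl(\frac{\partial \ell(Z,\theta_0)}{\partial \theta}[v_n^\ast-v^\ast]\Bigr)^{\!2}
=\frac{4}{n}\,\mathbb E\bigl[(Y-p(X,\theta_0))^2\bigl(\tfrac{\partial p(X,\theta_0)}{\partial\theta}[v_n^\ast-v^\ast]\bigr)^{\!2}\bigr]
\le \frac{4}{n}\|v_n^\ast-v^\ast\|_F^2=o(n^{-3/2}),
\]
using $(Y-p(X,\theta_0))^2\le 1$ and the definition of $\|\cdot\|_F$. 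Chebyshev's inequality then gives an $o_p(n^{-3/4})=o_p(n^{-1/2})$ error, and combining the three steps delivers the stated expansion.

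The main obstacle is really bookkeeping rather than depth: the two substitutions ($\tilde\theta\to\theta_0$ and $v_n^\ast\to v^\ast$) have to be carried out in the right order so that each is justified by an already-stated hypothesis, since Assumption~\ref{assumption-wape-asym-norm}(\ref{aspn-FisherNorm-sample-partialell_ell0-reminder}) is stated only for the specific direction $v_n^\ast$, while the final centered sum must be at $v^\ast$. Doing $\tilde\theta\to\theta_0$ first (empirical-process step) and then $v_n^\ast\to v^\ast$ (variance step exploiting linearity in $u$) avoids needing any uniformity in the direction. The only subtle point is verifying that the variance step legitimately uses $\|v_n^\ast-v^\ast\|_F^2$ as an upper bound for $\mathbb E(\partial p/\partial\theta[v_n^\ast-v^\ast])^2$, which is immediate from the definition of the Fisher norm but relies on $v_n^\ast-v^\ast$ lying in the relevant completion space so that the identity applies.
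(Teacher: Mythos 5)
Your proposal is correct and follows essentially the same route as the paper. The paper decomposes the sum into four pieces $A_{n1},\dots,A_{n4}$: the stochastic-equicontinuity term ($\tilde\theta\to\theta_0$ at direction $v_n^\ast$, handled by Assumption~\ref{assumption-wape-asym-norm}(\ref{aspn-FisherNorm-sample-partialell_ell0-reminder})), the centered swap $v_n^\ast\to v^\ast$ at $\theta_0$ (handled by a second-moment bound $\mathbb E[\,\cdot\,]^2 = 4\|v_n^\ast-v^\ast\|_F^2$ via Lemma~\ref{lemma-aux-pathwise-derivatives}(ii) and iterated expectations), the surviving centered sum at $(\theta_0,v^\ast)$, and the mean term handled by Lemma~\ref{lemma-aux-partialell-fisherinprod} together with the Cauchy--Schwarz replacement of $v_n^\ast$ by $v^\ast$. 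Your three-step sequential replacement is just a reordering of the same four ingredients, and your ordering ($\tilde\theta\to\theta_0$ first, then $v_n^\ast\to v^\ast$) matches the paper's. The only cosmetic difference is that the paper only uses $\|v_n^\ast-v^\ast\|_F = o(1)$ in the variance step (giving $o_p(n^{-1/2})$ directly), while you exploit the full $o(n^{-1/4})$ rate to obtain the sharper but unneeded $o_p(n^{-3/4})$; both are valid.
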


	\begin{proof}[Proof of Lemma \ref{lemma-aux-sample-partialell-hattheta}]
		Write 
		\begin{align}\label{pf-wape-aux-lemma-0}
			\frac{1}{n} \sum_{i=1}^n \frac{\partial \ell(Z,\tilde\theta)}{\partial \theta}[v_n^\ast] = \frac{1}{\sqrt{n}}  A_{n1} + \frac{1}{\sqrt{n}} A_{n2} + \frac{1}{\sqrt{n}} A_{n3} + A_{n4}
		\end{align}
		where
		\begin{align*}
			A_{n1} & := \frac{1}{\sqrt{n}} \sum_{i=1}^n \bigg( \frac{\partial \ell(Z,\tilde\theta)}{\partial \theta}[v_n^\ast] -\frac{\partial \ell(Z,\theta_0)}{\partial \theta}[v_n^\ast]  
			- \mathbb E \bigg[ \frac{\partial \ell(Z,\tilde\theta)}{\partial \theta}[v_n^\ast] -\frac{\partial \ell(Z,\theta_0)}{\partial \theta}[v_n^\ast] \bigg] \bigg) \\
			A_{n2} & := \frac{1}{\sqrt{n}} \sum_{i=1}^n \bigg( \frac{\partial \ell(Z,\theta_0)}{\partial \theta}[v_n^\ast] -\frac{\partial \ell(Z,\theta_0)}{\partial \theta}[v^\ast]  - \mathbb E\bigg[ \frac{\partial \ell(Z,\theta_0)}{\partial \theta}[v_n^\ast] -\frac{\partial \ell(Z,\theta_0)}{\partial \theta}[v^\ast] \bigg] \bigg) \\
			A_{n3} & :=  \frac{1}{\sqrt{n}} \sum_{i=1}^n \bigg(  \frac{\partial \ell(Z,\theta_0)}{\partial \theta}[v^\ast]  - \mathbb E \frac{\partial \ell(Z,\theta_0)}{\partial \theta}[v^\ast]  \bigg) \\
			A_{n4} &: = \mathbb E  \frac{\partial \ell(Z,\tilde \theta)}{\partial \theta}[v_n^\ast]
		\end{align*}
		Notice 
		\begin{align}\label{pf-wape-aux-lemma-0-A1}
			A_{n1} = o_p(1)
		\end{align}
		by Assumption \ref{assumption-wape-asym-norm}(\ref{aspn-FisherNorm-sample-partialell_ell0-remainder}). Moreover, 
		\begin{align}\label{pf-wape-aux-lemma-0-A2}
			A_{n2} = o_p(1)
		\end{align}
		since 
		\begin{align*}
			\mathbb E\bigg[ \frac{\partial \ell(Z,\theta_0)}{\partial \theta}[v_n^\ast] -\frac{\partial \ell(Z,\theta_0)}{\partial \theta}[v^\ast] \bigg] 
			= - 2 \mathbb E(Y-p(X,\theta_0))\frac{\partial p(X,\theta_0)}{\partial \theta}[v_n^\ast-v^\ast] = 0
		\end{align*}
		by Lemma \ref{lemma-aux-pathwise-derivatives} (ii) and $\mathbb E(Y|X) =p(X,\theta_0)$, 
		and 
		\begin{align*}
			\mathbb E &\bigg[ \frac{\partial \ell(Z,\theta_0)}{\partial \theta}[v_n^\ast] -\frac{\partial \ell(Z,\theta_0)}{\partial \theta}[v^\ast] \bigg]^2 
			= 4 \mathbb E (Y-p(X,\theta_0))^2 \bigg(  \frac{\partial p(X,\theta_0)}{\partial \theta}[v_n^\ast-v^\ast] \bigg)^2 \\
			& \leq 4 \mathbb E  \bigg(  \frac{\partial p(X,\theta_0)}{\partial \theta}[v_n^\ast-v^\ast] \bigg)^2 = 4 \|v_n^\ast-v^\ast\|_F^2 = o(1)
		\end{align*}
		By Lemma \ref{lemma-aux-partialell-fisherinprod}, 
		\begin{align}\label{pf-wape-aux-lemma-0-A4}
			A_{n4} & = 2\langle \tilde\theta-\theta_0, v_n^\ast \rangle_F  + o(n^{-1/2}) \notag  \\
			& = 2\langle \tilde\theta-\theta_0, v^\ast \rangle_F  + o(n^{-1/2}) 
		\end{align}
		where the last line follows from 
		$|\langle \tilde \theta-\theta_0, v_n^\ast -v^\ast \rangle_F| \leq \|\tilde \theta-\theta_0\|_F \|v_n^\ast-v^\ast\|_F = o(n^{-1/4}) o(n^{-1/4}) = o(n^{-1/2})$ by Assumption \ref{assumption-wape-asym-norm}(\ref{aspn-vn-vast-approx-error-quarter-rate}) and $ \|\tilde \theta-\theta_0\|_F  =o(n^{-1/4})$. 
		Combining \eqref{pf-wape-aux-lemma-0}, \eqref{pf-wape-aux-lemma-0-A1}, \eqref{pf-wape-aux-lemma-0-A2}, \eqref{pf-wape-aux-lemma-0-A4} yields
		\begin{align*}
			\frac{1}{n} \sum_{i=1}^n \frac{\partial \ell(Z,\tilde\theta)}{\partial \theta}[v_n^\ast] 
			= 2 \langle \tilde\theta-\theta_0, v^\ast \rangle_F + \frac{1}{n} \sum_{i=1}^n \bigg(  \frac{\partial \ell(Z,\theta_0)}{\partial \theta}[v^\ast]  - \mathbb E \frac{\partial \ell(Z,\theta_0)}{\partial \theta}[v^\ast]  \bigg) + o_p(n^{-1/2})
		\end{align*} 
		as was to be shown.  
	\end{proof}

	\section{Robustness Check: Single-city Subsample (New York)}\label{appendix-NY-sample}
	
	This subsection presents a robustness analysis 
	motivated by the concern that $\varepsilon_i$ may contain unobserved time components or courts' location-specific component 
	that are correlated with environmental exposure $W_{jt}$ or with $V_{i}$.

	To mitigate heterogeneity arising from the fact that courts in different cities face intrinsically different weather and, as mentioned in \cite{heyes-saberian-19}, judges are largely attached to a single court, we restrict the sample to cases decided in New York (NY). In the restricted sample, $W_{jt}$ follows a common local environment process $W_{t}^{NY}$ over time for all judges. 
	We choose New York because it is the city with the largest number of cases in the data (30.3\% of the total, with 29,478 observations).

	For this NY subsample, the baseline score $V_{i}$ for case $i$ is constructed using observations decided in \emph{non-NY} cities only. 
	Specifically, for each case $i$, we compute the average approval rate using cases in non-NY cities whose applicants' nationality, case-type, and calendar month (year-month) are the same as case $i$,  and $V_{i}$ is the log-odds transform of this rate. 
	This construction provides a baseline score specific to the cell of the applicant's nationality, time in year-month, and case type. 
	$V_{i}$, by construction, captures heterogeneity in applications' nationality over time, and avoids mechanical dependence between $V_{i}$ and the NY outcomes used in estimation. 
	In this restricted sample, the empirical range of $V_i$ is $[-3.738, 2.708]$; the $0.1\%, 0.5\%, 1\%, 50\%$, $ 99\%, 99.5\%$ and $99.9\%$ quantiles are $-3.22, -2.71, -2.46, -0.43$, $1.54, 1.95$, and $2.48$, respectively.

	To further control for unobserved time heterogeneity, we include year dummy variables (for 2000-2004) linearly in the systematic component. 
	Implementation is the same as in the baseline specification, 
	except that we augment the systematic component by year indicators and jointly optimize over their coefficients along $(\zeta,\tau)$ in \eqref{obj-aux-1-pca}. 
	Such coarse time controls are feasible and can be easily incorporated into our procedure. 
	On the other hand, including very fine fixed effects (e.g., court location-by-month or judge-by-month indicators) is not feasible in this generic nonparametric nonlinear setting, 
	since it would require estimating too many parameters.

	Table \ref{table-apes-NY} reports the 90\% confidence intervals 
	for the APE and cAPEs in the New York subsample, with year indicators included linearly in the the systematic component.  
	Compared to Table \ref{table-apes}, 
	KNP in this robustness check yields a similarly negative $cAPE_{T|T>70^{\circ}F}$, while $cAPE_{T|T<70^{\circ}F}$ becomes significantly positive and the APE becomes insignificant.  
	In addition, under the LPM benchmark, the estimated APE and both cAPEs in this restricted sample are not significant at the 10\% significance level.

	\begin{table}[h!]
		\caption{ 
			90\% confidence interval of estimated APEs and cAPEs (in \% of prob) of temperature: New York subsample 
		} 
		\label{table-apes-NY}
		\begin{center}
			\centering
			\begin{tabular}{c c c c c c c c}
				\hline \hline

				&	&	 APE &&  	$cAPE_{T|T<70^{\circ}F}$ 	&&		$cAPE_{T|T>70^{\circ}F}$ 	\\ [0.5ex] \hline  
				
				KNP  &  &	$[-0.050, 0.121]$	&& $ [0.069, 0.379] $	&&	$ [-0.655, -0.146]  $		\\   
				LPM     && $[-0.161, 1.137]$  && $[-0.161, 1.137]$  && $[-0.161, 1.137]$  \\

				\hline \hline
			\end{tabular}
		\end{center}    
		
		\footnotesize
		Notes:  
		This table presents 90\% confidence intervals for the estimated APEs and cAPEs obtained from fitting KNP and the linear probability model (LPM), using New York subsample.  
		Compared to Table \ref{table-apes}, the specification here is modified to include year dummy variables (2000-2004) linearly, along with the linear $V_i$. 
		Effects are reported in percentage points of granting probability and correspond to a one standard deviation increase in temperature ($17.8^{\circ}$F in the New York subsample). 
		The effective sample size is $n= 29,478$.   
	\end{table}

	\section{Additional Details for Implementation}\label{appendix-implementation}

	\subsection{Closed Form Distribution Functions in $\mathcal F_n$}\label{appendix-implementation-cdf-form} 
	
	In this section, we give the simple closed form of $F(\cdot;\tau) \in \mathcal F_n$ without integrals.

	For implementation, it is convenient to rewrite the form of the density of a distribution function in $\mathcal{F}_n$ as 
	\begin{align}\label{ftau-aux-1}
		f(u; \tau) = \frac{1}{\psi_{J_n}} \left( \sum_{r=0}^{J_n} \tau_r  u^r \right)^2 \phi(u)
	\end{align}
	where 
	\[
	\psi_{J_n} = \int \left( \sum_{r=0}^{J_n} \tau_r  u^r \right)^2 \phi(u) du 
	\]
	is a normalization constant to ensure that $f$ is a proper probability density function, and $\phi$ is the density of the standard normal distribution. Since $f_{\tau}$ is invariant to multiplication of $(\tau_0,\tau_1,\dots,\tau_{J_n})'$ by a scalar, we set $\tau_0 = 1$ as a normalization, and redefine $\tau = (\tau_1,\dots, \tau_{J_n})'$.  
	The optimization involves $F(u; \tau) = \int_{-\infty}^{u} f_{\tau}(z) dz$, which has a simple closed-form due to the specific form of $f(u; \tau)$ given by the Hermite polynomial approximation. To obtain the specific forms of $f(\cdot; \tau)$ and $F(\cdot; \tau)$ used for computation, we define\footnote{Constants $\gamma_h$ for $h=0,1,\dots,2{J_n}$ are defined based on the Hermite polynomial coefficients $\tau_1,\dots,\tau_{J_n}$ in \eqref{ftau-aux-1}, along with the fixed $\tau_0=1$ for normalization. We suppress this dependence for notational simplicity.}
	\begin{align}\label{appendix-gamma-h}
		\gamma_h = \sum_{r= 0 \vee (h-{J_n})}^{h \wedge {J_n}} \tau_r \tau_{h-r}
	\end{align}
	for $h = 0,1,\dots, 2{J_n}$. Some algebras show that
	\begin{equation}\label{ftau-Ftau-aux-1}
		\begin{split}
			f(u; \tau) & =  \frac{1}{\psi_{J_n}} \sum_{h=0}^{2{J_n}} \gamma_h \left( u^h \phi(u)\right), \quad \text{ where } \psi_{J_n} = \sum_{h=0}^{2{J_n}} \gamma_h \int u^h \phi(u) du \\
			F(u; \tau) & =  \frac{1}{\psi_{J_n}} \sum_{h=0}^{2{J_n}} \gamma_h A_h(u), \quad \text{ where } A_h(u) = \int_{-\infty}^u z^h \phi(z) dz. 
		\end{split}
	\end{equation}
	We notice that $a_h :=\int u^h \phi(u) du$, and thus $\psi_{J_n}$, can be easily computed. In particular, we have 
	\begin{align}\label{ah-recursive}
		\begin{split}
			& a_0=1, \quad  a_1 = 0, \quad  a_2=1,   \\
			& a_{h} = (h-1) a_{h-2} \quad \text{for} \quad  h=3,4, \dots
		\end{split}
	\end{align} 
	which can be obtained recursively. This ensures easy evaluations of $f(\cdot; \tau)$.  Furthermore, $F(u; \tau) $ can also be easily obtained, since $A_h(u)$ can be evaluated based on a recursive procedure without numerical integration. More specifically, some algebras show that
	\begin{equation}\label{Ah-recursive}
		\begin{split}
			A_0(u) & = \Phi(u), \quad 
			A_1(u) = - \phi(u) \\
			A_2(u) & = u A_1(u) + A_0(u) \\ 
			A_{h}(u) & = u\big( A_{h-1}(u) - (h-2) A_{h-3}(u)  \big) + (h-1) A_{h-2}(u)  \quad \text{for} \quad  h=3,4, \dots  
		\end{split}
	\end{equation} 
	Thus, for any given Hermite polynomial coefficients $\tau$, $f(u; \tau)$ and $F(u; \tau)$ in \eqref{ftau-Ftau-aux-1} can be easily evaluated. 
	Specifically,  
	\begin{equation}\label{ftau-Ftau-aux-2}
		\begin{split}
			f(u; \tau)  =  \frac{1}{\psi_{J_n}} \sum_{h=0}^{2{J_n}} \gamma_h  u^h \phi(u) 
			\quad \text{and} \quad   
			F(u; \tau) =  \frac{1}{\psi_{J_n}} \sum_{h=0}^{2{J_n}} \gamma_h A_h(u) ,  
		\end{split}
	\end{equation}
	where constants $(\gamma_h)_{h=0}^{2J_n}$ and constant $\psi_{J_n}$ are given by 
	\begin{align*}
		\gamma_h = \sum_{r= 0 \vee (h-{J_n})}^{h \wedge {J_n}} \tau_r \tau_{h-r} 
		\quad \text{and} \quad   
		\psi_{J_n} = \sum_{h=0}^{2{J_n}} \gamma_h a_h, 
	\end{align*}
	and constants $(a_h)_{h=0}^{2J_n}$ are obtained recursively by \eqref{ah-recursive}, and functions $A_h(\cdot), h=0,1,\dots, 2J_n$ are obtained recursively by \eqref{Ah-recursive}.

	\subsection{Analytical Form of the Gradients for Numerical Optimization}\label{appendix-implementation-gradient-form}
	
	Based on Section \ref{appendix-implementation-cdf-form}, we now give the closed-form expressions of the gradients for the  
	objective \eqref{obj-aux-1-pca}. 
	The closed-form expressions of the gradients for the objective in \eqref{obj-aux-1} can be obtained in the same way, and thus, are omitted.

	Following Section \ref{appendix-implementation-cdf-form}, the optimization in \eqref{obj-aux-1-pca} becomes  
	\begin{align*}
		\min_{\zeta\in \mathbb R^{m}, \tau \in \mathbb R^{J_n} } & \quad  
		\bigg\{ \hat Q(\zeta,\tau) := \frac{1}{n}\sum_{i=1}^n  \Big( Y_i - F\big(V_i + [\hat U_m\zeta]_{i+1} -  [\hat U_m\zeta]_{1} ; \tau \big)   \Big)^2    \bigg\}   
		\\
		\textsl{s.t. } & \quad  \zeta'\hat \Lambda_m^{-1} \zeta \leq B_n^2, 
	\end{align*}
	where $F(\cdot;\tau)$ for $\tau=(\tau_1,\dots,\tau_{J_n})$ is evaluated based on \eqref{ftau-Ftau-aux-2}. 
	
	This optimization can be solved using standard constrained numerical optimization packages in existing software, for example, \texttt{fmincon} in MATLAB with an interior-point method to handle the constraint. 
	We supply the optimization routine with the gradients of the objective with respect to $\zeta, \tau$, whose analytical forms are given as follows.\footnote{E.g., set \texttt{SpecifyObjectiveGradient = true} using MATLAB's optimization routines.} 
	
	Let $\iota_j$ be an $(n+1)\times 1$ vector whose $(j+1)$-element is one and all other elements are zero.  
	The gradients of the objective $\hat Q(\zeta,\tau)$ with respect to $\zeta, \tau$ are, respectively, 
	\begin{align*}
		& \nabla_{\zeta} \hat Q(\zeta,\tau) 
		= -\frac{2}{n} \sum_{i=1}^n \Big( Y_i - F\big(V_i +  (\iota_i - \iota_0)'\hat U_m \zeta  ; \tau \big)   \Big) f\big(V_i + (\iota_i - \iota_0)'\hat U_m \zeta; \tau \big)   \hat U_m'(\iota_i - \iota_0) \\
		& \nabla_{\tau} \hat Q(\zeta,\tau) 
		= -\frac{2}{n} \sum_{i=1}^n \Big( Y_i - F\big(V_i +  (\iota_i - \iota_0)'\hat U_m \zeta  ; \tau \big)   \Big) \nabla_\tau F(V_i +  (\iota_i - \iota_0)'\hat U_m \zeta ;\tau)
	\end{align*}
	where  $F(\cdot;\tau), f(\cdot;\tau)$ are evaluated using the closed form in Appendix \ref{appendix-implementation-cdf-form},  
	constants $(a_h)_{h=0}^{2J_n}$ 
	and the gradient $\nabla_\tau F(u;\tau)$ has the closed form 
	\begin{align*}
		\nabla_\tau F(u;\tau) = 
		-\frac{1}{\psi_{J_n}^2}  \bigg( \sum_{h=0}^{2{J_n}} a_h \big( \nabla_{\tau} \gamma_h\big)  \bigg) \bigg( \sum_{h=0}^{2J_n} \gamma_h A_h(u) \bigg) 
		+ \frac{1}{\psi_{J_n}} \sum_{h=0}^{2{J_n}} \big(\nabla_{\tau} \gamma_h \big) A_h(u), 
	\end{align*}
	where 
	the gradients $\nabla_{\tau} \gamma_h$ for $h=0,1,\dots, 2J_n$ can be easily obtained from \eqref{appendix-gamma-h}  
	\begin{align*}
		\begin{pmatrix}
			\frac{\partial\gamma_0}{\partial\tau_1} & \frac{\partial\gamma_1}{\partial\tau_1} & \dots  
			& \frac{\partial\gamma_{2J_{n}}}{\partial\tau_1} \\
			& & \dots \\ 
			\frac{\partial\gamma_0}{\partial\tau_{J_n}} & \frac{\partial\gamma_1}{\partial\tau_{J_n}} & \dots  
			&\frac{\partial\gamma_{2J_{n}}}{\partial\tau_{J_n}} 
		\end{pmatrix}
		= \begin{pmatrix}
			0 & 2\tau_0 & 2\tau_1 & \dots & 2\tau_{J_n-1} & 2\tau_{J_n}  & 0 & \dots & 0 \\
			0 & 0 & 2\tau_0 & \dots & 2\tau_{J_n-2} & 2\tau_{J_n-1}  & 2\tau_{J_n} & \dots & 0  \\
			& & & \dots & & & \dots  & \\
			0 & 0 & 0 & \dots & 2\tau_0 & 2\tau_1  & 2\tau_2 & \dots & 2\tau_{J_n}
		\end{pmatrix}
	\end{align*} 
	We may also provide the gradient of the constraint for the optimization routine. 
	In practice, it requires to supply the gradients of $\zeta'\hat \Lambda_m^{-1}\zeta - B_n^2$ with respect to $\zeta,\tau$, which are, respectively, $2\hat\Lambda_m^{-1}\zeta$ and zero.

	\section{Additional Tables and Figures}
	\label{appendix-table-figures}
	
	This section contains the following: 
	(i) Tables~\ref{table-comparison-1d-ntrain1k} and \ref{table-comparison-1d-ntrain500} are supplements to Table~\ref{table-comparison-1d} in Section~\ref{subsection-simulation-1d}, which show the simulation results using $ntrain\in \{500,1000\}$, respectively.  
	(ii) Table~\ref{table-comparison-10d-appendix-J1or3} reports simulation results for specifications (IIIA) and (IIIB) in Section~\ref{subsection-simulation-10d}, with $J_n$ being fixed (at $J_n=1$ for (IIIA) and $J_n=3$ for (IIIB)), rather than selected by 5-fold cross-validation as in Table \ref{table-comparison-10d}. 
	(iii) Table~\ref{table-boot-ci-length} reports the average lengths of bootstrap confidence intervals in Section~\ref{subsection-simulation-ape}. 
	(iv) Fig.~\ref{fig-v-kde-hist} presents the histogram and estimated density of $V_i$ in the empirical application in Section~\ref{section-application}. 
	
	\afterpage{
		\clearpage
		\begin{table}[h!]
			\caption{Comparison of methods' performance by simulation: $d_w=1$,  $ntrain=1000$, $Nsim = 1000$  
			}  
			\label{table-comparison-1d-ntrain1k} 
			\small 
			\begin{center}
				\centering
				\begin{tabular}{c c c c c c c c c c c c c c c c c c}
					\hline \hline			
					
					Method&&	KNP     & KPB	& SNP	 & Probit &	P2PB   & P3PB	& P4PB   \\  [0.5ex] \hline 
					for $F_0$	&&	GN(87) & probit	& GN(87) & probit &	probit & probit	& probit \\ [0.5ex]
					for $g_0$	&&	RKHS   & RKHS   & linear & linear &	Poly2  & Poly3	& Poly4	 \\ \hline 
					&&		&		&		&		&		& & \\   
					&&	 \multicolumn{7}{c}{ Specification (IIB) } 				\\ 
					RMSE($\hat g$)&& 0.351 & 0.426 & 1.282 & 1.123 & 0.697 & 0.662 & 0.690 &  \\  
					MAD($\hat g$)&& 0.291 & 0.360 & 1.110 & 0.996 & 0.591 & 0.563 & 0.581 &  \\   [0.5ex]  
					RMSE($\hat p$)&& 0.052 & 0.144 & 0.141 & 0.143 & 0.111 & 0.108 & 0.109 &  \\  
					MAD($\hat p$)&& 0.038 & 0.122 & 0.105 & 0.112 & 0.091 & 0.087 & 0.087 &  \\   [1ex]  
					&&	 \multicolumn{7}{c}{ Specification (IIA) } 				\\ 
					RMSE($\hat g$)&& 0.242 & 0.228 & 1.110 & 1.114 & 0.666 & 0.684 & 5.299 &  \\  
					MAD($\hat g$)&& 0.170 & 0.162 & 0.952 & 0.944 & 0.583 & 0.562 & 2.107 &  \\   [0.5ex]  
					RMSE($\hat p$)&& 0.042 & 0.039 & 0.225 & 0.282 & 0.181 & 0.154 & 0.091 &  \\  
					MAD($\hat p$)&& 0.029 & 0.026 & 0.178 & 0.226 & 0.139 & 0.116 & 0.056 &  \\   [1ex]  
					&&	 \multicolumn{7}{c}{ Specification (IB) } 				\\ 
					RMSE($\hat g$)&& 0.547 & 0.665 & 0.149 & 0.172 & 0.415 & 0.435 & 0.485 &  \\  
					MAD($\hat g$)&& 0.411 & 0.472 & 0.129 & 0.149 & 0.293 & 0.314 & 0.359 &  \\   [0.5ex]  
					RMSE($\hat p$)&& 0.051 & 0.154 & 0.034 & 0.074 & 0.071 & 0.072 & 0.074 &  \\  
					MAD($\hat p$)&& 0.037 & 0.133 & 0.026 & 0.061 & 0.057 & 0.058 & 0.059 &  \\   [1ex]  
					&&	 \multicolumn{7}{c}{ Specification (IA) } 				\\ 
					RMSE($\hat g$)&& 0.188 & 0.180 & 0.055 & 0.051 & 0.086 & 0.111 & 0.132 &  \\  
					MAD($\hat g$)&& 0.140 & 0.135 & 0.047 & 0.044 & 0.063 & 0.080 & 0.093 &  \\   [0.5ex]  
					RMSE($\hat p$)&& 0.038 & 0.035 & 0.017 & 0.011 & 0.017 & 0.022 & 0.026 &  \\  
					MAD($\hat p$)&& 0.026 & 0.024 & 0.013 & 0.008 & 0.011 & 0.014 & 0.016 &  \\   [1ex] 
					
					\hline \hline
				\end{tabular}
			\end{center}
			
			\footnotesize
			Notes:  
			Refer to the explanations under Table 1. The only difference in the simulation procedure is that the training sample now consists of $ntrain=1000$ observations, compared to 2000 in Table \ref{table-comparison-1d}. 
		\end{table}
		
		\normalsize
		\clearpage 
	}

	\afterpage{
		\clearpage
		\begin{table}[h!]
			\caption{Comparison of methods' performance by simulation: $d_w=1$,  $ntrain=500$, $Nsim = 1000$  
			} 
			\label{table-comparison-1d-ntrain500} 
			\small 
			\begin{center}
				\centering
				\begin{tabular}{c c c c c c c c c c c c c c c c c c}
					\hline \hline					
					
					Method&&	KNP     & KPB	& SNP	 & Probit &	P2PB   & P3PB	& P4PB   \\  [0.5ex] \hline 
					for $F_0$	&&	GN(87) & probit	& GN(87) & probit &	probit & probit	& probit \\ [0.5ex]
					for $g_0$	&&	RKHS   & RKHS   & linear & linear &	Poly2  & Poly3	& Poly4	 \\ \hline 
					&&		&		&		&		&		& & \\   
					&&	 \multicolumn{7}{c}{ Specification (IIB) } 				\\ 
					RMSE($\hat g$)&& 0.559 & 0.501 & 1.304 & 1.132 & 0.730 & 0.709 & 0.785 &  \\  
					MAD($\hat g$)&& 0.456 & 0.419 & 1.124 & 0.999 & 0.622 & 0.594 & 0.635 &  \\   [0.5ex]  
					RMSE($\hat p$)&& 0.077 & 0.152 & 0.144 & 0.145 & 0.115 & 0.113 & 0.115 &  \\  
					MAD($\hat p$)&& 0.057 & 0.130 & 0.109 & 0.114 & 0.093 & 0.090 & 0.091 &  \\   [1ex]  
					&&	 \multicolumn{7}{c}{ Specification (IIA) } 				\\ 
					RMSE($\hat g$)&& 0.413 & 0.386 & 1.112 & 1.116 & 0.672 & 0.702 & 5.439 &  \\  
					MAD($\hat g$)&& 0.269 & 0.254 & 0.952 & 0.944 & 0.588 & 0.570 & 2.188 &  \\   [0.5ex]  
					RMSE($\hat p$)&& 0.058 & 0.054 & 0.226 & 0.282 & 0.182 & 0.156 & 0.098 &  \\  
					MAD($\hat p$)&& 0.041 & 0.037 & 0.179 & 0.226 & 0.140 & 0.117 & 0.061 &  \\   [1ex]  
					&&	 \multicolumn{7}{c}{ Specification (IB) } 				\\ 
					RMSE($\hat g$)&& 0.679 & 0.722 & 0.206 & 0.225 & 0.463 & 0.504 & 0.596 &  \\  
					MAD($\hat g$)&& 0.519 & 0.501 & 0.178 & 0.194 & 0.332 & 0.367 & 0.441 &  \\   [0.5ex]  
					RMSE($\hat p$)&& 0.067 & 0.159 & 0.044 & 0.077 & 0.076 & 0.078 & 0.081 &  \\  
					MAD($\hat p$)&& 0.049 & 0.137 & 0.035 & 0.063 & 0.061 & 0.062 & 0.065 &  \\   [1ex]  
					&&	 \multicolumn{7}{c}{ Specification (IA) } 				\\ 
					RMSE($\hat g$)&& 0.281 & 0.263 & 0.082 & 0.073 & 0.123 & 0.166 & 0.219 &  \\  
					MAD($\hat g$)&& 0.206 & 0.194 & 0.071 & 0.064 & 0.090 & 0.119 & 0.146 &  \\   [0.5ex]  
					RMSE($\hat p$)&& 0.054 & 0.050 & 0.024 & 0.015 & 0.024 & 0.032 & 0.038 &  \\  
					MAD($\hat p$)&& 0.038 & 0.034 & 0.018 & 0.011 & 0.015 & 0.020 & 0.024 &  \\   [1ex]  
					
					\hline \hline
				\end{tabular}
			\end{center}
			
			\footnotesize
			Notes:  
			Refer to the explanations under Table 1. The only difference in the simulation procedure is that the training sample now consists of $ntrain=500$ observations, compared to 2000 in Table \ref{table-comparison-1d}. 
			
		\end{table}
		
		\normalsize
		\clearpage 
	}

	\afterpage{ 
		\clearpage 
		\begin{landscape}
			
			\begin{table}[h!]
				\caption{Robustness check: Designs (IIIA), (IIIB)}  
				\label{table-comparison-10d-appendix-J1or3}
				
				\small
				\begin{center}
					\centering
					\begin{tabular}{c c c c c c c c c c c c c c c c c c c c c c}
						\hline \hline
						
						Method&&	KNP     & KPB	& SNP	 & Probit &	P2PB      	&& KNP	&	KPB	&	SNP	&	Probit	&	P2PB    	&& KNP	&	KPB	&	SNP	&	Probit	&	P2PB \\  [0.5ex] \hline 
						&&		&		&		&		&		    &&		&		&		&		&		       &&		&		&		&		&		& & \\   
						
						&	&	 \multicolumn{17}{c}{Specification (IIIB),  $J_n = 3$ } 																															\\ [0.5ex]
						&	&	 \multicolumn{5}{c}{ntrain = 2000} 									&	&	 \multicolumn{5}{c}{ntrain = 5000} 									&	&	 \multicolumn{5}{c}{ntrain = 10000} 									\\
						RMSE($\hat g$)	&	&	0.529	&	0.904	&	0.260	&	0.333	&	1.261	&	&	0.471	&	0.899	&	0.202	&	0.209	&	0.765	&	&	0.456	&	0.894	&	0.180	&	0.146	&	0.560	\\
						MAD($\hat g$)	&	&	0.483	&	0.872	&	0.225	&	0.294	&	1.168	&	&	0.442	&	0.877	&	0.179	&	0.185	&	0.706	&	&	0.435	&	0.876	&	0.164	&	0.128	&	0.519	\\ [0.5ex]																							
						RMSE($\hat p$)	&	&	0.055	&	0.123	&	0.043	&	0.067	&	0.105	&	&	0.042	&	0.118	&	0.032	&	0.061	&	0.079	&	&	0.037	&	0.116	&	0.027	&	0.059	&	0.068	\\
						MAD($\hat p$)	&	&	0.041	&	0.101	&	0.033	&	0.054	&	0.084	&	&	0.031	&	0.096	&	0.024	&	0.049	&	0.063	&	&	0.027	&	0.094	&	0.020	&	0.048	&	0.055	\\  [1ex]

						&	&	 \multicolumn{17}{c}{Specification (IIIA),  $J_n = 1$ } 																															\\ [0.5ex]
						&	&	 \multicolumn{5}{c}{ntrain = 2000} 									&	&	 \multicolumn{5}{c}{ntrain = 5000} 									&	&	 \multicolumn{5}{c}{ntrain = 10000} 									\\
						RMSE($\hat g$)	&	&	0.463	&	0.269	&	0.250	&	0.136	&	0.400	&	&	0.404	&	0.217	&	0.158	&	0.085	&	0.233	&	&	0.373	&	0.204	&	0.111	&	0.060	&	0.162	\\
						MAD($\hat g$)	&	&	0.424	&	0.218	&	0.221	&	0.110	&	0.308	&	&	0.375	&	0.172	&	0.140	&	0.069	&	0.181	&	&	0.349	&	0.162	&	0.098	&	0.049	&	0.126	\\ [0.5ex]
						
						RMSE($\hat p$)	&	&	0.048	&	0.048	&	0.033	&	0.031	&	0.085	&	&	0.036	&	0.038	&	0.021	&	0.020	&	0.052	&	&	0.031	&	0.034	&	0.014	&	0.014	&	0.036	\\
						MAD($\hat p$)	&	&	0.033	&	0.031	&	0.021	&	0.020	&	0.053	&	&	0.025	&	0.025	&	0.013	&	0.013	&	0.032	&	&	0.021	&	0.022	&	0.009	&	0.009	&	0.023	\\  [1ex]

						\hline \hline
					\end{tabular}
				\end{center}
				
				\footnotesize
				Notes:   
				See the footnote of Table \ref{table-comparison-10d} for descriptions of specifications (IIIB) and (IIIA), and the footnote of Table \ref{table-comparison-1d} for explanations of each method. 
				The only difference from Table \ref{table-comparison-10d} is that 
				we now fix $J_n=3$ for specification (IIIB) and $J_n=1$ for specification (IIIA), instead of selecting $J_n$ by 5-fold cross-validation. 
			\end{table} 
		\end{landscape}
		\normalsize
		\clearpage 
	}

	\afterpage{ 
		\begin{table}[h!]
			\caption{Average lengths of bootstrap confidence intervals: $Nsim=1000$, $Nboot = 1000$ 
			} 
			\label{table-boot-ci-length} 
			\small 
			\begin{center}
				\centering
				\begin{tabular}{c c c c c c c c c c c c c c c c c c}
					\hline \hline		
					
					Confidencen level	&	&	90\% 	&	&	95\%	&	&	99\%	\\  [0.5ex] \hline  
					&	&	 \multicolumn{5}{c}{$ntrain= 2000$} 							\\ [0.5ex]
					$APE_{W_1}$	&	&	0.141	&	&	0.168	&	&	0.222	\\
					$cAPE_{W_1|W_1<0.5}$	&	&	0.304	&	&	0.364	&	&	0.481	\\
					$cAPE_{W_1|W_1>0.5}$	&	&	0.307	&	&	0.367	&	&	0.486	\\  [1ex]
					
					&	&	 \multicolumn{5}{c}{$ntrain= 5000$} 							\\ [0.5ex]
					$APE_{W_1}$	&	&	0.075	&	&	0.090	&	&	0.120	\\
					$cAPE_{W_1|W_1<0.5}$	&	&	0.166	&	&	0.199	&	&	0.265	\\
					$cAPE_{W_1|W_1>0.5}$	&	&	0.168	&	&	0.201	&	&	0.267	\\  [1ex]
					
					&	&	 \multicolumn{5}{c}{$ntrain= 10000$} 							\\ [0.5ex]
					$APE_{W_1}$	&	&	0.047	&	&	0.056	&	&	0.074	\\
					$cAPE_{W_1|W_1<0.5}$	&	&	0.103	&	&	0.124	&	&	0.164	\\
					$cAPE_{W_1|W_1>0.5}$	&	&	0.105	&	&	0.126	&	&	0.166	\\  [1ex]
					
					\hline \hline
				\end{tabular}
			\end{center}
			
			\footnotesize 
			Notes:  
			The DGP is specification (IVB) described in Section~\ref{subsection-simulation-10d}. 
			For each sample size $n\in\{2000,5000,10000\}$, the average CI lengths are computed from $Nsim=1000$ replications. 
			See the footnote of Table \ref{table-boot-ci-coverage} for implementation details. 
			The true values of $APE_{W_1}$, $cAPE_{W_1|W_1<0.5}$, and $cAPE_{W_1|W_1>0.5}$ are 0.034, 0.158, -0.090, respectively.

		\end{table}
		
		\normalsize 
	}

	\begin{figure}[h] 
		\caption{Histogram and estimated density of $V_i$}  
		\label{fig-v-kde-hist}
		\begin{center} 
			\includegraphics[width=0.8\textwidth]{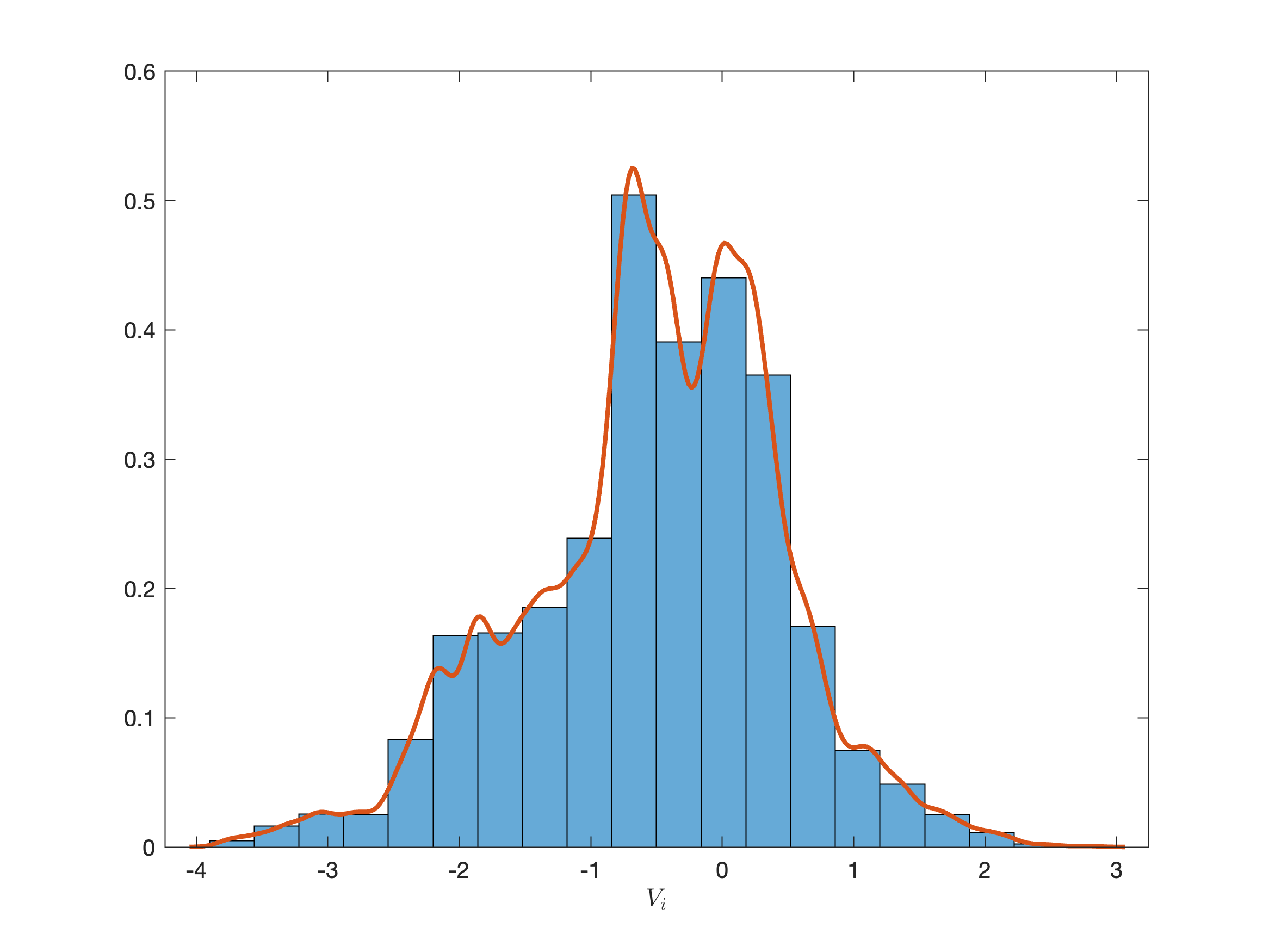}
		\end{center}
		
		\footnotesize
		Notes: The red curve plots the estimated density of $V_i$, using kernel density estimation and the bandwidth is selected by the Silverman's rule of thumb. 
		The blue blocks are the histogram of $V_i$, normalized to the density. 
		This figure suggests that $V_i$'s support is relatively large. 
		Moreover, 
		the empirical range of $V_i$ is $[-3.761, 2.772]$; the $0.1\%, 0.5\%, 1\%, 50\%$, $ 99\%, 99.5\%$ and $99.9\%$ quantiles are $-3.71, -3.32, -3.08, -0.45$, $1.61, 1.87$, and $2.20$, respectively. 
	\end{figure}

\end{document}